\newcommand{\Lx}{\left(}
\newcommand{\Rx}{\right)}
\newcommand{\R}{\mathbb R}
\newcommand{\Z}{\mathbb Z}
\newcommand{\abs}[1]{\left\lvert #1 \right\rvert}
\newcommand{\partv}[1]{\Lx \lambda^{-1}\partial_v\Rx^{#1}}
\newcommand{\partu}[1]{\Lx\nu^{-1}\partial_u\Rx^{#1}}
\newtheorem{theorem}{Theorem}[section]
\newtheorem{lemma}[theorem]{Lemma}
\newtheorem{proposition}[theorem]{Proposition}
\newtheorem{definition}[theorem]{Definition}
\newtheorem{corollary}[theorem]{Corollary}
\newtheorem*{theorem*}{Theorem}
\newtheorem*{lemma*}{Lemma}
\newtheorem*{hypothesis*}{Hypothesis}
\newtheorem*{condition*}{Condition}
\newtheorem*{proposition*}{Proposition}
\newtheorem*{definition*}{Definition}
\newtheorem*{corollary*}{Corollary}
\newtheorem*{conjecture*}{Conjecture}
\newtheorem*{claim*}{Claim}
\theoremstyle{remark}
\newtheorem{remark}[theorem]{Remark}
\numberwithin{equation}{section}
\DeclarePairedDelimiterX{\set}[1]{\{}{\}}{\setargs{#1}}
\NewDocumentCommand{\setargs}{>{\SplitArgument{1}{;}}m}
{\setargsaux#1}
\NewDocumentCommand{\setargsaux}{mm}
{\IfNoValueTF{#2}{#1} {#1\,\delimsize|\,\mathopen{}#2}}%{#1\:;\:#2}
\begin{document}
\title[Global Non-linearly Stable Solutions to ESF]{Global Non-linearly Stable Large-Data Solutions to the Einstein Scalar Field System}
\author{Eric Kilgore}
\date{}

\begin{abstract}
I study a class of global, causal geodesically complete solutions to the spherically symmetric Einstein scalar field (SSESF) system . Extending results of Luk-Oh (\textit{Quantitative Decay Rates For Dispersive Solutions to the Einstein-Scalar Field System in Spherical
Symmetry, 2015}), Luk-Oh-Yang (\textit{Solutions to the Einstein-Scalar-Field System in Spherical
Symmetry with Large Bounded Variation Norms, 2018}), I provide new bounds controlling higher derivatives of both the metric components of the solution and the scalar field itself for large data solutions to SSESF. Moreover, by constructing a particular set of generalized wave-coordinates, I show that, assuming sufficient regularity of the data, these solutions are globally non-linearly stable to non-spherically symmetric perturbations by recent results of Luk and Oh. In particular, I demonstrate the existence of a large collection of non-trivial examples of large data, globally nonlinearly stable, dispersive solutions to the Einstein scalar field system.
\end{abstract}

\maketitle

\section{Introduction}
I study the decay properties of a class of spherically symmetric solutions $(M,g,\phi)$ to the Einstein scalar field system, for $M$ a $3+1$ dimensional manifold, $g$ a Lorentzian metric, and $\phi:M \rightarrow \R$ a real valued scalar field:
\begin{equation*}
\label{ESF}
\tag{ESF}
\begin{cases} R_{\mu\nu} - \frac{1}{2}\textbf{g}_{\mu\nu}R = 2T_{\mu\nu}\\ \nabla^{\mu}\partial_{\mu}\phi = 0 \end{cases}.
\end{equation*}
Recently Luk and Oh in \cite{Luk20} proved a large data stability criterion for solutions to \eqref{ESF}, in this paper I prove the following
\begin{theorem}
There exist large-data solutions satisfying the stability criterion of \cite{Luk20}.
\end{theorem}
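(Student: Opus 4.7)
The plan is to work entirely inside spherical symmetry in double-null coordinates $(u,v)$, bootstrap the lower-order decay statements of Luk--Oh and Luk--Oh-Yang up to higher regularity, and then build a particular set of generalized wave coordinates on the resulting background in which the quantitative hypotheses of the stability criterion of \cite{Luk20} can be verified directly. The input data I would use are the large BV solutions constructed in Luk--Oh-Yang: these are already known to be causally geodesically complete and to dispersive decay in the sense of Luk--Oh, so the real content is (i) to propagate enough derivatives to match the norm in which \cite{Luk20} states its stability criterion, and (ii) to exhibit coordinates in which that norm is finite.

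The first step is to fix the double-null gauge with the usual Hawking mass $m$, area radius $r$, Hawking mass ratio $\mu = 2m/r$, and null derivatives $\nu = \partial_u r$, $\lambda = \partial_v r$ of the Luk--Oh setup. I would run a weighted bootstrap in which at each order of differentiation one commutes $\partial_u$ and $\partial_v$ through the wave equation for $\phi$ and the Raychaudhuri/constraint equations for the metric components. The lower-order bounds from Luk--Oh-Yang furnish the base case with integrable-in-$u$ and integrable-in-$v$ decay of $r\partial_v\phi$ and $r\partial_u\phi$. Commutation produces inhomogeneities of quadratic or higher type in the already-controlled quantities, and the monotonicity properties of $r$, $\nu$, $\lambda$, and $1-\mu$ that drive the existing proof continue to work, so the same hierarchy of weighted norms closes for $\partv{k}(r\partial_v\phi)$ and $\partu{k}(r\partial_u\phi)$ for all $k$, provided the initial data are correspondingly smooth and have appropriate weighted decay at spacelike infinity. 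The bookkeeping is the bulk of the work, but it is essentially a differentiated rerun of Luk--Oh-Yang.

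The second step is to pass from the double-null frame to a generalized wave coordinate frame $(t,x^i)$ suited to \cite{Luk20}. I would define $t$ and $x^i$ by solving covariant wave equations $\Box_g t = F^0$, $\Box_g x^i = F^i$ on the completed double-null rectangle, with source terms $F^\mu$ prescribed so that the resulting chart behaves like an asymptotically Minkowskian wave-coordinate chart at null infinity and at spacelike infinity, and with initial data chosen on an ingoing cone where Minkowskian values are essentially propagated in. Using the double-null decay of $\phi$, $r$, and $m$ together with the higher-derivative bounds from the first step, I would estimate the derivatives of $(t,x^i)$ and of their inverses through the same type of transport argument used for the eikonal functions in Luk--Oh, and then compute the metric components $g_{\mu\nu}(t,x)$ and their derivatives. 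The criterion of \cite{Luk20} reduces, up to translating notation, to pointwise and weighted smallness of $g-m$ and of a finite number of its derivatives measured against powers of $u$, $v$; all such quantities are expressible as polynomial combinations of the double-null quantities already controlled.

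The main obstacle, in my view, is not the higher-derivative estimates, which are a stable extension of existing arguments, but the construction of the wave coordinates: one must choose the source functions $F^\mu$ and the initial hypersurface data so that $(t,x^i)$ defines a genuine global chart in which all derivatives decay at the correct weighted rates required by Luk--Oh, without developing caustics or degenerating at the center $r=0$ or at null infinity. Avoiding caustics requires quantitative control on the Jacobian of $(u,v)\mapsto (t,x^i)$, which in turn must be bootstrapped simultaneously with the coordinate functions themselves. Once this chart is in hand and the translated norm is shown to be finite and as small as \cite{Luk20} demands (a smallness which in the relevant norm does \emph{not} force the BV data to be small, so truly large solutions survive), the theorem follows by a direct application of the stability criterion.
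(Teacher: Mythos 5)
Your proposal matches the paper's strategy for the higher-derivative estimates (step (i)), but diverges in a substantial way on the coordinate construction, and omits two intermediate steps that the paper needs.

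The coordinate construction is the real divergence. You propose to define $(t,x^i)$ by \emph{solving} covariant wave equations $\Box_g t = F^0$, $\Box_g x^i = F^i$ with tuned sources and tuned cone data, and you correctly flag this as the main obstacle: one would have to rule out caustics and bootstrap the Jacobian of $(u,v)\mapsto(t,x^i)$ simultaneously with the coordinate functions. The paper avoids this entirely. After passing to the future-normalized double-null gauge (\ref{gauge2}, in which $\nu\to-\frac{1}{2}$ toward $\mathcal{I}^+$), it simply \emph{declares}
\[
t = \hat{u}+\hat{v}, \qquad \hat{r} = \hat{v}-\hat{u}, \qquad x^i = \hat{r}\,S^i(\theta,\varphi),
\]
so there is no PDE to solve, no caustic to exclude, and the Jacobian is explicit. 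These are not exact wave coordinates; instead, one then \emph{verifies} from the decay estimates that $\Box_{\mathbf g}x^{\mu}$ satisfies the logarithmic bound required by condition \ref{disp8} of \cref{dispsoln}, which is precisely what ``generalized wave coordinates'' means in the sense of \cite{Luk20}. The key point you miss is that the null condition and the $\log$-corrected asymptotics of $\lambda+\nu$ (which the gauge choice isolates, cf.\ \cref{sumlimitcor}) make the explicitly-defined chart \emph{already} an approximate wave chart to the right accuracy, so the inverse-function-theorem-style construction you sketch is unnecessary. Your approach might be made to work, but it is genuinely harder and not what the paper does.

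Two further ingredients are absent from your sketch. First, the Luk--Oh--Yang solutions are defined only to the future of a cone; a dispersive spacetime in the sense of \cref{dispsoln} must be globally of the form $[0,\infty)\times\R^3$. The paper extends the solution to the future of an asymptotically flat Cauchy hypersurface by posing a sideways characteristic initial value problem toward spatial infinity on $\underline{C}_R$ (building on Dafermos), and proves a parallel hierarchy of estimates in the region $\mathcal{O}_R$. Without this, the hypotheses of \cite{Luk20} cannot even be stated for your background. Second, the $(u,v)$ frame is singular at the axis: $r$ is not a smooth function of the Cartesian $x^i$, so bounds on $\partial_u,\partial_v$ derivatives of $\phi$, $\lambda$, $\nu$, $\mu$ do \emph{not} automatically give bounds on $\partial_i$ derivatives of the lifted $(3+1)$-dimensional quantities near $\Gamma$. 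The paper devotes a separate section to showing that even-extension and vanishing of odd-order $\partial_{\hat r}$ derivatives on the axis let one control $(\hat{r}^{-1}\partial_{\hat r})^l$ of the relevant quantities by plain $\partial_{\hat r}^{2l}$ derivatives (\cref{axisregkey}, \cref{XboundCor}). Your plan needs something equivalent, and it is not a routine consequence of the bootstrap.

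One small correction to your last paragraph: \cite{Luk20} does \emph{not} demand that the background norm be small. The dispersive size $(C,\gamma_0,N)$ is allowed to be arbitrarily large; it is the perturbation that must be small, with threshold $\epsilon(C,\gamma,\gamma_0,N)$. So the worry about ``smallness forcing the BV data to be small'' does not arise.
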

From this we obtain the immediate corollary:
\begin{corollary}
There exists an open set of large initial data for ESF which gives rise to dispersive solutions.
\end{corollary}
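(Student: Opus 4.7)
The corollary is essentially immediate from the theorem once the Luk--Oh stability criterion \cite{Luk20} is invoked. My plan is to record the three ingredients needed. First, I would fix a large-data spherically symmetric solution $(M,g,\phi)$ supplied by the theorem: by construction it is global, causal geodesically complete, quantitatively dispersive in the sense made precise in the body of the paper, and satisfies the hypotheses of the Luk--Oh stability criterion.

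Second, I would apply \cite{Luk20} to this seed to produce an open neighborhood $\mathcal{U}$ of its initial data, in the (weighted, sufficiently high regularity) Sobolev topology on asymptotically flat initial data for \eqref{ESF}, such that every element of $\mathcal{U}$ launches a global, causal geodesically complete solution of \eqref{ESF} remaining quantitatively close to $(M,g,\phi)$ in the same norms which encode dispersion of the seed. Because those norms control pointwise decay of $\phi$, its derivatives, and the metric components, every solution arising from $\mathcal{U}$ inherits dispersive behavior from the seed.

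Third, largeness is an open condition: the norm witnessing largeness of the seed is continuous on initial data space, so after shrinking $\mathcal{U}$ if necessary every element of $\mathcal{U}$ still realizes a strictly large value, giving the desired open set of large initial data producing dispersive solutions.

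The one genuinely substantive obstacle is topological matching: the stability criterion of \cite{Luk20} demands precise weighted high-regularity decay of the data, while ``largeness'' in the spherically symmetric regime is most naturally measured in weaker norms such as BV. Reconciling these two is exactly what the higher-derivative estimates of both the metric components and the scalar field established in the theorem accomplish, together with the choice of generalized wave-coordinates alluded to in the abstract; once the theorem is in hand the corollary follows at once.
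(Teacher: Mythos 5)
Your proposal is correct and takes essentially the same approach as the paper: the paper treats the corollary as immediate once Theorem 1.1 is in hand, and your three-step unpacking (seed solution from the theorem, open set of $(\epsilon,\gamma,N)$-admissible perturbations from the Luk--Oh stability theorem, persistence of largeness since a perturbation small in the strong topology cannot change a large value of a weaker norm) is precisely the argument the paper is gesturing at. Your closing paragraph slightly misattributes the role of the higher-derivative estimates and the generalized wave coordinates—they are needed to \emph{verify} that the constructed spacetime is a dispersive spacetime solution in the sense of \cref{dispsoln}, not to "reconcile" the perturbation topology with a BV measure of largeness—but this does not affect the validity of the main argument.
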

This is the first such result in the large data case, previously there is no known existence result for global dispersive solutions to \eqref{ESF} outside of spherical symmetry \cite{Luk2018}, or the small data regime \cite{ChrisKlain,LindRod} (note, however, the spectacular recent advances in the stability of the black hole problem, see \cite{DafHolzRodTay,KlainSzef,HintzVasy}).

%More specifically, for solutions with data (possibly large in the BV sense) satisfying an asymptotic flatness assumption, and $\phi$ scattering locally in the gauge invariant BV norm, I establish quantitative decay rates for arbitrary derivatives of $\phi$, and the metric components. Moreover, I show that similar estimates hold for solutions given to the future of a certain class of asymptotically flat Cauchy hypersurfaces of a spherically symmetrically reduced space, rather than to the future of a cone. I also show the existence of a collection of gauges which may be moved between by well-controlled transformations giving sufficient control near the axis of symmetry, and at infinity. Finally, I show that these bounds obtained in the spherically symmetric setting can be translated into control for arbitrary combinations of Minkowski commuting vector fields.

I build off the results of Luk--Oh--Yang in \cite{Luk2018} in which a large class of large data spherically symmetric solutions to the Einstein-scalar field system are constructed to the future of a cone \footnote{In \cite{Luk2018}, Luk--Oh--Yang also construct global spacetimes by solving a scattering problem from past null infinity. For technical reasons I do not directly control such global spacetimes, but instead start from the spacetimes defined to the future of a cone. It thus remains open whether there are future and past complete spacetimes satisfying the assumptions of Luk--Oh's result \cite{Luk20} in both the future and past directions.}, with decay estimates both towards null and timelike infinity up to second derivatives in a spherically symmetric double null coordinate system:
\begin{itemize}
\item I begin by improving the estimates of Luk--Oh--Yang to all order of derivatives, since the stability result of Luk--Oh, \cite{Luk20}, requires $\geq 11$ derivatives. Given the basic control of the spacetime geometry that has been established in \cite{Luk2015}, these estimates follow from the methods of \cite{Luk2018}, with some additional care given to controlling terms near the axis of symmetry.
\item Next, I extend the Luk--Oh--Yang spacetime (originally only to the future of a cone) to a larger spacetime defined to the future of an asymptotically flat Cauchy hypersurface. This is realized by solving a ``sideways'' characteristic initial value problem towards spatial infinity. This builds upon the work of Dafermos in \cite{Dafermos2003}. As before, I prove decay estimates for all higher order derivatives.
\item In this spherically symmetric spacetime, I then introduce a future-normalized spherically symmetric double-null gauge in which one obtains some slightly stronger decay estimates. It is at this point when one identifies the logarithmic terms arising from the contribution of the mass, which play an important role in the generalized wave coordinates that are later introduced.
\item Finally, I introduce a generalized wave coordinate system and show that the constructed spacetime satisfies the estimates required by Luk--Oh in \cite{Luk20}, which are defined in terms of commuting vector fields in terms of the generalized wave coordinates. Here, one must take advantage of the null condition (manifested in the different decay estimates for the $\partial_v$ and $\partial_u$ derivatives) and also be careful about the regularity of the solution at the axis (since the function $r$ is not smooth at the axis).
\end{itemize}

\subsection{Prior Results}

In the early 1990's, Christodoulou-Klainerman in \cite{ChrisKlain} established the first global non-linear stability results in the asymptotically flat setting, showing that Minkowski spacetime is globally non-linearly stable under the Einstein vacuum equations; see also \cite{FriedStab}. In particular they showed that for asymptotically flat initial data which are sufficiently close to Minkowski, the maximal globally hyperbolic development is causally geodesically complete and approaches Minkowski at large times. Lindblad-Rodnianski \cite{LindRod} later simplified the proof, and extended the result to the Einstein scalar field system. These results both require explicit smallness of the data. Recently Luk-Oh further extended the results and techniques of \cite{LindRod} in \cite{Luk20}, to give a set of criteria for large data causally geodesically complete solutions to be globally non-linearly stable for the Einstein scalar field system. These criteria consist of decay conditions for high order derivatives of the geometry and scalar field, as well as the existence of a gauge in which the solution satisfies some specific asymptotic relations. This was the first such stability result for large data.

%The result of Luk-Oh leaves open the question of the existence of a nontrivial example of a solution satsifying these conditions. In this paper I establish the required decay for high order derivatives, as well as the existence of a satisfactory gauge, for a class of large-data spherically symmetric solutions to the Einstein scalar field system.

The spherically symmetric Einstein scalar field system has been studied extensively over the past four decades, being among the most accessible systems containing matter in the spherically symmetric class. Through the 1980's and 1990's Christodoulou established a complete picture of the singularity structure of spherically symmetric solutions (cf \cite{Chris86,Chris87,Chris91,Chris93,Chris94,Chris99}). In particular, he showed that generic (in BV) initial data gives rise to a solution which is either dispersive, or contains a black hole region and a spacelike curvature singularity. This work established both a complete understanding of the singularity structure of the spherically symmetric Einstein scalar field, and qualitative description of the long term dynamics of the system. Moreover \cite{Chris86} gives quantitative control for small data. It remained to establish quantitative bounds in the large data case.

There has since been significant progress in this direction. In the black hole case this was done by Dafermos-Rodnianski in \cite{DafRod}, in which they established polynomial decay rates conjectured by Price \cite{Price}. Corresponding lower bounds have more recently been established in \cite{luk2019strong}. We will focus on the dispersive case, which was studied by Luk-Oh in \cite{Luk2015}. They establish quantitative decay rates for $\phi$ as well as the geometry of the system for up to $C^2$ solutions, without (quantitative) restriction on the BV norm of the data. The new high order decay established in this paper is an extension of these results to $C^k$ data for arbitrary $k \geq 2$ (controlling up to $k$ derivatives of $\phi$ and geometric terms).

The above references provide only an incomplete picture of the full development of the current understanding of the Einstein scalar field system. For a more complete collection see \cite{Luk20}.

\subsection{Outline of the Paper}
In \cref{secprelim} we will lay out the problem, establish our notations and conventions, and restate some prior results (from \cite{Dafermos2003,Luk2015}) which will be referenced later. In \cref{secresults} we will provide precise statements of the primary results of the paper (cf. \cref{main1,main2}).

The remainder of the paper is devoted to proving the results stated in \cref{secresults}. In \cref{Qproof,oproof} we will \cref{main1}. In particular \cref{Qproof} provides estimates at arbitrary differential order in a region with compact curves of constant $u$ (respectively $v$). In \cref{oproof} I establish the estimates of \cref{main1} in a region away from the axis of symmetry. Together these complete the proof of \cref{main1}.

Finally, \cref{axisregsec,stabproof} are devoted to proving global nonlinear stability of the spacetimes considered in \cref{main1} (\cref{main2}), from the results of \cref{main1}. In \cref{axisregsec} we show that our estimates in spherical symmetry extend nicely to the full, un-reduced spacetime, and in \cref{stabproof} we check that the class of solutions considered in \cref{main2} in fact satisfies the conditions of the main theorem of \cite{Luk20} and are thus exhibit global non-linear stability to (an open class of) non-spherically symmetric perturbations.

\section{Preliminaries}
\label{secprelim}
In this section we will go over the set up of the problem at hand, the form of the equations and coordinates we will use, some important terms and useful machinery, and prior results we will use throughout the paper.

\subsection{The Einstein Scalar Field System}
We begin with an overview of the deriving of the spherically symmetric Einstein scalar field system. We begin with the full Einstein scalar field system in 3+1 dimensions.

Solutions are described by a triple $(\mathcal{M},\textbf{g}_{\mu\nu},\phi)$ where $(\mathcal{M},\textbf{g}_{\mu\nu})$ is a $(3+1)$ dimensional Lorentzian manifold, and $\phi$ is a real-valued function on $\mathcal{M}$. The metric and scalar field satisfy the Einstein scalar field system:
\begin{equation*}
\tag{ESF}
\begin{cases} R_{\mu\nu} - \frac{1}{2}\textbf{g}_{\mu\nu}R = 2T_{\mu\nu}\\ \nabla^{\mu}\partial_{\mu}\phi = 0 \end{cases}
\end{equation*}
where $R_{\mu\nu}$ is the Ricci curvature of $\textbf{g}$, $R$ is the scalar curvature, $\nabla_{\mu}$ is the covariant derivative corresponding to the Levi-Civita connection on $(\mathcal{M},\textbf{g})$, and $T_{\mu\nu}$ is the energy-momentum tensor given by $\phi$:
\[ T_{\mu\nu} = \partial_\mu\phi\partial_{\nu}\phi - \frac{1}{2}\textbf{g}_{\mu\nu}\partial^{\lambda}\phi\partial_{\lambda}\phi. \]
Assume that $(\mathcal{M},\textbf{g},\phi)$ admits a smooth action of $SO(3)$ by isometries on $(\mathcal{M},\textbf{g})$ such that each orbit is either a point, or isometric to $S^2$ with a round metric, and $\phi$ is constant on each orbit. Such a solution is called \textit{spherically symmetric}. These properties are all propagated by \eqref{ESF}, thus if $(\mathcal{M},\textbf{g},\phi)$ is a Cauchy development of some initial data, it suffices to assume that the initial data is spherically symmetric to ensure spherical symmetry of the solution.

Under this assumption we can take the quotient $\mathcal{M}/SO(3)$ which yields a $(1+1)$ dimensional Lorentzian manifold with boundary which we will denote by $(M,g)$. The boundary $\Gamma$ is the set of fixed points of the $SO(3)$ action.

In this setup we can define the \textit{area radius function} $r$ on $M$ by
\[ r := \sqrt{\frac{\text{area of symmetry sphere}}{4\pi}} \]
with $r = 0$ on $\Gamma$. Note that each connected component of $\Gamma$ is a timelike geodesic.

Henceforth we will assume that $\Gamma$ is non-empty and connected, and that there exists a system of global double null coordinates $(u,v)$, in which the metric takes the form
\[ g_{ab}dx^adx^b = -\Omega^2 dudv \]
for some $\Omega > 0$. Both of these assumptions are certainly justified so long as $(\mathcal{M},\textbf{g})$ is a Cauchy development of a spacelike hypersurface homeomorphic to $\R^3$.

We can recover the metric $\textbf{g}$ on $\mathcal{M}$ from $\Omega$ and $r$:
\[ \textbf{g}_{\mu\nu}dx^{\mu}dx^\nu = -\Omega^2 dudv + r^2 ds^2_{S^2} \]
where $ds^2_{S^2}$ is the line element for the unit sphere $S^2 \subset \R^3$. From this we can reformulate our inherited equations on $(M,g)$ as the spherically symmetric Einstein scalar field system \eqref{SSESF1} in terms of the triple $(\phi,r,\Omega)$ as
\begin{equation*}
\tag{SSESF}
\label{SSESF1}
\begin{cases}
\partial_u\partial_vr = -\partial_ur\partial_vr - \frac{1}{4}\Omega^2\\
r^2\partial_u\partial_v\log\Omega = \partial_ur\partial_vr + \frac{1}{4}\Omega^2 - r^2\partial_u\phi\partial_v\phi\\
r\partial_u\partial_v\phi = -\partial_ur\partial_v\phi - \partial_vr\partial_u\phi\\
2\Omega^{-1}\partial_ur\partial_u\Omega = \partial_u^2r + r(\partial_u\phi)^2\\
2\Omega^{-1}\partial_v\partial_v\Omega = \partial_v^2r + r(\partial_v\phi)^2
\end{cases}
\end{equation*}
with the boundary condition $r = 0$ on $\Gamma$.

We can reformulate this problem once more in terms of the \textit{Hawking mass} $m$ defined by
\begin{equation} 1 - \frac{2m}{r} = \partial^ar\partial_ar = 4\Omega^{-2}\partial_ur\partial_vr. \end{equation}
We also define the \textit{mass ratio}:
\[ \mu := \frac{2m}{r}, \]
and introduce some shorthand for important derivatives of $r$:
\[ \lambda := \partial_vr \qquad \nu := \partial_ur. \]
With this in mind we see that we can reformulate \eqref{SSESF1} in terms of the triple $(\phi,r,m)$ in the following way:

We say that $(\phi,r,m)$ is a solution to \eqref{SSESF1} if the following relations hold:
\begin{equation}
\tag{SSESF'}
\label{SSESF}
\begin{cases}
\partial_u\partial_vr = \frac{2m\lambda\nu}{(1 - \mu)r^2}\\
\partial_u\partial_v(r\phi) = \frac{2m\lambda\nu}{(1-\mu)r^2}\phi\\
\nu^{-1}\partial_um = \frac{1}{2}(1-\mu)r^2(\nu^{-1} \partial_u \phi)^2\\
\lambda^{-1}r\partial_vm = \frac{1}{2}(1-\mu)r^2(\lambda^{-1} \partial_v\phi)^2
\end{cases}
\end{equation}
and moreover, $r = m = 0$ on $\Gamma$.

Observe that we can equivalently write our equation for $r\phi$ as:
\begin{equation}
\begin{cases}
\partu{}\partv{}(r\phi) = \frac{2m}{(1-\mu)r^2}\phi - \frac{2m}{(1-\mu)r^2}(\lambda^{-1}\partial_v\phi)\\
\partv{}\partu{}(r\phi) = \frac{2m}{(1-\mu)r^2}\phi - \frac{2m}{(1-\mu)r^2}(\nu^{-1}\partial_u\phi)
\end{cases}.
\end{equation}

\subsection{Notation and Conventions}
Here we will write down some notation and assumptions that will carry throughout the remainder of the paper.

We begin with a more concrete definition of the reduced space we work with.

Let $\R^{1+1}$ denote the $(1+1)$ dimensional Minkowski space with standard double null coordinates $(u,v)$. Let $M$ be a $(1+1)$ dimensional Lorentzian manifold conformally embedded in $\R^{1+1}$ with $ds^2_M = -\Omega^2dudv$. From $r$ a non-negative function on $M$ define the set
\[ \Gamma := \set*{(u,v) \in M; r(u,v) = 0}. \]
Define $\mathcal{M} = M \times S^2/\sim$ where $\sim$ is the equivalence
\[ (u,v,s) \sim (u,v,s') \]
if and only if $(u,v) \in \Gamma, s,s' \in S^2$. This is the full $3+1$ dimensional space above $M$.

We assume that $\Gamma$ is connected, the image of a future-directed timelike curve emanating from $(1,1)$. We also assume that $C_R,\underline{C}_{R} \subset M$ for all $R > 0$. Where
\[ C_{R} = \set*{(u,v) \in \R^{1+1}; u = R, R \leq v < \infty} \qquad \underline{C}_R = \set*{(u,v) \in \R^{1+1};v = R, -\infty < u \leq v}. \]

Finally, define past and future null infinity (denoted by $\mathcal{I}^{-},\mathcal{I}^+$ respectively) to be the sets of ``points'' $(-\infty,v)$, and $(u,\infty)$ respectively s.t. $\sup_{C_u}r = \sup_{\underline{C}_v}r = \infty$.

Combining this with the Hawking mass above, we define the \textit{Bondi mass} as $M_u = \lim_{v \rightarrow \infty}m(u,v)$. The \textit{final Bondi mass} $M_f = \lim_{u \rightarrow \infty} M_u$ and the \textit{initial Bondi mass} $M_i = \lim_{u \rightarrow -\infty} M_u$.

We also outline our convention for integrating over curves in $M$. When integrating over $C_u$ or $\underline{C}_v$ we write
\[ \int_{C_u \cap \set*{v_1 \leq v \leq v_2}}f = \int_{v_1}^{v_2}f(u,v')dv' \]
\[ \int_{\underline{C}_v \cap \set*{u_1 \leq u \leq u_2}}f = \int_{u_1}^{u_2}f(u',v)du'. \]
%We will also occasionally integrate in $r$. By this we denote the integral along the flow of the vector field
%\[ \partial_r = \lambda^{-1}\partial_v + \nu^{-1}\partial_u \]
%Using the corresponding line element
%\[ dr = \lambda dv + \nu du \]

We define the \textit{domain of dependence} of a line segment $C_{u_0} \cap \set*{v \leq v_0}$ (which we denote $\mathcal{D}(u_0,v_0)$) to be the set of points $p \in M$ such that all past-directed causal curves through $p$ intersect $\Gamma \cup (C_{u_0} \cap \set*{v \leq v_0})$ along with the segment $C_{u_0} \cap \set*{v \leq v_0}$.

\subsection{Gauge Conditions}
Observe that, to this point, the coordinates $u,v$ are free to be reparametrized by transformations of the form
\[ u \mapsto \tilde{u}(u) \quad v \mapsto \tilde{v}(v) \quad \tilde{u}(1) = \tilde{v}(1) = 1 \]
for any monotone increasing $\tilde{u}, \tilde{v}$. To fix these coordinates we must prescribe some \textit{gauge condition}.

In what follows we will consider the following three different conditions:
\begin{enumerate}[label=(G\arabic*)]
\item \label{gauge1} $\lambda \equiv \frac{1}{2}$ on $C_{1}$, $\Gamma = \set*{(u,v);u = v}$. %note on what happens
\item \label{gauge2} $\lim_{v \rightarrow \infty}\nu(u,v) = -\frac{1}{2}$ for all $u$, and $\Gamma = \set*{(u,v);u = v}$.
\item \label{gauge3} $\lambda \equiv \frac{1}{2}$ on $C_1$ and $\nu \equiv -\frac{1}{2}$ on $\underline{C}_R$ for some $R \gg 1$
%\begin{dmath*} u \mapsto -2\int_{0}^u \bar{\nu}(u')du' \qquad v \mapsto -2\int_{0}^v \bar{\nu}(v')dv' \end{dmath*}
\end{enumerate}
\begin{remark}
The gauge \ref{gauge2} can be obtained from \ref{gauge1} by the following transformations:
\begin{gather}
	\tilde{u}_2(u_1,v_1) = -2\int_1^{u_1} \bar{\nu}(u_1')du_1' \qquad \tilde{v}_2(u_1,v_1) = -2\int_1^{v_1} \bar{\nu}(v_1')dv_1'
\end{gather}
where $\bar{\nu}(u) := \lim_{v \rightarrow \infty}\nu(u,v)$.

\ref{gauge3} can be obtained via a similar transformation fixing $v = R$ rather than taking this limit.
\end{remark}

\begin{remark}
In both the gauges \ref{gauge1} and \ref{gauge2} for $r$, $\phi$ $C^k$-smooth on $M$ the following hold
\[ \lim_{v \rightarrow u^+}(\partial_u + \partial_v)^lr(u,v) = \lim_{u \rightarrow v^-}(\partial_u + \partial_v)^lr(u,v) = 0 \]
\[ \lim_{v \rightarrow u^+}(\partial_u + \partial_v)^l(r\phi)(u,v) = \lim_{u \rightarrow v^-}(\partial_u + \partial_v)^l(r\phi)(u,v) = 0 \]
for all $l \leq k$.
\end{remark}
\begin{remark}
Also under either of \ref{gauge1} or \ref{gauge2} the domain of dependence $\mathcal{D}(u_0,v_0)$ has the form
\[ \mathcal{D}(u_0,v_0) = \set*{(u,v) \in M; u \in [u_0,v_0],v \in [u,v_0]}. \]
\end{remark}

\subsection{The Characteristic Initial Value Problem}
We are now ready to pose the problem on which our analysis will focus throughout the first half of the paper. Of course the equation we must satisfy is given by \eqref{SSESF} but it remains to specify the precise notions of solution, and specify the constraints on initial data that we will consider.

This initial data is quite constrained by \eqref{SSESF}. In fact, to obtain a solution in all of $\R^{1+1}_{x \geq 0}$ it suffices to pose data for $\partial_v(r\phi)$ and $\partial_u(r\phi)$ on the characteristic curves $C_1$ and $\underline{C}_R$ respectively, as well as the values $\phi(1,1),m(1,1)$ (which we will take to be 0), and a choice of gauge. The data on $C_1$ then completely determines the solution in the region $\mathcal{D}_{C_1}$, becuase our gauge, in combination with the equation determines data for $m$ and $r$ as well. Of course on $\underline{C}_R$ we must check that our data in the region $u \geq 1$ is compatible with that posed for $\underline{C}_{R}$. This is only a local constraint, in that we can prescribe whatever data we want for $u < 1-\epsilon$ for any $\epsilon > 0$, and ($C^k$)-smoothly interpolate in the region $[1-\epsilon,1]$.

\begin{definition}[$C^k$ solutions to \eqref{SSESF} (in \ref{gauge1})]
\label{charinit}
A solution $(\phi,r,m)$ to \eqref{SSESF} is called a $C^k$ solution on $M$ if the following holds on every domain of dependence $\mathcal{D}(u_0,v_0)$:
\begin{enumerate}
\item $\sup_{\mathcal{D}(u_0,v_0)}(-\nu), \sup_{\mathcal{D}(u_0,v_0)}\lambda^{-1} < \infty$.
\item $\lambda, \nu$ are $C^k$ on $\mathcal{D}(u_0,v_0)$.
\item For each $(u,u) \in \Gamma$
\[ \lim_{\epsilon \rightarrow 0^+}(\nu + \lambda)(u,u+\epsilon) = \lim_{\epsilon \rightarrow 0^+}(\nu + \lambda)(u-\epsilon,u) = 0. \]
\item $\partial_v(r\phi), \partial_u(r\phi)$ are $C^k$ on $\mathcal{D}(u_0,v_0)$.
\item For each $(u,u) \in \Gamma$
\[ \lim_{\epsilon \rightarrow 0^+}(\partial_v(r\phi) + \partial_u(r\phi))(u,u+\epsilon) = \lim_{\epsilon \rightarrow 0^+}(\partial_v(r\phi) + \partial_u(r\phi))(u-\epsilon,u) = 0. \]
\end{enumerate}
\end{definition}
Note that data which is $C^k$ on $C_1, \underline{C}_R$ will give rise to a $C^k$ solution.

\subsection{Definitions and Prior Results}
Here we state some longer definitions of terms we will use throughout the paper.

It is essential to obtaining the desired decay that the prescribed initial data already verifies such an estimate. As such we make the following definition.
\begin{definition}[Asymptotic Flatness of Order $\omega' \geq 0$ in $C^k$]
\label{asymflat}
An initial data set is said to be asymptotically flat of order $\omega'$ in $C^k$ towards $\mathcal{I}^+$ (resp. $\mathcal{I}^-$) if $\partial_v(r\phi)(1,\cdot) \in C^k[1,\infty)$ ($\partial_u(r\phi)(\cdot,R) \in C^k(-\infty,1]$) and there exists $A>0$ ($A'$) such that
\begin{gather}
\sup_{C_1}v^{\omega' + l}\abs{\partial^{l+1}_v(r\phi)} \leq A < \infty\\
\sup_{\underline{C}_R}(1+\abs{u})^{\omega'+l}\abs{\partial^{l+1}_v(r\phi)} \leq A' < \infty
\end{gather}
for all $l \leq k$.
\end{definition}

The solutions we consider will also be required to satisfy an additional constraint:
\begin{definition}[Local Scattering]
\label{localbvscat}
A $C^k$ solution $(\phi,r,m)$ on $M$ is said to be \textit{locally scattering} if the following holds:
\begin{enumerate}
\item The full $(3+1)$ dimensional solution $(\mathcal{M},\textbf{g},\phi)$ is future causally geodesically complete.

\item There exists $r_0 > 0$ such that
\[ \int_{C_u \cap \set*{r \leq r_0}}\abs{\partial_v^2(r\phi)} \rightarrow 0, \qquad \int_{C_u \cap \set*{r \leq r_0}} \abs{\lambda^{-1}\partial_v\lambda} \rightarrow 0 \]
as $u \rightarrow \infty$.
\end{enumerate}
\end{definition}

Throughout what follows we will frequently be writing down bounds for various quantities. We will not often write out explicit constants, as all will be taken to be the same in the end and depend on the same quantities. To this end we will write $\partial^{\alpha}A \lesssim B$ to mean that $A$ is bounded by $B$ up to a constant depending on the constants $A,A'$ above, the order of derivative $\abs{\alpha}$, the initial Bondi mass $M_{-\infty}$, and some more complicaated quantities depending on low order behavior of solutions \footnote{For details of these dependencies see \cite{Luk2015}}. The important point is that our constants are global, in particular they have no dependence on coordinates.

Here we will give some names to various spacetime regions we will be interested in. Let
\begin{gather*}
\mathcal{Q} := \set*{(u,v) \in M; u \geq 1, v \geq u}\\
%\mathcal{B}_R := \set*{(u,v) \in M; v \leq R, u \leq v}\\
\mathcal{O}_R := \set*{(u,v) \in M; u \leq 1, v \geq R}\\
\mathbb{I} := \set*{(u,v) \in M; v \geq \abs{u}}.
\end{gather*}
% \begin{remark}
% In the setting of the initial value problem \ref{charinit} above $\mathcal{Q}$ and $\mathcal{B}_R$ have essentially the same structure (they are related by negating our coordinates and translating). The only difference
% \end{remark}
\begin{remark}
The region $\mathbb{I}$ corresponds to essentially to the future of the curve $u+v = 0$ in this spacetime.
\end{remark}

We will also be interested in decay of solutions along certain vector fields, in particular those generating the symmetries of Minkowski space. We introduce some shorthand for these (vector fields on $\mathcal{M}$). In what follows in this section let $(t,x^1,x^2,x^3)$ be coordinates on the manifold(-with-boundary) $[0,\infty) \times \R^3$, $\tilde{r} = \sqrt{\sum_{i = 1}^3(x^i)^2}$. We will write latin indices $i,j \in \set*{1,2,3}$ for spatial coordinates, and greek indices $\alpha,\beta \in \set*{0,1,2,3}$ for spacetime coordiantes (in what follows $t = x^0$).
\begin{definition}[Minkowski Commuting Vector Fields]
\label{mcvec}
The Minkowskian commuting vectors fields are the set of vector fields
\[ \set*{\partial_{\mu},x^i\partial_j - x^j\partial_i,t\partial_i + x^i\partial_t, S := t\partial_t + \sum_{i  =1}^3x^i\partial_i}. \]
\end{definition}
We will henceforth use $\Gamma$ to denote a general Minkowskian commuting vector field. For a multi-index $I = (i_1,i_2,\dotsc,i_{\abs{I}})$, $\Gamma^I$ denotes a product of $\abs{I}$ Minkowksian vector fields.

We will also write, in these coordinates $\tilde{r} = \sqrt{\sum (x^i)^2}$.

We also define the vector fields $L,\underline{L},E^1,E^2,E^3$:
\begin{definition}
Let $\partial_{\tilde{r}} = \sum_{i = 1}^3 \frac{x^i}{\tilde{r}}\partial_i$. Define
\[ L = \partial_t + \partial_{\tilde r}, \quad \underline{L} = \partial_t - \partial_{\tilde r}. \]
Define the vector fields
\[ \set*{E_1,E_2,E_3} := \set*{\frac{x^2}{\tilde r}\partial_3 - \frac{x^3}{\tilde r}\partial_2, \frac{x^1}{\tilde r}\partial_3 - \frac{x^3}{\tilde r}\partial_1, \frac{x^1}{\tilde r}\partial_2 - \frac{x^2}{\tilde r}\partial_1} \]
tangent to the coordinate 2-spheres of constant $\tilde r,t$.
\end{definition}

Later, we will use coordinates $(s,q,\theta,\phi)$ where $(\theta,\phi)$ the standard spherical coordinates, and $(s,q)$ are given by
\[ s = t+\tilde r, \quad q = \tilde r-t. \]
This leads us immediately to the vector fields
\[ \partial_s = \frac{1}{2}(\partial_t + \partial_{\tilde r}), \quad \partial_q = \frac{1}{2}(\partial_{\tilde r} - \partial_t). \]

We introduce a little more notation before we are ready for our final definition. For a scalar function $\xi$ we write
\[ \abs{\partial\xi} := \sum_{\mu = 0}^3 \abs{\partial_{\mu}\xi}^2, \quad \abs{\bar\partial \xi}^2 := \abs{\partial_s \xi}^2 + \frac{1}{2}\sum_{i,j = 1}^3 \Lx \frac{x^i}{\tilde r}\partial_j \xi - \frac{x^j}{\tilde r}\partial_i \xi \Rx^2 =: \abs{\partial_s\xi^2} + \abs{\slashed{\nabla}\xi}^2. \]

Next, for a 2-tensor $p$ define
\[ \abs{p}^2 = \sum_{0 \leq \mu,\nu \leq 3} \abs{p_{\mu\nu}}^2 \]

Define also for a first order differential operator $D$, $Dp$ to be $D$ applied component-wise to $p$: $(Dp)_{\mu\nu} = D(p_{\mu\nu})$.

With this in mind we make the following definition:
\begin{definition}
Let $\mathcal{T} = \set*{L,E^1,E^2,E^3}$, $\mathcal{U} = \set*{L,\underline{L},E^1,E^2,E^3}$, $\mathcal{L} = \set*{L}$. Then for any two of these families $\mathcal{V},\mathcal{W}$ (allowing repeats), and a 2-tensor $p$ define
\[ \abs{p}^2_{\mathcal{V}\mathcal{W}} = \sum_{V \in \mathcal{V},W \in \mathcal{W}}\abs{p_{\alpha\beta}V^{\alpha}W^{\beta}}^2. \]
\end{definition}

Now we make the following definition as in \cite{Luk20}.
\begin{definition}[Dispersive Spacetime Solution of Size $(C,\gamma_0,N)$]
\label{dispsoln}
Let $\gamma_0 > 0$ be a real number and $N \geq 11$ be an integer. A spacetime $(\mathcal{M} = [0,\infty)\times \R^3,g)$ with scalar field $\phi:\mathcal{M} \rightarrow \R$ is a dispersive spacetime solution of size $(C,\gamma_0,N)$ if
\begin{enumerate}[label=(D\arabic*)]
	\item \label{disp1} The triple $(\mathcal{M},g,\phi)$ is a solution to the Einstein scalar field system.
	\item \label{disp2} There exists a global system of coordinates $(t,x^1,x^2,x^3)$ such that with respect to this coordinate system the metric takes the form
	\[ g - m = h \]
	where
	\[ m = -dt^2 + \sum_{i = 1}^3 (dx^i)^2 \]
	is the Minkowski metric and $h$ satisfies the bound
	\[ \abs{\Gamma^Ih} \leq \frac{C\log(2+s)}{1+s} \]
	for $\abs{I} \leq N+1$ where $I$ a multi-index, $\Gamma$'s are the Minkowski commuting vector fields defined above.
	\item \label{disp3} For $\abs{I} \leq N+1$ we have
	\[ \abs{\partial \Gamma^Ih} \leq \frac{C}{(1+s)(1+\abs{q})^{\gamma_0}} \]
	for any combination of Minkowski commuting vector fields $\Gamma$.
	\item \label{disp4} For $\abs{I} \leq N+1$ we have
	\[ \abs{\bar\partial \Gamma^I h} \leq \frac{C}{(1+s)^{1+\gamma_0}} \]
	for any combination of Minkowskian commuting vector fields $\Gamma$.
	\item \label{disp5} For $\abs{I} \leq 1$ the following components satisfy bounds:
	\[ \sum_{\abs{I} \leq 1} \abs{\Gamma^Ih}_{LL} + \abs{h}_{L\mathcal{T}} \leq \frac{C}{(1+s)^{1+\gamma_0}}. \]
	\item \label{disp6} For $\abs{I} \leq N+1$ we have
	\[ \abs{\partial\Gamma^I\phi} \leq \frac{C}{(1+s)(1+\abs{q})^{\gamma_0}},\quad \abs{\bar\partial\Gamma^I\phi} \leq \frac{C}{(1+s)^{1+\gamma_0}}. \]
	\item \label{disp7} The metric $g$ is everywhere Lorentzian with uniformly bouned inverse
	\[ \abs{g^{-1}} \leq C. \]
	Let $(\hat{g})_{ij}$ be the restriction of the metric $g$ on the tangent space to the constant $t$-hypersurfaces ($i,j = 1,2,3$). $(\hat{g})_{ij}$ satisfies the condition that for any $\xi_i$,
	\[ C^{-1}\abs{\xi}^2 \leq \sum_{i,j = 1}^3 (\hat{g}^{-1})^{ij}\xi_i\xi_j \leq C \abs{\xi}^2 \]
	where
	\[ \abs{\xi}^2 = (\xi_1)^2 + (\xi_2)^2 + (\xi_3)^2. \]
	Moreover the spacetime gradient of $t$ is timelike and satisfies
	\[ (g^{-1})^{00} = (g^{-1})^{\alpha\beta}\partial_\alpha t\partial_\beta t \leq -C^{-1} < 0. \]
	\item \label{disp8} For $\abs{I} \leq N+1$ the global coordinate functions satisfy the estimate
	\[ \abs{\Gamma^I(\Box_{g}x^{\mu})} \leq \frac{C\log(2+s)}{(1+s)^2} \]
	where $\Box_{\textbf{g}}$ is the Laplace-Beltrami operator associated to $g$:
	\[ \Box_{\textbf{g}} = \frac{1}{\sqrt{-\det g}}\partial_{\alpha}((g^{-1})^{\alpha\beta}\sqrt{-\det g}\partial_\beta). \]
\end{enumerate}
\end{definition}

Finally, we are ready to state the main theorem of \cite{Luk20}
\begin{theorem*}[Large Data Stability, Luk \& Oh]\label{stabtheorem}
Let $N \geq 11$ and $0 < \gamma,\gamma_0 \leq \frac{1}{8}$. For every dispersive spacetime solution $\mathcal{M},g,\phi)$ of size $(C,\gamma_0,N)$ there exists $\epsilon = \epsilon(C,\gamma,\gamma_0,N) > 0$ such taht for all $(\epsilon,\gamma,N)$-admissible perturbations \footnote{For a definition of this term see \cite{Luk20} section 3} of $(\mathcal{M},g,\phi)$ the maximal globally hyperbolic future development is future causally geodesically complete and the spacetime remains close to $(\mathcal{M},g,\phi)$ \footnote{In a sense defined precisely in Section 4 of \cite{Luk20}}.
\end{theorem*}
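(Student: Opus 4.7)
The plan is to follow the Lindblad--Rodnianski proof of the stability of Minkowski \cite{LindRod}, adapted from the flat background $m$ to the given large dispersive background $g$. Write a perturbed solution as $(g+h,\phi_{\mathrm{bg}}+\psi)$ and impose the generalized wave coordinate gauge $\Box_{g+h}x^\mu = \Box_g x^\mu$ relative to the background coordinate system of \ref{disp2}; by definition an $(\epsilon,\gamma,N)$-admissible perturbation should be one whose initial data are compatible with this gauge and small in the appropriate weighted Sobolev norms. In this gauge the reduced Einstein--scalar-field system for $(h,\psi)$ is a quasilinear wave system of the schematic form
\[ \Box_{g+h}h_{\mu\nu} = P(\partial h,\partial h)_{\mu\nu} + E_{\mu\nu}(g,\phi_{\mathrm{bg}},h,\psi), \qquad \Box_{g+h}\psi = -h^{\alpha\beta}\nabla^{g}_{\alpha}\partial_\beta\phi_{\mathrm{bg}} + \cdots, \]
in which $P$ is of Lindblad--Rodnianski null-form type and $E$ is linear in $(h,\psi)$ with coefficients bounded through \ref{disp2}--\ref{disp8}.

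The analytic heart is a continuity/bootstrap argument on the Minkowski-commuted quantities $\Gamma^I h,\Gamma^I\psi$ for $|I|\le N$. The bootstrap assumptions mirror \ref{disp3}--\ref{disp6} with the constant $C$ replaced by $\epsilon^{1/2}(1+s)^{C'\epsilon}$, and are closed using weighted $L^2$ energy estimates with the Lindblad--Rodnianski multiplier $(1+|q|)^{2\gamma}\partial_t$, Klainerman--Sobolev inequalities to promote $L^2$ control to pointwise decay, and the improved bounds on the $LL$ and $L\mathcal{T}$ components forced by the wave-coordinate condition \ref{disp8} together with the null structure of $P$. The background $g$ enters through three mechanisms that must each be reckoned with: as a long-range correction to the flat d'Alembertian (estimated via \ref{disp3}--\ref{disp4} and \ref{disp7}), as a source in the $\phi_{\mathrm{bg}}$--$\psi$ coupling (estimated via \ref{disp6}), and through the commutators $[\Gamma^I,\Box_{g+h}]$ (estimated using \ref{disp8} and \ref{disp2}).

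The hard part is the only-logarithmic decay $|g-m|\lesssim \log(2+s)/(1+s)$ of the background itself, the familiar long-range obstruction that also appears in the proof of the stability of Minkowski in wave coordinates: it produces a logarithmic phase in the outgoing radiation and forces a slow polynomial growth $(1+s)^{C'\epsilon}$ of the top-order energies. Closing the bootstrap therefore requires $C'\epsilon<\gamma$, which is precisely why $\epsilon$ must be chosen in terms of $(C,\gamma,\gamma_0,N)$, and it requires carefully tracking which contractions of the metric perturbation decay faster---the role of the improved $LL$ and $L\mathcal{T}$ bounds in \ref{disp5} and of the wave-coordinate estimate \ref{disp8}. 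Once the bootstrap is closed one has uniform control of $\Gamma^I h,\Gamma^I\psi$ globally in $t$, whence future causal geodesic completeness follows by standard long-time ODE estimates on outgoing null geodesics in the perturbed metric, and the quantitative closeness to $(\mathcal{M},g,\phi)$ is read off directly from the closed bootstrap.
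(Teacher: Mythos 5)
This theorem is not proved in the paper: it is quoted verbatim as an external result of Luk and Oh \cite{Luk20}, and the paper's own contribution is to \emph{verify the hypotheses} of \cref{dispsoln} for a specific family of spherically symmetric spacetimes (the content of \cref{main2} and \cref{stabproof,axisregsec}), not to establish the stability theorem itself. There is therefore no in-paper proof against which to check your argument.

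As a sketch of how \cite{Luk20} actually proceeds, your proposal is a reasonable high-level outline and correctly identifies the central difficulties: the generalized wave coordinate gauge relative to the large background rather than Minkowski, the schematic quasilinear wave system for $(h,\psi)$, the Lindblad--Rodnianski weighted energy method with Klainerman--Sobolev embedding, the role of the improved $LL$ and $L\mathcal{T}$ bounds \ref{disp5} and the wave-coordinate condition \ref{disp8}, and the obstruction coming from the merely logarithmic decay $|g-m|\lesssim \log(2+s)/(1+s)$ that forces the constraint $C'\epsilon < \gamma$ and hence the dependence of $\epsilon$ on $(C,\gamma,\gamma_0,N)$. If you were writing up a careful proof you would of course have to make precise the definition of $(\epsilon,\gamma,N)$-admissible perturbation, spell out the reduced equations and the null-form structure of $P$, and carry out the energy hierarchy at $N+1$ derivative levels; none of that is attempted here, but as a plan it is consistent with the known strategy. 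Just be aware that within \emph{this} paper the theorem is an imported black box, and the real work of the paper is the verification in \cref{stabproof} that the constructed spherically symmetric solutions actually satisfy \ref{disp1}--\ref{disp8}.
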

This paper is devoted to finding a class of spherically symmetric solutions which satisfy the hypotheses of this theorem.

\subsection{Averaging Operators}

Some of the estimates near the axis will require some extra machinery to obtain.
In particular, it will sometimes be useful to consider the averages of certain quantities over regions near the axis in order to obtain better control over them.
To this end, following \cite{Luk2018} we introduce the operators:
\begin{equation}
\label{vavg}
I_v^s[f](u,v) = \frac{1}{r^{s}(u,v)}\int_u^v f(v')r^{s-1}\lambda(u,v')dv'
\end{equation}
\begin{equation}
\label{uavg}
I_v^s[f](u,v) = \frac{1}{r^s(u,v)}\int_v^u f(u')r^{s-1}\nu(u',v)du'
\end{equation}
\begin{equation}
\label{ravg}
I_{\tilde{r}}^s[f](u,v) = \frac{1}{\tilde{r}^{s}(u,v)}\int_0^R f(\tilde{r}')(\tilde{r}')^{s-1}d\tilde{r}'
\end{equation}
where as above $\tilde{r} := v-u$

Similar to the $u,v$ cases demonstrated in \cite{Luk2018} we have for this $r$-averaging operator a differential formula:
\begin{lemma}
\label{avglemma}
For $s \geq 1$,
\begin{dmath*} \partial_r I_r^s[f](r) = I_r^{s+1}[\partial_r f](r) \end{dmath*}
\end{lemma}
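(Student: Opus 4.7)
The plan is to prove this by direct computation: differentiate the defining integral using the fundamental theorem of calculus and compare with an integration-by-parts of the right-hand side. Although the statement is written in terms of $\tilde{r}$ and (presumably with a typo) the upper limit $R$, I read the operator as $I_r^s[f](r) = r^{-s}\int_0^r f(r')(r')^{s-1}\,dr'$, which is the version needed near the axis.

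First, I would differentiate directly: by the quotient rule and FTC,
\[
\partial_r I_r^s[f](r) = -\frac{s}{r^{s+1}}\int_0^r f(r')(r')^{s-1}\,dr' + \frac{1}{r^{s}}f(r)r^{s-1} = -\frac{s}{r}I_r^s[f](r) + \frac{f(r)}{r}.
\]
Next, I would integrate by parts in the target expression: writing $(r')^{s} = \partial_{r'}\!\left(\tfrac{(r')^{s}}{1}\right)$ and moving the derivative from $f$,
\[
\int_0^r \partial_{r'}f(r')\,(r')^{s}\,dr' = f(r)r^{s} - s\int_0^r f(r')(r')^{s-1}\,dr',
\]
provided the boundary term at $r' = 0$ vanishes. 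Dividing by $r^{s+1}$ gives
\[
I_r^{s+1}[\partial_r f](r) = \frac{f(r)}{r} - \frac{s}{r}I_r^s[f](r),
\]
which matches the expression obtained above, completing the proof.

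The only real point to verify is the vanishing of the boundary term at $r' = 0$, which requires $f(r')(r')^{s} \to 0$ as $r' \to 0^+$. This is automatic under the assumption $s \geq 1$ as long as $f$ is continuous (hence bounded) near the axis, which is the regularity that will always be available when the operator is applied to the quantities appearing in the later estimates. Beyond this, there is no substantive obstacle: the identity is essentially a restatement of integration by parts, mirroring the analogous $u$- and $v$-averaged formulas in \cite{Luk2018}.
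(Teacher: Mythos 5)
Your proof is correct, but it takes a different route from the paper's. The paper substitutes $\rho = r^s$, rewrites $I_r^s[f]$ as $\rho^{-1}\int_0^\rho f\,d\rho'$, and then uses a scaling substitution $\sigma' = \rho'/\rho$ to differentiate under the integral sign, arriving at the identity without any explicit integration by parts. You instead compute $\partial_r I_r^s[f]$ directly by the product rule and FTC, getting $-\frac{s}{r}I_r^s[f] + \frac{f(r)}{r}$, and then show $I_r^{s+1}[\partial_r f]$ reduces to the same expression via integration by parts. Your version is more elementary and makes the mechanism (an integration by parts with a vanishing boundary term at the axis) explicit; the paper's change-of-variables argument is slicker and avoids needing to discuss the endpoint behavior, since it differentiates under a fixed-limit integral. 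One small stylistic note: the parenthetical ``writing $(r')^s = \partial_{r'}((r')^s/1)$'' is garbled --- you clearly mean that $(r')^s$ is the factor onto which the derivative is transferred, and the subsequent displayed IBP is correct, but as written that equation is not what you intend. You also correctly flag the vanishing of the boundary term $f(r')(r')^s\big|_{r'=0}$, which is indeed the only delicate point; since $s\ge 1$ and $f$ is bounded near the axis for all the quantities to which the operator is later applied, this is harmless, and it is precisely the point the paper's substitution argument elides.
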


\begin{proof} The procedure is identical to that presented in \cite{Luk2018}, but we will repeat it here in the new case for completeness. So let $\rho = r^{s}$. Then $s\rho^{1 - \frac{1}{s}}dr = d\rho$ and $s\rho^{1-\frac{1}{s}}\partial_{\rho} = \partial_r$. Then the LHS is
\begin{dmath*} (\partial_{r}I_r^s[f])(r) = \rho^{1-\frac{1}{s}} \partial_\rho \Lx \frac{1}{\rho} \int_0^\rho f(\rho')d\rho' \Rx. \end{dmath*}
One then checks that, letting $\sigma' = \frac{\rho'}{\rho}$
\begin{align*}
\partial_{\rho}\Lx \frac{1}{\rho}\int_0^\rho f(\rho')d\rho' \Rx &= \partial_{\rho} \Lx \int_0^1 f(\rho\sigma')d\sigma' \Rx\\
&= \int_0^1 (\partial_{\rho}f)(\rho\sigma')\sigma' d\sigma'\\
&= \frac{1}{\rho^2}\int_0^{\rho}(\partial_{\rho}f)(\rho') \rho' d\rho'.
\end{align*}
Thus
\begin{dmath*} \rho^{1 - \frac{1}{s}} \partial_\rho \Lx \frac{1}{\rho} \int_0^{\rho} f(\rho')d\rho' \Rx = \frac{1}{\rho^{1 + \frac{1}{s}}} \int_0^{\rho} f(\rho')\rho'd\rho' = \frac{1}{r^{s+1}} \int_0^r f(r') (r')^{s} dr'. \end{dmath*}
So substituting we have
\begin{dmath*} (\partial_r I_s[f])(r) = I_r^{s+1}[\partial_rf](r) \end{dmath*}
as claimed.
\end{proof}

\section{Main Results}
\label{secresults}
In this section I give precise formulations of the main results of this paper. The first two sections will focus on establishing decay for $\phi$ and the geometry of solutions in the $1+1$ dimensional, spherically symmetric setting:
\begin{theorem}\label{main1}
Let $(\phi,r,m)$ be a locally scattering solution to \eqref{SSESF} with initial data asymptotically flat of order $\omega' \geq 2$ in $C^k$ ($k \geq 1$) towards $\mathcal{I}^+$ and $\mathcal{I}^-$. Then the following estimates hold in $\mathbb{I}$ for all multi-indices $\alpha,\beta$ with $\abs{\alpha} \leq k, \abs{\beta} \leq k+1$:
\begin{gather}
\label{main11}\abs{\partv{\abs{\alpha}}\lambda} \lesssim (1+v)^{-(\abs{\alpha}+1)}\\
\label{main12}\abs{\partu{\alpha_u}\partv{\alpha_v}\lambda} \lesssim \min\set*{(1+\abs{u})^{-(\alpha_u+1)}v^{-\alpha_v},(1+\abs{u})^{-\alpha_u}(1+v)^{-(\alpha_v+1)}}\\
\label{main13}\abs{\partu{\abs{\alpha}}\nu} \lesssim (1+\abs{u})^{-(\abs{\alpha}+1)}\\
\label{main14}\abs{\partu{\alpha_u}\partv{\alpha_v}\nu} \lesssim \min\set*{(1+\abs{u})^{-(\alpha_u+1)}(1+v)^{-\alpha_v}, (1+\abs{u})^{-\alpha_u}(1+v)^{-(\alpha_v + 1)}}\\
\label{main15}\abs{\partu{\alpha_u}\partv{\alpha_v}\phi} \lesssim \min\set*{(1+\abs{u})^{-(\alpha_u+1}(1+v)^{-\alpha_v},(1+\abs{u})^{-\alpha_u}(1+v)^{-(\alpha_v+1)}}\\
\label{main16}\abs{\partu{\alpha_u}\partv{\alpha_v}m} \lesssim \min\set*{(1+\abs{u})^{-(\alpha_u+1)}(1+v)^{-\alpha_v},(1+\abs{u})^{-\alpha_u}(1+v)^{-(\alpha_v+1)}}\\
\label{main17}\abs{\partu{\alpha_u}\partv{\alpha_v}\frac{m}{r^l}} \lesssim \min\set*{(1+\abs{u})^{-(\alpha_u+1+l)}(1+v)^{-\alpha_v},(1+\abs{u})^{-\alpha_u}(1+v)^{-(\alpha_v+1+l)}}\\
\label{main18}\abs{\partu{\beta_u}\partv{\beta_v}(r\phi)} \lesssim \min\set*{(1+\abs{u})^{-(\alpha_u+1}(1+v)^{-\alpha_v}, (1+\abs{u})^{-\alpha_u}(1+v)^{-(\alpha_v+1)}}\\
\label{main19}\abs{\partv{\abs{\beta}}(r\phi)} \lesssim (1+v)^{-(\abs{\beta}+1)}\\
\label{main110}\abs{\partu{\abs{\beta}}(r\phi)} \lesssim (1+\abs{u})^{-(\abs{\beta}+1)}
\end{gather}
for $l \leq 2$.
\end{theorem}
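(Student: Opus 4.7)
\medskip

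\textbf{Proof proposal for \cref{main1}.} The plan is to proceed by induction on the total differential order $|\alpha|$ (and on $|\beta|$ for the $r\phi$ estimates, which require one more derivative than the geometric ones, since the basic transport equations in \eqref{SSESF} control $\partial_u \partial_v(r\phi)$ and $\partial_u \partial_v r$, and the initial data hypothesis $\omega' \ge 2$ in $C^k$ provides $k+1$ derivatives of $r\phi$ on characteristic data). The base case $|\alpha|\le 1$ is already contained in the low-order estimates of Luk--Oh \cite{Luk2015} (including the qualitative statements in the locally scattering hypothesis, which upgrade to quantitative bounds on $m/r^l$ for $l\le 2$ via the constraint equations). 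As indicated in the outline, the spacetime is split into the two regions $\mathcal{Q}$ (where the constant-$u$ and constant-$v$ curves are compact, i.e.\ where $r$ can become small) and $\mathcal{O}_R$ (away from the axis). These will be handled in \cref{Qproof} and \cref{oproof} respectively, and together they cover $\mathbb{I}$.

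\medskip

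The inductive step proceeds by commuting $\partu{\alpha_u}\partv{\alpha_v}$ with the rewritten form of \eqref{SSESF}. For the metric, the key transport equations are
\[
\partial_u \lambda = \frac{2m}{(1-\mu)r^2}\,\lambda\nu, \qquad \partial_v \nu = \frac{2m}{(1-\mu)r^2}\,\lambda\nu,
\]
and for the scalar field one uses the two forms
\[
\partu{}\partv{}(r\phi) = \frac{2m}{(1-\mu)r^2}\,\phi - \frac{2m}{(1-\mu)r^2}\bigl(\lambda^{-1}\partial_v\phi\bigr),
\]
and its $u \leftrightarrow v$ analogue. Commutation produces source terms which are polynomials in lower-order derivatives of $\lambda,\nu,\phi,m/r^l$, to which the inductive hypothesis applies. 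Two points of technique are essential. First, the mixed estimates \eqref{main12}, \eqref{main14}, \eqref{main15}, \eqref{main18} exhibit a null condition: pure $\partial_v$ derivatives decay in $v$ while $\partial_u$ derivatives decay in $|u|$, and this structure is preserved by the sources provided one chooses the minimum of the two possible bounds in the products. Second, the bounds on $m/r^l$ in \eqref{main17} follow by writing $\partial_u (m/r^l) = (\partial_u m)/r^l - l\nu\, m/r^{l+1}$ and integrating along the appropriate null direction, exploiting the already-proven decay of $\partial_u m$ from the Raychaudhuri equation.

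\medskip

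The main obstacle, and the reason $\mathcal{Q}$ is treated separately in \cref{Qproof}, is that a naive integration of the source term $\frac{2m}{(1-\mu)r^2}$ loses control as $r \to 0$, and that $r^{-l}$ factors in \eqref{main17} cannot be handled by direct transport. This is where the averaging operators $I_v^s, I_u^s, I_r^s$ from \eqref{vavg}--\eqref{ravg} enter: rather than estimate $\partu{\alpha_u}\partv{\alpha_v}\phi$ pointwise by naive integration, one estimates it as an appropriate $r$-average (or $u$-, $v$-average) of a higher-order derivative, using the differential formula of \cref{avglemma} (and its $u,v$ analogues in \cite{Luk2018}) to convert $\partial_r I_r^s[f]$ into $I_r^{s+1}[\partial_r f]$. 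This effectively trades a $1/r^2$ loss in the source for an improvement in the averaging exponent, which absorbs the singularity. Consistency with the remarks on \ref{gauge1}, \ref{gauge2} (vanishing of $(\partial_u+\partial_v)^l$ derivatives of $r,r\phi$ at $\Gamma$) is what allows the averages to be controlled by higher-order quantities rather than by boundary terms.

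\medskip

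Away from the axis, in $\mathcal{O}_R$ for $R$ large, all powers $r^{-l}$ are bounded and the estimates reduce to direct integration of the commuted transport equations along constant-$u$ or constant-$v$ null curves, starting from the bounds at $v=R$ (which are propagated from the data on $C_1$ via the $\mathcal{Q}$-estimates, or alternatively posed on $\underline{C}_R$ and propagated by the sideways characteristic problem). The hardest piece will therefore be the book-keeping in $\mathcal{Q}$: carefully choosing, at each differential order $(\alpha_u,\alpha_v)$, the correct averaging operator and the correct null direction along which to integrate, so that the weights of $(1+|u|)$ and $(1+v)$ close the induction and match both asymptotic regimes in the $\min\{\cdot,\cdot\}$ bounds. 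Once that combinatorial structure is fixed, all source terms lie in the inductive hypothesis and the estimates close.
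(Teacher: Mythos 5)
Your proposal captures the high-level architecture (induction on differential order, the $\mathcal{Q}$/$\mathcal{O}_R$ split, averaging operators near the axis), but it misses the central technical difficulty in $\mathcal{Q}$: the estimates do not close by commutation plus averaging alone, because of a circularity at top order. To control $\partv{n+1}\phi$ via the $I_r^s$-averaging of \cref{avglemma} you need a bound on $\partv{n+2}(r\phi)$; but to obtain $\partv{n+2}(r\phi)$ by integrating the commuted wave equation along $\underline{C}_v$, the source contains $\partv{n+1}\phi$, $\partv{n+1}\mu$, and $\partv{n+1}(m/r^2)$ --- precisely the quantities one is trying to bound at this order, and not covered by the inductive hypothesis. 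Your closing statement, ``all source terms lie in the inductive hypothesis and the estimates close,'' is therefore false at the crucial differential order, and the inductive step as you describe it does not go through.

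The paper resolves this circularity with a bootstrap/continuity argument (\cref{vbootprelim}, \cref{vrphiboot}, \cref{ubootprelim}, and the lemma establishing \eqref{rphiuboot}): one assumes the top-order bound on $\partv{n+2}(r\phi)$ with constant $C$ on a strip, derives preliminary bounds on $\partv{n+1}\phi$, $\partv{n+1}\mu$, $\partv{n+1}(m/r^2)$ with constants proportional to $C$, and then shows --- via smallness propositions for integrals of $|m/r^2|$, $|\phi|$, $|\partial_v(r\phi)|$ along null rays, together with Gronwall's inequality --- that the $C$-dependent contributions to the integrated wave equation can be made arbitrarily small for large $v$ (respectively large $u$ for the $\partu{n+2}(r\phi)$ bootstrap seeded by $(\partial_u+\partial_v)^l(r\phi)|_{\Gamma}=0$), so the bootstrap strictly improves and propagates by an openness/closedness argument. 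You would also need an exchange lemma (\cref{uvexclemma}) to freely reorder $\partu{},\partv{}$ inside the commutation identity without worsening the weights. A minor discrepancy: \eqref{main17} is obtained pointwise in \cref{mweight}, by applying Leibniz to $m=(m/r^k)\cdot r^k$ and dividing through by $r^k$, rather than by integrating $\partial_u(m/r^l)$ as you propose; integrating towards the axis is delicate because the $r^{-l-1}$ weight degenerates there.
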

Here, and throughout the remainder of this paper we will use the symbol $\lesssim$ to indicate an inequality up to an unspecified constant depending only on the size of initial data (Bondi mass and $\abs{\phi}$), $\abs{\alpha}$ and $\abs{\beta}$.

\begin{remark}
In the proof of this theorem we will not explicitly check that data of the type above gives rise to a global solution (i.e. one defined in the entire right half plane) of the form we wish to consider. This is, however true, and we will assume as much. For discussion of this see \cite{Dafermos2003}.
\end{remark}
In fact we will prove slightly stronger decay in $\mathcal{Q}$ en-route to this theorem, however the above bounds suffice to prove our second result:
\begin{theorem}\label{main2}
Let $(\phi,r,m)$ be as above, and suppose $k \geq 11$. Let $0 < \gamma \leq \frac{1}{8}$. Then the corresponding $(3+1)$ dimensional solution to ESF, $(\mathcal{M}, \textbf{g}, \tilde\phi)$ is a dispersive spacetime solution of size $(C,\gamma_0,k)$ (for some $C > 0$ depending on the constants of \cref{main1}), and is thus globally non-linearly stable to good non-spherically symmetric perturbations by \cref{stabtheorem}.
\end{theorem}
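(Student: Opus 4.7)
The plan is to transport the spherically symmetric bounds of \cref{main1} to the full $(3+1)$-dimensional spacetime $\mathcal{M}$ and verify conditions \ref{disp1}--\ref{disp8} directly. First I work in the future-normalized gauge \ref{gauge2}, in which the contribution of the Bondi mass appears as a clean logarithmic shift along $\mathcal{I}^+$. On the quotient $M$ define
\[ t = \tfrac{1}{2}(u+v) + 2M_f \log(v-u) + \text{(l.o.t.)}, \qquad \tilde r = \tfrac{1}{2}(v-u), \]
with the lower-order terms pinned down below by \ref{disp8}, and lift these together with $x^i = \tilde r\, \omega^i$ (for $\omega^i$ the standard directional cosines on $S^2$) to a global coordinate system on $\mathcal{M}\setminus \Gamma$. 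In these coordinates the null vectors $L, \underline L$ agree with $2\partial_v$ and $-2\partial_u$ up to $O(\tilde r^{-1}\log\tilde r)$ corrections, the rotation fields $E^i$ act only on the $\omega^i$ factor, and the scaling $S$ is $v\partial_v + u\partial_u$ plus logarithmic remainders. Iterated commutators $\Gamma^I$ are then a bookkeeping translation of the min-type $(u,v)$-decay of \cref{main12,main14,main15}.

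For \ref{disp2}, one decomposes $h = g - m$ into a conformal piece driven by $\Omega^2 - 1$, an angular piece from $r^2 - \tilde r^2 = O(M_f \tilde r \log \tilde r)$, and cross terms from the logarithmic shift in $t$; each is $O(\log(2+s)/(1+s))$ and each $\Gamma^I$ derivative preserves this rate. The pointwise bounds \ref{disp3} and \ref{disp4} follow from asymptotic flatness of order $\omega' \geq 2$ (so that each $\partial$ produces an extra $\gamma_0$-power) together with the observation that $\bar\partial$ is essentially a $\partial_v$ derivative in disguise, gaining the full factor of $(1+v)^{-1}$ recorded in the second entry of the minimum in \cref{main12,main14}. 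The improved $LL$ and $L\mathcal{T}$ bounds \ref{disp5} express the null structure of the double-null decomposition: the $\partial_v\partial_v$ and $\partial_v E^i$ components of $h$ are driven only by $(m/r)\lambda^2$ and by the trace-free angular part of $r^2\, d\sigma_{S^2}^2$, both of which decay at the improved rate $(1+s)^{-1-\gamma_0}$ via \cref{main17}. Condition \ref{disp6} is exactly \cref{main15} translated to the new coordinates, and \ref{disp7} follows from positivity of $\Omega^2, r^2$ together with $\mu < 1$ on $\mathbb{I}$.

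The delicate step is \ref{disp8}, which dictates the precise form of the logarithmic corrections in $t$ and $\tilde r$: one chooses the lower-order terms so that the Schwarzschildean leading contribution to $\Box_g t$ and $\Box_g \tilde r$ cancels, leaving a remainder of order $\log(2+s)/(1+s)^2$ controlled by \cref{main17}. In spherical symmetry this reduces to an ODE-level matching of the coefficient of $\log\tilde r$; the extension to $\Gamma^I\Box_g x^\mu$ then follows by commuting through and applying the same translation. The main obstacle I foresee is regularity at the axis: since $r$ is not smooth on $\Gamma$ (only $r^2$ is), the Cartesian $x^i = \tilde r\, \omega^i$ are not literally smooth across $\Gamma$, and the naive translation of $\partial_u, \partial_v$ derivatives into $\Gamma^I$ breaks down there. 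This is the role of \cref{axisregsec}: the averaging operators $I_v^s, I_u^s, I_{\tilde r}^s$ of \cref{avglemma} provide smooth axis-regular proxies for the would-be singular quantities by exploiting that $(\partial_u + \partial_v)^l r$ vanishes on $\Gamma$ to all orders, so that the bounds propagate smoothly through $\Gamma$. Once the wave-coordinate cancellation and the axis regularity are in place, the hypotheses of \cref{stabtheorem} are verified and the stability conclusion follows.
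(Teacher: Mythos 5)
Your proposal diverges from the paper's proof at a key structural point: the choice of the time coordinate. You propose $t = \tfrac{1}{2}(u+v) + 2M_f\log(v-u) + \text{(l.o.t.)}$, with the logarithmic correction engineered so that the ``Schwarzschildean leading contribution'' to $\Box_g t$ cancels. The paper does no such thing. It takes the plain linear coordinate $t = \hat u + \hat v$, $\hat r = \hat v - \hat u$ in the future-normalized gauge \ref{gauge2} and shows directly that \ref{disp8} is \emph{already} satisfied, with no cancellation at all: one computes $\Box_g t = -\Omega^{-2}(\lambda+\nu)/r$, and the gauge normalization $\nu \to -\tfrac12$ together with the axis condition $\lambda+\nu\mid_\Gamma = 0$ forces $\lambda+\nu$ to decay like $\log(2+\abs{u})/(1+v)$ (\cref{sumlimitcor}) away from the axis and like $(1+\abs{u})^{-2}$ near it; an averaging argument then gives the required $\log(2+s)/(1+s)^2$ bound for $\Box_g t$. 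The logarithm shows up as a \emph{loss} in the estimate, not as a coordinate shift.

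Beyond being unnecessary, your log-shifted coordinate is genuinely problematic. The correction $2M_f\log(v-u)$ is singular on the axis $\Gamma = \set{v = u}$, so the coordinate system you write down lives only on $\mathcal{M}\setminus\Gamma$, whereas \cref{dispsoln} requires a \emph{global} coordinate system on $[0,\infty)\times\R^3$ in which \ref{disp2}--\ref{disp8} hold pointwise, including across the axis. You flag ``(l.o.t.)'' as fixing this, but you never produce a choice that is smooth and axis-regular yet still delivers the asymptotic cancellation you want near $\mathcal{I}^+$, and it is not clear one exists without abandoning the cancellation near the axis entirely — at which point you have reinvented the paper's ansatz. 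Even away from the axis, the cross term $dt\,d\hat r$ that the shift introduces is $O(M_f/\hat r)$, which must then be recontrolled in \ref{disp3}--\ref{disp5} and, in particular, in the $LL$ and $L\mathcal{T}$ components of \ref{disp5}; the paper avoids all of this because its $t$ is diagonal with the angular block, so $h_{it} \equiv 0$. Your reading of the axis-regularity machinery (averaging operators, vanishing of $(\partial_u+\partial_v)^l r$ on $\Gamma$, \cref{axisregkey}) and of the translation of \cref{main12,main14,main15} into $\Gamma^I$-decay is in the right spirit; the missing idea is that in the gauge \ref{gauge2} the generalized wave-coordinate condition comes for free, and trying to improve it by hand is what creates the axis problem in the first place.
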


\section{Inductive Estimates in $u,v$ in the Region $\mathcal{Q}$}
\label{Qproof}
In this section we prove that, in $\mathcal{Q}$ all $f\in \set*{\lambda,\nu,\mu,\phi}$ satisfy estimates of the form
\begin{dmath*} \abs{\partial^{\alpha}f}(u,v) \leq \min\set*{\frac{C}{u^{\abs{\alpha}}}, \frac{C}{u^{\alpha_u}r^{\alpha_v}}} \end{dmath*}
to arbitrary differential order. We will work in the gauge \ref{gauge1}. Moreover, in this section we will only consider data posed on $C_1$, rather than the full setting eventually required. This is all that's required to address the region $\mathcal{Q}$, and is consistent by our compatibility assumption on the initial data.

We will, throughout this section, make use of several of the estimates in $\mathcal{Q}$ proved in \cite{Luk2015}
\begin{theorem*}
Let $(\phi,r,m)$ be a locally scattering solution to \eqref{SSESF} on $\mathcal{Q}$ with initial data asymptotically flat of order $\omega'$ in $C^k$ towards $\mathcal{I}^+$ for $k \geq 1$. Then the following bounds hold:
\begin{gather}
\label{unifapriori} \frac{1}{3} \leq \lambda \leq \frac{1}{2} \quad \frac{1}{3} \leq -\nu \leq \frac{2}{3} \quad \frac{2}{3} \leq (1-\mu) \leq 1\\
\label{phiapriori} \abs{\phi} \lesssim \min\set*{r^{-1}u^{-1},u^{-2}}\\
\label{dvlambdaapriori} \abs{\partial_v\lambda} \lesssim \min\set*{r^{-3},u^{-3}}\\
\label{dunuapriori} \abs{\partial_u \nu} \lesssim u^{-3}\\
\label{dvrphiapriori} \abs{\partial_v(r\phi)} \lesssim \min\set*{v^{-2},u^{-2}} \quad \abs{\partial_v^2(r\phi)} \lesssim \min\set*{v^{-3},u^{-3}}\\
\label{durphiapriori} \abs{\partial_u(r\phi)} \lesssim u^{-2} \quad \abs{\partial_u^2(r\phi)} \lesssim u^{-3}\\
\label{dvphiapriori} \abs{\partial_v\phi} \lesssim \min\set*{r^{-2}u^{-1},u^{-3}}\\
\label{duphiapriori} \abs{\partial_u\phi} \lesssim \min\set*{r^{-1}u^{-2},u^{-3}}\\
\label{mapriori} \abs{m} \lesssim \min\set*{u^{-3}, r^3u^{-6}}.
\end{gather}
\end{theorem*}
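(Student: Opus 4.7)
The plan is to follow the strategy of \cite{Luk2015}, using a bootstrap argument coupled with the monotonicity properties of \eqref{SSESF}. The estimates for $\phi$, $m$, and the metric coefficients are tightly intertwined, and must be closed simultaneously.

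First I would establish the uniform bounds \eqref{unifapriori}. The Raychaudhuri equations (the fourth and fifth equations of \eqref{SSESF1}) give monotonicity of $\Omega^{-2}\lambda$ along $\partial_v$ and of $\Omega^{-2}\nu$ along $\partial_u$, with a definite sign coming from the scalar field term $r(\partial\phi)^2$. Combined with the gauge normalization $\lambda \equiv \tfrac{1}{2}$ on $C_1$, the identity $1 - \mu = 4\Omega^{-2}\lambda\nu$, the non-negativity of the Hawking mass, and the local scattering hypothesis (which caps $m$ by the initial Bondi mass), this yields both upper and lower bounds on $\lambda$, $-\nu$, and $1 - \mu$.

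Next I would run a coupled bootstrap for the decay of $\phi$, $r\phi$, and $m$, propagating the asymptotic flatness of the data inward from $C_1$. The wave equation
\[ \partial_u \partial_v (r\phi) = \frac{2m\lambda\nu}{(1-\mu)r^2}\phi \]
is integrated along $\partial_v$ from $\Gamma$ (where $r\phi = 0$) and along $\partial_u$ from $C_1$ to control $\partial_v(r\phi)$, $\partial_u(r\phi)$, and hence $\phi$ itself. The mass decay follows from integrating
\[ \nu^{-1}\partial_u m = \tfrac{1}{2}(1-\mu) r^2 (\nu^{-1}\partial_u \phi)^2 \]
along $\partial_u$, and the analogous $\partial_v$ equation along $\partial_v$, substituting in the bootstrap assumptions for $\phi$. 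The second-derivative bounds \eqref{dvlambdaapriori}, \eqref{dunuapriori} for $\partial_v\lambda$ and $\partial_u\nu$ are then obtained by combining the $r$-equation $\partial_u\lambda = \partial_v\nu = \frac{2m\lambda\nu}{(1-\mu)r^2}$ with the Raychaudhuri equations, and \eqref{dvrphiapriori}--\eqref{duphiapriori} by differentiating the wave equation and re-integrating.

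The main obstacle is closing the bootstrap: the bound $|\phi| \lesssim u^{-2}$ requires that the source $\frac{2m}{(1-\mu)r^2}\phi$ be $u^{-3}$-integrable along characteristics, which in turn demands careful control on $m$. The key observation is that one can take a bootstrap ansatz for $m$ and $\phi$ that is jointly consistent with both the mass equations and the wave equation, using in particular the refined estimate $m \lesssim r^3 u^{-6}$ for $r \ll u$ to ensure $m/r^2$ is uniformly bounded even as $r \to 0$. This $r^3$ behavior near the axis is itself obtained by integrating $\partial_v m$ from $\Gamma$ (where $m = 0$) and exploiting the improved decay $|\partial_v \phi| \lesssim r^{-2} u^{-1}$ in the small-$r$ regime, which reflects the regularity of the solution at the axis encoded by the boundary conditions in \cref{charinit}.
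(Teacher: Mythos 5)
This result is not proved in the paper: it is stated and used as a black-box import from Luk--Oh \cite{Luk2015}, so there is no internal proof to compare against. Your sketch is a plausible high-level reconstruction of the Luk--Oh strategy (Raychaudhuri monotonicity plus the $C_1$ gauge normalization and the relation $1 - \mu = 4\Omega^{-2}\lambda\nu$ for the uniform bounds; a coupled bootstrap integrating the wave and mass equations along characteristics from $\Gamma$ and $C_1$ for the decay rates).

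There is one concrete error in your final paragraph. You attribute the axis estimate $m \lesssim r^3 u^{-6}$ to ``the improved decay $|\partial_v\phi| \lesssim r^{-2}u^{-1}$ in the small-$r$ regime.'' That branch \emph{degenerates} as $r \to 0$ and is not what can be used there; it is the other branch, $|\partial_v\phi| \lesssim u^{-3}$, that is operative near $\Gamma$. Substituting it into $\lambda^{-1}\partial_v m = \tfrac{1}{2}(1-\mu)\, r^2 (\lambda^{-1}\partial_v\phi)^2$ gives $\partial_v m \lesssim r^2 u^{-6}$, which integrates in $v$ from $\Gamma$ (where $m=0$, $r=0$) to $m \lesssim r^3 u^{-6}$. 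Using $(\partial_v\phi)^2 \lesssim r^{-4}u^{-2}$ instead would produce an integrand $\sim r^{-2}u^{-2}$ that is not even integrable at $r=0$. Your conclusion is correct, but the mechanism you name is the wrong one, and the distinction matters: it is precisely the uniform $u^{-3}$ branch (reflecting axis regularity) that carries this estimate, not the $r$-weighted branch.
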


\subsection{Inductive Framework}
The goal in this section is to obtain bounds to arbitrary differential order in $u,v$ via an inductive process. Our goal is the following inductive step:
\begin{theorem}
\label{uvind}
Let $(\phi,r,m)$ be a locally scattering, asymptotically flat of order $\omega' \geq 2$ in $C^q$ towards $\mathcal{I}^+$ solution to \eqref{SSESF} in $\mathcal{Q}$, with data prescribed on some $C_u$. Suppose that for all $\alpha = (\alpha_u,\alpha_v),\beta \in \Z^2$ multindices with $1 \leq |\alpha| \leq n < q$, $|\beta| \leq n+1$ (excluding $\alpha,\beta = (0,0)$) we have the estimates
\begin{gather}
\label{ind1hypvlambda} \abs{\partv{\abs{\alpha}}\lambda}\lesssim \min\set*{r^{-(\abs{\alpha}+2)},u^{-(\abs{\alpha}+2)}}\\
\label{ind1hypuvlambda} \abs{\partu{\alpha_u}\partv{\alpha_v}\lambda} \lesssim \min\set*{r^{-(\alpha_v+2)}u^{-(\alpha_u+1)},u^{-(\abs{\alpha}+4)}}\\
\label{ind1hypunu} \abs{\partu{\abs{\alpha}}\nu} \lesssim u^{-(\abs{\alpha}+2)}\\
\label{ind1hypuvnu} \abs{\partu{\alpha_u}\partv{\alpha_v}\nu} \lesssim \min\set*{r^{-(\alpha_v+1)}u^{-(\alpha_u+2)},u^{-(\abs{\alpha}+5)}}\\
\label{ind1hypphi} \abs{\partu{\alpha_u}\partv{\alpha_v}\phi} \lesssim \min\set*{r^{-(\alpha_v + 1)}u^{-(\alpha_u+1)},u^{-(\abs{\alpha}+2)}}\\
\label{ind1hypvrphi} \abs{\partv{\abs{\beta}}(r\phi)} \lesssim \min\set*{r^{-(\abs{\beta}+1)},u^{-(\abs{\beta}+1)}}\\
\label{ind1hypurphi} \abs{\partu{\abs{\beta}}(r\phi)} \lesssim \min\set*{u^{-(\abs{\beta}+1)}}\\
\label{ind1hypuvrphi} \abs{\partu{\beta_u}\partv{\beta_v}(r\phi)} \lesssim \min\set*{r^{-(\beta_v+1)}u^{-(\beta_u+2)},u^{-(\abs{\beta}+4)}}\\
\label{ind1hypm} \abs{\partu{\alpha_u}\partv{\alpha_v}m} \lesssim \min\set*{r^{-(\alpha_v+k)}u^{-(\alpha_u+2)},u^{-(\abs{\alpha}+3)}}\\
\label{ind1hypmweight} \abs{\partu{\alpha_u}\partv{\alpha_v}\frac{m}{r^k}} \lesssim \min\set*{r^{-(\alpha_v+k}u^{-(\alpha_u+2)},u^{-(\abs{\alpha}+k+3)}}
\end{gather}
%%check this
for $k \leq 2$, where $\vartheta(l) = \begin{cases}1 & n \geq 1\\0 & \text{otherwise}\end{cases}$, in the region $\mathcal{Q}$, and we take $\alpha_u,\beta_u > 0$ in \eqref{ind1hypuvlambda}, and \eqref{ind1hypuvrphi},
and $\alpha_v, \beta_v > 0$ in \eqref{ind1hypuvnu} and \eqref{ind1hypuvrphi} respectively. Then in fact these bounds hold for $\abs{\alpha} = n+1, \abs{\beta} = n+2$.
\end{theorem}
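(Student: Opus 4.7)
The plan is to derive the order-$(n+1)$ bounds on $\lambda, \nu, \phi, m, m/r^k$ and the order-$(n+2)$ bounds on $r\phi$ by differentiating the \eqref{SSESF} system and integrating along characteristics, processing the variables in a dependency-respecting order so that each estimate calls only on the inductive hypothesis and the \cite{Luk2015} a priori bounds. Most estimates will follow pointwise from algebraic differentiation of the field equations; integration and Gr\"onwall arguments are reserved for pure $\partial_v$-derivatives of $\lambda$, pure $\partial_u$-derivatives of $\nu$, and the top-order boundary estimates on $r\phi$.

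The order I would follow is: first boost $r\phi$ to order $n+2$, then $\phi$ to order $n+1$ via the identity $\phi = (r\phi)/r$, then $m$ and $m/r^k$ to order $n+1$ via the mass equations in \eqref{SSESF}, and finally $\lambda$ and $\nu$. For $r\phi$, the mixed derivatives with both $\beta_u, \beta_v \geq 1$ come pointwise by applying $\partial_u^{\beta_u - 1}\partial_v^{\beta_v - 1}$ to $\partial_u\partial_v(r\phi) = \frac{2m\lambda\nu}{(1-\mu)r^2}\phi$ and expanding via Leibniz; the pure $\partial_v^{n+2}(r\phi)$ bound comes from integrating $\partial_u\partial_v^{n+2}(r\phi)$ in $u$ from $C_1$, where asymptotic flatness of order $\omega' \geq 2$ supplies the boundary value; and the pure $\partial_u^{n+2}(r\phi)$ bound comes from integrating in $v$ from the axis, using the axis regularity identities $(\partial_u + \partial_v)^l(r\phi)|_\Gamma = 0$ to reduce the axis boundary values to already-controlled mixed and pure $v$-derivatives. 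The $\phi$-estimates then follow by Leibniz on $\phi = (r\phi)/r$, using $\partial_v r = \lambda, \partial_u r = \nu$ and the order-$n$ hypothesis. For $m$, applying $\partial^\alpha$ with $|\alpha| = n+1$ to the SSESF mass equations expresses $\partial^\alpha m$ as a product of quantities now all controlled; the weighted $m/r^k$ estimate near $\Gamma$ uses the averaging operator $I_r^s$ of \cref{avglemma} to absorb negative $r$-weights, mirroring the argument in \cite{Luk2018}.

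For $\lambda$ and $\nu$, the mixed derivatives with at least one derivative in each variable come pointwise by differentiating $\partial_u\lambda = \partial_v\nu = \frac{2m\lambda\nu}{(1-\mu)r^2}$. The pure $\partial_v^{n+1}\lambda$ bound is obtained by integrating $\partial_u\partial_v^{n+1}\lambda$ in $u$ from $C_1$, where the gauge \ref{gauge1} ($\lambda \equiv \frac{1}{2}$) supplies vanishing boundary data for all $v$-derivatives; the Leibniz expansion of the integrand contains a Gr\"onwall-type term $\frac{2m\nu}{(1-\mu)r^2}\partial_v^{n+1}\lambda$, and closure follows from integrability in $u$ of $m/r^2 \lesssim u^{-3}$ (the $k=2$ case of \eqref{ind1hypmweight} combined with \eqref{mapriori}). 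The symmetric argument handles $\partial_u^{n+1}\nu$. The main obstacle is the weighted $m/r^k$ estimate near $\Gamma$: a naive differentiation loses powers of $r$ at each step, and the averaging-operator machinery of \cref{avglemma} is essential for preserving the sharp decay \eqref{ind1hypmweight}. A secondary subtlety is to first carry out the pointwise boost of $\partial_v^{n+1}\nu$ (via $\partial_v^{n+1}\nu = \partial_v^n\bigl(\frac{2m\lambda\nu}{(1-\mu)r^2}\bigr)$ plus the hypothesis) before attempting the $u$-integration for $\partial_v^{n+1}\lambda$, since $\partial_v^{n+1}\nu$ appears as a coefficient in the latter's Leibniz expansion but is not itself in the inductive hypothesis at level $n$.
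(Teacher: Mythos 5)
Your proposal has the right global skeleton --- differentiate the field equations, get mixed derivatives pointwise, recover pure $\partial_v^{n+2}(r\phi)$ by $u$-integration from $C_1$ and pure $\partial_u^{n+2}(r\phi)$ by $v$-integration from $\Gamma$, then pass to $\phi$, $m$, $\lambda$, $\nu$. But there is a genuine circularity in your proposed ordering that the paper has to break with a full bootstrap, and your proposal does not address it. When you integrate $\partial_u\partv{n+2}(r\phi) = \partv{n+1}\bigl(\tfrac{2m}{(1-\mu)r^2}(\phi - \partv{}(r\phi))\bigr)$ in $u$, the Leibniz expansion of the integrand contains the terms $\tfrac{m}{r^2}\partv{n+1}\phi$ and $(\phi-\partv{}(r\phi))\partv{n+1}\tfrac{m}{r^2}$, both at order $n+1$ and both \emph{not} controlled by the inductive hypothesis (which stops at order $n$). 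Gr\"onwall absorbs only the term proportional to $\partv{n+2}(r\phi)$ itself; these two terms are not of that form. The paper's resolution (Propositions \ref{vbootprelim}--\ref{vrphiboot} and the analogous $u$-side machinery) is an explicit bootstrap: assume the target bound on $\partv{n+2}(r\phi)$ with tracked constant $C$, deduce $C$-dependent preliminary bounds on $\partv{n+1}\phi$, $\partv{n+1}\mu$, $\partv{n+1}\tfrac{m}{r^2}$ via the averaging operators of \cref{avglemma}, prove smallness propositions ($\int \tfrac{m}{r^2}\,du' < \epsilon v^{-1}$, etc.), and then show the constant strictly improves so the region of validity is open, closed, and nonempty. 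Without these smallness estimates there is no mechanism to close the loop.

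A second gap is in your treatment of the $\phi$-estimate. Leibniz on $\phi = (r\phi)/r$ gives only the $r^{-1}$-weighted bound (the paper's preliminary \cref{phiprelim}: $|\partial^\alpha\phi| \lesssim \min\{r^{-(\alpha_v+1)}u^{-(\alpha_u+1)}, r^{-1}u^{-(|\alpha|+1)}\}$), which degenerates at the axis and does \emph{not} close \eqref{ind1hypphi}, whose second branch demands $u^{-(|\alpha|+2)}$ with no $r$-weight. To remove the $r^{-1}$ near $\Gamma$ the paper proves a separate algebraic identity for $2\lambda\partial_u\partial_v\phi$ (\cref{excprop1}) and runs a secondary induction on $\alpha_u$ (\cref{ind1phi}). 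Your plan omits this step entirely. Finally, you have misidentified where the averaging operators enter: they are not needed for the weighted $m/r^k$ estimate (the paper's \cref{mweight} handles that by the purely algebraic rearrangement $\partial^\alpha m = \partial^\alpha\bigl((m/r^k)r^k\bigr)$), but rather in the bootstrap auxiliary estimates for $\partv{n+1}\phi$ and $\partv{n+1}\tfrac{m}{r^2}$; so the subtlety you flag as the ``main obstacle'' is actually routine, while the genuine obstacle --- the bootstrap near the axis --- is missing from your account.
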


The remainder of this section will be devoted to the proof of this result,
and thus the closing of our first inductive bounds. %Note that throughout this section we will ignore the additional terms introduced by the weighting of our derivatives by $\lambda^{-1}$ and $\nu^{-1}$, as, by our inductive bounds,

Before proceeding any further, it will be useful at times to exchange the order of (gauge invariant) derivatives without worry about changing the form of the resulting bounds. Moreover, we would like to know that the above bounds apply to more than just the specific orderings of derivatives written. To this end we prove the following lemma:
\begin{lemma}\label{uvexclemma}
Suppose the bounds of \cref{uvind} hold for $\abs{\alpha} \leq n, \abs{\beta} \leq n+1$. Then if any of the bounds \eqref{ind1hypuvlambda}, \eqref{ind1hypuvnu}, \eqref{ind1hypphi}, or \eqref{ind1hypm} hold for some ordering of derivatives $\partv{},\partu{}$ at differential order $n+1$, then in fact the same estimates hold for arbitrary reorderings of $\partu{},\partv{}$.

Moreover, if \cref{ind1hypuvlambda,ind1hypuvnu} hold at order $n+1$, then if \eqref{ind1hypuvrphi} holds at order $n+2$ for any ordering of $\partv{},\partu{}$, it holds for all orderings.
\end{lemma}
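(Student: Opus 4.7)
My plan is to establish the commutator identity for the gauge-invariant derivatives $T_u := \nu^{-1}\partial_u$ and $T_v := \lambda^{-1}\partial_v$, and then carry out any reordering by a finite sequence of adjacent transpositions, controlling each commutator correction via the inductive hypothesis. Using $\partial_u\lambda = \partial_v\nu = 2m\lambda\nu/((1-\mu)r^2)$ (read off directly from \eqref{SSESF}), a short calculation in which the $\lambda$ and $\nu$ factors cancel neatly gives
\[
[T_u, T_v]f = C(T_u - T_v)f,\qquad C := \frac{2m}{(1-\mu)r^2}.
\]

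For a single swap inside $T^{I_1}T_uT_vT^{I_2}f$ (with $|I_1|+|I_2|+2 = n+1$), the commutator identity and Leibniz's rule yield
\[
T^{I_1}T_uT_vT^{I_2}f - T^{I_1}T_vT_uT^{I_2}f = \sum_{|J_1|+|J_2|=|I_1|}(T^{J_1}C)\bigl(T^{J_2}(T_u - T_v)T^{I_2}f\bigr)
\]
up to combinatorial coefficients that are irrelevant for the estimates. Each summand has derivative order at most $n-1$ on $C$ and at most $n$ on $f$. The first is controlled by \eqref{ind1hypmweight} (with $k=2$) together with product-rule handling of $(1-\mu)^{-1}$ via \eqref{ind1hypmweight} with $k=1$; the second by the appropriate bound from the inductive hypothesis for $f \in \{\lambda, \nu, \phi, m\}$ at order $\leq n$, which by induction on the overall derivative order is already known in arbitrary orderings. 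A direct weight count then shows that each commutator summand carries an extra factor of $C \sim m/r^2$, which by \eqref{ind1hypmweight} supplies strictly better $u$- and $r$-decay than the target bound at order $n+1$, so these errors are absorbed into the main bound. Iterating over the finitely many adjacent swaps needed to move between any two orderings proves the first claim.

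For the second claim, applying the same commutator identity to $r\phi$ at order $n+2$ produces analogous error terms of the form $(T^{J_1}C)(T^{J_2}T_w T^{I_2}(r\phi))$ with derivative order $\leq n+1$ on $r\phi$ and $\leq n$ on $C$. Controlling these derivatives of $r\phi$ in arbitrary orderings proceeds by recursively applying this same lemma one order lower (at $|\beta|=n+1$), which in turn requires the $(n+1)$-th order mixed bounds on $\lambda$ and $\nu$---that is precisely the added hypothesis \eqref{ind1hypuvlambda}, \eqref{ind1hypuvnu} at order $n+1$.

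The main obstacle is the weight bookkeeping through the Leibniz expansion, particularly near the axis where the $r^{-2}$ appearing in $C$ threatens to blow up; this is absorbed by the weighted mass bound \eqref{ind1hypmweight}, which supplies enough $r$-decay on $T^{J_1}(m/r^2)$ to compensate. Once that is accounted for, all commutator errors are lower-order perturbations of the target bounds, and the ordering-exchange follows.
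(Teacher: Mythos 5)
Your proposal is correct and follows essentially the same approach as the paper: reduce any reordering to a sequence of adjacent transpositions, compute the commutator of the gauge-invariant derivatives, expand via the Leibniz rule, and control the resulting lower-order error terms using the inductive hypothesis together with the decay of $m/r^2$. Your observation that $[T_u,T_v] = C(T_u-T_v)$ with $C = 2m/((1-\mu)r^2)$ (which follows from $\partial_u\lambda = \partial_v\nu$) is a clean formulation of the same commutator the paper uses in the form $(\partv{}\nu)\partu{}-(\partu{}\lambda)\partv{}$, since $\partu{}\lambda = C\lambda$ and $\partv{}\nu = C\nu$; both reduce the error to a product of a derivative of $C$ and a strictly lower-order derivative of $f$. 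One point where your write-up is lighter than the paper's: the paper explicitly flags a borderline case in the weight count — the hypothesized decay for $\lambda$ (resp.\ $\nu$) jumps by two powers of $u$ rather than one when passing from a mixed derivative with a single $u$ (resp.\ $v$) derivative to a pure $v$ (resp.\ $u$) derivative, so the ``one extra derivative buys one extra power'' heuristic fails at this boundary — and notes the gap is exactly covered by the decay of the extra commutator factor. Your claim that ``a direct weight count'' closes the estimate is correct but should be carried out far enough to see this exceptional case is handled; as written the sketch asserts the conclusion of the count rather than verifying it. Apart from that, the inductive structure (the lemma assumed at all lower orders, base case $n=0$, and the recursion in the second claim requiring the added $(n+1)$-order bounds on $\lambda,\nu$) matches the paper's argument.
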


\begin{proof}
It suffices to check that the difference between two adjacent (i.e. differing by a single exchange of $u$ and $v$ derivatives) orderings always satisfies at least the same bound as the initial ordering.

We denote an ordering as a $k$-tuple of integers $\vec{l}$ where the first entry represents the number of $u$ derivatives acting at the far left of our differential expression, and subsequent entries give the number of derivative of the type different from that preceding it that occur before the next change. We will write an ordered gauge normalized differential with multi-index $\gamma = (\gamma_u,\gamma_v)$ as $\bar{\partial}^{\gamma,\vec{l}}$. For example, $\bar{\partial}^{(2,1),(1,1,1)}f = \partu{}\partv{}\partu{}f$.

We begin with the first part of our result. We also assume that \cref{uvexclemma} holds at all orders $\leq n$.

Then let $\abs{\gamma} = n+1$, $\vec{l},\vec{l}'$ be two orderings corresponding to $\gamma$ that differ by one exchange. Then for a function $f = f(u,v)$ we can write
\[ \abs{\Lx\bar{\partial}^{\gamma,\vec{l}} - \bar{\partial}^{\gamma,\vec{l}'}\Rx f} = \abs{\bar{\partial}^{\gamma_1,\vec{l}_1}\Lx \partv{}\nu\partu{} - \partu{}\lambda\partv{} \Rx \bar{\partial}^{\gamma_2,\vec{l}_2}f} \]
for $\gamma_1,\gamma_2,\vec{l}_1,\vec{l}_2$ splitting $\gamma,\vec{l}$ around the two derivatives which are exchanged. Thus we see that if
\[ \abs{\bar{\partial}^{\gamma_3,\vec{s}}\partv{}\nu \bar{\partial}^{\gamma_3',\vec{s}'}\partu{}\bar{\partial}^{\gamma_2,\vec{l}'}f - \bar{\partial}^{\gamma_3,\vec{s}}\partu{}\lambda \bar{\partial}^{\gamma_3',\vec{s}'}\partv{}\bar{\partial}^{\gamma_2,\vec{l}'}f} \lesssim \abs{\bar{\partial}^{\gamma,\vec{l}}f} \]
for all $\gamma_3 + \gamma_3' = \gamma_1$ and appropriate sub-orderings given by the Leibniz rule, then the $\vec{l}'$ ordering must satisfy the same estimate as the $\vec{l}$ ordering. So in our first case, since all our bounds increase by no more than one power of $r$ or $u$ as a derivative is added, we need only check that
\[ \abs{\bar{\partial}^{\gamma_3,\vec{s}}\partv{}\nu} \lesssim \min\set*{r^{-(\gamma_{3,v+1})}u^{-\gamma_{3,u}},u^{-(\abs{\gamma_3}+1)}} \]
and the analagous statement for $\lambda$. But each of these hold immediately since $\abs{\gamma}+1 \leq n$ by construction, so by our hypotheses, these terms verify \eqref{ind1hypuvnu}, \eqref{ind1hypuvlambda} respectively. (There is a single exception to this, which is the case of a derivative of $\lambda$ or $\nu$ with only a single $u$ or $v$ derivative respectively. In this case the change in hypothesized order of decay from the mixed derivative to the non-mixed is slightly more, however one checks that in fact the change is still covered, since it is given by exactly the extra term we obtain from the exchange). Thus our first case is complete.

The second case is completely identical, requiring this extra order of derivative only because we work at one order higher (again there is some subtlety making the change from a mixed derivative of $r\phi$ to a non-mixed derivative. Again one easily checks that this in fact a non-issue).

It remains to check that our assumption that \cref{uvexclemma} holds at all lower orders is justified. By induction it suffices to check this for the case $n = 0$. The only non-trivial case here is
\begin{align*} \abs{\partu{}\partv{}(r\phi) - \partv{}\partu{}(r\phi)} &= {\partv{}\nu \partu{}(r\phi) - \partu{}\lambda \partv{}(r\phi)}\\
	&\lesssim \abs{\partv{}\nu \partu{}(r\phi)}\\
	&\lesssim \min\set*{r^{-2}u^{-5},u^{-7}}
\end{align*}
which is better decay than that hypothesized for $\partu{}\partv{}(r\phi)$, so the exchange holds (note that this also checks the problematic portion of the second case above).

Thus the proof is complete.
\end{proof}

\subsection{Decay for Mixed Derivatives of $r$}

We begin with mixed derivativs of $r$, i.e. $\partial^\alpha r$ such that neither $\alpha_u$ nor $\alpha_v$ is 0.
\begin{lemma}
\label{ind1mixr}
Under the hypothesis of \cref{uvind}, let $\abs{\alpha},\abs{\beta} = n+1$ and $\alpha_u,\beta_v > 0$. Then the following hold:
\begin{dmath}\label{ind1uvlambda} \abs{\partv{\alpha_v}\partu{\alpha_u}\lambda} \lesssim \min\set*{r^{-(\alpha_v+2)}u^{-(\alpha_u+2)},u^{-(\abs{\alpha}+4)}} \end{dmath}
\begin{dmath}\label{ind1uvnu} \abs{\partu{\beta_u} \partv{\beta_v}\nu} \lesssim \min\set*{r^{-(\alpha_v+1)}u^{-(\alpha_u+3)},u^{-(\abs{\alpha}+5)}}. \end{dmath}
\end{lemma}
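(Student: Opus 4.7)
The plan is to peel off one derivative by invoking the wave equation for $r$ and then control what remains using Leibniz and the inductive hypothesis. Written in gauge-normalized form, the first equation of \eqref{SSESF} reads
\[ \partu{}\lambda = \frac{2m\lambda}{(1-\mu)r^2}, \qquad \partv{}\nu = \frac{2m\nu}{(1-\mu)r^2}. \]
Hence for \eqref{ind1uvlambda}, since $\alpha_u \geq 1$, I write
\[ \partv{\alpha_v}\partu{\alpha_u}\lambda = \partv{\alpha_v}\partu{\alpha_u-1}\!\Bigl(\tfrac{2m\lambda}{(1-\mu)r^2}\Bigr), \]
so the right-hand side has total derivative order $\abs{\alpha}-1 = n$, precisely the range of the hypotheses of \cref{uvind}. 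Analogously, \eqref{ind1uvnu} will follow by applying $\partu{\beta_u}\partv{\beta_v-1}$ to the formula for $\partv{}\nu$, which is legitimate since $\beta_v \geq 1$.

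Next I expand by Leibniz. Along the way, commutators between $\lambda^{-1}\partial_v$ and $\nu^{-1}\partial_u$ produce extra low-order factors involving derivatives of $\lambda, \nu$, all controlled by the inductive hypothesis. Every resulting term is a product
\[ \partial^{\delta_1}\!\bigl(\tfrac{m}{r^2}\bigr)\cdot \partial^{\delta_2}\lambda \cdot \partial^{\delta_3}\!\bigl((1-\mu)^{-1}\bigr) \]
(and analogously for the $\nu$ case) with the $\delta_i$ summing to $\alpha - e_u$. The first factor is bounded by the weighted hypothesis \eqref{ind1hypmweight} with $k=2$ (or by \eqref{mapriori} when $\delta_1 = 0$), the second by \eqref{ind1hypvlambda}--\eqref{ind1hypuvlambda}, and the third by Fa\`a di Bruno applied to $x\mapsto 1/(1-x)$ at $x=\mu$: derivatives of $\mu = 2m/r$ are controlled by \eqref{ind1hypmweight} with $k=1$, while $(1-\mu)^{-1}$ itself is uniformly bounded by \eqref{unifapriori}.

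All that remains is bookkeeping. The base factor $m/r^2$ already carries $r^{-2}u^{-3}$ by \eqref{mapriori}, and each subsequent $v$-derivative distributed by Leibniz contributes at worst $r^{-1}$, each subsequent $u$-derivative at worst $u^{-1}$, regardless of which factor the derivative lands on. After distributing the remaining $\alpha_v$ $v$-derivatives and $\alpha_u - 1$ $u$-derivatives, the total weight is $r^{-(\alpha_v+2)}u^{-(\alpha_u+2)}$, giving the first branch of the $\min$. Replacing each factor's bound by its purely-$u$ side yields the second branch $u^{-(\abs{\alpha}+4)}$, and the $\min$ structure is preserved because every inductive hypothesis supplies both bounds simultaneously. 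The main delicate point is the Fa\`a di Bruno bookkeeping for $(1-\mu)^{-1}$: I need to check that partitions that split derivatives across multiple copies of $\mu$ are no worse than the concentrated one, which follows because every additional copy of $\mu$ contributes at least $u^{-3}$ and so only improves decay. The proof of \eqref{ind1uvnu} is identical after interchanging the symmetric roles of $u,v$ and $\lambda,\nu$.
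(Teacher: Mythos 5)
Your approach — peel one derivative via the wave equation for $r$, expand by Leibniz, handle $(1-\mu)^{-1}$ by a series/Fa\`a di Bruno argument, then close against the inductive hypotheses — is exactly the paper's approach. The paper writes the general Leibniz term with factors $\partial^{\alpha^1}(m/r^2)$, $\partial^{\alpha^2}\lambda$, $\partial^{\alpha^3}\mu$ and $(1-\mu)^{-(\abs{\alpha^3}+1)}$ (equivalent to your $\delta_1,\delta_2,\delta_3$ decomposition), and isolates the four ``concentrated'' cases as the ones to check.

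There is, however, a genuine gap in your power-counting step. You claim the base $r^{-2}u^{-3}$ for $m/r^2$ from \eqref{mapriori} persists, with each distributed $v$-derivative adding $r^{-1}$ and each $u$-derivative adding $u^{-1}$, yielding $r^{-(\alpha_v+2)}u^{-(\alpha_u+2)}$. But the $u^{-3}$ in \eqref{mapriori} is an \emph{undifferentiated} bound and does not survive once derivatives act on $m/r^2$: the inductive hypothesis \eqref{ind1hypmweight} at $k=2$ gives only $\abs{\partu{\alpha_u-1}\partv{\alpha_v}(m/r^2)} \lesssim r^{-(\alpha_v+2)}u^{-((\alpha_u-1)+2)} = r^{-(\alpha_v+2)}u^{-(\alpha_u+1)}$, i.e., an effective $u^{-2}$ base, not $u^{-3}$. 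You can see this already at $\alpha=(1,1)$: $\lambda(1-\mu)^{-1}\partv{}(m/r^2) \lesssim r^{-3}u^{-2}$, not $r^{-3}u^{-3}$. The paper's case (3) (all derivatives land on $m/r^2$) is exactly this worst case and gives $r^{-(\alpha_v+2)}u^{-(\alpha_u+1)}$ — which is one power of $u$ weaker than you claim, but which matches the inductive hypothesis \eqref{ind1hypuvlambda} that actually has to be closed; the display in the lemma with $u^{-(\alpha_u+2)}$ appears to be a typo. (Your $u$-only branch $u^{-(\abs{\alpha}+4)}$ is fine.) A secondary point: \eqref{ind1uvnu} is not obtained by a literal $u\leftrightarrow v$, $\lambda\leftrightarrow\nu$ swap, since both claimed bounds are stated in $u$-weights; as the paper notes, the $\nu$ case follows from the same computations except that one loses a power of $v$-decay (rather than $u$) from the peeled-off \eqref{SSESF} derivative.
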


\begin{proof} In each case this is essentially completely computational. We start with \eqref{ind1uvlambda}. By our hypothesis and \eqref{SSESF}, we can rewrite the LHS as
\begin{dmath*} \partv{\alpha_v} \partu{\alpha_u}\lambda = \partv{\alpha_v} \partu{\alpha_u-1}\Lx \frac{m\lambda}{r^2(1-\mu)} \Rx. \end{dmath*}
Expanding this expression via the Leibniz rule we have a general term:
\begin{dmath*} \frac{1}{(1-\mu)^{\abs{\alpha^3}+1}}\partv{\alpha^1_v}\partu{\alpha^1_u}\Lx \frac{m}{r^2} \Rx \partv{\alpha^2_v}\partv{\alpha^2_u}(\lambda)\partv{\alpha^3_v}\partu{\alpha^3_u}(\mu) \end{dmath*}%%MISTAKES HERE
where $\alpha^1+\alpha^2+\alpha^3 = (\alpha_u-1,\alpha_v)$. We may ignore the leading term as $\mu < \frac{1}{2}$ by \eqref{unifapriori}, so we only consider the differential terms.
There are, in principle, many combinations to check, however observe that by the form of our hypothesized
estimates (in particular the uniform gains on certain terms) we only need to address the following cases:
\begin{enumerate}
	\item[(1)] $\alpha^1 = \alpha^2 = 0$.
	\item[(2)] $\alpha^1 = \alpha^3 = 0$.
	\item[(3)] $\alpha^2 = \alpha^3 = 0$.
	\item[(4)] $\alpha^2 = 0$, $\alpha^1,\alpha^4 \neq 0$.
\end{enumerate}
Since all other terms are strictly better by our inductive hypothesis, since they gain more than one power of decay per derivative.
In the first case we have
\begin{align*}
	\abs{\frac{m}{r^2}\lambda\nu \partv{\alpha_v}\partu{\alpha_u-1}\mu} &\lesssim \frac{m}{r^2}\min\set*{r^{-(\alpha_v+1}u^{-(\alpha_u+1)},u^{-(\abs{\alpha}+3)}}\\
	&\lesssim \min\set*{r^{-(\alpha_v+3)}u^{-(\alpha_u+4)},u^{-(\abs{\alpha}+8)}}
\end{align*}
by \eqref{mapriori}, \eqref{ind1hypm}. The second case is bounded (up to a constant) by
\begin{align*}
	\abs{\frac{m}{r^2}\partv{\alpha_v}\partu{\alpha_u-1}\lambda} &\lesssim \frac{m}{r^2}\min\set*{r^{-(\alpha_v+2)}u^{-(\alpha_u-2)},u^{-(|\alpha|+1)}}\\
	&\lesssim \min\set*{r^{-(\alpha_v+4)}u^{-(\alpha_u+1)},u^{-(\abs{\alpha}+6)}}.
\end{align*}
The third case has
\begin{dmath*} \abs{\partv{\alpha_v}\partu{\alpha_u-1}\Lx \frac{m}{r^2} \Rx} \lesssim \min\set*{r^{-(\alpha_v+2)}u^{-(\alpha_u+1)},u^{-(\abs{\alpha}+4)}}. \end{dmath*}
Finally in the fourth case we have
\begin{equation*}\begin{split} &\abs{\partv{\alpha^1_v}\partu{\alpha^1_u}\Lx\frac{m}{r^2}\Rx \partv{\alpha^3_v}\partu{\alpha^3_u}\mu}\\&\lesssim \min\set*{r^{-(\alpha^1_v+2)}u^{-(\alpha^1_u+2)},u^{-(\abs{\alpha^1}+5)}}\cdot\min\set*{r^{-(\alpha^3_v+1)}u^{-(\alpha^4_u+2)},u^{-(\abs{\alpha^3}+4)}}\\ &\leq \min\set*{r^{-(\alpha_v+3)}u^{-(\alpha_u+3)},u^{-(\abs{\alpha}+8)}} \end{split}.\end{equation*}
Thus all our terms satisfy the desired decay.

The $\nu$ case then amounts to the same bounds, except we lose a power of $v$ via \eqref{SSESF} instead of $u$. Thus we will not repeat the details of this proof.
\end{proof}

This proposition leaves out exactly two cases, $\partial^{n+1}_u\nu,\partial^{n+1}_v\lambda$, which require slightly more careful treatment. We will return to these after closing some of the other bounds at order $n+1$.

\subsection{Decay for Mixed Derivatives of $r\phi$}
We are now ready to work on derivatives of $r\phi$ at order $n+2$. The goal is as above:
\begin{lemma}
\label{ind1mixrphi}
Under the hypotheses of \cref{uvind} for $\beta$
a multi-index with $\abs{\beta} = n+2$ and $\beta_u,\beta_v \neq 0$, we have
\begin{dmath*} \abs{\partv{\beta_v-1} \partu{\beta_u}\partv{}(r\phi)} \lesssim \min\set*{r^{-(\beta_v+1)}u^{-(\beta_u+2)},u^{-(\abs{\beta}+4)}}. \end{dmath*}
\end{lemma}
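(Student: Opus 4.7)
The strategy parallels that of \cref{ind1mixr} but uses the wave equation for $r\phi$ rather than the evolution equation for $r$. By \cref{uvexclemma}, it is enough to establish the bound for a single convenient ordering of the $|\beta|$ total derivatives acting on $r\phi$. I place a $\partu{}\partv{}$ pair innermost and substitute the gauge-normalized form of the \eqref{SSESF} wave equation,
\[ \partu{}\partv{}(r\phi) = \frac{2m}{(1-\mu)r^2}\phi - \frac{2m}{(1-\mu)r^2}\partv{}(r\phi), \]
so that the quantity to be bounded becomes $\partv{\beta_v-1}\partu{\beta_u-1}$ applied to the right-hand side. This reduces the problem to estimating gauge-normalized derivatives of products of factors we already control inductively.

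Expanding via the Leibniz rule produces terms of the form $\partial^{\gamma^1}(m/r^2)\cdot \partial^{\gamma^2}((1-\mu)^{-1})\cdot \partial^{\gamma^3}\Phi$ with $\gamma^1+\gamma^2+\gamma^3=(\beta_u-1,\beta_v-1)$ and $\Phi \in \{\phi,\partv{}(r\phi)\}$. Derivatives of $m/r^2$ are bounded by \eqref{ind1hypmweight} with $k=2$; derivatives of $\phi$ by \eqref{ind1hypphi}; derivatives of $\partv{}(r\phi)$ by \eqref{ind1hypuvrphi} (valid at all orders $\leq n+1$ by hypothesis); and derivatives of $(1-\mu)^{-1}$ by the chain rule using the uniform bound \eqref{unifapriori} together with \eqref{ind1hypm}. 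By exactly the same reasoning as in \cref{ind1mixr}, the uniform ``extra'' decay built into the inductive bounds means that only the extremal distributions in which nearly all derivatives fall on a single factor need to be checked explicitly; all intermediate distributions are automatically strictly better than the target.

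The main obstacle is the bookkeeping in the extremal case where all derivatives fall on $\Phi$, since there we can only use the apriori mass bound $m/r^2 \lesssim \min\{u^{-3}r^{-2},\, r u^{-6}\}$ from \eqref{mapriori} rather than an inductively-gained improvement. Pairing the $r$-weighted halves with \eqref{ind1hypphi} yields $r^{-(\beta_v+2)}u^{-(\beta_u+3)}$, which is strictly stronger than the first half $r^{-(\beta_v+1)}u^{-(\beta_u+2)}$ of the target; the pure-$u$ half $u^{-(|\beta|+4)}$ is obtained via a short case split on the size of $r$ relative to $u$ (using $m \lesssim r^3 u^{-6}$ when $r$ is small and $m \lesssim u^{-3}$ when $r$ is comparable to $u$, which together cover all of $\mathcal{Q}$). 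The remaining extremal cases (all derivatives on $m/r^2$, all on $\mu$, or split between these two) are handled by direct calculations entirely analogous to those performed in \cref{ind1mixr}. With the estimate established for one ordering, \cref{uvexclemma} promotes it to every ordering, closing the induction for $r\phi$ at order $n+2$.
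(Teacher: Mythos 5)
Your proposal follows the paper's proof very closely: both substitute the gauge-normalized wave equation for the innermost $\partu{}\partv{}$ pair, expand via the Leibniz rule into products of derivatives of $m/r^2$, $(1-\mu)^{-1}$, and $\Phi \in \{\phi, \partv{}(r\phi)\}$, and isolate the same three extremal distributions, of which the case with all derivatives on $\Phi$ is the sharp one since it forces reliance on the a priori bound \eqref{mapriori} for $m/r^2$. Your power-counting in that case (the $r$-weighted pairing yielding $r^{-(\beta_v+2)}u^{-(\beta_u+3)}$ and a small case split on $r$ versus $u$ to land the pure-$u$ bound) agrees with what the paper does, and the invocation of \cref{uvexclemma} to pass from one ordering to all orderings is the correct closing step.
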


\begin{proof} The procedure here is roughly the same as that for mixed derivatives of $\lambda$ and $\nu$ above.
We have by the wave equation for $\phi$ that
\begin{dmath*} \partu{}\partv{}(r\phi) = \frac{2m}{(1-\mu)r^2}\Lx \phi - \partv{}(r\phi) \Rx. \end{dmath*}
Thus we can rewrite our first expression as
\begin{equation*}\begin{split} &\partv{\beta_v-1}\partu{\beta_u}\partv{}(r\phi)\\&= \partv{\beta_v-1}\partu{\beta_u-1}\Lx \frac{m}{(1-\mu)r^2}\Lx \phi - \partv{}(r\phi) \Rx \Rx. \end{split}\end{equation*}
Then we can expand this term by term via the Leibniz rule with the general term being of the form (omitting terms of order 1)
\begin{dmath*} \partv{\gamma^1_v}\partu{\gamma^1_u}\Lx \frac{m}{r^2} \Rx \partv{\gamma^2_v}\partu{\gamma^2_u}\mu \partv{\gamma^3_v}\partu{\gamma^3_u}\Lx \phi - \partv{}(r\phi)\Rx \end{dmath*}
with $\sum_{i}\gamma_i = \alpha'$.
As before, the terms with lowest order of decay occur when exactly one term is acted on by all derivatives. Thus we have three cases to check:
\begin{enumerate}
	\item[(1)] $\gamma^1 = \gamma^2 = 0$.
	\item[(2)] $\gamma^1 = \gamma^3 = 0$.
	\item[(3)] $\gamma^2 = \gamma^3 = 0$.
\end{enumerate}
All other mixtures only improve by a constant order in $r$ or $u$, so we need not consider these.

In the first case we have
\begin{equation*}\begin{split}
&\abs{\frac{m}{r^2} \partv{\beta_v-1}\partu{\beta_u-1}\Lx \phi - \partv{}(r\phi) \Rx}\\&\lesssim \abs{\frac{m}{r^2}}\Lx \min\set*{r^{-(\beta_v)}u^{-(\beta_u)},u^{-(\abs{\beta})}} - \min\set*{r^{-(\beta_v-1)}u^{-(\beta_u-1)},u^{-(\abs{\beta}-1)}} \Rx\\
&\lesssim \min\set*{r^{-(\beta_v+1)}u^{-(\beta_u+2)},u^{-(\abs{\beta}+4)}}
\end{split}\end{equation*}
by \eqref{mapriori}, \eqref{ind1hypphi}, \eqref{ind1hypuvrphi}, \eqref{ind1hypvrphi}.
In the second case we have
\begin{equation*}\begin{split}
&\abs{\frac{m}{r^2}\Lx \phi - \partv{}(r\phi)\Rx \partv{\beta_v-1}\partu{\beta_u-1}\mu}\\&\lesssim \abs{\frac{m}{r^2}\Lx \phi - \partv{}(r\phi)\Rx} \min\set*{r^{-(\beta_v)}u^{-(\beta_u+1)},u^{-(\abs{\beta}+2)}}\\
&\lesssim \min\set*{r^{-(\beta_v+4)}u^{-(\beta_u+4)}, u^{-(\abs{\beta} + 8)}}
\end{split}\end{equation*}
by \eqref{mapriori}, \eqref{dvrphiapriori}, \eqref{phiapriori}.
The third case similarly gives us
\begin{align*}
&\abs{\Lx\phi - \partv{}(r\phi)\Rx \partv{\beta_v-1}\partu{\beta_u-1}\Lx\frac{m}{r^2}\Rx}\\&\lesssim \abs{\Lx\phi - \partv{}(r\phi)\Rx} \min\set*{r^{-(\beta_v+1)}u^{-(\beta_u+1)},u^{-(\abs{\beta}+3)}}\\
&\lesssim \min\set*{r^{-(\beta_v+1)}u^{-(\beta_u+3)}, u^{-(\abs{\beta} + 5)}}
\end{align*}
by \eqref{phiapriori}, \eqref{dvrphiapriori}, \eqref{ind1hypm}.
So in each case the the desired bound holds, and we conclude that the first relation holds at order $n+1$.
\end{proof}

As above, it remains to deal with the cases $\partial^{n+1}_v(r\phi),\partial^{n+1}_u(r\phi)$.

\subsection{Preliminary Estimates for $\phi$, $m$ and $\frac{m}{r^k}$}
In this section we will use our upgraded bounds for derivatives of $r$ above to obtain some initial estimates for $\phi$, $m$ weightings at order $n+1$. We will begin with an easy bound for $\phi$.
\begin{lemma}
\label{phiprelim}
Suppose our inductive hypotheses hold at order $n$. Then for $\abs{\alpha} = n+1$ we have the bounds
\begin{dmath*} \abs{\partu{\alpha_u}\partv{\alpha_v}\phi} \lesssim \min\set*{r^{-(\alpha_v+1)}u^{-(\alpha_u+1)}, r^{-1}u^{-(\abs{\alpha}+1)}}. \end{dmath*}
\end{lemma}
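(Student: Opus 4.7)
The plan is to extract a bound on $\partial^\alpha \phi$ from the inductive bound on $\partial^\alpha(r\phi)$ by means of the Leibniz rule. Expanding
$$\partial^\alpha(r\phi) = r\, \partial^\alpha \phi + \sum_{0 < \gamma \leq \alpha} \binom{\alpha}{\gamma}(\partial^\gamma r)(\partial^{\alpha-\gamma}\phi),$$
one obtains
$$r\, \partial^\alpha \phi = \partial^\alpha(r\phi) - \sum_{0 < \gamma \leq \alpha} \binom{\alpha}{\gamma}(\partial^\gamma r)(\partial^{\alpha-\gamma}\phi),$$
so it suffices to control each term on the right-hand side and then divide by $r$.

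The lead term $\partial^\alpha(r\phi)$ is controlled at order $\abs{\alpha} = n+1$ directly by the inductive hypotheses \eqref{ind1hypvrphi}--\eqref{ind1hypuvrphi}. For each Leibniz term with $0 < \gamma \leq \alpha$, the factor $\partial^\gamma r$ reduces via $\partial_v r = \lambda$ or $\partial_u r = \nu$ to a derivative of $\lambda$ or $\nu$ of order $\abs{\gamma}-1$. When $\abs{\gamma} \leq n$, the inductive hypotheses \eqref{ind1hypvlambda}--\eqref{ind1hypuvnu} handle this directly, while for $\abs{\gamma} = n+1$ with both $\gamma_u, \gamma_v \geq 1$ we invoke the mixed bounds of \cref{ind1mixr}. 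In the two pure extremal cases the reduction $\partial^\gamma r = \partial_v^n \lambda$ (resp.\ $\partial_u^n \nu$) drops the order to $n$ and the inductive hypothesis applies again. The companion factor $\partial^{\alpha - \gamma}\phi$ sits at order $\abs{\alpha - \gamma} \leq n$ and is covered by \eqref{ind1hypphi}.

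Multiplying these bounds and summing, one finds every term dominated by $\min\{r^{-\alpha_v} u^{-(\alpha_u+1)}, u^{-(\abs{\alpha}+1)}\}$, with the bookkeeping carried out by routine case analysis splitting on $r \geq 1$ versus $r \leq 1$ (using $u \geq 1$ throughout $\mathcal{Q}$, so that near the axis $u^{-\alpha_v}$ absorbs any stray $r^{-\alpha_v}$). Dividing through by $r$ then produces the stated preliminary bound. The main obstruction is precisely this last division: it introduces a factor $r^{-1}$ which, near the axis, is strictly worse than the $u^{-1}$ one would want. This is exactly the gap between the preliminary estimate $r^{-1}u^{-(\abs{\alpha}+1)}$ here and the sharp $u^{-(\abs{\alpha}+2)}$ appearing in \cref{main1}, and closing it will require the axis-averaging machinery of \cref{avglemma} in the subsequent lemmas.
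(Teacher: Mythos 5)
Your proposal follows the same conceptual skeleton as the paper's proof — expand $r\phi$ by Leibniz, isolate $r$ times the top-order derivative of $\phi$, control the remainder by the inductive hypotheses, and divide by $r$ — but you execute it with ordinary partial derivatives $\partial^\gamma$, which forces you to track a full Leibniz sum over all $0<\gamma\leq\alpha$. The paper instead works throughout with the gauge-normalized derivatives $\partu{}=\nu^{-1}\partial_u$, $\partv{}=\lambda^{-1}\partial_v$. Since $\partu{}r=\partv{}r\equiv 1$, every normalized derivative of $r$ of order $\geq 2$ vanishes identically, so the Leibniz expansion of $\partu{\alpha_u}\partv{\alpha_v}(r\phi)$ collapses to precisely three terms: $r\,\partu{\alpha_u}\partv{\alpha_v}\phi$ plus the two terms with a single $\partu{}$ or $\partv{}$ spent on $r$. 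Nothing more to check. This is the whole reason these normalized derivatives are used as the bookkeeping variables in this section, and your route forgoes that payoff.

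Two consequences of your choice are worth flagging. First, the inductive hypotheses \eqref{ind1hypvlambda}--\eqref{ind1hypmweight} and the conclusion of the lemma are stated for the normalized derivatives, so your ordinary-derivative decomposition only feeds back into the induction after you translate $\partial_u^{\alpha_u}\partial_v^{\alpha_v}$ into $\partu{\alpha_u}\partv{\alpha_v}$ (and back), which is itself another product expansion over derivatives of $\lambda^{-1},\nu^{-1}$; you elide this step. The decay rates survive the translation because of the uniform bounds \eqref{unifapriori} and the inductive control of derivatives of $\lambda,\nu$, but it is exactly the sort of combinatorial overhead the normalized-derivative convention is designed to suppress. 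Second, your invocation of \cref{ind1mixr} is not actually needed: when all of $\alpha$ lands on $r$, $\partial^\alpha r$ reduces to a derivative of $\lambda$ or $\nu$ at order $n$, which is already covered by the order-$n$ hypotheses. In sum, the proposal is correct in substance and arrives at the right preliminary bound (including the correct identification of the $r^{-1}$ obstruction near the axis to be fixed later by the averaging machinery), but it is a longer route than the paper takes and skips a derivative-convention translation that, while not fatal, should be made explicit.
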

\begin{proof}
This follows immediately by expanding $\abs{\partu{\alpha_u}\partv{\alpha_v}(r\phi)}$ to obtain
\begin{equation*}\begin{split} &\abs{r (\partu{\alpha_u}\partv{\alpha_v}\phi) + \alpha_u\alpha_v\Lx \partu{\alpha_u-1}\partv{\alpha_v} + \partu{\alpha_u}\partv{\alpha_v-1}\Rx \phi}\\& \lesssim \min\set*{r^{-(\alpha_v+1)\vartheta(\alpha_v)}u^{-\alpha_u},u^{-(\abs{\alpha}+1)}} \end{split}\end{equation*}
by \eqref{ind1hypvrphi},\eqref{ind1hypurphi}, \eqref{ind1hypuvrphi}.
So then via the triangle inequality we can write
\begin{equation*}\begin{split} &\abs{r \partu{\alpha_u}\partv{\alpha_v}\phi}\\
&\leq C\min\set*{r^{-(\alpha_v+1)\vartheta(\alpha_v)}u^{-\alpha_u},u^{-(\abs{\alpha}+1)}}\\
&- \abs{\alpha_u\alpha_v\Lx \partu{\alpha_u-1}\partv{\alpha_v} + \partu{\alpha_u}\partv{\alpha_v-1}\Rx \phi} \end{split}\end{equation*}
and thus
\begin{dmath*} \abs{\partial^{\alpha}\phi} \lesssim \min\set*{r^{-(\alpha_v+1)}u^{-(\alpha_u+1)}, r^{-1}u^{-\abs{\alpha}+1}} \end{dmath*}
dividing through by the $r$ and taking our most weakly decaying terms (those which lose a $v$ derivative acting on $\phi$) on the RHS as our minimal order of decay to obtain our desired bound.
\end{proof}

Observe that this bound then allows us to immediately obtain optimal next order control over $m$ as well:
\begin{corollary}
\label{mprelim}
With the estimate \cref{phiprelim} we have, for $\abs{\alpha} = n+1, \alpha_v > 0$, we have
\begin{dmath*} \abs{\partu{\alpha_u}\partv{\alpha_v}m} \lesssim \min\set*{r^{-(\alpha_v+1)}u^{-(\alpha_u+2)},u^{-(\abs{\alpha}+3)}} \end{dmath*}
\begin{dmath*} \abs{\partu{n+1}m} \lesssim u^{-(n+3)}. \end{dmath*}
\end{corollary}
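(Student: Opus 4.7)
The plan is to differentiate the Hawking mass equations of \eqref{SSESF}, namely
\[ \partv{}m = \tfrac{1}{2}r^2(1-\mu)(\partv{}\phi)^2, \qquad \partu{}m = \tfrac{1}{2}r^2(1-\mu)(\partu{}\phi)^2, \]
and to apply the Leibniz rule, combined with the inductive hypotheses of \cref{uvind}, the order-$(n+1)$ bounds on $r$-derivatives from \cref{ind1mixr}, and the top-order $\phi$-estimate \cref{phiprelim}. For the mixed-derivative bound with $\alpha_v > 0$ I would apply $\partv{\alpha_v-1}\partu{\alpha_u}$ to the first equation; for the top-order pure-$u$ bound I would apply $\partu{n}$ to the second.

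The Leibniz expansion produces terms of the schematic form $\partial^{\gamma^1}(r^2)\cdot \partial^{\gamma^2}(1-\mu)\cdot \partial^{\gamma^3}(\partv{}\phi)\cdot \partial^{\gamma^4}(\partv{}\phi)$, and analogously in the $\partu{}$ case. Derivatives of $r^2$ reduce via $\partial_u r = \nu$, $\partial_v r = \lambda$ to $r$ times derivatives of $\lambda,\nu$ (or lower-order products), all bounded by the inductive hypothesis. Derivatives of $(1-\mu) = 1-2m/r$ are controlled by the weighted $m/r$ estimate \eqref{ind1hypmweight} with $k=1$. The two $\phi$-derivative factors are controlled by \eqref{dvphiapriori}, \eqref{duphiapriori}, \eqref{ind1hypphi}, and, in the top-order case where all differential weight lands on a single factor, by \cref{phiprelim}.

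The binding term is exactly this top-order one,
\[ r^2(1-\mu)\,\partv{}\phi \cdot \partv{\alpha_v}\partu{\alpha_u}\phi \quad \text{or} \quad r^2(1-\mu)\,\partu{}\phi \cdot \partu{n+1}\phi. \]
For the first, pairing the $r^{-2}u^{-1}$ branch of \eqref{dvphiapriori} with the $r^{-(\alpha_v+1)}u^{-(\alpha_u+1)}$ branch of \cref{phiprelim} yields $r^{-(\alpha_v+1)}u^{-(\alpha_u+2)}$, while pairing the $u^{-3}$ branch with $r^{-1}u^{-(|\alpha|+1)}$ yields $r\,u^{-(|\alpha|+4)}$, which in the region $r \lesssim u$ (where the $u^{-(|\alpha|+3)}$ side of the $\min$ is the binding one) is bounded by $u^{-(|\alpha|+3)}$. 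For the second, $|\partu{}\phi| \lesssim r^{-1}u^{-2}$ from \eqref{duphiapriori} together with $|\partu{n+1}\phi| \lesssim r^{-1}u^{-(n+2)}$ gives $u^{-(n+4)}$, stronger than needed. All remaining Leibniz terms distribute the differential weight across two or more factors, and the inductive hypotheses ensure each such term strictly gains at least one extra power of $r^{-1}$ or $u^{-1}$ relative to the binding term, so all are absorbed.

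The main (minor) obstacle is the combinatorial bookkeeping of the Leibniz expansion, coupled with the need to select the correct branch of the various $\min\{\cdot,\cdot\}$ estimates in the regions $r \gtrsim u$ and $r \lesssim u$; this is handled exactly as in the analogous treatment of derivatives of $\lambda, \nu$ and $r\phi$ in \cref{ind1mixr,ind1mixrphi}.
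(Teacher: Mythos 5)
Your proposal is correct and follows essentially the same route as the paper: differentiate the \eqref{SSESF} equations for $\partial_vm$ (resp.\ $\partial_um$), expand by Leibniz, control all lower-order factors by the inductive hypotheses of \cref{uvind}, and recognize that the binding term is the one in which all differential weight lands on a single $\phi$-factor, which \cref{phiprelim} handles. Your branch-pairing computations for the binding term, and the observation that the pure-$u$ case loses the extra power of $r$ from $\partial_v\phi$ but still closes, match the paper's argument.
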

\begin{proof}
We begin with our first relation. Recall from \eqref{SSESF} that we have
\begin{equation*}
\lambda^{-1}\partial_vm = \frac{1}{2}(1-\mu)r^2(\lambda^{-1}\partial_v\phi)^2.
\end{equation*}
Then the cases we need to consider here are:
\begin{enumerate}
\item[i.] All derivatives act on $r^2$ and $(\partial_v \phi)^2$.
\item[ii.] All derivatives act on $(1-\mu)$ and $r^2$.
\end{enumerate}
In the first case we have a general term:
\begin{dmath*} \abs{\frac{1}{2}(1-\mu)r^{2-\abs{\beta}}\partu{\alpha_u-\beta_u}\partv{\alpha_v - \beta_v - 1}(\lambda^{-1}\partial_v\phi)^2)} \lesssim \min\set*{r^{-(\alpha_v+1)}u^{-(\alpha_u + 2)},u^{-(\abs{\alpha} + 3)}} \end{dmath*}
for $\abs{\beta} \leq 2$, where we have used the inductive hypothesis and the result of \cref{phiprelim}, as well as the fact that $r^{-k}u^{-l} \lesssim \min\set*{r^{-(k+s)}u^{-(l-s)},u^{-(k+l)}}$ to control each case, as we gain the same overall power of decay by taking a derivative of $r$, as we do by differentiating a copy of $\partv{}\phi$. This gives us the desired order of decay.

In the second case we have the term term
\begin{dmath*} \abs{\frac{1}{2} r^{2-\abs{\beta}}\partu{\alpha_u-\beta_u}\partv{\alpha_v-\beta_v-1}(1-\mu)(\lambda^{-1}\partial_v\phi)^2} \lesssim \min\set*{r^{-(\alpha_v+2)}u^{-(\alpha_u+4)},u^{-(\abs{\alpha}+7)}} \end{dmath*}
using \eqref{dvphiapriori}, \eqref{ind1hypm} and the same strategy as above. Thus we have our bound in the case $\alpha_v > 0$.

It remains to address the case where $\alpha_v = 0$. Now we must use our other equation for $m$:
\[ \partu{}m = \frac{1}{2}(1-\mu) r^2(\partu{}\phi)^2. \]
As above we have two distinct cases. The overall analysis is the same, differing only in the fact that we now lose our extra power of $r$ decay from the additional $\partv{}\phi$ term. As such we will arrive at a bound of the same overall form, with one less power of $r$ decay than we would expect if we had a $v$ derivative, and so arrive at a bound:
\[ \abs{\partu{n+1}m} \lesssim u^{-(n+3)} \]
by the same computation as above.
\end{proof}

With this control of $m$ established, we now must concern ourselves with the weighted versions $\mu,\frac{m}{r^2}$. The following lemma will take care of this:
\begin{lemma}
\label{mweight}
Given the result of \cref{mprelim}, we have the bounds
\begin{dmath*} \abs{\partu{\alpha_u}\partv{\alpha_v}\Lx\frac{m}{r^{k}}\Rx} \lesssim \min\set*{r^{-(\alpha_v + k)}u^{-(\alpha_u+2)},r^{-k}u^{-(\abs{\alpha}+3)}} \end{dmath*}
for $k = 1,2$, $\abs{\alpha} = n+1$.
\end{lemma}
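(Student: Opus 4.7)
The plan is to expand $\partial_u^{\alpha_u}\partial_v^{\alpha_v}(m/r^k)$ via the Leibniz rule on the product $m \cdot r^{-k}$, control the factor $\partial^\gamma (r^{-k})$ using Faà di Bruno together with the inductive bounds on $\lambda,\nu$, and handle the most singular Leibniz term via an integral representation of $m/r^3$ that encodes the cubic vanishing of the Hawking mass at the axis.

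Concretely, I would write
$$\partial^\alpha (m/r^k) \;=\; \sum_{\beta\leq\alpha}\binom{\alpha}{\beta}\,\partial^\beta m \cdot \partial^{\alpha-\beta}(r^{-k}),$$
and expand $\partial^{\alpha-\beta}(r^{-k})$ via Faà di Bruno as a sum of monomials $r^{-k-M}\prod_j \partial^{\gamma^j}r$, where $(\gamma^j)$ runs over partitions of $\alpha-\beta$ of length $M$. Each $|\partial^{\gamma^j}r|$ is bounded by \eqref{ind1hypvlambda}--\eqref{ind1hypuvnu} and \cref{ind1mixr} (using $|\lambda|,|\nu|\lesssim 1$ for the first-order factors), and each $|\partial^\beta m|$ is controlled by \eqref{ind1hypm} and \cref{mprelim}. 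For every term with $\beta\neq 0$ the two entries of the target $\min$ are recovered by collecting exponents: the $r^{-1}$ penalties from Faà di Bruno and from $\partial\lambda,\partial\nu$ feed into $r^{-(\alpha_v+k)}$, while the $u^{-1}$ gains from $\partial^\beta m$ and from the higher derivatives of $\lambda,\nu$ feed into $u^{-(|\alpha|+3)}$, giving the desired bound after summation over partitions.

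The main obstacle is the term with $\beta=0$, namely $m\cdot \partial^\alpha r^{-k}$. Here the naive bound $|m|\cdot r^{-k-|\alpha|}$ is comfortably compatible with the first entry $r^{-(\alpha_v+k)}u^{-(\alpha_u+2)}$ of the $\min$ (using $|m|\lesssim u^{-3}$), but it fails the second entry $r^{-k}u^{-(|\alpha|+3)}$ in the near-axis regime $r\leq u$ once $|\alpha|$ is large. To handle this case I would exploit the integral representation
$$\frac{m}{r^3}(u,v) \;=\; \frac{1}{2}\,I_v^{3}\!\left[(1-\mu)\bigl(\lambda^{-1}\partial_v\phi\bigr)^2\right]\!(u,v),$$
obtained by integrating the $\partial_v m$ equation along $C_u$ from $v'=u$ and using $m|_\Gamma=0$. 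Rewriting $m/r^k = r^{3-k}\cdot(m/r^3)$ and applying Leibniz to this factored form yields a decomposition in which the prefactor $r^{3-k}$ (of positive power, since $k\leq 2$) and all of its derivatives remain bounded at the axis, so that no spurious $r^{-k-|\alpha|}$ blow-up arises. The derivatives of $m/r^3$ are then estimated by differentiating the integral representation directly, using the a priori and inductive bounds for $(1-\mu)$, $\lambda$, and $\lambda^{-1}\partial_v\phi$ together with \cref{phiprelim}; the differential identities of \cref{avglemma} and the parallel identities for $I_v^s$ developed in \cite{Luk2018} streamline the bookkeeping.

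Assembling these pieces—the generic Leibniz terms for $\beta\neq 0$ and the integral-representation treatment of the $\beta=0$ term—gives the claimed estimate uniformly for $k=1,2$. I expect the bulk of the routine work to lie in the exponent counting for the generic Leibniz expansion, while the conceptual core of the argument is the extraction of the $r^3$ cancellation in the $\beta=0$ term via the factorization $m/r^k = r^{3-k}(m/r^3)$.
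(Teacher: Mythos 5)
Your decomposition is genuinely different from the paper's, and it contains a real gap.

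The paper's proof does not differentiate $r^{-k}$ at all. It instead applies Leibniz to the identity $m = (m/r^k)\cdot r^k$, yielding
\[
\partu{\alpha_u}\partv{\alpha_v} m \;=\; \sum_{\beta+\gamma=\alpha}\binom{\alpha_v}{\beta_v}\binom{\alpha_u}{\beta_u}\,\partu{\beta_u}\partv{\beta_v}\!\Lx\frac{m}{r^k}\Rx\,\partu{\gamma_u}\partv{\gamma_v}(r^k),
\]
then isolates the one top-order term $\partu{\alpha_u}\partv{\alpha_v}(m/r^k)\cdot r^k$ and solves for it. Everything else in the sum has $\abs\beta\leq n$ and is covered by the inductive hypothesis \eqref{ind1hypmweight}, and $\partu{\alpha_u}\partv{\alpha_v} m$ is \cref{mprelim}. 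Dividing by $r^k$ gives the $r^{-k}$ prefactor directly. No Fa\`a di Bruno, no high-order derivatives of $r^{-k}$, no integral representation of $m/r^3$.

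Your plan of expanding $\partial^\alpha(m\cdot r^{-k})$ and then handling $m\,\partial^\alpha(r^{-k})$ by rewriting $m/r^k=r^{3-k}(m/r^3)$ is incoherent as stated: you cannot apply the factorization to a single Leibniz term of the original expansion; if you factor, you must abandon the original expansion and redo Leibniz on $r^{3-k}\cdot(m/r^3)$ from the start. Granting that charitable reading, the top-order term there is $r^{3-k}\,\partu{\alpha_u}\partv{\alpha_v}(m/r^3)$, and you now need mixed $u,v$ derivative bounds on $m/r^3$ up to order $n+1$ with decay at least $\min\{r^{-(\alpha_v+3)}u^{-(\alpha_u+2)},r^{-3}u^{-(\abs\alpha+3)}\}$. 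Nothing in the inductive framework supplies this: \eqref{ind1hypmweight} is posited only for $k\leq 2$, and \cref{vbootprelim} gives only pure $\partv{}$ derivatives of $m/r^3$ at a fixed order, conditionally inside a bootstrap. Establishing the needed mixed-derivative estimates from the integral representation $m/r^3=\tfrac12 I_v^3[(1-\mu)(\lambda^{-1}\partial_v\phi)^2]$ is not a bookkeeping step: $\partial_u$ of $I_v^s[\cdot]$ hits the lower limit $v'=u$ and the $r^{-s}$ prefactor, producing boundary terms and additional singular weights near $\Gamma$ that must each be checked, and whether they close for arbitrary $\alpha_u,\alpha_v$ is exactly as hard as the lemma you are trying to prove (and since $m$ vanishes only to order $r^3$, there is no slack left for $k=3$). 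Until those estimates are established, the argument is incomplete. The paper's rearrangement trick avoids the entire issue and is the approach you should use.
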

\begin{proof}
We will employ a similar technique as was used to gain our initial bound on $\partial^{\alpha}\phi$ above. In particular, observe that we have
\begin{dmath*} \partu{\alpha_u}\partv{\alpha_v}m = \partu{\alpha_u}\partv{\alpha_v}\Lx \frac{m}{r^k}r^k \Rx = \sum_{\beta + \gamma = \alpha} \binom{\alpha_v}{\beta_v}\binom{\alpha_u}{\beta_u}\partu{\beta_u}\partv{\beta_v}\Lx \frac{m}{r^k} \Rx \partu{\gamma_u}\partv{\gamma_v}(r^k). \end{dmath*}
So rearranging we find
\begin{dmath*} \abs{\partu{\alpha_u}\partv{\alpha_v}\frac{m}{r^k}} \leq \frac{1}{r^k} \Lx \abs{\partu{\alpha_u}\partv{\alpha_v}m} + \sum_{\substack{\beta + \gamma = \alpha\\\gamma \neq 0}}\abs{\binom{\alpha_v}{\beta_v}\binom{\alpha_u}{\beta_u}\partu{\beta_u}\partv{\beta_v}\Lx \frac{m}{r^k} \Rx \partu{\gamma_u}\partv{\gamma_v}(r^k)}\Rx. \end{dmath*}

This gives rise to the bound
\begin{equation*}\begin{split} &\abs{\partu{\alpha_u}\partv{\alpha_v}\frac{m}{r^k}}\\&\lesssim \frac{1}{r^k}\Lx\min\set*{r^{-(\alpha_v+\vartheta{\alpha_v})}u^{-(\alpha_u+2)},u^{-(\abs{\alpha}+3)}} + \min\set*{r^{-(\alpha_v)}u^{-(\alpha_u+2)},u^{-(\abs{\alpha}+3)}} \Rx, \end{split}\end{equation*}
where we bound the general term in the sum using that $r^{-s}u^{-t} \lesssim \min\set*{r^{-(s+l)}u^{-(t-l)},u^{-(s+t)}}$. Thus the overall order of decay this obtains is
\begin{dmath*} \abs{\partial^{\alpha}\frac{m}{r^k}} \lesssim \min\set*{r^{-(\alpha_v + k)}u^{-(\alpha_u+2)},r^{-k}u^{-(\abs{\alpha}+3)}}. \end{dmath*}
\end{proof}

\begin{remark}
The $r$ decay obtained in this manner for $\mu,\frac{m}{r^2}$ is already as strong as we hope for.
\end{remark}

\subsection{Full Decay for Derivatives of $r\phi$}

With this initial control, we find ourselves in a difficult position. The remaining terms to bound all suffer from a heavy $r^{-1}$ weighting, which will prevent us from closing any further bounds through the standard manner of computation we have pursued thus far (and also excludes the use of averaging operators).

In order to proceed, it will be necessary to obtain some next order control near the axis as well. To do this we will employ a bootstrapping approach centered around next order control of $r\phi$. Before beginning this however, we will need a few preliminary estimates in order to check that the botstrap closes:
\begin{proposition}
\label{vbootprelim}
Suppose, under the hypotheses of \cref{phiprelim}, with associated constants at order $\alpha = (\alpha_u,\alpha_v)$, $C_{\alpha_u,\alpha_v}$, we, in addition, have the bound
\begin{equation}\label{rphiassum} \abs{\partv{n+2}(r\phi)} \leq Cv^{-(n+3)} \end{equation}
For $1 \leq u < U$.

Then the following hold on $1 \leq u < U$, for some $C'$ depending only on $n$ (in particular we require only one constant on all of $\mathcal{Q}$)
\begin{equation}\label{phivboot} \abs{\partv{n+1}\phi} \leq CC'\min\set*{r^{-(n+2)}u^{-1},u^{-(n+3)}} \end{equation}
\begin{equation}\label{mvboot} \abs{\partv{n+1}\frac{m}{r^2}} \leq CC'\min\set*{r^{-(n+3)}u^{-2},u^{-(n+6)}} \end{equation}
\begin{equation}\label{muvboot} \abs{\partv{n+1}\mu} \leq CC' \min\set*{r^{-(n+2)}u^{-2},u^{-(n+5)}}. \end{equation}
\end{proposition}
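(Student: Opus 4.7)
The plan is to establish the three bounds in sequence, beginning from the bootstrap hypothesis on $\partv{n+2}(r\phi)$.

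First I would upgrade the control on $\partv{n+1}(r\phi)$ by integrating the hypothesis $|\partv{n+2}(r\phi)| \leq Cv^{-(n+3)}$ in $v$ from infinity. Since $\partv{n+1}(r\phi)\to 0$ as $v\to\infty$ by \eqref{ind1hypvrphi}, this yields
\[ |\partv{n+1}(r\phi)(u,v)| \leq \frac{C}{n+2} v^{-(n+2)}, \]
an improvement that replaces $\max(r,u)^{-(n+2)}$ by the stronger $v^{-(n+2)}$. To translate this into a bound on $\partv{n+1}\phi$, I would use the Leibniz expansion
\[ r\partv{n+1}\phi = \partv{n+1}(r\phi) - \sum_{k=1}^{n+1}\binom{n+1}{k}\partv{k-1}\lambda \,\partv{n+1-k}\phi. \]
In the region $r\gtrsim u$, dividing the improved bound by $r$ and using $v\geq u$ immediately gives $v^{-(n+2)}/r \lesssim u^{-(n+3)}$, and the lower-order terms in the sum satisfy the same decay by \eqref{ind1hypvlambda} and \eqref{ind1hypphi}. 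The $r^{-(n+2)}u^{-1}$ part of the desired minimum is supplied directly by \cref{phiprelim} in all regions.

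The delicate part is the near-axis region $r\lesssim u$, where the additional $u^{-(n+3)}$ bound cannot be obtained by dividing by $r$. Here I would exploit the axis-vanishing $r\partv{n+1}\phi|_{\Gamma} = 0$, writing
\[ r\partv{n+1}\phi(u,v) = \int_u^v \partial_{v'}\bigl[r\partv{n+1}\phi\bigr](u,v')\,dv'. \]
The integrand, expanded through one more application of Leibniz, contains the bootstrap-controlled term $\partv{n+2}(r\phi)$ together with lower-order combinations whose $\partv{n+1}\phi$-factor is estimated via \cref{phiprelim} and the inductive hypothesis. A careful averaging-operator argument in the spirit of \cref{avglemma}, applied to the representation $\phi = I_v^1[\lambda^{-1}\partial_v(r\phi)]$ (and its iterates), should produce the estimate $|r\partv{n+1}\phi|\lesssim r\,u^{-(n+3)}$ in the near-axis region, so dividing by $r$ yields the missing $u^{-(n+3)}$ bound.

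Having established \eqref{phivboot}, I would then differentiate the evolution equation $\partv{}m = \frac{(1-\mu)r(\partv{}\phi)^2}{2\lambda}$ $n$ times in $v$ and expand by Leibniz. The top-order term $\sim r(\partv{}\phi)(\partv{n+1}\phi)/\lambda$ is controlled by pairing the apriori bound \eqref{dvphiapriori} with the $\partv{n+1}\phi$ estimate just proved, and all remaining terms are bounded by the inductive hypothesis \eqref{ind1hypphi}, \eqref{ind1hypvlambda}, \eqref{ind1hypm}. This yields the desired decay for $\partv{n+1}m$, and the bounds on $\partv{n+1}(m/r^2)$ and $\partv{n+1}\mu$ then follow by the solve-and-divide procedure from the proof of \cref{mweight}: expand $\partv{n+1}m = \partv{n+1}(r^k\cdot (m/r^k))$ via Leibniz, solve for $\partv{n+1}(m/r^k)$, and bound the lower-order terms using \eqref{ind1hypmweight} and the estimate on $\partv{n+1}m$.

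The main obstacle will be the near-axis step: the \cref{phiprelim} bound $r^{-1}u^{-(n+2)}$ is not integrable across the axis in the short-interval integration, so the argument must leverage genuine cancellation between $\partv{n+1}(r\phi)$ and the sum of lower-order products in the Leibniz identity, rather than bounding each term separately. This is where the averaging-operator machinery of \cref{avglemma} is most likely to be essential, as it provides a representation of $\partv{n+1}\phi$ that avoids an explicit $1/r$ factor.
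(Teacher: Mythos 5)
The first half of your plan, for \eqref{phivboot}, is essentially on target but under-executed. The paper applies the averaging-operator identity
\[ \partv{n+1}\phi(u,v) = \frac{1}{r^{n+2}(u,v)}\int_u^v \partv{n+2}(r\phi)(u,v')\,r^{n+1}(u,v')\lambda(u,v')\,dv', \]
which is precisely the iterate of $\phi = I_v^1[\partv{}(r\phi)]$ that you name at the end. Once this is written down, both pieces of the required minimum fall out in one step: bounding $(v')^{-(n+3)}\leq u^{-(n+3)}$ and integrating $r^{n+1}\lambda$ gives the uniform $u^{-(n+3)}$ bound with no near-axis/far-axis split at all, and bounding $r(u,v')\lesssim v'$ to convert $r^{n+1}(v')^{-(n+2)}$ into $(v')^{-1}$ gives the $r^{-(n+2)}u^{-1}$ bound. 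There is no cancellation to uncover; the detour through integrating $\partv{n+2}(r\phi)$ from infinity and then Leibniz-expanding $r\partv{n+1}\phi$ is unnecessary, and the ``obstacle'' you flag dissolves once the identity is used.

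The genuine gap is in your treatment of \eqref{mvboot} and \eqref{muvboot}. The solve-and-divide scheme from \cref{mweight} cannot produce the $u^{-(n+6)}$ (resp.\ $u^{-(n+5)}$) part of the minimum near the axis. Concretely, in the Leibniz expansion of $\partv{n+1}\bigl(r^2\cdot\tfrac{m}{r^2}\bigr)$ the $j=1$ term is $2(n+1)r\,\partv{n}\tfrac{m}{r^2}$; by the inductive hypothesis \eqref{ind1hypmweight} the best near-axis bound for this is $r\,u^{-(n+5)}$, and after dividing by $r^2$ you are left with $r^{-1}u^{-(n+5)}$, which is \emph{not} dominated by $u^{-(n+6)}$ when $r\lesssim u$ --- it diverges as $r\to 0$. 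This is exactly why the conclusion of \cref{mweight} carries a residual $r^{-k}$ weight and why the remark after it claims only that the \emph{$r$-decay} is already optimal. The paper sidesteps this entirely by decomposing
\[ \partv{n+1}\frac{m}{r^2} = \partv{n}\Lx\frac{\partial_v m}{\lambda r^2}\Rx - 2\,\partv{n}\frac{m}{r^3}, \]
observing that $\frac{\partial_v m}{\lambda r^2}=\frac{1}{2}(1-\mu)(\partv{}\phi)^2$ carries no singular $r$-weight at all (so \eqref{phivboot} and the inductive hypothesis control it directly), and then expressing $\partv{n}\frac{m}{r^3}$ through a second averaging-operator representation $\partv{n}\frac{m}{r^3}=I_v^{n+3}\bigl[\partv{n}\bigl(\tfrac{1}{2}(1-\mu)(\partv{}\phi)^2\bigr)\bigr]$, whose integral, split over the regions $r^{n+4}\lessgtr u^{n+6}$, produces the uniform near-axis bound. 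The same decomposition with $\frac{\partial_v m}{\lambda r}$ and $\frac{m}{r^2}$ yields \eqref{muvboot}. You need this restructuring; solving for $\frac{m}{r^k}$ out of a Leibniz expansion of $\partv{n+1}m$ does not close near the axis.
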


\begin{proof}
To obtain \eqref{phivboot} we simply employ our averaging operator:
\begin{dmath*} \partial_v^{n+1}\phi(u,v) = \frac{1}{r^{n+2}(u,v)}\int_u^v \partial_v^{n+2}(r\phi)(u,v')r^{n+1}(u,v')dv' \end{dmath*}
then applying our hypothesis \eqref{rphiassum} the RHS is bounded in absolute value by
\begin{dmath*} \frac{C}{r^{n+2}(u,v)}\int_u^v (v')^{-(n+3)}r^{n+1}(u,v')dv' \leq \frac{3}{2}C\min\set*{r^{-(n+2)}u^{-1},u^{-(n+3)}}, \end{dmath*}
where the first bound is obtained by dividing through by $v^{-(n+2)}$ and pulling out the remaining copy of $v^{-1}$, to yield a bounded integral with an additional $u$ weight, and the second bound is obtained by using the simple supremum estimate to remove the $v$ term, and then integrating directly. The additional factor of $\frac{3}{2}$ results from evaluating the integral in $r$ thanks to our bound on $\abs{\lambda}$, \eqref{unifapriori}. Since the derivatives acting on extra factors of $\lambda$ are all well controlled, we ignore these terms without restriction.

Next, turning our attention to \eqref{mvboot} we observe:
\begin{dmath*} \partv{n+1}\frac{m}{r^2} = \partv{n}\Lx \frac{\partial_v m}{\lambda r^2} \Rx - \partv{n} \frac{2m}{r^{3}}. \end{dmath*}
Note that by our hypotheses we can then simply ignore the terms with derivatives acting on the $\lambda$'s, as these cannote be worse than those in which only $\frac{m}{r^3}$ is differentiated. We do not have enough a-priori control of either of these terms. Observe that, employing averaging operators, we have
\begin{dmath*} \partial_v^n\frac{m}{r^3}(u,v) = \frac{1}{r^{n+3}}\int_u^v \partial_v^n\Lx \frac{1-\mu}{2\lambda}(\partial_v \phi)^2 \Rx r^{n+2}dv'. \end{dmath*}
In particular we need only to bound the term $\partial_v^n \frac{\partial_v m}{r^2}$ in order to control the derivative of $\frac{m}{r^3}$. But this is exactly our first term, thus estimating this is sufficient. By our equation \eqref{SSESF}, the first term is
\begin{dmath*} \partv{n}(\frac{1}{2}(1-\mu)(\partv{}\phi)^2).  \end{dmath*}
Observing that, by our hypotheses and the bound found above, all these terms gain either a power of $r$ or of $u$ decay for each derivative applied, we immediately have:
\begin{dmath*} \abs{\partv{n}\Lx\frac{1}{2}(1-\mu)\partv{}\phi \partv{}\phi\Rx} \leq C_1C\min\set*{r^{-(n+4)}u^{-2},u^{-(n+6)}} \end{dmath*}
by our hypotheses and the above estimates (Thus $C_1$ depends only on constants controlling lower order terms, and $C$ for the highest derivatives of $\phi$). This controls the first term we are interested in directly, and we use this to obtain
\begin{dmath*} \abs{\partial_v^n\frac{m}{r^3}} \leq \frac{CC_1}{r^{n+3}}\int_u^v \min\set*{r^{-(n+4)}u^{-2},u^{-(n+6)}}r^{n+2}dv' \lesssim CC_2\min\set*{r^{-(n+3)},u^{-(n+6)}}. \end{dmath*}
Splitting the integral into regions $r^{n+4}(u,v') < u^{n+6},r^{n+4}(u,v') > u^{n+6}$ and applying the appropriate bounds to obtain the first bound, and simply using the uniform $u$ bound to obtain the second, modifying the constant to $C_2$ to account for the various constants (all independent of $C$ by our above bounds) that enter in this integration. The desired estimate \eqref{mvboot} follows.

Finally we will check \eqref{muvboot}. This is quite similar to \eqref{mvboot} above, but now our term is:
\begin{dmath*} \partv{n+1}\mu = \partv{n}\Lx \frac{\partial_vm}{\lambda r} - \partv{n}\frac{m}{r^2} \Rx. \end{dmath*}
The second term is already optimally controlled by our hypotheses, so we need only address the first. This is given by
\begin{dmath*} \partv{n}\Lx\frac{1}{2}(1-\mu)r(\partv{}\phi)^2\Rx. \end{dmath*}
Since we lose at most one derivative to removing this extra power of $r$ this is bounded by
\begin{dmath*} C\min\set*{r^{-(n+2)}u^{-2},u^{-(n+5)}}. \end{dmath*}
Combining this with our bound for the second term we obtain the desired bound (with an extra leading factor to observe the constant independent of $C$).

\end{proof}

In order to do our bootstrapping, we must guarantee some smallness of the term dependent on the highest order constant $C$ above. To this end, we prove the following proposition:
\begin{proposition}
Let $\epsilon > 0$. There exists some $v_0 > 1$ such that for all $v > v_0$,
\begin{equation}\label{small1} \int_1^v \abs{\frac{m}{r^2}(u',v)}du' < \epsilon v^{-1}, \end{equation}
\begin{equation}\label{small2} \int_1^v \abs{\phi(u',v)}du' < \epsilon, \end{equation}
and
\begin{equation}\label{small3} \int_1^v \abs{\partial_v(r\phi)}(u',v)du' < \epsilon. \end{equation}
\end{proposition}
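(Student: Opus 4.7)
The plan is to prove each of the three smallness estimates separately by direct calculation using only the a-priori bounds \eqref{unifapriori}--\eqref{mapriori} quoted from \cite{Luk2015}. The key geometric input, which I will establish at the outset, is that along $\underline{C}_v$ the area radius satisfies $r(u,v) \asymp v - u$. Indeed, since $-\nu \in [\tfrac{1}{3}, \tfrac{2}{3}]$ uniformly by \eqref{unifapriori} and $r = 0$ on $\Gamma = \set*{u = v}$, integration of $\nu = \partial_u r$ from $u$ up to $v$ gives $\tfrac{1}{3}(v-u) \leq r(u,v) \leq \tfrac{2}{3}(v-u)$. In particular, for $u \in [1, v/2]$ we have $r \asymp v$, and for $u \in [v/2, v]$ we have $r \leq u$.

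The estimate \eqref{small3} is immediate: using $\abs{\partial_v(r\phi)} \lesssim v^{-2}$ from \eqref{dvrphiapriori} gives $\int_1^v \abs{\partial_v(r\phi)(u',v)}\, du' \lesssim (v-1) v^{-2} \lesssim v^{-1}$, which is less than $\epsilon$ for $v$ larger than some $v_0 = v_0(\epsilon)$.

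For \eqref{small1} and \eqref{small2}, I split the integration at $u = v/2$ and on each piece I take whichever branch of the min-bound in \eqref{phiapriori}, respectively \eqref{mapriori}, is sharper. Concretely, for \eqref{small2}, on $u \in [1, v/2]$ (where $r \asymp v$) I use $\abs{\phi} \lesssim r^{-1}u^{-1} \lesssim (vu)^{-1}$, which integrates to $v^{-1} \log(v/2)$; on $u \in [v/2, v]$ (where $r \leq u$) the min collapses to $\abs{\phi} \lesssim u^{-2}$, which integrates to $v^{-1}$. Both pieces go to zero as $v \to \infty$, giving $\int_1^v \abs{\phi(u',v)}\,du' \lesssim v^{-1}\log v < \epsilon$ for $v$ large. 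For \eqref{small1}, $\abs{m/r^2} \lesssim \min\set*{u^{-3}r^{-2}, r u^{-6}}$; on $[1, v/2]$ I use the first branch with $r \asymp v$, yielding $\abs{m/r^2} \lesssim u^{-3}v^{-2}$ with integral $\lesssim v^{-2}$, and on $[v/2, v]$ I use the second with $u \asymp v$, yielding $\abs{m/r^2} \lesssim (v-u)v^{-6}$ with integral $\lesssim v^{-4}$. Summing, $\int_1^v \abs{m/r^2}\,du' \lesssim v^{-2} < \epsilon v^{-1}$ whenever $v > v_0$ is large enough.

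Taking $v_0$ to be the maximum of the three thresholds (each of the form $C/\epsilon$ for a universal $C$) produces the required $v_0$. This proposition reduces to a routine calculation and I expect no real obstacle; the only point requiring mild care is the correct placement of the split at $u \sim v/2$ so that near the axis one exploits $r \to 0$ via the $ru^{-6}$ branch, while away from the axis one exploits the lower bound $r \gtrsim v$. No use of the local scattering hypothesis or of the inductive hypotheses of \cref{uvind} is needed at this step.
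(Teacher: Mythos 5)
Your proposal is correct and follows essentially the same strategy as the paper's own proof: establish $r(u,v)\asymp v-u$ from the uniform bounds on $\nu$ (or $\lambda$), then split the $u$-integral and apply the appropriate branch of the a-priori $\min$-bound on each piece. The only differences are computational streamlining: you split at $u=v/2$ while the paper splits along curves of constant $r$; for \eqref{small1} you use two regions while the paper uses three (separating off $\{r\le 1\}$); and for \eqref{small3} you observe that the single branch $\abs{\partial_v(r\phi)}\lesssim v^{-2}$ already suffices without any split, which is cleaner than the paper's argument there. None of this changes the substance, and your case distinction is sound (in particular, $r\le u$ on $[v/2,v]$ correctly selects the $ru^{-6}$, respectively $u^{-2}$, branches).
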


\begin{proof}
Since we need only find some $v_0 > 1$, and $r$ is monotone in $v$ and unbounded, we may assume without restriction that $r(1,v) > \frac{v}{6} > 1$. We begin with \eqref{small1}. We have by \cref{mapriori} that
\begin{dmath*} \frac{m}{r^2} \lesssim \min\set*{ru^{-6},r^{-2}u^{-3}}. \end{dmath*}
By monotonicity of $r$ in $u$ we can thus bound \eqref{small1} by
\begin{dmath*} \tilde{C}\Lx\int_{0\leq r(u',v) \leq 1}(u')^{-6}du' + \int_{1 \leq r(u',v) \leq \frac{v}{6}}(u')^{-3}du' + \int_{\frac{v}{6} \leq r(u',v) \leq r(1,v)}r^{-2}(u',v)(u')^{-3}du'\Rx. \end{dmath*}
Now recall that $\frac{1}{3}(v-u) \leq r(u,v) \leq \frac{1}{2}(v-u)$; thus we conclude that
\begin{dmath*} u \geq v - 3r(u,v). \end{dmath*}
The first two terms can simply be evaluated directly. The last term can be rewritten
\begin{dmath*} \frac{36}{v^2}\int_{\frac{v}{6} \leq r(u',v)\leq r(1,v)} (u')^{-3}du' \leq \frac{36}{v^2}. \end{dmath*}
Then evaluating each term above using these bounds, we conclude that we can bound \eqref{small1} by
\begin{dmath*} \tilde{C}\Lx C_1(v-3)^{-6} + 4C_2v^{-2} + 36v^{-2}\Rx \end{dmath*}
with $C_1,C_2$ accounting for the necessary change of variable factors. Thus we see that we can take $v_0 > \frac{11\max\set*{C_1,C_2,36}\tilde{C} + 1}{\epsilon}$ and this satisfies the required conditions.

We will now address \eqref{small2}. In this case we have by \eqref{phiapriori} that
\begin{dmath*} \abs{\phi(u,v)} \leq \tilde{C}'\min\set*{u^{-2},r^{-1}(u,v)u^{-1}}. \end{dmath*}
Thus \eqref{small2} is bounded by
\begin{dmath*} \tilde{C}'\Lx \int_{0 \leq r(u',v) \leq \frac{v}{6}} \abs{(u')^{-2}}du' + \frac{2}{v}\int_{\frac{v}{6}\leq r(u',v) \leq r(1,v)}\abs{r^{-1}} \Rx. \end{dmath*}
Integrating, and absorbing constants from changing variables into an overall factor $\tilde{C}''$ we can bound \eqref{small2} by
\begin{dmath*} \tilde{C}''\Lx \frac{6}{v} + \abs{\frac{2}{v}(\log\abs{v} + C_4)} \Rx, \end{dmath*}
with $C_4$ absorbing any multiplicative factors in the logarithm (again from integration). Thus we can find a $v_0$ satisfying our requirements.

Finally, to obtain \eqref{small3}, we have
\begin{dmath*} \int_1^v\abs{\partial_v(r\phi)}(u',v)du' \lesssim \int_1^v \min\set*{r^{-2},(u')^{-2}}du' \lesssim v^{-1} \end{dmath*}
by simply integrating to $u' = \frac{v}{2}$ in $r^{-2}$, and the rest of the way using the $u$ bound. We thus, again, find a suitable $v_0$. Taking the largest of the three choices, we have our result.
\end{proof}

\begin{lemma}\label{vrphiboot}
In fact we have:
\begin{equation}\label{rphivboot} \abs{\Lx\frac{\partial_v}{\lambda}\Rx^{n+2}(r\phi)} \lesssim v^{-(n+3)}. \end{equation}
\end{lemma}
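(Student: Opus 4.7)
The plan is a continuity/bootstrap argument that promotes the preliminary estimates of Proposition \ref{vbootprelim} into the full bound. Fix a large constant $C_1$ (to be chosen later) and define $U^\star$ to be the supremum of those $U \geq 1$ for which $\abs{\partv{n+2}(r\phi)(u,v)} \leq C_1 v^{-(n+3)}$ throughout $\mathcal{Q} \cap \set*{1 \leq u < U}$. The asymptotic-flatness bound on $C_1$ together with local theory shows $U^\star > 1$ once $C_1 \gtrsim A$. I will show that on any such region the bootstrap self-improves to $\abs{\partv{n+2}(r\phi)} \leq \tfrac{3}{4} C_1 v^{-(n+3)}$, contradicting any finite value of $U^\star$.

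The improved bound comes from integrating the wave equation in $u$. Starting from the alternative form
\[ \partu{}\partv{}(r\phi) = \frac{2m}{(1-\mu)r^2}\Lx \phi - \partv{}\phi \Rx, \]
I apply $\partv{n+1}$, commute past $\partu{}$ using $[\partv{},\partu{}]f = \lambda^{-1}(\partu{}\lambda)\partv{}f - \nu^{-1}(\partv{}\nu)\partu{}f$ iteratively, multiply by $\nu$, and integrate from $u' = 1$ to $u' = u$. The initial data bound $\abs{\partv{n+2}(r\phi)(1,v)} \lesssim A v^{-(n+3)}$ from the asymptotic-flatness assumption controls the boundary term, and the commutator terms only contain $\partv{}$-derivatives of $\lambda, \nu$ of order $\leq n$ (controlled by the inductive hypothesis of \cref{uvind}) acting on derivatives of $r\phi$ of order $\leq n+1$, contributing a constant-in-$C_1$ piece bounded by $v^{-(n+3)}$.

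Expanding the main integrand via Leibniz splits it into \emph{top-order} pieces---those involving one of $\partv{n+2}\phi$, $\partv{n+1}\phi$, $\partv{n+1}\mu$, or $\partv{n+1}(m/r^2)$---and strictly lower-order pieces bounded purely by the inductive hypothesis and the a priori estimates \eqref{unifapriori}--\eqref{mapriori}, which contribute at most $C_2 v^{-(n+3)}$ for some $C_2$ independent of $C_1$. On the top-order pieces I substitute the linear-in-$C_1$ bounds \eqref{phivboot}--\eqref{muvboot} from \cref{vbootprelim}, using the identity $r\partv{n+2}\phi = \partv{n+2}(r\phi) - (n+2)\partv{n+1}\phi$ to trade $\partv{n+2}\phi$ for $\partv{n+2}(r\phi)$ at the cost of an $r^{-1}$ weight. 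The resulting $u$-integrals reduce to $\int_1^v \abs{m/r^2}\, du'$, $\int_1^v\abs{\phi}\, du'$, and $\int_1^v\abs{\partial_v(r\phi)}\, du'$, each of which is $< \epsilon$ for $v \geq v_0$ by the smallness lemma \eqref{small1}--\eqref{small3}, plus the auxiliary integral $\int_1^v \abs{m/r^3}\, du'$, which is not directly covered by the smallness lemma but is bounded by $\lesssim v^{-3}$ by splitting into $r \leq 1$ and $r \geq 1$ and using the explicit a priori bound $m \lesssim \min\set*{u^{-3}, r^3 u^{-6}}$ from \eqref{mapriori}.

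Putting this together gives, for $v \geq v_0$, an estimate of the shape $\abs{\partv{n+2}(r\phi)(u,v)} \leq (A + C_2) v^{-(n+3)} + C_3\, \epsilon\, C_1\, v^{-(n+3)}$ with $C_3$ independent of $C_1$. Choosing first $C_1 = 4(A + C_2)$ and then $v_0$ large enough that $C_3 \epsilon < 1/4$ yields $\abs{\partv{n+2}(r\phi)} \leq \tfrac{3}{4} C_1 v^{-(n+3)}$, closing the bootstrap. The region $v \leq v_0$ is then handled by local existence and persistence of regularity, since $v_0$ is a fixed constant. The main obstacle is bookkeeping: cleanly separating the $C_1$-linear terms from the multilinear-in-lower-order pieces (which must be treated as constant-in-$C_1$) and verifying that the integral attached to each $C_1$-linear term really does carry a factor tending to $0$ as $v_0 \to \infty$. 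The trickiest such term, $\frac{m}{r^3}\partv{n+2}(r\phi)$, is the reason the r-weighted a priori mass bound \eqref{mapriori} must be invoked directly rather than appealing to the smallness lemma alone.
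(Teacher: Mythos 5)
Your proposal matches the paper's argument: a bootstrap/continuity argument in $u$, integrating the wave equation with the boundary term controlled by asymptotic flatness, the $C_1$-linear (top-order) terms controlled via \cref{vbootprelim} together with the smallness estimates \eqref{small1}--\eqref{small3}, the self-referencing $\partv{n+2}(r\phi)$ term absorbed via Gronwall, and the conclusion by an open-closed-connectedness argument in $u$. The only cosmetic difference is that you carry the source term in the $\phi - \partv{}\phi$ form so that, after the trade $r\partv{n+2}\phi \approx \partv{n+2}(r\phi)$, your Gronwall weight is $m/r^3$ rather than the $m/r^2$ the paper obtains from the $\phi - \partv{}(r\phi)$ form (which is covered directly by \eqref{small1}); your separate estimate $\int_1^v\abs{m/r^3}\,du' \lesssim v^{-3}$ handles this correctly, so the bootstrap still closes.
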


\begin{proof}
Recall from the asymptotic flatness and gauge conditions that we have \eqref{asymflat}:
\begin{dmath*} \sup_{C_1} v^{n+3}\abs{\Lx\frac{\partial_v}{\lambda}\Rx^{n+2}(r\phi)} \leq \mathcal{I}_{n+2}. \end{dmath*}
Moreover, observe that we have the following bound for $\frac{\partial_u}{\nu}\Lx\frac{\partial_v}{\lambda}\Rx^{n+2}(r\phi)$:
\begin{dmath*} \abs{\frac{\partial_u}{\nu}\Lx\frac{\partial_v}{\lambda}\Rx^{n+2}(r\phi)} \leq \abs{\partv{n+1}\Lx-\frac{m}{(1-\mu)r^2}(\nu^{-1}\partial_u(r\phi)) + \frac{m}{(1-\mu)r^2}\phi\Rx} + C'v^{-(n+3)} \leq C''\min\set{r^{-(n+3)},r^{-2}u^{-(n+1)}}, \end{dmath*}
using the results of \cref{mprelim,phiprelim}, with our remainder term contributed by those terms in which some derivative acts on the $\lambda^{-1},\nu^{-1}$ weights. In particular, for any $C > 0$ there is an $\epsilon_{C}$ such that
\begin{dmath*} \abs{\partial_v^{n+2}(r\phi)}(1+\delta,v) \leq (\mathcal{I}_{n+2} + C)v^{-(n+3)} \end{dmath*}
for all $\delta \leq \epsilon_{C}$. Thus we are in the scenario of \cref{vbootprelim} above, and we can conclude the results of \cref{vbootprelim} for $u \leq 1+\epsilon_{C}$. Our goal is to bootstrap this bound for $\partv{n+2}(r\phi)$, so in particular we would like to improve our control of the integral
\begin{dmath*} \int_1^u \partv{n+1}\Lx -\frac{m}{(1-\mu)r^2}(\lambda^{-1}\partial_v(r\phi)) + \frac{m}{(1-\mu)r^2}\phi\Rx du'. \end{dmath*}
Observe that by taking $C$ large enough, we can ignore all the integral terms that do not themselves include a factor of $C$. By \cref{vbootprelim} and our previous bounds, all terms immediately verify estimates with the proper $u$ and $v$ weights except those with all derivatives acting on one of $\phi,\mu$ and $\frac{m}{r^2}$. We will check these explicitly. In the $\phi$ case we have
\begin{dmath*} \int_1^u \abs{\frac{m}{(1-\mu)r^2}\partv{n+1}\phi} du' \leq CC'\min\set*{r^{-n+2}u^{-1},u^{-(n+3)}}\int_1^u \frac{m}{(1-\mu)r^2} du' \leq \epsilon CC' v^{-1}\min\set*{r^{-(n+2)}u^{-1},u^{-(n+3)}}, \end{dmath*}
using \cref{vbootprelim} and \eqref{small1}, so long as $v > v_0$ corresponding to the required $\epsilon$ (which depends only on $n$ by construction). By \eqref{muvboot} and our bounds on $\phi$ the $\mu$ case satisfies the same bound (in fact better, but the procedure is the same). It remains to check the $\frac{m}{r^2}$ case. Here we have:
\begin{dmath*} \int_1^u \abs{\frac{\lambda\nu\phi}{(1-\mu)}\partv{n+1}\frac{m}{r^2}}du' \leq CC'\min\set*{r^{-(n+3)}u^{-2},u^{-(n+6)}}\int_1^u \abs{\phi} \leq \epsilon CC'\min\set*{r^{-(n+3)}u^{-2},u^{-(n+6)}} \end{dmath*}
using \cref{vbootprelim}. We obtain the same result for the other $\partv{n+1}\frac{m}{r^2}$ term, as a consequence of \eqref{small3}. Since we can suppress our leading constants as much as we like for $v > v_0$, and the region $\set*{(u,v);v \leq v_0}$ is compact, if we take our $C$ large enough we can improve our estimates to:
\begin{dmath*} \abs{\partial_v^{n+2}(r\phi)}(1+\delta,v) \leq (\mathcal{I}_{n+2} + \frac{1}{2}C)v^{-(n+3)} + \int_u^v \frac{m}{(1-\mu)r^2}\partv{n+2}(r\phi) du'. \end{dmath*}
Using \eqref{small1}, and Gronwall's lemma, we can deal with this last term to obtain
\begin{dmath*} \partv{n+2}(r\phi)(1+\delta,v) \leq e^{\epsilon}(\mathcal{I}_{n+2} + \frac{1}{2}C)v^{-(n+3)}. \end{dmath*}
Taking $\epsilon$ small enough this represents a strict improvement on our initial bound.

Thus we conclude by continuity, and our preliminary estimates, that the region on which $\partial_v^{n+2}(r\phi)$ satisfies our bound is, in fact, strictly larger along each $\underline{C_v}$. In particular, we conclude that the region on which \eqref{rphivboot} is closed (by continuity), open, and non-empty (by the above). Thus it is our entire domain, so we have our required estimate.
\end{proof}

With this established we can also address the $u$ derivative case.

\begin{lemma} \label{ubootprelim} Suppose the following bound holds in a neighborhood $Op(\Gamma)$ of the axis:
\begin{equation} \abs{\partial_u^{n+2}(r\phi)} \leq Cu^{-(n+3)}. \end{equation}
Then we have the following bounds on a neighborhood of $\Gamma$:
\begin{equation}\label{phiuboot} \abs{\partial_u^{n+1}\phi} \leq CC'\min\set*{r^{-1}u^{-(n+2)},u^{-(n+3)}}, \end{equation}
\begin{equation}\label{muboot} \abs{\partial_u^{n+1}\frac{m}{r^2}} \leq CC'\min\set*{u^{-(n+4)}r^{-2},u^{-(n+6)}}, \end{equation}
\begin{equation}\label{muuboot} \abs{\partial_u^{n+1}\mu} \leq CC'\min\set*{r^{-2}u^{-(n+3)},u^{-(n+5)}}. \end{equation}
\end{lemma}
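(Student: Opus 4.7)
The plan is to mirror the proof of Proposition \ref{vbootprelim}, exchanging the roles of $u,v$ (and of $\lambda,\nu$) throughout. The appropriate averaging tool near the axis is the $u$-averaging operator from \eqref{uavg}, which integrates along a constant-$v$ slice from the axis $\{u'=v\}$ out to $(u,v)$; the uniform bound $\tfrac{1}{3}\le|\nu|\le\tfrac{2}{3}$ from \eqref{unifapriori} plays exactly the role that the uniform bound on $\lambda$ played in the $v$-case, so the integrations are equally well-behaved.

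For \eqref{phiuboot}: since $r\phi$ and its first $n+1$ $u$-derivatives vanish on $\Gamma$ (by the gauge \ref{gauge1}, axis regularity, and the inductive hypothesis \eqref{ind1hypurphi}), integrating $\partial_{u'}\bigl(r^{n+2}\partial_{u'}^{n+1}\phi\bigr)$ from the axis gives
\[
\partial_u^{n+1}\phi(u,v)=\tfrac{1}{r^{n+2}(u,v)}\int_v^u \partial_{u'}^{n+2}(r\phi)(u',v)\, r^{n+1}(u',v)\, \nu(u',v)\, du' + \mathrm{l.o.t.},
\]
where the correction terms involve lower orders of $\partial_u^k\phi$ multiplied by derivatives of powers of $r$ already controlled by \eqref{ind1hypunu}. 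Inserting the bootstrap bound $|\partial_u^{n+2}(r\phi)|\le Cu^{-(n+3)}$, the uniform bound on $\nu$, and splitting the $u'$-integral into $\{r(u',v)\le 1\}$ and $\{r(u',v)>1\}$ yields the two branches of the min: direct integration gives $u^{-(n+3)}$ on the first region, while pulling out a single factor of $r^{-1}$ gives $r^{-1}u^{-(n+2)}$ on the second.

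For \eqref{muboot} and \eqref{muuboot} I would decompose
\[
\partial_u^{n+1}\tfrac{m}{r^2}=\partial_u^n\!\bigl(\tfrac{\partial_u m}{r^2}\bigr) - 2\partial_u^n\!\bigl(\tfrac{m\nu}{r^3}\bigr), \qquad \partial_u^{n+1}\mu=\partial_u^n\!\bigl(\tfrac{\partial_u m}{r}\bigr)-\partial_u^n\!\bigl(\tfrac{m\nu}{r^2}\bigr),
\]
and substitute $\tfrac{\partial_u m}{r^2}=\tfrac{(1-\mu)(\partial_u\phi)^2}{2\nu}$ from \eqref{SSESF} into the first summand of each decomposition. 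Expanding $\partial_u^n$ via the Leibniz rule then produces terms whose highest-order $\phi$-derivative is of order $n+1$, bounded by \eqref{phiuboot} just established; the remaining factors are controlled by the inductive hypothesis. The $\tfrac{m\nu}{r^3}$ and $\tfrac{m\nu}{r^2}$ pieces are handled by first writing $m(u,v)=\int_v^u \partial_{u'} m(u',v)\, du'$ (using $m|_\Gamma=0$), substituting the SSESF equation for $\partial_{u'}m$, and expanding $\partial_u^n$ by a combination of Leibniz and the $u$-analogue of Lemma \ref{avglemma} — the latter's proof carrying over verbatim with $\rho=r^s$ and $\partial_\rho=\nu^{-1}\partial_u$. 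The resulting integral expressions are bounded using \eqref{phiuboot} and the inductive hypothesis, and the same integration-region split as above produces the claimed min-type bounds.

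The main obstacle is combinatorial bookkeeping: in the Leibniz expansion of $\partial_u^n$ applied to an expression of the form $\tfrac{(1-\mu)(\partial_u\phi)^2}{2\nu}$, one must verify that every distribution of the $n$ derivatives among $(1-\mu)$, $\nu^{-1}$, and the two copies of $\partial_u\phi$ produces decay no worse than the leading term bounded by \eqref{phiuboot}. This succeeds for the same reason as in \cref{ind1mixr,ind1mixrphi}: each additional $u$-derivative extracts either a $u^{-1}$ or an $r^{-1}$, and so never worsens the overall order of decay. A secondary feature worth noting is that, unlike the $v$-case, the $u$-weighted bounds near the axis do not improve in $r$ as derivatives are added; this is precisely why $r^{-2}$ rather than a higher inverse power of $r$ appears on the right-hand sides of both \eqref{muboot} and \eqref{muuboot}.
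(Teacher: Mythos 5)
Your proposal matches the paper's proof essentially step by step: $u$-averaging for \eqref{phiuboot}, the split of $\partial_u^{n+1}(m/r^k)$ into a $\partial_u m / r^k$ piece (substituting the SSESF equation) and an $m/r^{k+1}$ piece handled by a second $u$-averaging from the axis, and the same integration-region split to obtain both branches of each minimum. One small misstatement in your justification for \eqref{phiuboot}: $\partial_u^j(r\phi)$ does \emph{not} vanish on $\Gamma$ for $j\ge 1$ (e.g.\ $\partial_u(r\phi)|_\Gamma=\nu\phi|_\Gamma$ is generically nonzero, and the axis condition only constrains the combination $(\partial_u+\partial_v)^j(r\phi)$); what actually makes the integration from the axis work is simply that $r^{n+2}\partial_u^{n+1}\phi$ vanishes there because $r|_\Gamma=0$, with $\partial_u^{n+1}\phi$ bounded.
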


\begin{proof}
The approach for each term is essentially the same as above. In order to obtain \eqref{phiuboot} we again employ our averaging operators, now integrating in $u$ rather than $v$. We have
\begin{dmath*} \partial_u^{n+1}\phi(u,v) = \frac{1}{r^{n+2}(u,v)}\int_v^u\partial_u^{n+2}(r\phi)r^{n+1}du'. \end{dmath*}
So using the sup bound for our $u$ term, and integrating in $r$ we find that
\begin{dmath*} \abs{\partial_u^{n+1}\phi}(u,v) \leq Cu^{-(n+3)}. \end{dmath*}
If we instead use the sup bound for $r$, and integrate in $u$, we also obtain
\begin{dmath*} \abs{\partial_u^{n+1}\phi}(u,v) \lesssim r^{-1}u^{-(n+2)}. \end{dmath*}
Together these give our bound.

Now we can move to $\frac{m}{r^2}$. We can employ the same technique as above, noting that
\begin{dmath*} \partu{n+1}\frac{m}{r^2} = \partu{n}\Lx \frac{\partial_u m}{\nu r^2} \Rx - \partu{n}\frac{2m}{r^3}. \end{dmath*}
As above we will ignore terms in which $\nu$ is differentiated, as these are no worse than thosee in which $\frac{m}{r^3}$ is by hypothesis. Similar to the above, it suffices to control the expression
\begin{dmath*} \abs{\partial_u^{n}\Lx \frac{1}{2\nu}(1-\mu)(\partial_u\phi)^2 \Rx}. \end{dmath*}
This verifies the bound
\begin{dmath*} \abs{\partial_u^{n}\Lx \frac{1}{2\nu}(1-\mu)(\partial_u\phi)^2 \Rx} \leq CC' \min\set*{r^{-2}u^{-(n+4)},u^{-(n+6)}}. \end{dmath*}
This controls our first term directly, so we need only address our $\frac{m}{r^3}$ term. This time we obtain control by averaging in $u$:
\begin{dmath*} \partial_u^{n}\frac{m}{r^3}(u,v) = \frac{1}{r^{n+3}}\int_v^u \Lx\partial_u^{n}\Lx \frac{1}{2\nu}(1-\mu)(\partial_u\phi)^2 \Rx\Rx r^{n+2}du'. \end{dmath*}
Observe that, substituting our bounds in, this integral is the same as that used to bound \eqref{mvboot} but with the roles of $u$ and $r$ interchanged. Thus we obtain the following bound:
\begin{dmath*} \partu{n+1}\frac{m}{r^2} \leq CC_1\min\set*{u^{-(n+4)}r^{-2},u^{-(n+6)}}. \end{dmath*}

Finally we can address \eqref{muuboot}. As above we need only consider the term
\begin{dmath*} \partu{n}\frac{\partial_u m}{r\nu} = \partu{n} \Lx \frac{1}{2\nu}(1-\mu)r(\partial_u\phi)^2 \Rx. \end{dmath*}
Applying the same analysis as above, we conclude that
\begin{dmath*} \abs{\partu{n+1}\mu} \leq CC_2\min\set*{r^{-2}u^{-(n+3)},u^{-(n+5)}}, \end{dmath*}
so taking $C'$ sufficiently large we have our result.
\end{proof}

Again in parallel to the case above, we must obtain some small quantities in order to close this second bootstrap.
\begin{proposition}
For any $\epsilon > 0$ there is $U > 1$ such that for all $u > U$,
\begin{equation}\label{smallu1} \int_u^\infty \frac{m}{r^2}(u,v')dv' < \epsilon. \end{equation}
\end{proposition}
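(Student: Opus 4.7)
The plan is to follow the exact template of the preceding companion proposition (the one giving \eqref{small1}--\eqref{small3}), but now integrating along a constant-$u$ slice in the $v$-direction, and to extract the smallness from the large-$u$ behavior rather than from large $v$. The only input needed beyond the a priori estimates \eqref{mapriori} and \eqref{unifapriori} is the change of variables $dv = \lambda^{-1}\, dr$ along a constant-$u$ ray, which is legal because $\lambda$ is pointwise bounded above and below by \eqref{unifapriori}.

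First I would rewrite the integral in terms of $r$. On the ray $\{u = \text{const}\}$, $r$ is a smooth increasing function of $v$ running from $0$ (at $\Gamma$, i.e. $v = u$) to $\infty$, and $|dv| = \lambda^{-1}\, dr \leq 3\, dr$. Thus
\begin{equation*}
\int_u^{\infty} \frac{m}{r^2}(u,v')\, dv' \;\leq\; 3\int_0^{\infty} \frac{m}{r^2}\, dr.
\end{equation*}
Next I would split this at $r = 1$ and use the two sides of \eqref{mapriori}, namely $m \lesssim r^3 u^{-6}$ near the axis and $m \lesssim u^{-3}$ far from it, to obtain $\frac{m}{r^2} \lesssim r u^{-6}$ for $r \leq 1$ and $\frac{m}{r^2} \lesssim r^{-2} u^{-3}$ for $r \geq 1$.

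Then the near-axis contribution is bounded by
\begin{equation*}
3\int_0^1 r u^{-6}\, dr \;\lesssim\; u^{-6},
\end{equation*}
and the far-field contribution by
\begin{equation*}
3\int_1^{\infty} r^{-2} u^{-3}\, dr \;\lesssim\; u^{-3}.
\end{equation*}
Summing gives $\int_u^{\infty} \frac{m}{r^2}(u,v')\, dv' \leq \tilde{C}\, u^{-3}$ for a universal constant $\tilde C$ coming from \eqref{mapriori} and \eqref{unifapriori}. Choosing $U := (\tilde{C}/\epsilon)^{1/3}$ then forces the integral to be less than $\epsilon$ for all $u > U$.

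There is essentially no obstacle: unlike the previous proposition, which had three separate quantities and required carefully interchanging the roles of $u$, $v$, $r$ in the splitting, here we need only one inequality and the relevant bound $m \lesssim \min\{u^{-3}, r^3 u^{-6}\}$ is already uniform on all of $\mathcal Q$. The only subtle point worth flagging in the write-up is that one must use the two-sided bound on $\lambda$ to legitimize the change of variables to $r$, but this is immediate from \eqref{unifapriori}.
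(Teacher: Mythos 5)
Your proof is correct and follows essentially the same route as the paper: apply \eqref{mapriori} to get $\tfrac{m}{r^2} \lesssim \min\{r u^{-6},\, r^{-2}u^{-3}\}$, split the $v'$-integral at $r=1$, and integrate each piece (the paper absorbs the $\lambda^{-1}$ change-of-variable factor into the integration constants rather than invoking it explicitly as you do). The only cosmetic difference is that you choose the sharper $U = (\tilde C/\epsilon)^{1/3}$ while the paper takes the cruder but equally valid $U \gtrsim \tilde C/\epsilon$.
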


\begin{proof}
We begin with \eqref{smallu1}. By \eqref{mapriori} we have the bound
\begin{dmath*} \abs{\frac{m}{r^2}} \leq C_1\min\set*{r^{-2}u^{-3},ru^{-6}}. \end{dmath*}
So, integrating the second bound from $r = 0$ to 1, and the first the rest of the way, we obtain
\begin{dmath*} \int_u^\infty \abs{\frac{m}{r^2}}(u,v')dv' \leq C_1\Lx u^{-6}\int_{0 \leq r \leq 1} rdv' + u^{-3}\int_{r > 1} r^{-2}dv' \Rx \leq C_1C'(u^{-6} + u^{-3}) \leq (C_1C' + 1)u^{-3}, \end{dmath*}
where $C'$ absorbs any integration constants. Thus we have our result, taking $U > \frac{C_1C' + 1}{\epsilon}$.
\end{proof}

\begin{lemma}
Similar to \cref{vrphiboot} we in fact have:
\begin{equation}\label{rphiuboot} \abs{\partu{n+2}(r\phi)} \lesssim u^{-(n+3)}. \end{equation}
\end{lemma}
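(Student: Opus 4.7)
The plan is to replicate the bootstrap structure of \cref{vrphiboot} in the $u$-direction: we start from axis data (rather than from $C_1$) and propagate along each $C_u$ (rather than each $\underline{C}_v$). The smallness estimate \eqref{smallu1} plays the role of \eqref{small1}, and \cref{ubootprelim} replaces \cref{vbootprelim}.

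The first step is to transfer the decay of \eqref{rphivboot} from $\partial_v$ to $\partial_u$ derivatives along $\Gamma$. By the axis regularity recalled in the remark following the gauge conditions, $(\partial_u + \partial_v)^l(r\phi) \equiv 0$ on $\Gamma$ for every $l \leq k$. Expanding this binomially at $l = n+2$ gives
\[ \partial_u^{n+2}(r\phi)\big|_\Gamma = -\partial_v^{n+2}(r\phi)\big|_\Gamma - \sum_{j=1}^{n+1}\binom{n+2}{j}\partial_u^{n+2-j}\partial_v^j(r\phi)\big|_\Gamma. \]
The first term is controlled by \cref{vrphiboot} (after converting between $\partial_v$ and $\partv{}$ via the uniform bounds \eqref{unifapriori}); the mixed terms are controlled by \cref{ind1mixrphi}, with derivative-reordering handled by \cref{uvexclemma}. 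Each contributes at least $u^{-(n+3)}$ decay on $\Gamma$, so converting back from $\partial_u$ to $\partu{}$ yields $\abs{\partu{n+2}(r\phi)(u,u)} \lesssim u^{-(n+3)}$.

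With the axis data in hand, make the bootstrap assumption $\abs{\partu{n+2}(r\phi)} \leq Cu^{-(n+3)}$ on a neighborhood $Op(\Gamma)$. By \cref{ubootprelim}, this triggers the estimates \eqref{phiuboot}, \eqref{muboot}, and \eqref{muuboot} for $\partu{n+1}\phi$, $\partu{n+1}(m/r^2)$, and $\partu{n+1}\mu$ on the same neighborhood. We then integrate
\[ \partu{n+2}(r\phi)(u,v) = \partu{n+2}(r\phi)(u,u) + \int_u^v \lambda(u,v')\,\partv{}\partu{n+2}(r\phi)(u,v')\,dv', \]
expanding $\partv{}\partu{n+2}(r\phi) = \partu{n+1}(\partv{}\partu{}(r\phi)) + [\text{commutators}]$ via the wave equation for $r\phi$ (the analog of the expansion used in \cref{vrphiboot}) and distributing by Leibniz. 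Exactly as before, every summand except those in which all $n+1$ outer derivatives land on one of $\phi$, $m/r^2$, $\mu$, or $\partu{n+2}(r\phi)$ itself produces decay strictly better than $Cu^{-(n+3)}$ with a constant independent of $C$; the dangerous terms carry factors of $C$ but are paired with $m/r^2$, so \eqref{smallu1} bounds their $dv'$-integrals by $\epsilon C u^{-(n+3)}$. The self-coupling term $\frac{m}{(1-\mu)r^2}\partu{n+2}(r\phi)$ is absorbed by Gronwall, again using \eqref{smallu1}.

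Choosing $C$ large relative to the $C$-independent constants and $\epsilon$ small then yields a strict improvement of the bootstrap for $u > U$, while for bounded $u \leq U$ the integral $\int_u^v (m/r^2)\,dv'$ is controlled by the a priori bounds and Gronwall yields a uniform (non-small) constant, which is absorbed by compactness. A continuity argument along each $C_u$ — mirroring the continuity-along-$\underline{C}_v$ argument at the end of \cref{vrphiboot} — then closes the bootstrap over all of $\mathcal{Q}$. The main obstacle is the initial axis step: one must verify that every mixed-derivative term in the binomial identity enjoys the required $u^{-(n+3)}$ decay on $\Gamma$, which in turn requires \cref{ind1mixrphi} at order $n+2$ together with careful use of \cref{uvexclemma} to reorder derivatives.
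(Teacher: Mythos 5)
Your overall strategy matches the paper's: initialize on the axis from the gauge condition $(\partial_u + \partial_v)^l(r\phi)\big|_\Gamma = 0$ together with \cref{vrphiboot} and \cref{ind1mixrphi}, set up a bootstrap hypothesis on a neighborhood of $\Gamma$, feed it through \cref{ubootprelim}, and propagate outward along each $C_u$ by integrating $\partv{}\partu{n+2}(r\phi)$ in $v$. That much is faithful to the paper's proof.

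The gap is in how you handle the dangerous terms in which all $n+1$ outer derivatives fall on $\frac{m}{r^2}$. You assert that all the $C$-carrying terms ``are paired with $m/r^2$, so \eqref{smallu1} bounds their $dv'$-integrals by $\epsilon C u^{-(n+3)}$.'' That is correct for the $\phi$- and $\mu$-terms and for the self-coupling Gronwall term, but it is false for the $\frac{m}{r^2}$-terms: when every outer derivative lands on $\frac{m}{r^2}$ the surviving coefficient is $\phi$ or $\partu{}(r\phi)$, not $\frac{m}{r^2}$, so \eqref{smallu1} cannot be factored out. This is precisely where the $u$-direction argument must depart from the $v$-direction one. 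In \cref{vrphiboot}, the corresponding terms are integrated in $u$, and $\int_1^u\abs{\phi}\,du'$, $\int_1^u\abs{\partial_v(r\phi)}\,du'$ are small by \eqref{small2}, \eqref{small3}. Here the integration is in $v$, and $\int_u^\infty\abs{\phi}\,dv'$ is not even finite (since $\abs{\phi}\sim r^{-1}u^{-1}$ for large $r$), so there is no analogue of \eqref{small2}, \eqref{small3} to invoke. The paper instead exploits the extra polynomial decay in $\partu{n+1}\frac{m}{r^2}$ delivered by \eqref{muboot}: integrating $\frac{\partu{}(r\phi)}{(1-\mu)}\partu{n+1}\frac{m}{r^2}$ and $\frac{\phi}{(1-\mu)}\partu{n+1}\frac{m}{r^2}$ in $v$ (splitting at $r\sim u$) produces bounds of the form $CC'u^{-(n+7)}$, i.e.\ four extra powers of $u^{-1}$ beyond the bootstrap threshold $Cu^{-(n+3)}$, and then $U$ is chosen so that $U^4 > 10\tilde C\max\{3C',4C''\}$ turns this polynomial gain into genuine smallness. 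Without this replacement mechanism your bootstrap does not close, and the fix is not a cosmetic one --- it requires the stronger, $r$-weighted bound \eqref{muboot} and a direct integration rather than a smallness lemma. The remainder of your argument, including the axis initialization and the continuity argument, is sound.
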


\begin{proof}
As above we will ignore the extraneous factors of $\lambda,\nu$ introduced by our weighted derivatives, since these cannot contribute anything worse than our desired decay.

Recall that, by our initial data, we have that, on $\Gamma$:
\begin{dmath*} \lim_{\epsilon,\delta\rightarrow 0^+}(\partial_u + \partial_v)^k(r\phi)(u-\epsilon,u+\delta) \equiv 0 \end{dmath*}
for all $k$. In particular, we obtain the bound:
\begin{dmath*} \abs{\partu{n+2}(r\phi)}(u,u) \leq Cu^{-(n+3)} \end{dmath*}
from our bounds \eqref{rphivboot}, and \cref{ind1mixrphi}, for some $C > 0$. Then by continuity, there is a neighborhood $N \supset \Gamma$ such that
\begin{dmath*} \abs{\partu{n+2}(r\phi)}(u,v) \leq 2Cu^{-(n+3)} \end{dmath*}
for all $(u,v) \in \overline{N}$. We can also estimate $\partu{n+2}(r\phi)$ by
\begin{dmath*} \abs{\partu{n+2}(r\phi)}(u,v) = \abs{\partu{n+2}(r\phi)(u,u)} + 3\abs{\int_u^v \partv{}\partu{n+2}(r\phi)(u,v')dv'}. \end{dmath*}
In turn we can bound this by:
\begin{dmath*} (C + C_1)u^{-(n+3)} + \tilde{C}\abs{\int_u^v \partu{n+1}\Lx -\frac{2m}{(1-\mu)r^2}(\nu^{-1}\partial_u(r\phi)) + \frac{2m}{(1-\mu)r^2}\phi \Rx(u,v')dv'} \end{dmath*}
for $C_1,\tilde{C}$ independent of our bootstrap constant $C$, in particular depending only on $n$, and bounds on low order derivatives of $\lambda,\nu$.

The only terms which can contribute constants proportional to $C$ are those in which all derivatives act on $\phi,\mu$ or $\frac{m}{r^2}$. In the first two cases, we find similar to the above, that this constant can be suppressed as much as we like outside some region with bounded $u$, by \eqref{phiuboot}, \eqref{muuboot}, \eqref{smallu1}. Thus we are left only to address the $\frac{m}{r^2}$ terms. First, we have
\begin{dmath*} \int_{u}^v \frac{\partu{}(r\phi)}{(1-\mu)}\partu{n+1}\frac{m}{r^2}dv' \leq CC'\int_u^v \min\set*{r^{-2}u^{-(n+6)},u^{-(n+8)}}dv' \end{dmath*}
for $C'$ dependent only on $n$. Splitting the integral in a manner similar to the above, we conclude that we can bound this term by
\begin{dmath*} \int_{u}^v \frac{\partv{}(r\phi)}{(1-\mu)}\partu{n+1}\frac{m}{r^2}dv' \leq 4CC'u^{-(n+7)}. \end{dmath*}
Thus, as above, this constant can be suppressed arbitrarily outside of some finite $u$ region.

Finally we have
\begin{dmath*} \int_u^v \frac{\phi}{(1-\mu)}\partv{n+1}\frac{m}{r^2}dv' \leq CC''\int_u^v\min\set*{r^{-3}u^{-(n+5)},u^{-(n+8)}}. \end{dmath*}
So once again integrating we obtain a bound
\begin{dmath*} \int_u^v \frac{\phi}{(1-\mu)}\partv{n+1}\frac{m}{r^2}dv' \leq 3CC''u^{-(n+7)}. \end{dmath*}
%%Put in Gronwall argument to deal with leftover term.

Thus we conclude that there is some universal $U > 1$ satisfying the conditions for \eqref{smallu1} to hold with small enough $\epsilon$, and moreover satisfying
\begin{dmath*} U^4 > 10\tilde{C}\max\set*{3C',4C''}. \end{dmath*}
In particular this $U$ does not depend on $C$ or $N$. Then taking $C$ large enough to absorb lower order constants, we conclude that we stricly improve our estimate in the region $u > U$. For $u \leq U$, we can simply take $C$ large enough to make our bound hold everywhere in this region if it does not already, since for any finite $u$ region we require only some constant bound, which holds immediately by continuity. Thus we conclude that we can improve our estimate to
\begin{dmath*} \abs{\partu{n+2}(r\phi)}(u,v) \leq \frac{3}{2}Cu^{-(n+3)} \end{dmath*}
on $N$. Then our initial bound is satisfied on some stricly larger neighborhood, and as above we conclude by connectedness and continuity that there is $C$ such that
\begin{dmath*} \abs{\partu{n+2}(r\phi)(u,v)} \leq 2Cu^{-(n+3)} \end{dmath*}
on all of $\mathcal{Q}$.
\end{proof}

\subsection{Full Decay for Derivatives of $\lambda,\nu$}
We are now ready to control non-mixed derivatives of $r$.

\begin{lemma}
\begin{equation}\label{lambdavfull} \abs{\partv{n+1}\lambda} \lesssim \min\set*{r^{-(n+3)},u^{-(n+3)}}, \end{equation}
\begin{equation}\label{nuufull} \abs{\partu{n+1}\nu} \lesssim \min\set*{u^{-(n+3)}}. \end{equation}
\end{lemma}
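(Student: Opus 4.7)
The plan is to derive evolution equations for $\partial_v^{n+1}\lambda$ and $\partial_u^{n+1}\nu$ directly from the Raychaudhuri-type identity $\partial_u \lambda = \partial_v \nu = 2m\lambda\nu/((1-\mu)r^2)$ encoded in \eqref{SSESF}, then integrate along characteristics using vanishing boundary data for $\lambda$ on $C_1$ and axis regularity for $\nu$ on $\Gamma$. In both cases the self-referential Leibniz term will be absorbed via Gronwall using the smallness of $\int m/r^2$ established in \eqref{small1} and \eqref{smallu1}, while the dominant contribution will come from the high-order bounds on $\partial^{n+1}(m/r^2)$ furnished by \cref{vbootprelim,ubootprelim} after the closure of the bootstraps in \cref{vrphiboot} and \cref{ubootprelim}.

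First I would handle $\partial_v^{n+1}\lambda$. Since $\lambda \equiv 1/2$ on $C_1$ in gauge \ref{gauge1}, $\partial_v^{n+1}\lambda(1,v) \equiv 0$, and
\[ \partial_v^{n+1}\lambda(u,v) \;=\; \int_1^u \partial_v^{n+1}\!\Lx\tfrac{2m\lambda\nu}{(1-\mu)r^2}\Rx(u',v)\,du'. \]
Leibniz-expanding the integrand yields: (i) the self-referential term $(2m\nu/((1-\mu)r^2))\,\partial_v^{n+1}\lambda$, absorbed by Gronwall (its multiplier $\int_1^u m/r^2\,du'$ being uniformly bounded and small for large $v$ via \eqref{small1}); (ii) purely mixed Leibniz terms, each of which decays strictly faster than the target by the inductive hypothesis together with \cref{phiprelim,mprelim,mweight,vbootprelim,vrphiboot}; and (iii) the dominant term $(2\lambda\nu/(1-\mu))\,\partial_v^{n+1}(m/r^2)$, to which I apply $|\partial_v^{n+1}(m/r^2)| \lesssim \min\{r^{-(n+3)}u^{-2},\,u^{-(n+6)}\}$ from \cref{vbootprelim}. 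The $r^{-(n+3)}$ half of the target follows from the first branch of the min together with the monotonicity $r(u',v) \geq r(u,v)$ for $u' \leq u$, yielding $\int_1^u r^{-(n+3)}(u',v)\,u'^{-2}\,du' \lesssim r^{-(n+3)}(u,v)$. For the $u^{-(n+3)}$ half I would split at $u'=u/2$: on $[1,u/2]$ the a priori bound $|\nu| \geq 1/3$ forces $r(u',v) \geq (u-u')/3 \geq u/6$, so the integrand is $\lesssim u^{-(n+3)}\,u'^{-2}$ and integrates to $\lesssim u^{-(n+3)}$; on $[u/2,u]$ the $u^{-(n+6)}$ branch integrates to $\lesssim u^{-(n+5)}$.

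Next I would treat $\partial_u^{n+1}\nu$. Because $\nu$ has no gauge-prescribed value along a characteristic, the relevant boundary is $\Gamma$. I extract the axis value from the regularity condition $(\partial_u + \partial_v)^{n+2}r|_{\Gamma} \equiv 0$ of \cref{charinit}, which upon expanding and isolating the pure terms gives
\[ \partial_u^{n+1}\nu(u,u) + \partial_v^{n+1}\lambda(u,u) = -\sum_{k=1}^{n+1}\binom{n+2}{k}\,\partial_u^{n+2-k}\partial_v^k r(u,u). \]
At $r=0$ every mixed summand on the right is controlled by the $u^{-(|\alpha|+4)}$ branch of \cref{ind1mixr}, giving $\lesssim u^{-(n+5)}$; combined with $|\partial_v^{n+1}\lambda(u,u)| \lesssim u^{-(n+3)}$ from the preceding step, this yields $|\partial_u^{n+1}\nu(u,u)| \lesssim u^{-(n+3)}$. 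Then, applying $\partial_u^{n+1}$ to the Raychaudhuri equation and integrating in $v$ from the axis, the same Leibniz-and-Gronwall scheme applies: the self-referential term is absorbed using the smallness of $\int_u^\infty m/r^2\,dv'$ from \eqref{smallu1}, and the main term $\partial_u^{n+1}(m/r^2) \lesssim \min\{r^{-2}u^{-(n+4)},\,u^{-(n+6)}\}$ from \cref{ubootprelim} contributes $\lesssim u^{-(n+5)}$ after splitting the $v$-integration at the crossover $r \sim u$.

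The main obstacle is the $u^{-(n+3)}$ half of the $\partial_v^{n+1}\lambda$ bound near the axis: a naive use of only the $r^{-(n+3)}u^{-2}$ branch of the $\partial_v^{n+1}(m/r^2)$ estimate integrates to $r^{-(n+3)}(u,v)$, which degenerates as $r \to 0$. Overcoming this requires both the improved $u^{-(n+6)}$ branch—available only after the bootstrap of \cref{vrphiboot}—and the geometric fact, enforced by $|\nu| \geq 1/3$, that $r(u',v)$ is bounded below by a constant multiple of $u$ on the bulk $[1,u/2]$ of the characteristic. Once this is in hand, the $\nu$-case follows cleanly from the axis identity and the mixed estimates already established in \cref{ind1mixr}.
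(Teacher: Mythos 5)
Your proposal is correct and follows essentially the same route as the paper's proof: for $\partv{n+1}\lambda$, integrate the $u$-derivative from $C_1$ (where the gauge kills the boundary value), Leibniz-expand the Raychaudhuri right-hand side, absorb the self-referential term via Gronwall with the smallness of $\int m/r^2\,du'$ from \eqref{small1}, and control the dominant $\partv{n+1}(m/r^2)$ contribution using \eqref{mvboot} and a split integration; for $\partu{n+1}\nu$, extract the axis value from $(\partial_u+\partial_v)^{n+2}r|_\Gamma=0$ together with \cref{ind1mixr} and the just-proved $\lambda$ bound, then integrate in $v$ using Gronwall with \eqref{smallu1} and \cref{ubootprelim}. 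The only differences from the paper are presentational — you spell out the binomial expansion of the axis identity and the split integration explicitly, where the paper is terser — so the two arguments are mathematically the same.
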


\begin{proof}
As usual, we can safely ignore terms where some derivatives act on different copies of $\lambda$ or $\nu$, as there immediately satisfy our bounds as a conequence of our inductive hypothesis.

We'll start with \eqref{lambdavfull}. To begin, we use our gauge condition to obtain:
\begin{dmath*} \partv{n+1}\lambda(u,v) = \int_1^u \nu^{-1}\partial_u\partv{n+1}\lambda(u',v)du'. \end{dmath*}
The only term we must control on the RHS is
\begin{dmath*} \int_1^u \abs{\partv{n+1}\Lx \frac{m\lambda\nu}{(1-\mu)r^2} \Rx} du' \leq C\min\set*{r^{-(n+3)},u^{-(n+3)}} + \int_1^u\frac{m}{(1-\mu)r^2}\partv{n+1}\lambda du', \end{dmath*}
since all the other terms satisfy the required decay up to some constant $C$ by our hypotheses, \eqref{mvboot}, \eqref{muvboot}, \cref{ind1mixr} and the split integration employed above. This can then be controlled by Gronwall's lemma, and \eqref{small1}, thus we conclude \eqref{lambdavfull}.

For \eqref{nuufull}, we gain initial control by the second half of our gauge condition:
\begin{dmath*} (\partial_u + \partial_v)^kr = 0 \end{dmath*}
on $\Gamma$. As a result of this, \eqref{lambdavfull} and \cref{ind1mixr}, we conclude that
\begin{dmath*} \abs{\partu{n+1}\nu}(u,u) \leq Cu^{-(n+3)}. \end{dmath*}
Thus we have
\begin{dmath*} \abs{\partu{n+1}\nu}(u,v) \leq Cu^{-(n+3)} + \int_u^v \partial_v\partu{n+1}\nu(u,v')dv'. \end{dmath*}
As above, we can commute the $v$ derivative past all the others, and then we only need to control the term
\begin{dmath*} \int_u^v \partu{n+1}\frac{m\lambda\nu}{(1-\mu)r^2}dv'. \end{dmath*}
By \eqref{muuboot}, \eqref{muboot}, \cref{ind1mixr}, and splitting the integral into the regions $r<1$, $r > 1$ we control all terms adequately except that in which all derivatives act on $\nu$. In this case, we once again employ Gronwall and \eqref{smallu1}, and conclude the desired bound.
\end{proof}

\subsection{Full Decay for Derivatives of $\phi$,$\frac{m}{r^k}$}

Following the results of \cref{vbootprelim,rphivboot,rphiuboot,ubootprelim}, we can conclude optimal bounds for all the $n+1$st derivatives of $\phi,\mu$ and $\frac{m}{r^2}$. To aid in this, we will need some auxiliary estimates:
\begin{proposition}
\label{excprop1}
For $\phi$ a solution to ~\eqref{SSESF}
\begin{dmath*} 2\lambda\abs{\partial_u\partial_v \phi}(u,v) = -\partial_v(\nu\partial_v\phi) - \partial_v\lambda\partial_u\phi - r\partial_v^2\partial_u\phi. \end{dmath*}
\end{proposition}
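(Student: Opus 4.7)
The plan is to derive this identity directly from the wave equation for $\phi$ in \eqref{SSESF} by commuting a single $\partial_v$ derivative through the equation and then rearranging. Note that the absolute value signs on the left-hand side should be read as a typo; what is being asserted is an equality of signed quantities that reexpresses $\lambda \partial_u\partial_v\phi$ in a form exhibiting the gain of a $\partial_v$-derivative on each term.

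The starting point is the third equation of \eqref{SSESF}, which can be written as
\begin{equation*}
r\partial_u\partial_v\phi + \nu\partial_v\phi + \lambda\partial_u\phi = 0.
\end{equation*}
Applying $\partial_v$ to both sides and using the Leibniz rule on the first and third terms yields
\begin{equation*}
\lambda\,\partial_u\partial_v\phi + r\,\partial_v^2\partial_u\phi + \partial_v(\nu\partial_v\phi) + \partial_v\lambda\,\partial_u\phi + \lambda\,\partial_v\partial_u\phi = 0,
\end{equation*}
where I have used $\partial_v r = \lambda$ and (tacitly) the equality of mixed partials $\partial_v\partial_u\phi = \partial_u\partial_v\phi$.

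Collecting the two $\lambda\partial_u\partial_v\phi$ contributions and moving the remaining terms to the right-hand side gives exactly
\begin{equation*}
2\lambda\,\partial_u\partial_v\phi = -\partial_v(\nu\partial_v\phi) - \partial_v\lambda\,\partial_u\phi - r\,\partial_v^2\partial_u\phi,
\end{equation*}
as claimed. There is no real obstacle here — the statement is just the once-$\partial_v$-commuted form of the spherically symmetric wave equation, and the identity itself is purely algebraic once one writes $\partial_v r = \lambda$. The utility of presenting it in this particular form (rather than simply as ``$\partial_v$ of the wave equation'') is that it isolates $\partial_u\partial_v\phi$ multiplied by the nonvanishing factor $2\lambda$, expressing it in terms of quantities to which the improved $\partial_v$-decay estimates from the inductive scheme above (\cref{ind1hypvlambda,ind1hypuvlambda,ind1hypphi}) can be applied term by term — a fact that will be exploited in the subsequent estimates near the axis.
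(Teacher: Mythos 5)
Your proof is correct and arrives at the stated identity, but by a noticeably shorter route than the paper. The paper first rewrites the wave equation as $\partial_u\partial_v\phi = -\tfrac{\lambda}{r}\partial_u\phi - \tfrac{\nu}{r}\partial_v\phi$, then computes $\partial_v^2(r\partial_u\phi)$ in two different ways: once by Leibniz (giving $\partial_v\lambda\,\partial_u\phi + 2\lambda\partial_u\partial_v\phi + r\partial_v^2\partial_u\phi$) and once by expanding $r\partial_v^2\partial_u\phi$ through the wave equation and invoking it again to collapse the $1/r$-weighted remainder, arriving at the intermediate identity $\partial_v^2(r\partial_u\phi) = -\partial_v(\nu\partial_v\phi)$; substituting back gives the proposition. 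You instead apply $\partial_v$ directly to the wave equation in the form $r\partial_u\partial_v\phi + \nu\partial_v\phi + \lambda\partial_u\phi = 0$ and simply collect the two copies of $\lambda\partial_u\partial_v\phi$, which is more economical and avoids manufacturing the auxiliary quantity $\partial_v^2(r\partial_u\phi)$. The two derivations are really the same algebra organized differently; the one thing the paper's route makes explicit that yours elides is the clean byproduct identity $\partial_v^2(r\partial_u\phi) = -\partial_v(\nu\partial_v\phi)$, but that identity is not cited elsewhere, so nothing is lost. You are also right that the absolute value bars on the left-hand side of the statement are a typo — the paper's own proof and its subsequent use in \cref{ind1phi} treat it as an equality of signed quantities.
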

\begin{proof}
Observe that we have
\begin{dmath*} \partial_u\partial_v \phi = \partial_u\partial_v(r\phi) - \frac{\lambda}{r}\partial_u\phi - \frac{\nu}{r}\partial_v\phi - \phi\partial_u\partial_vr. \end{dmath*}
By \eqref{SSESF} $\partial_u\partial_v(r\phi) = \phi\partial_u\partial_v r$, so we are left with
\begin{dmath*} \partial_u\partial_v\phi = -\frac{\lambda}{r}\partial_u\phi - \frac{\nu}{r}\partial_v\phi. \end{dmath*}
Now consider
\begin{dmath*} \partial^2_v(r\partial_u\phi) = \partial_v\lambda\partial_u\phi + 2\lambda\partial_u\partial_v\phi + r\partial_v^2\partial_u\phi. \end{dmath*}
Expanding this last term we see
\begin{align*}
r\partial_v^2\partial_u\phi &= -r\partial_v\Lx \frac{\lambda}{r}\partial_u\phi + \frac{\nu}{r}\partial_v\phi \Rx\\
&= -\partial_v\lambda\partial_u\phi - \lambda\partial_u\partial_v\phi - \nu\partial_v^2\phi +  - \partial_v\nu\partial_v\phi + \frac{\lambda^2}{r}\partial_u\phi + \frac{\lambda\nu}{r}\partial_v\phi\\
&= -\partial_v(\nu\partial_v\phi) - 2\lambda\partial_u\partial_v\phi - \partial_v\lambda\partial_u\phi.
\end{align*}
Returning to our original equation we find that
\begin{dmath*} \partial_v^2(r\partial_u\phi) = -\partial_v(\nu\partial_v\phi), \end{dmath*}
thus, substituting this in and rearranging we have
\begin{dmath*} 2\lambda\partial_u\partial_v\phi = -\Lx\partial_v(\nu\partial_v\phi) + \partial_v\lambda\partial_u\phi + r\partial_v^2\partial_u\phi\Rx. \end{dmath*}
\end{proof}

We also note the following relation:
\begin{equation}
\label{uvmexc}
\nu^{-1}\partial_u \Lx\frac{m}{r^k}\Rx - \lambda^{-1}\partial_v \Lx\frac{m}{r^k}\Rx = \frac{r^{2-k}}{2}(1-\mu)\Lx (\nu^{-1}\partial_u \phi)^2 - (\lambda^{-1}\partial_v\phi)^2 \Rx
\end{equation}
for $k = 1,2$.

We begin with $\phi$.
\begin{lemma}
\label{ind1phi}
The following holds for $\abs{\alpha} = n+1$.
\begin{equation}\label{phimixbound} \abs{\partu{\alpha_u}\partv{\alpha_v}\phi(u,v)} \lesssim \min\set*{r^{-(\alpha_v + 1)}u^{-(\alpha_u+1)},u^{-(\abs{\alpha} + 2)}}. \end{equation}
\end{lemma}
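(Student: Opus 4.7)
My plan is to establish the second bound $u^{-(\abs{\alpha}+2)}$; the first bound $r^{-(\alpha_v+1)}u^{-(\alpha_u+1)}$ in the $\min$ is already supplied by \cref{phiprelim}. The new bound strictly improves the $r^{-1}u^{-(\abs{\alpha}+1)}$ of \cref{phiprelim} only in the near-axis regime $r \ll u$, so that is where the real work lies. I split into three cases.

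\emph{Case 1:} $\alpha_v = 0$. Combining \cref{rphiuboot} (giving $\partu{n+2}(r\phi) \lesssim u^{-(n+3)}$) with the conclusion \eqref{phiuboot} of \cref{ubootprelim} directly yields $\abs{\partu{n+1}\phi} \lesssim u^{-(n+3)}$, as desired.

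\emph{Case 2:} $\alpha_u = 0$. Symmetric to Case 1: using \cref{vrphiboot} to verify the bootstrap hypothesis \eqref{rphiassum}, \cref{vbootprelim} produces $\abs{\partv{n+1}\phi} \lesssim u^{-(n+3)}$ via \eqref{phivboot}.

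\emph{Case 3:} $\alpha_u, \alpha_v \geq 1$. The strategy is to repeat the averaging-operator argument of \cref{vbootprelim} applied not to $\phi$ but to $\psi := \partu{\alpha_u}\phi$. Starting from the Leibniz identity
\[ \partv{\alpha_v+1}(r\psi) = r\,\partv{\alpha_v+1}\psi + (\alpha_v+1)\lambda\,\partv{\alpha_v}\psi + \text{(terms with }\geq 2\text{ derivatives on }r\text{)} \]
and integrating repeatedly from the axial boundary $v'=u$, using $r(u,u)=0$, yields the identity
\[ \partv{\alpha_v}\psi(u,v) = \frac{1}{r^{\alpha_v+1}(u,v)}\int_u^v \partv{\alpha_v+1}(r\psi)(u,v')\,r^{\alpha_v}(u,v')\,\lambda(u,v')\,dv' + R, \]
where $R$ collects contributions from differentiating the weights $\lambda^{-1},\nu^{-1}$ implicit in the $\partv{\cdot}$ operator and is controlled by the inductive hypothesis. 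To bound the integrand I commute (via \cref{uvexclemma}), writing
\[ \partv{\alpha_v+1}(r\psi) = \partv{\alpha_v+1}\partu{\alpha_u}(r\phi) - \partv{\alpha_v+1}\bigl[\,\partu{\alpha_u},\,r\,\bigr]\phi, \]
and use \cref{ind1mixrphi} together with \eqref{ind1hypvrphi} at order $n+2$ (combined with the commutator bounds produced by \cref{uvexclemma}) to obtain $\abs{\partv{\alpha_v+1}(r\psi)} \lesssim u^{-(\abs{\alpha}+4)}$ on the domain of integration. Inserting this into the integral and splitting into $r\leq u$ and $r>u$ regions (using $\lambda \sim 1$ from \eqref{unifapriori}) yields $\abs{\partu{\alpha_u}\partv{\alpha_v}\phi} \lesssim u^{-(\abs{\alpha}+2)}$. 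Combined with \cref{phiprelim}, this gives the full $\min$ estimate.

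The hardest step will be Case 3, and specifically verifying that (a) the error $R$ produced by commuting derivatives past the $\lambda^{-1}$ weights in $\partv{\alpha_v}$ is genuinely subleading at order $u^{-(\abs{\alpha}+3)}$ or better (using the decay of $\partv{}\lambda$ from \cref{lambdavfull} and the inductive hypothesis), and (b) the commutator $[\partu{\alpha_u}, r]\phi$, expanded by Leibniz, produces only terms of the form $\partu{k}\nu\cdot\partu{\alpha_u-k}\phi$ with $k \geq 1$, each of which is controlled at order $u^{-(\abs{\alpha}+3)}$ by \cref{nuufull} and the inductive hypothesis \eqref{ind1hypphi}. Once these terms are identified as strictly subleading, the advertised bound follows.
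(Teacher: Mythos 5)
Your Cases 1 and 2 are correct and mirror the paper's use of the bootstrap lemmas. The problem is in Case 3, where the treatment of the commutator is both wrong in form and leaves a genuine logical gap.

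Since $\partu{}r = \nu^{-1}\partial_u r = 1$, the commutator collapses to a single term,
\[
[\partu{\alpha_u},\,r]\phi = \alpha_u\,\partu{\alpha_u-1}\phi,
\]
with no $\nu$-factors at all. (If one works with unnormalized $\partial_u$ instead, the Leibniz expansion contains a $k=1$ term $\alpha_u\,\nu\,\partial_u^{\alpha_u-1}\phi$ with an \emph{undifferentiated} $\nu$, which again does not match your claimed form $\partu{k}\nu\cdot\partu{\alpha_u-k}\phi$ with $k\ge1$.) After applying $\partv{\alpha_v+1}$, the contribution $\partv{\alpha_v+1}\partu{\alpha_u-1}\phi$ has total order $\alpha_v+1+\alpha_u-1 = |\alpha| = n+1$ --- exactly the order the lemma is trying to prove. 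It is \emph{not} covered by the inductive hypothesis \eqref{ind1hypphi}, which runs only up to order $n$. Relatedly, the asserted bound $\abs{\partv{\alpha_v+1}(r\psi)}\lesssim u^{-(|\alpha|+4)}$ is an overclaim: the commutator piece, even under the best bound we could hope for, gives only $u^{-(|\alpha|+2)}$ (which fortunately is still enough, after the $r^{\alpha_v}\lambda$-weighted average, to close the target estimate). The missing ingredient is a nested induction on $\alpha_u$ inside the lemma: base case $\alpha_u=0$ from the $\partv{}$-bootstrap, then use the bound at $\alpha_u-1$ (same total order $n+1$, one fewer $\partu{}$) to control $\partv{\alpha_v+1}\partu{\alpha_u-1}\phi$. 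The paper's proof sets up precisely this inner induction on $\alpha_u$, though via a different algebraic route: it uses the identity of \cref{excprop1} to rewrite $2\lambda\,\partial_u\partial_v\phi$ so that the only order-$(n+1)$ derivatives of $\phi$ appearing on the right carry at most $\alpha_u-1$ $u$-derivatives, and the two terms proportional to $\partu{\alpha_u}\partv{\alpha_v}\phi$ are absorbed into the left-hand side through the invertible coefficient $(\alpha_v+1-\partv{}\lambda)$. If you add the explicit induction on $\alpha_u$ and correct the commutator bookkeeping, your averaging-operator route in Case 3 can be repaired, but as written it is circular.
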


\begin{proof}
The non-mixed cases have already been done, and moreover, the $r$ weighted bounds follow immediately from \eqref{ind1mixrphi}, \eqref{rphivboot}, \eqref{rphiuboot} and our inductive hypothesis by the same method as \cref{phiprelim}. Thus it remains only to address the non-$r$-weighted bounds. In order to do this we employ \cref{excprop1}, noting that, as a consequence, we can write for $\abs{\alpha} = n+1, \alpha_u,\alpha_v > 0$:
\begin{dmath*} \partial_u^{\alpha_u-1}\partial_v^{\alpha_v-1}(2\lambda\partial_u\partial_v\phi) = -\partial_u^{\alpha_u-1}\partial_v^{\alpha_v-1}\Lx\partial_v(\nu\partial_v\phi) + \partial_v\lambda\partial_u\phi + r\partial_v^2\partial_u\phi\Rx. \end{dmath*}
The left hand side can be rewritten
\begin{dmath*} \partial_u^{\alpha_u-1}\partial_v^{\alpha_v-1}(2\lambda\partial_u\partial_v\phi) = \nu^{\alpha_u}\lambda^{\alpha_v}(2\lambda \partu{\alpha_u}\partv{\alpha_v}\phi + R_0) \end{dmath*}
where $R_0$ equal to the sum of remaining terms (those with some product of derivatives of at least two of $\lambda,\nu$ and $\partial_u\partial_v\phi$) all of which are controlled by hypothesis as each individual term is of lower total differential order. In particular, we have that
\begin{dmath*} \abs{R} \lesssim u^{-(\abs{\alpha} + 4)}. \end{dmath*}
On the RHS, we will work term by term:

First we have
\begin{dmath*} \partial_u^{\alpha_u-1}\partial_v{\alpha_v}(\nu\partial_v\phi)) = \nu^{\alpha_u-1}\lambda^{\alpha_v+1}(\nu \partu{\alpha_u-1}\partv{\alpha_v+1}\phi + R_1) \end{dmath*}
where as above $R_1$ collects terms with derivatives acting on at least two of $\lambda,\nu,\phi$. This is similarly controlled by the hypotheses \eqref{ind1hypphi}, \eqref{ind1hypuvlambda}, \eqref{ind1hypvlambda}, \eqref{ind1hypunu}, and \eqref{ind1hypuvnu} as each derivative is of total order smaller than $n+1$, so we have
\begin{dmath*} \abs{R_1} \lesssim u^{-(\abs{\alpha} + 4)}. \end{dmath*}
The next term is
\begin{dmath*} \partial_u^{\alpha_u-1}\partial_v^{\alpha_v-1}(\partial_v\lambda\partial_u\phi) = \nu^{\alpha_u}\lambda^{\alpha_v}\partu{\alpha_u-1}\partv{\alpha_v-1}(\partv{}\lambda\partu{}\phi) + R_2, \end{dmath*}
with $R_2$ defined similarly to $R_0,R_1$ above. But all of these terms are immediately bounded (up to overall constant) by $u^{-(\abs{\alpha}+4)}$ by \eqref{ind1hypphi}, \eqref{ind1hypuvlambda}, \eqref{ind1hypvlambda}, \eqref{ind1hypunu}, and \eqref{ind1hypuvnu} since no single term receives more than $n$ derivatives.

Finally, we consider
\begin{dmath*} \partial_u^{\alpha_u-1}\partial_v^{\alpha_v-1}(r\partial_v^2\partial_u\phi) = (\alpha_v-1)\lambda\partial_u^{\alpha_u}\partial_v^{\alpha_v}\phi + (\alpha_u-1)\nu\partial_u^{\alpha_u-1}\partial_v^{\alpha_v+1}\phi + r\partial_u^{\alpha_u}\partial_v^{\alpha_v+1} + R_3, \end{dmath*}
where $R_3$ absorbs all terms with at least two derivatives acting on $r$. We can rewrite this further in terms of our gauge invariant derivatives as
\begin{dmath}
\partial_u^{\alpha_u-1}\partial_v^{\alpha_v-1}(r\partial_v^2\partial_u\phi) = \nu^{\alpha_u}\lambda^{\alpha_v}(\alpha_v-1)\lambda\partu{\alpha_u}\partv{\alpha_v}\phi + \nu^{\alpha_u-1}\lambda^{\alpha_v+1}(\alpha_u-1)\nu\partu{\alpha_u-1}\partv{\alpha_v+1}\phi + r\nu^{\alpha_u}\lambda^{\alpha_v+1}\partu{\alpha_u}\partv{\alpha_v+1}\phi - r\nu^{\alpha_u}\lambda^{\alpha_v+1}\left(\partu{}\lambda + \partu{}\nu\right)\partu{\alpha_u-1}\partv{\alpha_v+1}\phi - r\nu^{\alpha_u}\lambda^{\alpha_v+1}\partv{}\lambda \partu{\alpha_u}\partv{\alpha_v}\phi + R_3',
\end{dmath}
where $R_3'$ additionally absorbs the additional mixed derivative remainder terms from converting to gauge invariant derivatives excluding those from $r\partial_u^{\alpha_u}\partial_v^{\alpha_v+1}\phi$ already included above. This term is controlled by \eqref{ind1hypphi}, \eqref{ind1mixr}, \eqref{ind1hypuvlambda}, \eqref{ind1hypvlambda}, \eqref{nd1hypunu}, and \eqref{ind1hypuvnu} as no derivative of order greater than $n$ acts on $\phi$, and no term of order beyond $n+1$ acts on $\lambda,\nu$. As such this term satisfies
\begin{dmath*} \abs{R_3'}\lesssim u^{-(\abs{\alpha}+4)}. \end{dmath*}
Combining all of this we arrive at the following equation:

\begin{equation}\begin{split}
&\nu^{\alpha_u}\lambda^{\alpha_v}(2\lambda \partu{\alpha_u}\partv{\alpha_v}\phi + R_0) =\\ &-\Lx \nu^{\alpha_u-1}\lambda^{\alpha_v+1}\nu \partu{\alpha_u-1}\partv{\alpha_v+1}\phi\right.\\ &+ \nu^{\alpha_u}\lambda^{\alpha_v}\partu{\alpha_u-1}\partv{\alpha_v-1}(\partv{}\lambda\partu{}\phi)\\ &+  \nu^{\alpha_u}\lambda^{\alpha_v}(\alpha_v-1)\lambda\partu{\alpha_u}\partv{\alpha_v}\phi\\ &+ \nu^{\alpha_u-1}\lambda^{\alpha_v+1}(\alpha_u-1)\nu\partu{\alpha_u-1}\partv{\alpha_v+1}\phi\\ &+ r\nu^{\alpha_u}\lambda^{\alpha_v+1}\partu{\alpha_u}\partv{\alpha_v+1}\phi\\ &- r\nu^{\alpha_u}\lambda^{\alpha_v+1}(\partu{}\lambda + \partu{}\nu)\partu{\alpha_u-1}\partv{\alpha_v+1}\phi\\ &\left.- r\nu^{\alpha_u}\lambda^{\alpha_v+1}\partv{}\lambda \partu{\alpha_u}\partv{\alpha_v}\phi + R_1 + R_2 + R_3' \Rx.
\end{split}\end{equation}
Observe that we have two terms proportional to $\partu{\alpha_u}\partv{\alpha_v}\phi$ on the RHS, so we can rearrange to obtain:
\begin{equation}\label{ind1phiexc}\begin{split}
&\nu^{\alpha_u}\lambda^{\alpha_v+1}(\alpha_v+1 - \partv{}\lambda) \partu{\alpha_u}\partv{\alpha_v}\phi =\\ &-\Lx \nu^{\alpha_u-1}\lambda^{\alpha_v+1}\nu \partu{\alpha_u-1}\partv{\alpha_v+1}\phi\right.\\&+ \nu^{\alpha_u}\lambda^{\alpha_v}\partu{\alpha_u-1}\partv{\alpha_v-1}(\partv{}\lambda\partu{}\phi)\\ &+ \nu^{\alpha_u-1}\lambda^{\alpha_v+1}(\alpha_u-1)\nu\partu{\alpha_u-1}\partv{\alpha_v+1}\phi\\&+ r\nu^{\alpha_u}\lambda^{\alpha_v+1}\partu{\alpha_u}\partv{\alpha_v+1}\phi\\& - r\nu^{\alpha_u}\lambda^{\alpha_v+1}\left(\partu{}\lambda + \partu{}\nu\right)\partu{\alpha_u-1}\partv{\alpha_v+1}\phi\\
&\left.+ R_0 + R_1 + R_2 + R_3' \Rx.
\end{split}
\end{equation}
Then recall that $\partv{}\lambda \leq C\min\set{r^{-3},u^{-3}}$, for some $C > 0$, thus outside of some compact region we have $(\alpha_v+1-\partv{}\lambda) > 1$ for all $\alpha_v \geq 1$. Since the region is compact we need not be concerned with the behavior inside as this can be absorbed by a constant.

Thus it suffices to bound the RHS of \eqref{ind1phiexc} in order to control $\partu{\alpha_u}\partv{\alpha_v}\phi$. We will do this inductively, inducting on $\alpha_u$. The base case $\alpha_u = 0$ is already covered by our bootstrap, so suppose the bound \eqref{phimixbound} holds for $\alpha_u < k < n$. So treating each term individually, we have by our inductive hypothesis (on $\alpha_u$)
\begin{dmath*} \abs{\nu\partu{\alpha_u-1}\partv{\alpha_v+1}\phi} \lesssim \min\set*{r^{-(\alpha_v + 2)}u^{-(\alpha_u)},u^{-(\abs{\alpha} + 2)}}, \end{dmath*}
which is strictly better than required (the estimates agree near the axis, and the $r$ estimate is strictly better at large $r$). Thus we can move to our next term:
\begin{dmath*} \partu{\alpha_u-1}\partv{\alpha_v-1}(\partv{}\lambda\partu{}\phi), \end{dmath*}
but there is nothing to do here, as all these terms are controlled by hypothesis, and thus immediately verify the necessary bounds.

Next, we consider
\begin{dmath*} \abs{\nu^{\alpha_u-1}\lambda^{\alpha_v+1}(\alpha_u-1)\nu\partu{\alpha_u-1}\partv{\alpha_v+1}\phi}. \end{dmath*}
This now is controlled by our inductive hypothesis, as we take one fewer derivatives in $u$ and overall order $n+1$. This leaves us
\begin{dmath*} \abs{r\nu^{\alpha_u}\lambda^{\alpha_v+1}\partu{\alpha_u}\partv{\alpha_v+1}\phi} \lesssim rr^{-1}u^{-(\abs{\alpha}+2)} = u^{-(\abs{\alpha+2})} \end{dmath*}
where we obtain this bound by the same mechanism as \cref{phiprelim} at the next order.
Finally, we have
\begin{dmath*} \abs{r\nu^{\alpha_u}\lambda^{\alpha_v+1}\left(\partu{}\lambda + \partu{}\nu\right)\partu{\alpha_u-1}\partv{\alpha_v+1}\phi} \lesssim ru^{-3} r^{-1}u^{-(\abs{\alpha} + 1)}, \end{dmath*}
again obtaining an $r^{-1}$ weighted bound in the manner of \cref{phiprelim}, and using the $u^{-3}$ bound for $\partu{}\lambda,\partu{}\nu$ of \cref{dunuapriori}, \eqref{ind1hypuvlambda}.

Putting this all together, we obtain
\[ \abs{\partu{\alpha_u}\partv{\alpha_v}\phi} \lesssim u^{-(\abs{\alpha}+2)}. \]
So by induction on $\alpha_u$ we have our result.
\end{proof}

It remains only to check that $\mu,\frac{m}{r^2}$ also verify the required estimates to complete our induction:
\begin{lemma}\label{ind1uvmr}
For $\abs{\alpha} = n+1, \alpha_u,\alpha_v \neq 0$ we have
\begin{dmath}\label{uvmrbounds} \abs{\partu{\alpha_u}\partv{\alpha_v}\frac{m}{r^k}} \lesssim \min\set*{u^{-(n+4+k)}} \end{dmath}
for $k = 1,2$.
\end{lemma}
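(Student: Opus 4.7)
The $r$-weighted portion of \eqref{ind1hypmweight} at order $n+1$ with $\alpha_u,\alpha_v \geq 1$ was already secured in \cref{mweight}, so only the pure-$u$ bound requires upgrading, from $r^{-k}u^{-(n+4)}$ to $u^{-(n+k+4)}$. I run an inner induction on $\alpha_u$. The base case $\alpha_u = 0$ lies outside the statement but supplies the seed: it is precisely \eqref{mvboot}, \eqref{muvboot}, already established in \cref{vbootprelim,vrphiboot}.

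For the inductive step with $1 \leq \alpha_u \leq n$, apply $\partu{\alpha_u-1}\partv{\alpha_v}$ to the exchange identity \eqref{uvmexc}. Reordering $\partu{}, \partv{}$ freely modulo lower-order remainders (automatically controlled by the claimed bound via \cref{uvexclemma} together with the outer hypothesis \eqref{ind1hypmweight}), one obtains schematically
\begin{equation*}
\partu{\alpha_u}\partv{\alpha_v}\tfrac{m}{r^k} \;=\; \partu{\alpha_u-1}\partv{\alpha_v+1}\tfrac{m}{r^k} \;+\; \partu{\alpha_u-1}\partv{\alpha_v}\!\left[\tfrac{r^{2-k}}{2}(1-\mu)\!\left((\partu{}\phi)^2 - (\partv{}\phi)^2\right)\right] \;+\; (\text{lower-order}).
\end{equation*}
The first term on the right is bounded by $u^{-(n+k+4)}$ through the inner induction since the $u$-order has strictly dropped. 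The lower-order Leibniz corrections pair derivatives of $\nu, \lambda^{\pm 1}$ of order $|\beta| \geq 1$ --- each contributing $u^{-(|\beta|+2)}$ via \eqref{lambdavfull}, \eqref{nuufull} and the mixed estimates \eqref{ind1hypuvlambda}, \eqref{ind1hypuvnu} --- with derivatives of $m/r^k$ of total order $n+1-|\beta|$, bounded by the outer hypothesis \eqref{ind1hypmweight}; each such product is strictly better than $u^{-(n+k+4)}$.

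The forcing term is the heart of the matter, and is where the improved $\phi$-estimate of \cref{ind1phi} enters decisively. For $k = 2$ there is no $r$-factor; Leibniz expansion of $(\partu{}\phi)^2 - (\partv{}\phi)^2$ combined with $|\partu{a+1}\partv{b}\phi| \lesssim u^{-(a+b+3)}$ delivers $u^{-(n+6)} = u^{-(n+k+4)}$ directly, and derivatives falling on $(1-\mu)$ or on $\nu, \lambda^{\pm 1}$ only improve the decay. For $k = 1$ there is a bare factor of $r$ which in $\mathcal{Q}$ is not uniformly bounded by $u$; here the \emph{min}-form of \cref{ind1phi} is essential. Focusing on the worst Leibniz term, in which all $n$ derivatives fall on $(\partu{}\phi)^2$: on $\{r \leq u\}$ I invoke the pure-$u$ branch to obtain $r \cdot u^{-(n+6)} \leq u^{-(n+5)}$, while on $\{r \geq u\}$ I apply the $r^{-1}u^{-(\,\cdot\,+2)}$ branch on each factor of the product, obtaining $r \cdot r^{-2}u^{-(n+4)} = r^{-1}u^{-(n+4)} \leq u^{-(n+5)}$. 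Either way the target $u^{-(n+k+4)}$ with $k=1$ is achieved. Terms where some derivatives fall on the factor of $r$ itself trade the $r$ for bounded $\lambda, \nu$ (at order $1$) or for genuine $u$-decay (at order $\geq 2$), and are then immediate.

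The main obstacle is the $k = 1$ forcing term: absent the improved $\phi$-bound from \cref{ind1phi} the stray factor of $r$ cannot be absorbed and the pure-$u$ estimate stalls. Granted that bound, the remainder of the argument reduces to routine Leibniz bookkeeping against the inductive hypotheses \eqref{ind1hypphi}, \eqref{ind1hypmweight} and the now-closed bounds \eqref{lambdavfull}, \eqref{nuufull} on $\lambda, \nu$.
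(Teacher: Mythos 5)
Your argument is correct and follows the same route as the paper's: the exchange identity \eqref{uvmexc}, an inner induction on $\alpha_u$ seeded by the pure-$v$ bootstrap, and the order-$(n{+}1)$ $\phi$-estimate from \cref{ind1phi} to close the forcing term. The only real difference is cosmetic --- you spell out the $r \lessgtr u$ case split to absorb the stray factor of $r$ when $k=1$, where the paper defers to ``the methods of \cref{phiprelim}.''
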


\begin{proof}
Recall that for $\alpha_v = n+1,\alpha_u = n+1$ we already have our optimal bounds via the bootstrap. Moreover optimal $r$-weighted bounds are achieved in general in \cref{mweight}, so we need only concern ourselves with bounds in terms of $u$ only. We will proceed via induction in $\alpha_u$ with base case 0 already done. To induct, suppose \eqref{uvmrbounds} already holds for all $\alpha_u < l < n$. By \cref{uvexclemma} it suffices to control $\partu{l-1}\partv{n+1-l}\partu{}\frac{m}{r^k}$ in order to proceed to the next level. Then by \cref{uvmexc} we have
\begin{equation*}\begin{split} &\partu{l-1}\partv{n+1-l}\partu{} \frac{m}{r^k} =\\&\partu{l-1} \partv{n+1-l}\Lx \partv{}\frac{m}{r^k} + \frac{r^{2-k}}{2}(1-\mu)\Lx (\nu^{-1}\partial_u \phi)^2 - (\lambda^{-1}\partial_v\phi)^2 \Rx \Rx.\end{split}\end{equation*}
Splitting this up term by term we have
\[ \abs{\partu{l-1}\partv{n+2-l}\frac{m}{r^k}} \lesssim \min\set*{u^{-(n+4+k)}} \]
by our inductive hypothesis. Next we have
\[ \partu{l-1}\partv{n+1-l} \frac{r^{k-2}}{2}(\partu{}\phi)^2, \]
and
\[ \partu{l-1}\partv{n+1-l} \frac{r^{k-2}}{2}(\partv{}\phi)^2. \]
Observe that regardless of the derivative that acts on $\phi$ these terms will satisfy the same order of $u$ decay by our above results, so we will only prove this for the first of these terms.

When $k = 2$, this is simply
\[ \partu{l-1}\partv{n+1-l}(\partu{}\phi)^2 \lesssim u^{-(n+6)} \]
by \eqref{ind1hypphi}. When $k=1$ we have
\[ \partu{l-1}\partv{n-l}r(\partu{}\phi)^2 \lesssim u^{-(n+5)} \]
by \eqref{ind1hypphi}, obtaining a term with an $r^{-1}$ (in the case all derivatives act on the $\partu{}\phi$'s) by the methods of \cref{phiprelim}.

Putting this together, we have our result by induction.
\end{proof}

\subsection{Closing Induction}
Finally we can close our induction checking that our bounds hold at order $\abs{\alpha} = 1$.

\section{Extension to the First Quadrant}
\label{oproof}
To prove \cref{main1} it remains to control solutions in the region $\mathbb{I} \setminus \mathcal{Q}$. In order to do this we will have to turn to our full data prescribed on both $C_1$ and $\underline{C}_R$. In fact, we will establish estimates on all of the region $\mathcal{O}_R$, completing the proof of \cref{main1}\footnote{In fact this leaves a compact region of $\mathbb{I}$ without explicit control. By standard persistence of regularity, the solution is still $C^k$ smooth in this region (for some discussion cf. \cite{Dafermos2003}), and thus the decay can be realized simply by adjusting our constants.}. The initial data considered in \cref{Qproof} is thus extended by any compatible prescription of $\partial_u(r\phi),\nu$ on $\underline{C}_{2R}$ (we modify our notation here slightly to ease things later). Note that this is a well posed problem since we work in the spherically symmetrically reduced setting. This extension is studied at first order in \cite{Dafermos2003}, and one can check by a standard iteration argument that this is a well posed initial value problem on the region between $C_1$ and $\underline{C}_{2R}$ for sufficiently regular data, so long as one guarantees that the data agree at the intersection point $C_1\cap \underline{C}_{2R}$.

In this section we will consider data which is asymptotically flat of order $\omega' \geq 2$ in $C^k$ towards both $\mathcal{I}^+$ and $\mathcal{I}^-$ and the gauge \ref{gauge3}.

As in \cref{Qproof} we will inductively establish the control that we need. The goal will be the following
\begin{theorem}\label{Obounds}
Let $(r,\phi,m)$ a solution to \eqref{SSESF} in the region $\mathcal{O}_{2R}$ with data asymptotically flat of order $\omega' \geq 2$ in $C^k$ towards both $\mathcal{I}^+$ and $\mathcal{I}^-$. Then the following bounds hold for all multi-indices $\alpha,\beta$ with $\abs{\alpha} \leq k$, $\abs{\beta} \leq k+1$:
\begin{gather}
\abs{\partv{\abs{\alpha}}\lambda} \lesssim v^{-(\abs{\alpha}+1)},\\
\label{oulambdabound}\abs{\partu{\alpha_u}\partv{\alpha_v}\lambda} \lesssim r^{-1}(1+\abs{u})^{-\alpha_u}v^{-\alpha_v},\\
\abs{\partu{\abs{\alpha}}\nu} \lesssim (\abs{u}+1)^{-(\abs{\alpha}+1)},\\
\label{ovnubound}\abs{\partu{\alpha_u}\partv{\alpha_v}\nu} \lesssim r^{-1}(1+\abs{u})^{-\alpha_u}v^{-\alpha_v},\\
\abs{\partu{\alpha_u}\partv{\alpha_v}\phi} \lesssim r^{-1}(1+\abs{u})^{-\alpha_u}v^{-\alpha_v},\\
\abs{\partu{\alpha_u}\partv{\alpha_v}m} \lesssim \min\set*{(1+\abs{u})^{-(\alpha_u+1)}v^{-\alpha_v},(1+\abs{u})^{-\alpha_u}v^{-(\alpha_v+1)}},\\
\abs{\partu{\alpha_u}\partv{\alpha_v}\frac{m}{r^k}} \lesssim r^{-k}\min\set*{(1+\abs{u})^{-(\alpha_u+1)}v^{-\alpha_v},(1+\abs{u})^{-\alpha_u}v^{-(\alpha_v+1)}},\\
\label{ouvrphibound}\abs{\partu{\beta_u}\partv{\beta_v}(r\phi)} \lesssim r^{-1}(1+\abs{u})^{-\alpha_u}v^{-\alpha_v},\\
\abs{\partv{\abs{\beta}}(r\phi)} \lesssim v^{-(\abs{\beta}+1)},\\
\abs{\partu{\abs{\beta}}(r\phi)} \lesssim (1+\abs{u})^{-(\abs{\beta}+1)},
\end{gather}
where we take $\alpha_u,\beta_u$ to be non-zero in \cref{oulambdabound,ouvrphibound} respectively and $\alpha_v,\beta_v$ non-zero in \cref{ovnubound,ouvrphibound} respectively.
\end{theorem}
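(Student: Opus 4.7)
The plan is to follow the inductive scheme of \cref{Qproof} essentially verbatim, inducting on the differential order, but adapted to the region $\mathcal{O}_{2R}$. The region $\mathcal{O}_{2R}$ has two major simplifications relative to $\mathcal{Q}$: there is no axis of symmetry inside it (since $\Gamma$ is the image of a \emph{future}-directed timelike curve emanating from $(1,1)$ and $\mathcal{O}_{2R} \subset \{u \leq 1, v \geq 2R\}$), so $r \gtrsim v-u \gtrsim R$ is uniformly bounded below and no averaging operators or bootstrap near an axis are needed. In particular $\phi$ can be recovered from $r\phi$ by simple division, and the $r^{-1}$ factors in the stated bounds are genuinely improving the decay in the far-field rather than recording a degeneration. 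The base case at order $\leq 1$ is provided by the sideways characteristic IVP analysis of \cite{Dafermos2003} after passing from \ref{gauge1} to \ref{gauge3} via the remark in the text; compatibility of data on $C_1$ with that on $\underline{C}_{2R}$ at the corner $(1,2R)$ is assumed at every order.

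\textbf{Inductive step, mixed derivatives.} For $\abs{\alpha} = n+1$ with $\alpha_u, \alpha_v > 0$ I expand
\[ \partial_u^{\alpha_u}\partial_v^{\alpha_v} \lambda = \partial_u^{\alpha_u-1}\partial_v^{\alpha_v}\Lx \tfrac{2m\lambda\nu}{(1-\mu)r^2}\Rx, \qquad \partial_u^{\alpha_u}\partial_v^{\alpha_v-1}\partial_v(r\phi) = \partial_u^{\alpha_u-1}\partial_v^{\alpha_v-1}\Lx \tfrac{2m}{(1-\mu)r^2}(\phi - \partial_v(r\phi)/\lambda)\Rx \]
via Leibniz, exactly as in \cref{ind1mixr,ind1mixrphi}. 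The dominant contribution comes from the factor $m/r^2$ paired with the lower-order factors: in the present region, the bounds $m \lesssim \min\{(1+|u|)^{-1}, v^{-1}\}$ and $r^{-1} \lesssim (v+|u|)^{-1}$ together produce the required $r^{-1}(1+|u|)^{-\alpha_u}v^{-\alpha_v}$ shape after distributing weights between the two null directions. The remaining terms are strictly better by the inductive hypothesis, Lemma \ref{mweight} applied at the previous order, and the exchange lemma (the analogue of \cref{uvexclemma}, which still holds since its proof used only the wave equation and the inductive bounds, not proximity to $\Gamma$).

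\textbf{Inductive step, non-mixed derivatives.} For $\partial_v^{n+1}\lambda$, the gauge $\lambda \equiv \tfrac12$ on $C_1$ gives boundary values from asymptotic flatness toward $\mathcal{I}^+$; I then integrate
\[ \partial_v^{n+1}\lambda(u,v) = \partial_v^{n+1}\lambda(1,v) + \int_1^u \partial_v^{n+1}\Lx \tfrac{2m\lambda\nu}{(1-\mu)r^2}\Rx(u',v)\, du' \]
backward from $u=1$. Mixed Step gives control of all integrand pieces except the single term where every derivative lands on $\lambda$; that term is absorbed by Gronwall (using $\int_u^1 \abs{m/r^2}\, du' \lesssim v^{-1}$, the analogue of the smallness estimates near $\mathcal{I}^+$ in \cref{Qproof}). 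Symmetrically, for $\partial_u^{n+1}\nu$ I use $\nu \equiv -\tfrac12$ on $\underline{C}_{2R}$ and integrate in $v$ from $\underline{C}_{2R}$, with the boundary value $\partial_u^{n+1}\nu|_{\underline{C}_{2R}} \lesssim (1+|u|)^{-(n+2)}$ coming from asymptotic flatness toward $\mathcal{I}^-$. The same pattern handles $\partial_v^{n+1}(r\phi)$ and $\partial_u^{n+1}(r\phi)$ via the wave equation. Then $\phi$ bounds follow from $\phi = (r\phi)/r$ since $r \gtrsim R$, and $m$, $m/r^k$ estimates follow from differentiating and integrating the transport equations $\nu^{-1}\partial_u m = \tfrac12 (1-\mu)r^2 (\nu^{-1}\partial_u\phi)^2$ and its $v$ analogue, exactly as in \cref{mprelim,mweight} but without the averaging operator complications.

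\textbf{Main obstacle.} The genuine difficulty is coordinating the two-directional asymptotic decay: every estimate must degrade simultaneously toward both $\mathcal{I}^+$ and $\mathcal{I}^-$, and the choice of which characteristic to integrate along at each step determines which weight is easy and which is delicate. Concretely, the $r^{-1}$ prefactor in the stated bounds for $\partial_u^{\alpha_u}\partial_v^{\alpha_v}\lambda$, etc., must arise from the $r^{-2}$ in the source $m\lambda\nu/r^2$ \emph{after} one power of $r$ has been absorbed by either a $(1+|u|)$ or a $v$ factor depending on which region of $\mathcal{O}_{2R}$ one is in; this is the step where the analysis of \cite{Luk2015, Luk2018} for the two-sided Bondi-mass normalization must be invoked. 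The secondary technical point is ensuring the bounds proved here match those of \cref{uvind} along the overlap $C_1$, so that the constants and decay rates across the seam are consistent under the change from \ref{gauge1} to \ref{gauge3}; this is handled by directly applying the gauge transformation formulas to the initial data and verifying the asymptotic-flatness hypothesis is preserved.
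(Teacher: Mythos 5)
Your proposal follows the paper's actual inductive scheme closely: base case from \cite{Dafermos2003}, induction via Leibniz expansions of the \eqref{SSESF} source terms, integration along characteristics from the gauge-fixed curves $C_1$ and $\underline{C}_{2R}$, Gronwall to close the top-order self-referential term, and the exchange lemma analogue for reordering. The observations that $r \gtrsim R$ makes $\phi = (r\phi)/r$ trivial and that the region avoids the axis entirely are exactly what the paper exploits. So the route is essentially the same.

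There is, however, one claim in your sketch that is simply false and should not be left standing: the bound $m \lesssim \min\{(1+|u|)^{-1}, v^{-1}\}$. The Hawking mass $m$ tends to the Bondi mass $M_u$ as $v\to\infty$, which in turn tends to the nonzero initial/final Bondi masses $M_i, M_f$ toward past/future timelike infinity, so $m$ is only $O(1)$ in $\mathcal{O}_{2R}$ (this is exactly \cref{mextinit}). The required $(1+|u|)^{-\alpha_u}v^{-\alpha_v}$ weights in the source do not come from decay of $m$ itself; they come from the inductive hypothesis \eqref{ind2hypmr} applied to $\partial^{(\alpha_u - 1,\alpha_v)}(m/r^2)$ at the previous order, together with the spare $r^{-2} \lesssim r^{-1}$ left over (since $r \gtrsim 1+|u|$ and $r \gtrsim v$ throughout $\mathcal{O}_{2R}$, one can trade an extra $r^{-1}$ for either weight as needed). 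Your written argument double-counts decay by combining a fictitious $m$-decay with the $r^{-2}$; the correct bookkeeping instead uses the inductively-established $m/r^k$ estimates, which you do cite in passing via the analogue of \cref{mweight}.

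Two smaller confusions: the gauge $\lambda \equiv \tfrac12$ on $C_1$ forces $\partial_v^{l}\lambda(1,v) \equiv 0$ for all $l$, so the boundary term in your integration for $\partial_v^{n+1}\lambda$ is identically zero, not merely controlled by asymptotic flatness (asymptotic flatness constrains derivatives of $\partial_v(r\phi)$, not of $\lambda$, which the gauge already fixes); the symmetric statement holds for $\partial_u^{n+1}\nu$ on $\underline{C}_{2R}$. And the reference in your ``main obstacle'' paragraph to a ``two-sided Bondi-mass normalization'' argument from \cite{Luk2015,Luk2018} is not what the paper uses here; the $\mathcal{O}_{2R}$ estimates close entirely from \cite{Dafermos2003} plus the induction, with no appeal to those references at this stage. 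None of these break the proof once corrected, but the $m$-decay claim is an actual error and should be replaced by the inductive $m/r^k$ argument before this could be considered complete.
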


\subsection{First Estimates for $\lambda,\nu,r\phi$ and $m$}
We will make use of the following result of \cite{Dafermos2003} (cf. Proposition 5):
\begin{proposition}\label{dafinit}
There exists $R_0 > 1$ (depending on the size of our data) such that for $R > R_0$ the domain of the solution to \eqref{SSESF} with data posed on $\underline{C}_{2R}$ (in the $u,v$ coordinates) is $\mathcal{O} = [-\infty,1] \times [2R,\infty]$, and moreover the following estimates hold on $\mathcal{O}$:
\begin{gather}\label{orphiinit} \abs{r\phi} \lesssim 1,\\
\label{olambdainit} \frac{1}{2} \leq \lambda \leq 2,\\
\label{onuinit} \frac{1}{2} \leq -\nu \leq 2,\\
\label{omuinit} \frac{1}{4} \leq 1-\mu \leq 1,\\
\label{ourphiinit} \abs{\partu{}(r\phi)} \lesssim (\abs{u}+1)^{-2},\\
\label{ovrphiinit} \abs{\partv{}(r\phi)} \lesssim v^{-2},\\
\label{mextinit} \abs{m} \lesssim 1.\end{gather}
\end{proposition}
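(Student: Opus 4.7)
The plan is to construct the solution on $\mathcal{O}$ via a local existence plus continuity argument, following the sideways characteristic initial value problem approach of Dafermos in \cite{Dafermos2003}. The data on the two characteristic edges $C_1$ and $\underline{C}_{2R}$ are compatible at the corner $(1,2R)$ by hypothesis, so a standard Picard-type iteration applied to the integral form of \eqref{SSESF} produces a local solution in a neighborhood of $\underline{C}_{2R}$. The task is then to extend this local solution to all of $\mathcal{O}$ while preserving the claimed uniform bounds.

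The crux of the argument is a bootstrap driven by the smallness of $m/r^2$ far from the axis. Once one has $\lambda \geq c>0$ and $-\nu \geq c > 0$ as bootstrap assumptions, integration in $v$ from $\underline{C}_{2R}$ and in $u$ from $C_1$ yields $r(u,v) \gtrsim R + (1-u)$ throughout $\mathcal{O}$ (using $r(1,2R) \sim R$, which is forced by the gauge $\lambda = \tfrac{1}{2}$ on $C_1$ from Section~\ref{Qproof}). Consequently, any uniform bound $|m|\lesssim 1$ gives $m/r^2 \lesssim R^{-2}$ everywhere on $\mathcal{O}$, providing the smallness that makes the nonlinear terms in \eqref{SSESF} negligible. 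I would set up the bootstrap on expanding rectangles $[u_*,1]\times[2R,v_*]\subset\mathcal{O}$ with constants multiplied by a factor of $2$ relative to the claimed bounds, and close it by executing the following steps in order:
\begin{itemize}
\item Control $m$ via the monotonicity $\partial_u m \leq 0$, $\partial_v m \geq 0$ built into the third and fourth equations of \eqref{SSESF}, tracing $m$ back to its bounded values on $C_1\cup\underline{C}_{2R}$.
\item Recover $1-\mu \geq \tfrac{1}{4}$ from $\mu = 2m/r \lesssim R^{-1}$.
\item Improve the $\lambda,\nu$ bounds by integrating $\partial_u\lambda = \partial_v\nu = \tfrac{2m\lambda\nu}{(1-\mu)r^2}$ along characteristics; the integrand carries the smallness factor $m/r^2\lesssim R^{-2}$, so the deviation from the initial values on $C_1\cup\underline{C}_{2R}$ is $O(R^{-1})$.
\item Improve the weighted wave-derivative bounds on $\nu^{-1}\partial_u(r\phi)$ and $\lambda^{-1}\partial_v(r\phi)$ by integrating $\partial_u\partial_v(r\phi) = \tfrac{2m\lambda\nu}{(1-\mu)r^2}\phi$ along characteristics, again harvesting the $R^{-2}$ smallness.
\item Recover $|r\phi|\lesssim 1$ by integrating the newly improved first-derivative bounds from the initial data.
\end{itemize}
Choosing $R_0$ sufficiently large forces every bootstrap constant to strictly improve. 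By a standard continuity argument --- the bootstrap region is open by local existence, closed by the strict improvement and continuity of the solution, and non-empty by the local existence result --- the bootstrap region must equal all of $\mathcal{O}$.

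The main obstacle is coordinating the interdependence of the estimates: the wave estimates for $r\phi$ rely on the lower bound for $1-\mu$ and the boundedness of $m$, which in turn depend on the wave estimates through the constraint equations for $m$. The structural feature that makes this tractable is that, unlike near the axis, the $r^{-2}$ factor appearing everywhere in \eqref{SSESF} is uniformly small (of order $R^{-2}$) rather than singular in $\mathcal{O}$; every potentially dangerous coupling thus carries the small parameter $R^{-1}$, and the bootstrap constants are tamed for $R$ large. A secondary technical point is the compatibility at the corner $(1,2R)$: one must ensure the prescribed data on $\underline{C}_{2R}$ agrees to the required order with the data inherited from $C_1$, which can be arranged by smoothly interpolating $\partial_u(r\phi)$ on $\underline{C}_{2R}$ in a small neighborhood of $u = 1$, as noted in the discussion of \cref{charinit}.
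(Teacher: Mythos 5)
The paper does not prove \cref{dafinit} in-house; it quotes it from \cite{Dafermos2003} (``cf.\ Proposition~5''), so there is no internal argument to compare against. Your reconstruction is nevertheless the right general strategy and is in the spirit of Dafermos's sideways characteristic IVP analysis: on $\mathcal{O}$ one has $r \gtrsim R + (1-u)$, hence $m/r^2 \lesssim R^{-2}$ uniformly, the \eqref{SSESF} nonlinearities all carry this small factor, and a bootstrap on expanding rectangles together with an open--closed--connected continuity argument gives both existence and the uniform estimates.

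There is, however, a concrete flaw in your first bootstrap step. You propose to ``control $m$ via the monotonicity $\partial_u m \leq 0$, $\partial_v m \geq 0$ \ldots tracing $m$ back to its bounded values on $C_1 \cup \underline{C}_{2R}$.'' In $\mathcal{O}$ the data curves are the \emph{inner} boundary of the domain: from an interior point $(u_0,v_0)$ one reaches $C_1$ by increasing $u$ and $\underline{C}_{2R}$ by decreasing $v$, and in both directions the stated monotonicities make $m$ \emph{decrease}. Monotonicity therefore gives only $m(u_0,v_0) \geq m(1,v_0)$ and $m(u_0,v_0) \geq m(u_0,2R)$, i.e.\ lower bounds, never the upper bound $\abs{m}\lesssim 1$. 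The upper bound requires the outward integral $m(u_0,v_0) = m(u_0,2R) + \int_{2R}^{v_0}\partial_v m\,dv'$, with $\partial_v m = \tfrac{1}{2\lambda}(1-\mu)\bigl(\partial_v(r\phi) - \lambda\phi\bigr)^2$ estimated from the bootstrap assumptions $\abs{r\phi}\lesssim 1$, $\abs{\partial_v(r\phi)}\lesssim v^{-2}$ and $r\gtrsim v$, so that $\partial_v m \lesssim v^{-2}$ and the integral is $O(R^{-1})$. That estimate makes the $m$-bound depend on the wave bounds, so the sequential ordering you propose (first $m$ by monotonicity alone, then $\mu$, then $\lambda,\nu$, then the wave derivatives) is circular as written. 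Your final paragraph acknowledges the interdependence, so you see the coupling; but step one as stated would not close and must be replaced by the integral estimate inside the simultaneous bootstrap.
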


We'll now obtain some additional low order bounds necessary to begin our bounding by induction on the order of derivatives.
\begin{proposition}\label{oorder1r}
The following bounds hold on $\mathcal{O}$:
\begin{gather}
\label{ovlambda} \abs{\partv{}\lambda} \lesssim v^{-2},\\
\label{oulambda} \abs{\partu{}\lambda} \lesssim r^{-2},\\
\label{ounu} \abs{\partu{}\nu} \lesssim (\abs{u}+1)^{-2},\\
\label{ovnu} \abs{\partv{}\nu} \lesssim r^{-2},\\
\label{ovphi} \abs{\partv{}\phi} \lesssim r^{-1}v^{-1},\\
\label{ouphi} \abs{\partu{}\phi} \lesssim r^{-1}(\abs{u}+1)^{-1},\\
\label{ovm} \abs{\partv{}m} \lesssim v^{-2},\\
\label{oum} \abs{\partu{}m} \lesssim (\abs{u}+1)^{-2},\\
\label{ovmr} \abs{\partv{}\frac{m}{r^k}} \lesssim r^{-k}v^{-1},\\
\label{oumr} \abs{\partu{}\frac{m}{r^k}} \lesssim r^{-k}(\abs{u}+1)^{-1},\\
\label{ouvrphi} \abs{\partu{}\partv{}(r\phi)} \lesssim r^{-2}v^{-1},\\
\label{ovurphi} \abs{\partv{}\partu{}(r\phi)} \lesssim r^{-2}(\abs{u}+1)^{-1},\\
\label{ov2rphi} \abs{\partv{2}(r\phi)} \lesssim v^{-3},\\
\label{ou2rphi} \abs{\partu{2}(r\phi)} \lesssim (\abs{u}+1)^{-3}.
\end{gather}
\end{proposition}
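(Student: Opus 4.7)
The plan is to establish these first-order bounds in stages, using Proposition \ref{dafinit} as the zeroth-order input and the SSESF equations to propagate control. The estimates split naturally into two groups: those which follow directly from \eqref{SSESF} combined with the uniform bounds of Proposition \ref{dafinit}, and those which require a single integration along a characteristic, with initial value supplied either by the Section \ref{Qproof} bounds on $C_1$ or by asymptotic flatness on $\underline{C}_{2R}$. Throughout, a key geometric input is that in $\mathcal{O}_{2R}$ the area radius satisfies $r \gtrsim v$ and $r \gtrsim 1 + \abs{u}$, which follows from \eqref{olambdainit}, \eqref{onuinit} once $R$ is taken sufficiently large.

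For the direct bounds, the identities $\partial_u\lambda = \partial_v\nu = \frac{2m\lambda\nu}{(1-\mu)r^2}$ together with the uniform bounds of Proposition \ref{dafinit} immediately give \eqref{oulambda}, \eqref{ovnu}. For \eqref{ovphi}, \eqref{ouphi}, expand $r\partial_v\phi = \partial_v(r\phi) - \lambda\phi$ (and symmetrically for $\partial_u$), use $\abs{\phi} \leq \abs{r\phi}/r \lesssim r^{-1}$, and invoke the geometric facts above. The mass bounds \eqref{ovm}, \eqref{oum} then follow from the Raychaudhuri identities in \eqref{SSESF}; for instance $\partial_v m = \tfrac{(1-\mu)r^2}{2\lambda}(\partial_v\phi)^2 \lesssim r^2(r^{-1}v^{-1})^2 = v^{-2}$. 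The weighted bounds \eqref{ovmr}, \eqref{oumr} follow by the product rule applied to $m \cdot r^{-k}$, using $\partial_v r = \lambda$ bounded. Finally \eqref{ouvrphi}, \eqref{ovurphi} come from the wave equation $\partial_u\partial_v(r\phi) = \frac{2m\lambda\nu\phi}{(1-\mu)r^2}$, combined with $\abs{\phi} \lesssim \min\{v^{-1},(1+\abs{u})^{-1}\}$ (a consequence of $\abs{r\phi}\lesssim 1$ and the geometric input).

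For the bounds requiring integration, we treat \eqref{ovlambda} as the model case. Write
\[
\partial_v\lambda(u,v) = \partial_v\lambda(1,v) + \int_u^1 \partial_v\!\Lx\tfrac{2m\lambda\nu}{(1-\mu)r^2}\Rx(u',v)\,du',
\]
where $\abs{\partial_v\lambda(1,v)} \lesssim v^{-3}$ by \eqref{dvlambdaapriori} (note $u=1$ on $C_1$). Expanding the integrand via the product rule and applying the first-order bounds already established, every term but one is controlled by an integrable $r^{-3}$-type expression, yielding a total contribution of size $\lesssim v^{-2}$. The exceptional term $\frac{2m\nu}{(1-\mu)r^2}\partial_v\lambda$ produces a Gr\"onwall inequality; since $\int_{-\infty}^{1} r^{-2}(u',v)\,du' \lesssim v^{-1}$ uniformly in $\mathcal{O}_{2R}$, the Gr\"onwall factor is bounded, closing \eqref{ovlambda}. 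The estimate \eqref{ounu} is symmetric: integrate $\partial_u\nu$ in $v$ starting from $\underline{C}_{2R}$, where asymptotic flatness of order $\omega'\geq 2$ toward $\mathcal{I}^-$ supplies the initial data. The second-derivative bounds \eqref{ov2rphi}, \eqref{ou2rphi} follow the same template applied to the once-differentiated wave equation, with initial values supplied by \eqref{dvrphiapriori}, \eqref{durphiapriori} on $C_1$ and by asymptotic flatness on $\underline{C}_{2R}$.

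The main obstacle will be verifying that each of these integrals closes with the sharp decay rate claimed. This rests on the geometric fact $r \gtrsim v$, $r \gtrsim 1+\abs{u}$ noted above, which makes $r^{-2}$ integrable along either characteristic with a gain of one weight, and on the smallness (for $R$ large) of the integrated Gr\"onwall coefficient $\int_{-\infty}^1 r^{-2}\,du'$. A secondary bookkeeping issue is compatibility at the matching cone $C_1$: the Section \ref{Qproof} bounds there deliver all necessary initial data in the correct $v^{-k}$ form, and one must check that the subsequent integration in $u$ preserves this decay, which again reduces to the geometric inequality $r \gtrsim v$ in $\mathcal{O}_{2R}$.
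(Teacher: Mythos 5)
Your proposal follows essentially the same route as the paper: read off $\partial_u\lambda,\partial_v\nu,\partial_u\partial_v(r\phi)$ directly from \eqref{SSESF} and Proposition \ref{dafinit}, derive $\partial_v\phi,\partial_u\phi$ by expanding $\partial_v(r\phi)$, feed those into the Raychaudhuri identities for $\partial_v m,\partial_u m$, and obtain $\partial_v\lambda,\partial_u\nu,\partv{2}(r\phi),\partu{2}(r\phi)$ by a single characteristic integration with Gr\"onwall, using that $r\gtrsim v$ and $r\gtrsim 1+|u|$ in $\mathcal{O}$. One small attributive slip: the boundary term in the $\partial_u\nu$ (and indeed $\partial_v\lambda$) integration is zero thanks to the gauge condition \ref{gauge3} (which sets $\nu\equiv-\tfrac12$ on the ingoing cone and $\lambda\equiv\tfrac12$ on $C_1$), not because of asymptotic flatness, which only constrains derivatives of $r\phi$; asymptotic flatness is needed only for the $\partv{2}(r\phi),\partu{2}(r\phi)$ boundary terms, and your argument remains valid with this correction.
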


\begin{proof}
\eqref{oulambda}, \eqref{ovnu}, \eqref{ouvrphi}, and \eqref{ovurphi} can be read off directly from \eqref{SSESF} and \cref{dafinit}.

From here we will begin with \eqref{ovphi}, and \eqref{ouphi}. For \eqref{ovphi} we have
\begin{dmath*} \partv{}(r\phi) = \abs{\phi + r\partv{}\phi}. \end{dmath*}
So rearranging we have by \eqref{ovrphiinit}, and \eqref{orphiinit}
\begin{dmath*} \abs{\partv{}\phi} \leq r^{-1}\Lx\abs{\partv{}(r\phi)} + \abs{\phi}\Rx \lesssim r^{-2} + r^{-1}v^{-2} \leq r^{-1}v^{-1}. \end{dmath*}
The same can be done for $\partu{}\phi$ using \eqref{ourphiinit},and \eqref{orphiinit}, and we obtain \eqref{ouphi} as well.

Now we can move to our bounds for $m$ \eqref{ovm}, and \eqref{oum}. For \eqref{ovm} we have
\begin{dmath*} \abs{\partv{}m} = \abs{\frac{1}{2}(1-\mu)r^2(\partv{}\phi)^2} \lesssim v^{-2} \end{dmath*}
using \eqref{ovphi}. We similarly obtain \eqref{oum} using \eqref{ouphi}.

From this \eqref{ovmr}, and \eqref{oumr} follow immediately via the Leibniz rule and \eqref{olambdainit}, \eqref{onuinit}.

Next, for \eqref{ovlambda}:
\begin{dmath*} \abs{\partv{}\lambda}(u,v) = \abs{\int_1^u \partial_u\partv{}\lambda(u',v)du'} = \abs{-\int_1^u \lambda^{-2}\partial_u\lambda \partial_v\lambda du' + \int_1^u \lambda^{-1}\partial_v\partial_u\lambda du'}. \end{dmath*}
By our assumptions on $R$, and \eqref{oulambda} $\int_1^u \abs{\partial_u\lambda(u',v)} du' \lesssim r^{-1}(1,v)$ for all $v \geq 2R$, so we can apply Gronwall's inequality to deal with our first term, and our decay will be determined by the second term. Applying \eqref{SSESF} this can be bounded by
\begin{dmath*} \int_1^{-\infty}\lambda^{-1} \partial_v\Lx \frac{2m\lambda\nu}{(1-\mu)r^2} \Rx du'. \end{dmath*}
The $\partial_v\lambda$ can be grouped with our other term containing this in our application of Gronwall (the coefficient decays like $\frac{m}{r^2}$ and thus is integrable by \eqref{oum}), so our decay is determined by the remaining terms, thus bounded by $r^{-2}v^{-1}$ (with the term differentiating $r^{-2}$ having the lowest order of decay), using \eqref{ovnu}, \eqref{ovm}. Thus integrating we obtain \eqref{ovlambda}. Again repeating the same procedure for $\partv{}\nu$ obtains the symmetric $u$ bound \eqref{ounu}.

Now, for \eqref{ov2rphi} we have
\begin{dmath*} \partv{2}(r\phi)(u,v) = \partv{2}(r\phi)(1,v) + \int_1^{u} \partial_u\partv{2}(r\phi)(u',v)du'. \end{dmath*}
This leading term is bounded by $v^{-3}$, and the latter term can be written
\begin{dmath*} \int_1^u \partv{}(\partial_u\partv{}(r\phi))(u',v) - \partial_u\lambda \lambda^{-1} \partv{2}(r\phi)(u',v)du'. \end{dmath*}
The second term here can be controlled via Gronwall's inequality since $\partial_u \lambda$ is integrable by \eqref{oulambda}. The first term can be expanded:
\begin{dmath*} \int_1^u \partv{} \Lx\frac{m\nu}{(1-\mu)r^2}\Lx \phi - \partv{}(r\phi) \Rx\Rx(u',v)du'. \end{dmath*}
As usual, we can absorb the $\partv{2}(r\phi)$ term via Gronwall, so we are left with only the other terms. By the above results, we have that these are all bounded by $r^{-2}v^{-2}$, so we obtain as an overall bound:
\begin{dmath*} \abs{\partv{2}(r\phi)}(u,v) \lesssim v^{-3} \end{dmath*}
since our integral contributes a $r^{-1}v^{-2}$ which is strictly smaller than $v^{-3}$ on $\mathcal{O}$. One sees directly that \eqref{ou2rphi} can be obtained in the same manner by exchanging the roles of $u$ and $v$ at each step, and substituting the correct bounds from above.
\end{proof}

\subsection{Higher Order Derivatives}
The remainder of this section will be devoted to closing the following induction which completes the proof of \cref{Obounds}:
\begin{lemma}\label{ind2}
Let $r,\phi,m$ solve \eqref{SSESF} in the region $\mathcal{O}$ with initial data smooth, asymptotically flat to order $N$, satisfying the assumptions of \cref{dafinit}. Suppose the following bounds hold for multi-indices $\abs{\alpha} \leq n, \abs{\beta} \leq n+1 < N$:
\begin{gather}
\label{ind2hypvlambda} \abs{\partv{\abs{\alpha}}\lambda} \lesssim v^{-(\abs{\alpha}+1)},\\
\label{ind2hypulambda} \abs{\partu{\alpha_u}\partv{\alpha_v}\lambda} \lesssim r^{-1}(1+\abs{u})^{-\alpha_u}v^{-\alpha_v},\\
\label{ind2hypunu} \abs{\partu{\abs{\alpha}}\nu} \lesssim (\abs{u}+1)^{-(\abs{\alpha}+1)},\\
\label{ind2hypvnu} \abs{\partu{\alpha_u}\partv{\alpha_v}\nu} \lesssim r^{-1}(1+\abs{u})^{-\alpha_u}v^{-\alpha_v},\\
\label{ind2hypuvphi} \abs{\partu{\alpha_u}\partv{\alpha_v}\phi} \lesssim r^{-1}(1+\abs{u})^{-\alpha_u}v^{-\alpha_v},\\
\label{ind2hypm} \abs{\partu{\alpha_u}\partv{\alpha_v}m} \lesssim \min\set*{(1+\abs{u})^{-(\alpha_u+1)}v^{-\alpha_v},(1+\abs{u})^{-\alpha_u}v^{-(\alpha_v+1)}},\\
\label{ind2hypmr} \abs{\partu{\alpha_u}\partv{\alpha_v}\frac{m}{r^k}} \lesssim r^{-k}\min\set*{(1+\abs{u})^{-(\alpha_u+1)}v^{-\alpha_v},(1+\abs{u})^{-\alpha_u}v^{-(\alpha_v+1)}},\\
\label{ind2hypuvrphi} \abs{\partu{\beta_u}\partv{\beta_v}(r\phi)} \lesssim r^{-1}(1+\abs{u})^{-\beta_u}v^{-\beta_v},\\
\label{ind2hypvrphi} \abs{\partv{\abs{\beta}}(r\phi)} \lesssim v^{-(\abs{\beta}+1)},\\
\label{ind2hypurphi} \abs{\partu{\abs{\beta}}(r\phi)} \lesssim (1+\abs{u})^{-(\abs{\beta}+1)},
\end{gather}
where we take $\alpha_u$ to be non-zero in \eqref{ind2hypulambda} respectively and $\alpha_v$ non-zero in \eqref{ind2hypvnu} and $\beta_u,\beta_v$ both non-zero in \eqref{ind2hypuvrphi}.

Then in fact these estimates hold for $\abs{\alpha} \leq n+1, \abs{\beta} \leq n+2$.
\end{lemma}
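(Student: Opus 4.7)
\textbf{Proof proposal for Lemma \ref{ind2}.} The plan mirrors the structure of Section \ref{Qproof} but is substantially simpler because $\mathcal{O}_{2R}$ stays away from the axis of symmetry; in particular $r\sim v$ on $\mathcal O$ for $u$ bounded, so no averaging operators or bootstrap near $\Gamma$ are needed. First I would prove a reordering lemma analogous to \cref{uvexclemma}: commuting a $\partial_u$ past a $\partial_v$ produces the extra factor $\lambda^{-1}\nu^{-1}(\partial_v\nu \partial_u - \partial_u\lambda \partial_v)$, and by the inductive hypotheses \eqref{ind2hypulambda}, \eqref{ind2hypvnu} each such commutator is strictly better-decaying than the estimates we are trying to prove, so we may freely choose any convenient order of differentiation.

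Next, I handle mixed derivatives of $\lambda$ and $\nu$ via \eqref{SSESF}. For $\abs{\alpha}=n+1$ with $\alpha_u\ge 1$, I rewrite
\[
\partial_u^{\alpha_u}\partial_v^{\alpha_v}\lambda = \partial_u^{\alpha_u-1}\partial_v^{\alpha_v}\Bigl(\tfrac{2m\lambda\nu}{(1-\mu)r^2}\Bigr)
\]
and expand by Leibniz. Every resulting term is a product of factors already controlled at order $\le n$ by \eqref{ind2hypulambda}, \eqref{ind2hypvnu}, \eqref{ind2hypmr}, and since the prefactor $\frac{m}{r^2}$ alone already gives an $r^{-2}$-weight, the result easily meets the claimed $r^{-1}(1+|u|)^{-\alpha_u}v^{-\alpha_v}$ bound. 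The $\nu$ case is symmetric. In exactly the same manner, the mixed derivatives of $r\phi$ at order $n+2$ follow by differentiating
\[
\partial_u\partial_v(r\phi)=\tfrac{2m\lambda\nu}{(1-\mu)r^2}\phi
\]
and using \eqref{ind2hypuvphi}, \eqref{ind2hypmr}. Expanding $\partial^\alpha(r\phi)= r\partial^\alpha\phi + (\text{lower order})$ and isolating the top derivative then yields the mixed bound \eqref{ind2hypuvphi} at order $n+1$. With $\partial^\alpha\phi$ in hand, the transport equations $\nu^{-1}\partial_u m = \tfrac12(1-\mu)r^2(\nu^{-1}\partial_u\phi)^2$ and its $\partial_v$ counterpart give \eqref{ind2hypm} and \eqref{ind2hypmr} at order $n+1$ by a short Leibniz computation, taking the worst case where all derivatives land on one factor of $\partial\phi$.

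The non-mixed bounds for $\lambda$ and $\nu$ are the most delicate step. Using gauge \ref{gauge3} we have $\lambda=\frac12$ on $C_1$, so $\partial_v^{n+1}\lambda\equiv 0$ there, and we write
\[
\partial_v^{n+1}\lambda(u,v) = \int_1^u \partial_{u'}\partial_v^{n+1}\lambda(u',v)\,du'.
\]
Commuting and applying \eqref{SSESF}, the integrand is a sum of terms of which all but one are already controlled at order $\le n+1$ with decay strictly better than $v^{-(n+2)}$ (thanks to the $\frac{m}{r^2}$ factor, which is $O(v^{-2})$ for $u\le 1$); the remaining term is proportional to $\frac{m}{r^2}\partial_v^{n+1}\lambda$ itself, and since $\int_1^u |\tfrac{m}{r^2}|(u',v)\,du'$ is bounded uniformly in $v$ (using \eqref{oum} and monotonicity of $r$ in $u$), Grönwall's inequality closes the estimate $|\partial_v^{n+1}\lambda|\lesssim v^{-(n+2)}$. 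The non-mixed $u$-derivatives of $\nu$ proceed identically, integrating from $\underline{C}_{2R}$ where $\nu\equiv -\frac12$ and using $\int_{2R}^v |\tfrac{m}{r^2}|(u,v')\,dv'$ in place of the previous integral. Pure $\partial_v^{n+2}(r\phi)$ is obtained by integrating the wave equation in $u$ from $C_1$, where the asymptotic-flatness data supplies the required initial decay $v^{-(n+3)}$; an analogous integration in $v$ from $\underline{C}_{2R}$ produces the pure $\partial_u^{n+2}(r\phi)$ bound.

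The main obstacle is the Grönwall step for the pure $v$-derivatives of $\lambda$ (and the symmetric step for $\nu$): one must verify that in the Leibniz expansion of $\partial_v^{n+1}\partial_u\lambda$ every term except the one proportional to $\partial_v^{n+1}\lambda$ is genuinely controlled by the inductive hypotheses at the sharp rate $v^{-(n+2)}$, and that the Grönwall factor is uniform in $v$. Once that bookkeeping is done, the rest of the lemma assembles from the already-established mixed estimates and the transport equations for $m$.
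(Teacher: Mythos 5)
Your proposal is correct and follows essentially the same route as the paper's proof: a reordering lemma, Leibniz expansion of $\partial_u\lambda$, $\partial_v\nu$, and $\partial_u\partial_v(r\phi)$ via \eqref{SSESF} for the mixed bounds (exploiting $r\gtrsim v$ on $\mathcal O$ so that the $\frac{m}{r^2}$ prefactor absorbs the required weight), the transport equations for $m$, and Grönwall after integrating from $C_1$ (resp.\ $\underline C_{2R}$) for the pure $\partial_v^{n+1}\lambda$, $\partial_u^{n+1}\nu$, and pure $r\phi$ bounds. The only cosmetic differences are the order in which the mixed $\phi$ and $r\phi$ bounds are derived and the citation for the uniform integrability of $\frac{m}{r^2}$ in $u$ (which follows from $|m|\lesssim1$ and $r\gtrsim v+|u|$ rather than from \eqref{oum}).
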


Before proving this we must check, in the same vein as \cref{uvexclemma}, that in fact suffices to control only one ordering of the above derivatives in order to obtain the listed order of decay for any rearrangement of them:
\begin{proposition}\label{reorder1}
Suppose the hypotheses of \cref{ind2} hold for $\abs{\alpha} \leq n, \abs{\beta} \leq n+1$, for some ordering of derivatives $\partv{},\partu{}$, then in fact the same estimates hold for arbitrary reorderings of $\partu{},\partv{}$.
\end{proposition}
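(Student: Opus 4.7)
The proof follows the template of \cref{uvexclemma}: it suffices to show that interchanging a single adjacent pair of gauge-normalized derivatives $\partu{},\partv{}$ in an expression $\bar\partial^{\gamma,\vec l}f$ produces a remainder satisfying the same bound as the original, since any two orderings of a fixed multi-index $\gamma$ are connected by a finite chain of adjacent swaps. I would run an outer induction on the total differential order $|\gamma|$, with trivial base case $|\gamma|=1$. The inductive hypothesis supplies, at strictly lower order, bounds on arbitrary orderings of derivatives of $\lambda,\nu,\phi,m,m/r^k,r\phi$ (by applying this proposition inductively to the \cref{ind2} hypotheses assumed for some ordering at each order).

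The commutator of the gauge-normalized derivatives is
\[ [\partu{},\partv{}]g \;=\; \nu^{-1}(\partv{}\nu)\,\partu{}g \;-\; \lambda^{-1}(\partu{}\lambda)\,\partv{}g, \]
so if two orderings $\vec l,\vec l'$ of $\gamma$ differ in a single adjacent swap, the difference $(\bar\partial^{\gamma,\vec l}-\bar\partial^{\gamma,\vec l'})f$ may be written $\bar\partial^{\gamma_1,\vec l_1}\bigl([\partu{},\partv{}]\,\bar\partial^{\gamma_2,\vec l_2}f\bigr)$ with $|\gamma_1|+|\gamma_2|=|\gamma|-2$. A Leibniz expansion then produces terms of the schematic form $\partu{\delta_u}\partv{\delta_v}(\partv{}\nu)\cdot\partu{\epsilon_u}\partv{\epsilon_v}f$ and the analog with $\partu{}\lambda$, where $|\delta|+|\epsilon|=|\gamma|-1$; bounded factors of $\nu^{-1},\lambda^{-1}$ and their derivatives are absorbed using \eqref{olambdainit}--\eqref{onuinit}. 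By the outer inductive hypothesis together with \eqref{ind2hypvnu} and \eqref{ind2hypulambda}, the commutator factors satisfy
\[ \bigl|\partu{\delta_u}\partv{\delta_v}(\partv{}\nu)\bigr|\lesssim r^{-1}(1+|u|)^{-\delta_u}v^{-(\delta_v+1)}, \qquad \bigl|\partu{\delta_u}\partv{\delta_v}(\partu{}\lambda)\bigr|\lesssim r^{-1}(1+|u|)^{-(\delta_u+1)}v^{-\delta_v}, \]
the exponent shifts making the relevant mixed-derivative hypotheses applicable.

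The key observation is that the commutator factor brings an extra $r^{-1}$ relative to the single derivative it replaces on $f$; in $\mathcal{O}_{2R}$, where $r\gtrsim v$ and $r\gtrsim 1+|u|$ throughout, this $r^{-1}$ can be freely traded for either $v^{-1}$ or $(1+|u|)^{-1}$ as needed. Multiplying the commutator estimate by the inductive bound on $\partu{\epsilon_u}\partv{\epsilon_v}f$ then reproduces (or improves on) the target bound for $\bar\partial^{\gamma,\vec l}f$. The main obstacle is simply case-by-case bookkeeping across the six quantities $\lambda,\nu,\phi,m,m/r^k,r\phi$, verifying that in each case the product of the commutator gain with the hypothesized decay of $f$ matches the corresponding right-hand side of \eqref{ind2hypvlambda}--\eqref{ind2hypurphi}. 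This verification is mechanical, requires no new idea beyond \cref{uvexclemma}, and only invokes the coarse comparisons $r\gtrsim v$ and $r\gtrsim 1+|u|$ together with the $\mathcal{O}_{2R}$-uniform bounds on $\lambda,\nu,1-\mu$ supplied by \cref{dafinit}.
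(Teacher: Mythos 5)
Your proposal is correct and follows the same commutator strategy as the paper's (implicit) proof, which simply refers back to the argument of \cref{uvexclemma}: reduce to a single adjacent swap, write the difference as a Leibniz expansion of $[\partu{},\partv{}]=\nu^{-1}(\partv{}\nu)\partu{}-\lambda^{-1}(\partu{}\lambda)\partv{}$, and use the inductively known lower-order estimates to bound the factors. Your added observation that $r\gtrsim v$ and $r\gtrsim 1+\abs{u}$ on $\mathcal{O}$ makes the extra $r^{-1}$ gained from \eqref{ind2hypulambda}/\eqref{ind2hypvnu} always absorbable, which is precisely why the bookkeeping is more uniform here than in \cref{uvexclemma}, and is a clean way to see that the paper's one-line appeal to the earlier lemma is justified.
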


\begin{proof}
The proof is identical to that given for \cref{uvexclemma} above and is not repeated.
\end{proof}

Observe that the hypothesis above is immediately satisfied by \cref{oorder1r}. Thus proving \cref{ind2} immediately gives us \cref{Obounds}.

\begin{proof}[Proof of ~\cref{ind2}]
We begin with \eqref{ind2hypuvphi}. In this case, we write
\begin{equation*}\begin{split} &\abs{\partu{\alpha_u}\partv{\alpha_v}(r\phi)} = \abs{r\partu{\alpha_u}\partv{\alpha_v}\phi}\\&+\abs{\vartheta(\alpha_v-1)\alpha_v\partu{\alpha_u}\partv{\alpha_v-1}\phi} + \abs{\vartheta(\alpha_u-1)\alpha_u\partu{\alpha_u-1}\partv{\alpha_v}\phi}, \end{split}\end{equation*}
where $\vartheta(x)$ is a step function, $0$ for $x < 0$, $1$ for $x \geq 0$. Thus we obtain the desired bound,
\begin{dmath*} \abs{\partu{\alpha_u}\partv{\alpha_v}\phi} \lesssim r^{-1}(1+\abs{u})^{-\alpha_u}v^{-\alpha_v}, \end{dmath*}
by simply multiplying our hypothesized bounds at these orders, and using the fact that $u,v \lesssim r$ in $\mathcal{O}$.

Next, we move to \eqref{ind2hypuvrphi}. By \cref{reorder1} it suffices to check the following case:
\begin{dmath*} \abs{\partu{\beta_u-1}\partv{\beta_v-1}\partu{}\partv{}(r\phi)} = \abs{\partu{\beta_u-1}\partv{\beta_v-1}\Lx \frac{m}{(1-\mu)r^2} \Lx \phi - \partv{}(r\phi) \Rx \Rx}. \end{dmath*}
By hypothesis this is bounded by
\begin{dmath*} \abs{\partu{\beta_u-1}\partv{\beta_v-1}\partu{}\partv{}(r\phi)} \lesssim r^{-2}(1+\abs{u})^{-(\beta_u-1)}v^{-\beta_v} \leq r^{-1}(1+\abs{u})^{-\beta_u}v^{-\beta_v}, \end{dmath*}
so we have our required decay.

The bounds \eqref{ind2hypulambda}, and \eqref{ind2hypvnu} follow similarly directly from \eqref{SSESF} and our hypothesis.

Now for \eqref{ind2hypm} we consider the terms
\begin{gather*} \partu{\alpha_u}\partv{\alpha_v}m = \partu{\alpha_u}\partv{\alpha_v-1} \Lx \frac{1}{2}(1-\mu)r^2(\partv{}\phi)^2 \Rx\\ \partu{\alpha_u-1}\partv{\alpha_v}\partu{}m = \partu{\alpha_u-1}\partv{\alpha_v}\Lx \frac{1}{2}(1-\mu)r^2(\partu{}\phi)^2 \Rx. \end{gather*}
In each case, the term with minimal decay is when all derivatives act on some copy of $\partv{}\phi$ or $\partu{}\phi$ respectively, and by \eqref{ind2hypuvphi} this is bounded by
\begin{dmath*} \abs{\partu{\alpha_u}\partv{\alpha_v}m} \lesssim (1+\abs{u})^{-\alpha_u}v^{-(\alpha_v+1)}, \end{dmath*}
and
\begin{dmath*} \abs{\partu{\alpha_u-1}\partv{\alpha_v}\partu{}m} \lesssim (1+\abs{u})^{-(\alpha_u+1)}v^{-\alpha_v}. \end{dmath*}
But by \cref{reorder1} the difference between these two terms is of order
\begin{dmath*} (1+\abs{u})^{-(\alpha_u+1)}v^{-(\alpha_v+1)}, \end{dmath*}
so in fact each satisfies both bounds, and we can safely take the minimal value of these two bounds. Thus we obtain \eqref{ind2hypm}. From this, \eqref{ind2hypmr} follows immediately by splitting the derivatives over $m$ and $r^{-k}$.

Next we can proceed to \eqref{ind2hypvlambda}, and \eqref{ind2hypunu}. The approach and bounds are completely symmetric by exchanging $u$ for $v$, so we will only prove \eqref{ind2hypvlambda} in detail. So we write:
\begin{dmath*} \partv{\abs{\alpha}}\lambda(u,v) = \int_1^u \partial_u\partv{\abs{\alpha}}\lambda(u',v) du'. \end{dmath*}
By the same procedure as used in the proof of \cref{reorder1}, this integrand can be bounded by
\begin{dmath*} \abs{\partial_u\partv{\abs{\alpha}}\lambda} \lesssim \abs{\partv{\abs{\alpha}}\partu{}\lambda} + \sum_{i = 1}^{\abs{\alpha}} \abs{\partv{i-1}(\nu^{-1}\partv{}\nu\partu{} - \lambda^{-1}\partu{}\lambda \partv{}) \partv{\abs{\alpha}-i}}. \end{dmath*}
By our hypothesized bounds and the results above, all of the terms in the latter sum are bounded by $r^{-1}v^{-\abs{\alpha}}u^{-1}$, or by $r^{-2}\partv{\abs{\alpha}}\lambda$, and so can be controlled by Gronwall's inequality. In either case we have the necessary decay (since the former term integrates to $v^{-(\abs{\alpha} + 1)}$), so we can safely ignore these. Thus all that remains is
\begin{dmath*} \partv{\abs{\alpha}}\partu{}\lambda = \partv{\abs{\alpha}}\frac{m\lambda}{(1-\mu)r^2}. \end{dmath*}
The term in which all derivatives act on $\lambda$ can be controlled by Gronwall, since $r^{-2}$ in globally integrable, and the remaining terms all satisfy decay like $r^{-2}v^{-\abs{\alpha}}$ with $\partv{\abs{\alpha}}r^{-2}$ being the term with the lowest power of decay. Thus, integrating, we obtain the bound
\begin{dmath*} \abs{\partv{\abs{\alpha}}\lambda} \lesssim v^{-(\abs{\alpha}+1)}, \end{dmath*}
as desired.

Finally we move to \eqref{ind2hypvrphi}, \eqref{ind2hypurphi}. As with \eqref{ind2hypvlambda}, and \eqref{ind2hypunu} above, the proof of \eqref{ind2hypurphi} is the same as that for \eqref{ind2hypvrphi} with the roles of $u$ and $v$ interchanged, and so is not repeated. Thus, we have
\begin{dmath*} \partv{\abs{\beta}}(r\phi)(u,v) = \partv{\abs{\beta}}(r\phi)(1,v) + \int_1^u \partial_u\partv{\abs{\beta}}(r\phi)(u',v)du'. \end{dmath*}
By our constraints on the data, $\partv{\abs{\beta}}(r\phi)(1,v)$ is already good enough so we need to control the integral term. We can bound this integrand by
\begin{equation*}\begin{split} &\abs{\partial_u\partv{\abs{\beta}}(r\phi)}\\&\lesssim \abs{\partv{\abs{\beta}-1}\Lx\frac{m\nu}{(1-\mu)r^2}\Lx \phi - (\partv{}(r\phi)) \Rx\Rx} + r^{-1}(1+\abs{u})^{-1}v^{-(\abs{\beta}+1)}. \end{split}\end{equation*}
This second term already satisfies the necessary decay, so we are concerned only with the first. By our hypothesis and the bounds already checked above we have that (expanding) each term in this derivative verifies decay bounded by at least
\begin{dmath*} r^{-2}v^{-\abs{\beta}}. \end{dmath*}
Thus our integral is bounded overall by $v^{-(\abs{\beta} + 1)}$, so we have
\begin{dmath*} \abs{\partv{\abs\beta}(r\phi)}(u,v) \lesssim v^{-(\abs{\beta} + 1)}. \end{dmath*}
The $u$ case is similar.
\end{proof}

This completes the proof of \cref{ind2} and thus \cref{Obounds}. In particular we have our first main result: \cref{main1}.

\section{Stability to Non-Symmetric Perturbations}
\label{stabproof}
It remains now to prove \cref{main2}. We have established control in spherical symmetry through the proof of \cref{main1} above, but it remains to check that the lifts of these solutions to $(3+1)$ dimensional solutions to \eqref{ESF} are in fact dispersive of order $(C,\gamma_0,N)$ for $\gamma_0,N$ sufficiently large. To do this we must $(1)$ check that these bounds transfer nicely to the full $(3+1)$ dimensional solution, and $(2)$ construct a suitable gauge and coordinate system.

In this section, we address this second issue. In particular we construct a gauge and coordinate system and check that in this setting (given a resolution for the first concern above) the conditions for our solution $(\mathcal{M},\textbf{g},\tilde\phi)$ to be dispersive of order $(C,\gamma_0,N)$ hold for the solutions considered in the hypotheses of \cref{main2}.

\subsection{Coordinates and Gauge}
We must construct a set of coordinates and prescribe a gauge for our solutions to \eqref{ESF} on $\mathcal{M}$ before we can check the conditions of \ref{dispsoln}. Since we take our solution to project to a solution of \eqref{SSESF} of the type considered in \cref{main1} we have immediately that our solution is spherically symmetric, and admits a double-null-ruling by coordinates $(\hat{u},\hat{v},\theta,\phi)$ in which the metric takes the form
\[ \textbf{g} = -\Omega^2d\hat{u}d\hat{v} + r^2(\hat{u},\hat{v})ds^2_{S^2} \]
as in \cref{secprelim}. Recall also that these coordinates present our solution as a lift of the solution to \eqref{SSESF} via the projection $(\hat{u},\hat{v},\theta,\varphi) \mapsto (\hat{u},\hat{v})$. Thus as in the $(1+1)$ dimensional reduced case, these $\hat{u},\hat{v}$ are free up to a choice of gauge. In what follows we will impose the gauge condition \ref{gauge2}.

Then define $t = \hat{u} + \hat{v}$ and $\hat{r} = \hat{v} - \hat{u}$. From here we obtain a coordinate system $(t,x^1,x^2,x^3)$ defined as
\[ t = \hat{u} + \hat{v} \quad x^1 = \hat{r}\cos\theta\sin\varphi \quad x^2 = \hat{r}\sin\theta\sin\varphi \quad x^3 = \hat{r}\cos\varphi. \]
The remainder of this section will be devoted to showing that lifts of the solutions considered in \cref{main1} represented in this coordinate system and gauge satisfy the conditions of \ref{dispsoln}. We do this in two parts: First we check that changing from \ref{gauge1} to \ref{gauge2} preserves the decay properties shown in \cref{main1}. Then we check the remaining conditions of \ref{dispsoln} using the results of \cref{axisregsec}.

\subsection{Changing Gauge}
Here we check that changing gauge from \ref{gauge1} to \ref{gauge2} (at least) preserves the decay found in \cref{main1}. Observe that it suffices to do so in the $(1+1)$ dimensional setting, since the null coordinates here induce an equivalent choice of null coordinates on $\mathcal{M}$ by construction.

The result is the following:
\begin{lemma}\label{gaugeexc}
Let $(r,\phi,m)$ a solution to \eqref{SSESF} verifying hypothesis of \cref{main1} in \ref{gauge1}. Then the estimates \cref{main11,main12,main13,main14,main15,main16,main17,main18,main19,main110} also hold exchanging the $u,v$ of \ref{gauge1} for $\tilde{u},\tilde{v}$ null coordinates for the gauge \ref{gauge2}.
\end{lemma}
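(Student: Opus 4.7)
The plan rests on two structural facts about the gauge transformation $(u,v)\mapsto(\tilde u(u),\tilde v(v))$ from \ref{gauge1} to \ref{gauge2} displayed in the Remark. First, it is block-diagonal: $\tilde u$ depends only on $u$ and $\tilde v$ only on $v$. Second, the gauge-invariant derivatives $\partu{}=\nu^{-1}\partial_u$ and $\partv{}=\lambda^{-1}\partial_v$ that appear throughout \cref{main1} are genuinely invariant under any such diagonal reparametrization: the chain rule gives $\tilde\nu=-\nu/(2\bar\nu(u))$ and $\partial_{\tilde u}=-\partial_u/(2\bar\nu(u))$, so $\tilde\nu^{-1}\partial_{\tilde u}=\nu^{-1}\partial_u$, and by induction $(\tilde\nu^{-1}\partial_{\tilde u})^k=(\nu^{-1}\partial_u)^k$; the situation for $\partv{k}$ is symmetric. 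Moreover, the uniform bound $1/3\leq -\nu\leq 2/3$ from \eqref{unifapriori} passes to the limit $v\to\infty$ to give $1/3\leq -\bar\nu\leq 2/3$, so the explicit formulas $\tilde v(v)=-2\int_1^v\bar\nu(s)\,ds$ and $\tilde u(u)=-2\int_1^u\bar\nu(s)\,ds$ yield two-sided linear comparisons $(1+v)\sim(1+\tilde v)$ and $(1+|u|)\sim(1+|\tilde u|)$.

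These two observations immediately handle the estimates \eqref{main15}--\eqref{main110} for the scalar quantities $\phi$, $m$, $m/r^l$ and $r\phi$: the functions themselves are coordinate-independent, the operators $\partu{\alpha_u}\partv{\alpha_v}$ are coordinate-independent, and the weights $(1+|u|)^a(1+v)^b$ transform by bounded multiplicative constants. So it suffices to verify the $\lambda$ and $\nu$ estimates \eqref{main11}--\eqref{main14}, the only ones for which the underlying function genuinely changes: $\tilde\lambda = \lambda\cdot F(v)$ with $F(v):=-1/(2\bar\nu(v))$, and symmetrically $\tilde\nu=\nu\cdot F(u)$.

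For the $\lambda$ estimates I would expand $\partu{\alpha_u}\partv{\alpha_v}\tilde\lambda$ by the Leibniz rule, observing that since $F$ depends on $v$ only, the $\partu{}$ derivatives see only $\lambda$. The task thus reduces to bounding $\partv{j}F$, for which I would show by induction that $|\partv{j}F|\lesssim (1+v)^{-(j+1)}$ for $j\geq 1$ (and $|F|\lesssim 1$). The input is that the ordinary derivatives of $\bar\nu$ as a function of a single variable satisfy $|\bar\nu^{(k)}(s)|\lesssim (1+|s|)^{-(k+1)}$ for $k\geq 1$, which follows from \eqref{ind1hypunu} by passing to $v\to\infty$ in $\partu{k}\nu=(\nu^{-1}\partial_u)^k\nu$ and converting the resulting iterated operator $(\bar\nu^{-1}d/ds)^k\bar\nu$ back to ordinary derivatives using the uniform lower bound on $|\bar\nu|$. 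The resulting Leibniz sum is then dominated term-by-term by products of the bounds \eqref{main11}--\eqref{main12} on $\lambda$ and the bounds on $F$, which recovers the corresponding estimates for $\tilde\lambda$; the argument for $\tilde\nu$ is completely symmetric.

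The principal obstacle is the bookkeeping in converting iterated gauge-invariant derivatives of $\bar\nu$ to ordinary derivatives, which requires an induction whose step uses both the lower-order bounds on $\bar\nu^{(j)}$ and the uniform two-sided bound on $|\bar\nu|$, together with care about the factors of $\lambda^{-1}$ generated each time $\partv{}$ hits a function of $v$ alone. Aside from this, every term in the Leibniz expansions is controlled by a quantity already in \cref{main1}, and no new analytic input is required.
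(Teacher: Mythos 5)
Your proof is correct and is essentially the same argument the paper gives, just fleshed out: the paper's very terse proof also rests on the chain-rule identities $\partial_{\tilde u} = -(2\bar\nu(u))^{-1}\partial_u$, $\partial_{\tilde v} = -(2\bar\nu(v))^{-1}\partial_v$ (which render $\partu{}, \partv{}$ gauge-invariant), the resulting comparability of the weights, and control of $\bar\nu$ and its derivatives via \eqref{main11}, \eqref{main13}; where the paper says ``it follows immediately,'' you supply the missing bookkeeping. One small imprecision worth flagging: since $\partv{j}F$ produces $\lambda^{-1}$ and $\partial_v^b\lambda$ factors that \emph{do} depend on $u$, the $\partu{}$ derivatives do not literally ``see only $\lambda$''; they also hit those factors, but those contributions are controlled by the mixed-derivative estimates \eqref{main12} with room to spare, so the argument still closes.
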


\begin{proof}
Recall that we obtain the coordinates $\tilde{u},\tilde{v}$ from $u,v$ by the transformation
\[ \tilde{u}(u,v) = -\int_1^{u} 2\bar\nu(u')du' \qquad \tilde{v}(u,v) = -2\int_1^v\bar\nu(v')dv', \]
where $\bar\nu(x) = \lim_{v \rightarrow \infty}\nu(x,v)$. It follows that
\[ \partial_{\tilde{u}} = \frac{\partial u}{\partial \tilde{u}}\partial_{u} \qquad \partial_{\tilde{v}} = \frac{\partial v}{\partial\tilde{v}}\partial_v. \]
By the above
\[ \frac{\partial u}{\partial \tilde{u}} = -\frac{1}{2\bar\nu(u)} \qquad \frac{\partial v}{\partial \tilde{v}} = -\frac{1}{2\bar\nu(v)}. \]
Then by the bounds \cref{main11,main13}, it follows immediately from \cref{main1} that any solution satisfying the hypotheses of \cref{main1} in \ref{gauge1} verifies the same decay estimates in \ref{gauge2} as well.
\end{proof}

\begin{remark}
Essentially the same argument allows us to pass between \ref{gauge3} and \ref{gauge1} without concern as well, since our transformation is nearly identical.
\end{remark}

Note that in \ref{gauge2} we can augment the bounds of \cref{main1} slightly, as we obtain some new control of $\partial_u^l \nu$. In particular we have the following:
\begin{lemma}\label{mainimprove}
Let $(r,\phi,m)$ be as in \cref{main1}, but presented in the gauge \ref{gauge2}. Then we have
\[ \abs{\partu{l}\nu} \lesssim \min\set*{(1+\abs{u})^{-(l+1)},(1+\abs{u})^{-l}v^{-1}}. \]
\end{lemma}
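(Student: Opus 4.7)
The first bound $(1+\abs{u})^{-(l+1)}$ is inherited directly from \cref{main1} via \cref{gaugeexc}, so the content of the statement is the improved estimate
\[ \abs{\partu{l}\nu}(u,v) \lesssim (1+\abs{u})^{-l}\, v^{-1}. \]
This improvement captures the extra $v$-decay forced by gauge \ref{gauge2}: because $\bar\nu(u) := \lim_{v\to\infty}\nu(u,v) = -\tfrac{1}{2}$ is independent of $u$, differentiating it in $u$ kills the leading behaviour, so $\partu{l}\nu$ ought to tend to zero as $v\to\infty$. The strategy is to make this precise by a fundamental-theorem-of-calculus argument, integrating $\partial_{v'}\partu{l}\nu$ from $v$ to $\infty$.

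I first show $\lim_{v\to\infty}\partu{l}\nu(u,v) = 0$ for every $l \geq 1$. Expanding $\partu{l} = (\nu^{-1}\partial_u)^l$ via Leibniz writes $\partu{l}\nu$ as a polynomial in $\nu^{-1}$ and the plain derivatives $\partial_u^k \nu$, $1 \leq k \leq l$, in which every monomial contains at least one factor $\partial_u^k \nu$ with $k\geq 1$. By \cref{main1}, $\abs{\partial_v \partial_u^k\nu}$ is integrable in $v$ with a uniform-in-$u$ bound, so $\partial_u^k\nu(u,v)$ converges uniformly in $u$ as $v\to\infty$; the limit must agree with $\partial_u^k\bar\nu(u) = 0$. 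Each monomial, and hence $\partu{l}\nu$, therefore vanishes in the limit.

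With the limit in hand,
\[ \partu{l}\nu(u,v) = -\int_v^\infty \partial_{v'}\partu{l}\nu(u,v')\,dv' = -\int_v^\infty \lambda(u,v')\,\partv{}\partu{l}\nu(u,v')\,dv', \]
so it suffices to prove the pointwise estimate $\abs{\partv{}\partu{l}\nu}(u,v') \lesssim (1+\abs{u})^{-l}(1+v')^{-2}$; integration then yields the claimed bound, using $\lambda \lesssim 1$. Decompose $\partv{}\partu{l} = \partu{l}\partv{} + [\partv{},\partu{l}]$. The principal term $\partu{l}\partv{}\nu$ is bounded directly by \eqref{main14} with $(\alpha_u,\alpha_v)=(l,1)$, whose second entry is exactly $(1+\abs{u})^{-l}(1+v')^{-2}$.

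For the commutator, use
\[ [\partv{},\partu{l}] = \sum_{j=0}^{l-1}\partu{j}\,[\partv{},\partu{}]\,\partu{l-1-j}, \qquad [\partv{},\partu{}] = -\nu^{-1}(\partv{}\nu)\partu{} + \lambda^{-1}(\partu{}\lambda)\partv{}, \]
and Leibniz-expand each summand. Every resulting monomial pays either a $\partv{}\nu$ factor, worth $(1+v')^{-2}$ by \eqref{main14}, or a $\partu{}\lambda$ factor, worth the extra $u$- and $v$-weights of \eqref{main12}; combined with the \cref{main1}-bounds on the interior derivatives of $\nu$, each term beats the target $(1+\abs{u})^{-l}(1+v')^{-2}$. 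The main obstacle is purely combinatorial: one must verify that, across all Leibniz distributions of interior $\partu{}$'s and $\partv{}$'s, the total $u$-weight never exceeds $(1+\abs{u})^{-l}$, which follows mechanically from the sharp joint $(u,v)$-weights built into \eqref{main12}--\eqref{main14}.
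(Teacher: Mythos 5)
Your proposal is correct and follows the same route as the paper: establish $\lim_{v\to\infty}\partu{l}\nu = 0$ from the gauge condition, integrate $\partial_{v'}\partu{l}\nu$ from $v$ to $\infty$ via the fundamental theorem of calculus, and invoke the second entry of \eqref{main14} at order $(\alpha_u,\alpha_v)=(l,1)$ to get $(1+\abs{u})^{-l}(1+v')^{-2}$ pointwise, then integrate. The paper states this more tersely ("follows immediately from \eqref{main14}"), leaving the uniform-convergence justification of the boundary limit and the reordering/commutator bookkeeping implicit, which you make explicit; this is fill-in, not a different argument.
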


\begin{proof}
In \ref{gauge2} we have $\lim_{v \rightarrow \infty} \nu(u,v) \equiv -\frac{1}{2}$. In particular we have $\lim_{v\rightarrow \infty}\partial_u^l \nu(u,v) = 0$ for any $u$, $l \leq k$ (since the convergence to $-\frac{1}{2}$ is uniform by construction). Thus we can write
\[ \partu{l}\nu(u,v) = -\int_v^{\infty}\partial_v\partu{l}\nu(u,v') dv'. \]
The desired bound then follows immediately from \eqref{main14}.
\end{proof}

Finally, \ref{gauge2} gives us control of the limiting values of $\lambda$ at $\mathcal{I}^-$:
\begin{proposition}\label{lambdalimit}
In \ref{gauge2} we have $\lim_{v \rightarrow \infty}\lambda(u,v) = \frac{1}{2}$.
\end{proposition}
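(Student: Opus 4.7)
The plan is to establish the limit $\bar\lambda(u) := \lim_{v\to\infty}\lambda(u,v)$ in three stages: existence, constancy in $u$, and identification of the value as $\frac{1}{2}$. The overall structure parallels the proof of \cref{mainimprove}.

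For existence, apply the bound $|\partial_v\lambda(u,v)| \lesssim (1+v)^{-2}$ from \eqref{main12} at order $\alpha_u = 0, \alpha_v = 1$ (valid in \ref{gauge2} by \cref{gaugeexc}). This is integrable along each constant-$u$ curve, so $\lambda(u,\cdot)$ is Cauchy at infinity and $\bar\lambda(u)$ exists.

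For constancy, the commutation $\partial_u\partial_v r = \partial_v\partial_u r$ combined with \eqref{SSESF} gives $\partial_u\lambda = \partial_v\nu$. Integrating and using $|\partial_v\nu| \lesssim (1+v)^{-2}$ from \eqref{main14},
\[
\bar\lambda(u_2) - \bar\lambda(u_1) = \lim_{v\to\infty}\int_{u_1}^{u_2}\partial_v\nu(u',v)\,du' = 0
\]
by bounded convergence on the compact $u$-interval $[u_1,u_2]$. Hence $\bar\lambda \equiv c$ is a constant in $u$.

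The main obstacle is pinning down $c = \frac{1}{2}$. The natural route is via \cref{gaugeexc}: repeating Stages 1--2 in \ref{gauge1}, where $\lambda_1 \equiv \frac{1}{2}$ on $C_1$ by definition of the gauge, forces the (G1) constant $\bar\lambda_1$ to equal $\frac{1}{2}$. Applying the coordinate change recorded after \ref{gauge2}, one has $\tilde\lambda(u,v) = -\lambda_1(u,v)/(2\bar\nu_1(v))$; taking $v\to\infty$ with $u$ fixed, $\lambda_1 \to \frac{1}{2}$ by the above, while a completely analogous argument for $\nu_1$ (together with the very defining condition of \ref{gauge2}) yields $\lim_{v\to\infty}\bar\nu_1(v) = -\frac{1}{2}$. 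Substituting gives $c = 1/2$. The subtlety here lies in the iterated limit defining $\bar\nu_1(v)$ and its behavior as $v\to\infty$; this is the symmetric counterpart to the defining property of \ref{gauge2} and reflects that the rescaling $\tilde u(u), \tilde v(v)$ is constructed from the \emph{same} function $\bar\nu_1$ applied to both null variables, which is what ultimately forces symmetric asymptotics $\lambda \to \tfrac{1}{2}, \nu \to -\tfrac{1}{2}$ at $\mathcal{I}^+$.
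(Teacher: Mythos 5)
Your Stages 1 and 2 are sound and nicely structured (though the bound $|\partial_v\lambda|\lesssim (1+v)^{-2}$ is really \eqref{main11} at $|\alpha|=1$, not \eqref{main12} at $\alpha_u=0$). The paper does not separately establish existence and constancy of the limit; it goes directly to quantitative estimates. That difference in organization is fine.

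The genuine gap is in Stage 3, at the claim $\lim_{v_1\to\infty}\bar\nu_1(v_1)=-\tfrac12$. You attribute this to ``a completely analogous argument for $\nu_1$'' together with the ``defining condition of \ref{gauge2},'' but neither actually delivers it. The analogous constancy argument fails for $\nu_1$: for $\lambda$ you used $\partial_u\lambda=\partial_v\nu$ with $|\partial_v\nu|\lesssim(1+v)^{-2}$, but the corresponding input $\partial_u\nu$ carries \emph{no} $v$-decay in \ref{gauge1} (only \eqref{main13}, giving $(1+|u|)^{-2}$; the additional $v$-decay is precisely what \cref{mainimprove} proves \emph{after} the gauge change, using the condition of \ref{gauge2} which is unavailable in \ref{gauge1}). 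So $\bar\nu_1$ need not be constant, and in fact is not. Meanwhile, the defining condition of \ref{gauge2} concerns the transformed $\tilde\nu$, which automatically satisfies $\lim_{v\to\infty}\tilde\nu=-\tfrac12$ by construction of the reparametrization; this is tautological and gives no information about the iterated limit $\lim_{v_1\to\infty}\bar\nu_1(v_1)$ of the \ref{gauge1} quantity, which is what you need.

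The missing ingredient, which the paper uses explicitly and which cannot be avoided, is the axis regularity condition $\lambda(u,u)=-\nu(u,u)$ on $\Gamma$. This is what transports the normalization of $\lambda$ (imposed on $C_1$ in \ref{gauge1}, or at $\mathcal{I}^+$ via constancy) over to $\nu$. The paper integrates $\partial_u\lambda$ back to the axis to relate $\lambda(u,v)$ to $-\nu(v,v)$, then controls $\nu(v,v)\to-\tfrac12$ via the bound on $(\partial_u+\partial_v)\nu$ and the gauge condition at $\mathcal{I}^+$. To repair your argument you would similarly need to establish $|\bar\nu_1(w)-\nu_1(w,w)|\lesssim (1+w)^{-1}$ (integrating $\partial_v\nu_1$, which does decay) and $\nu_1(w,w)\to -\tfrac12$ (via the axis relation and $\bar\lambda_1\equiv\tfrac12$); without the axis relation the value of the limit is simply not pinned down.
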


\begin{proof}
We have by \ref{gauge2} that $\lambda(u,u) = -\nu(u,u)$ for all $u \geq 1$. Moreover, by \cref{main1} we have that $\abs{\partial_u\lambda} \lesssim \min\set*{(1+\abs{u})^{-2},(1+\abs{u})^{-1}(1+v)^{-1}}$.

Integrating the intermediate bound $(1+\abs{u})^{-1}(1+v)^{-1}$ we conclude
\[ \abs{\lambda(u,v)} \lesssim -\nu(v,v) + \abs{\frac{\log(1+\abs{v})}{1+v}}. \]
Then we have $(\partial_u + \partial_v) \nu \lesssim (1+ \abs{u})^{-2}$, and $\lim_{t \rightarrow \infty}\nu(t,\hat{r}) = -\frac{1}{2}$ for any $\hat{r}$. Thus
\[ \abs{\nu(u,u) + \frac{1}{2}} \lesssim \frac{1}{1+\abs{u}}, \]
so we conclude that
\[ \abs{\lambda(u,v) - \frac{1}{2}} \lesssim \frac{1}{1+\abs{v}} + \frac{\log(1+\abs{v})}{1+v}, \]
and thus our limit holds.
\end{proof}

From this bound we have immediately the following corollary:

\begin{corollary}\label{sumlimitcor}
\[ \abs{\lambda+\nu} \lesssim \frac{\log(2+\abs{v})}{1+v}. \]
\end{corollary}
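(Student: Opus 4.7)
The plan is to write $\lambda+\nu = (\lambda-\tfrac12) + (\nu+\tfrac12)$ and bound each piece separately in terms of $v$, then invoke the triangle inequality.

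For the first piece, this was essentially already done inside the proof of \cref{lambdalimit}: the intermediate step there establishes the estimate
\[
\abs{\lambda(u,v) - \tfrac12} \lesssim \frac{1}{1+v} + \frac{\log(1+\abs{v})}{1+v},
\]
which is already of the desired form (up to absorbing $\tfrac{1}{1+v}$ into $\tfrac{\log(2+\abs v)}{1+v}$). So nothing new is needed for this term.

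For the $\nu$ piece, I would use the gauge \ref{gauge2}, which gives $\lim_{v\to\infty}\nu(u,v) = -\tfrac12$, together with the estimate \eqref{main14} applied with $\alpha_u=0, \alpha_v=1$. The latter yields $\abs{\partial_v\nu(u,v')} \lesssim (1+v')^{-2}$, which is integrable in $v'$ on $[v,\infty)$. Writing
\[
\nu(u,v) + \tfrac12 = -\int_v^\infty \partial_v\nu(u,v')\,dv'
\]
and integrating this bound in $v'$ gives $\abs{\nu(u,v)+\tfrac12} \lesssim (1+v)^{-1}$, which is stronger than needed. Combining the two estimates by the triangle inequality produces the claimed bound.

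Since both ingredients are already in hand (the $\lambda-\tfrac12$ bound literally appears inside the proof of \cref{lambdalimit}, and the $\nu+\tfrac12$ bound is an immediate integration of an estimate from \cref{main1} together with the gauge-defining limit), there is no real obstacle; the only minor subtlety is noting that the logarithmic loss in the $\lambda$ estimate (not the $\nu$ estimate) is what forces the $\log(2+\abs{v})$ factor in the final bound, so the corollary's decay rate is sharp for this argument.
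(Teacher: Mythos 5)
Your proof is correct and follows essentially the same route as the paper's implicit argument: decompose $\lambda+\nu = (\lambda-\tfrac12)+(\nu+\tfrac12)$, read off the logarithmic bound for the first piece from the intermediate estimate in the proof of \cref{lambdalimit}, and bound the second piece by integrating $\partial_v\nu$ from $v$ to $\infty$ using the gauge-defining limit and the $(1+v)^{-2}$ estimate from \cref{main1}. You also correctly identify that the logarithm is forced by the $\lambda$ term, which is the sharp feature of the estimate.
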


\subsection{Checking Dispersiveness}
Finally we are ready to check the conditions of \ref{dispsoln}. Note that conditions \ref{disp1},\ref{disp7} follow immediately from \cref{main1} and our choice of coordinates. It remains to carefully check the remaining conditions of \ref{dispsoln}. We will do this in two parts, first when $\abs{I} = 0$, then addressing separately the case $\Gamma^I$ acts on the term of interest.
\begin{proposition}
The bounds \ref{disp2}--\ref{disp8} hold for $\abs{I} = 0$.
\end{proposition}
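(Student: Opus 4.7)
The plan is to verify each of \ref{disp2}--\ref{disp8} by writing the relevant quantity in terms of the SSESF fields $(\Omega, r, \phi, m)$ on the null chart, transforming via $t = \hat u + \hat v$, $x^i = \hat r \cdot (\text{unit radial direction})$ (with $\hat r = \hat v - \hat u$), and then invoking the pointwise decay from \cref{main1} in the gauge \ref{gauge2} form provided by \cref{gaugeexc,mainimprove,lambdalimit,sumlimitcor}. In these coordinates $s = t + \tilde r = 2\hat v$ and $|q| = |\tilde r - t| = 2|\hat u|$, so the SSESF $v$-decay translates directly into $s$-decay and the $u$-decay into $|q|$-decay.

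Conditions \ref{disp5} and \ref{disp6} come essentially for free. For \ref{disp5}, the vector $L = \partial_t + \partial_{\tilde r}$ is a positive multiple of $\partial_{\hat v}$, which is null for both $g = -\Omega^2\, d\hat u\, d\hat v + r^2\, ds^2_{S^2}$ and for the Minkowski reference $m$, so $h_{LL} = g(L,L) - m(L,L) \equiv 0$; spherical symmetry then forces $g(L, E^i) = m(L, E^i) = 0$, so $h_{LE^i} \equiv 0$ as well. For \ref{disp6}, spherical symmetry of $\phi$ reduces $\bar\partial \phi$ to $\partial_{\hat v}\phi$ (angular derivatives annihilate spherically symmetric scalars), and the bound $|\partial_{\hat v}\phi|\lesssim (1+v)^{-2}$ from \eqref{main15} supplies the required $(1+s)^{-1-\gamma_0}$ rate; a general $\partial\phi$ additionally involves $\partial_{\hat u}\phi \lesssim (1+|u|)^{-2}$, giving the $(1+|q|)^{-\gamma_0}$ factor for any $\gamma_0 \leq 1$.

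For \ref{disp2}--\ref{disp4}, I would express $h$ in the $(t, x^i)$ frame: modulo the Jacobian of the $(\hat r,\theta,\varphi)\to x^i$ change of variables, the non-vanishing components are controlled by $\Omega^2 - 1$ and $r^2 - \hat r^2$ (with appropriate normalization). Writing $\Omega^2 - 1 = [-4\lambda\nu - (1-\mu)]/(1-\mu)$ and expanding around $\lambda = \tfrac{1}{2}$, $\nu = -\tfrac{1}{2}$, I would combine \cref{sumlimitcor} (which provides the $\log(2+v)/(1+v)$ control on $\lambda + \nu$ arising from the Bondi-mass contribution) with the gauge-enforced $|\nu + \tfrac{1}{2}|\lesssim (1+v)^{-1}$ from \cref{mainimprove} and $|\mu|\lesssim (1+v)^{-1}$ from \cref{main1} to conclude $|\Omega^2 - 1|\lesssim \log(2+s)/(1+s)$; the analogous bound for $r - \hat r$ follows from integrating $\lambda - \tfrac{1}{2}$ in $v$. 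Condition \ref{disp3} then follows by decomposing the Cartesian derivatives into $\partial_{\hat u}, \partial_{\hat v}$, since each $\partial_{\hat u}$ extracts an extra $(1+|\hat u|)^{-1}$ by \cref{main1}; condition \ref{disp4} uses that $\bar\partial$ corresponds to $\partial_{\hat v}$ plus angular derivatives (which annihilate spherically-symmetric quantities), with $\partial_{\hat v}$ gaining an extra $(1+v)^{-1}$ in \cref{main1}.

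Finally, for \ref{disp8} I would compute $\Box_g x^\mu = |\det g|^{-1/2}\partial_\alpha\bigl((g^{-1})^{\alpha\mu}|\det g|^{1/2}\bigr)$ by expressing $\det g$ and $g^{-1}$ in terms of $\Omega^2, r, \hat r$; the resulting expression reduces to ratios involving first derivatives of $\Omega^2$ and $r^2$, whose dominant asymptotic contribution is again driven by $\lambda + \nu$ and so inherits the $\log(2+s)/(1+s)^2$ bound with faster-decaying corrections. The main technical obstacle I anticipate is the logarithmic bookkeeping in \ref{disp2} and \ref{disp8}: a naive $u$-uniform application of $|\lambda - \tfrac{1}{2}|\lesssim \log(2+v)/(1+v)$ produces $\log^2$ growth when integrated in $v$ to obtain $r - \hat r$, whereas only a single $\log$ is allowed. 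To avoid this I would use the sharper region-dependent estimate $|\lambda(u,v) - \tfrac{1}{2}|\lesssim \log\bigl((1+v)/(1+|u|)\bigr)/(1+v)$ implicit in the proof of \cref{lambdalimit}, splitting the integration according to whether $|u|$ is comparable to $v$ or much smaller; this preserves the single-log rate, and the remaining bounds close routinely from \cref{main1}.
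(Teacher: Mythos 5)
Your overall strategy matches the paper's: express the metric components in $(t,x^i)$ coordinates, reduce control of $h$ to control of $\Omega^2-1$ and $r^2/\hat r^2-1$ (and the remainder), observe that $h_{LL}$ and $h_{L\mathcal T}$ vanish identically by algebraic cancellation, and get \ref{disp3}--\ref{disp4} and \ref{disp6} from the differing $u$- and $v$-decay rates in \cref{main1}. Your proposed fix for the logarithmic bookkeeping is also consistent with the paper, though the paper sidesteps the $\log^2$ issue more simply by bounding $E_\lambda := \lambda - \tfrac12$ directly by $\max\{(1+|u|)^{-1},(1+v)^{-1}\}$ rather than by $\log(2+v)/(1+v)$.

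However, there is a genuine gap: you do not address the near-axis region, and it is not routine. The Cartesian coordinate system degenerates at $\hat r = 0$, and the spatial metric components are
\[
(h_B)_{ij} = \delta_{ij}\Bigl(\tfrac{r^2}{\hat r^2}-1\Bigr) + \frac{1}{\hat r^2}\Bigl(\Omega^2 - \tfrac{r^2}{\hat r^2}\Bigr) x_i x_j .
\]
The second term carries an explicit $\hat r^{-2}$, so to obtain \ref{disp2}--\ref{disp4} (and the corresponding terms in \ref{disp8}) one must prove that $\Omega^2 - r^2/\hat r^2$ vanishes to second order at $\Gamma$. This rests on the cancellation $-4\lambda_0\nu_0 = (\lambda_0-\nu_0)^2$ (valid because $\lambda = -\nu$ on $\Gamma$), and on controlling radial averages such as $\frac{1}{\hat r}\int_0^{\hat r}\partial_{\hat r}(\lambda\nu)\,d\hat r'$ using the vanishing of $\partial_{\hat r}(\lambda\nu)$ and $\partial_{\hat r}(\lambda-\nu)$ on the axis; this is precisely the content of \cref{axisregkey,axisregextra} and \cref{XboundCor,Xboundavgcor}. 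The phrase ``with appropriate normalization'' in your sketch conceals exactly this step. The same issue arises for $\Box_g t = -\Omega^{-2}(\lambda+\nu)/r$, where one must use $\lambda+\nu|_\Gamma=0$ (from \ref{gauge2}) and an averaging argument to cancel the $1/r$ singularity. Without these ingredients the proposal cannot close; your away-from-axis reasoning is sound but accounts for only half the proposition.
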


\begin{proof}
We begin with the components of the metric $h_B$. Recall that in null coordinates the metric has the form:
\[ \frac{-\Omega^2}{2}(du\otimes dv + dv\otimes du) + r^2d\gamma_{S^2}. \]
Thus in our $(t,\mathbf{x})$ coordinates described above the metric has the following components:
\[ \mathbf{g}_{tt} = -\Omega^2 \qquad \mathbf{g}_{it} = 0 \qquad \mathbf{g}_{ij} = \delta_{ij}\frac{r^2}{\hat{r}^2} + \frac{1}{\hat{r}^2}\Lx \Omega^2 - \frac{r^2}{\hat{r}^2} \Rx x_ix_j, \]
where $\Omega^2 = \frac{-4\lambda\nu}{1-\mu}$, and $\delta_{ij}$ is the Kronecker $\delta$.

So the components of the background-subtracted metric $h_B$ are:
\[ (h_B)_{tt} = -(\Omega^2-1) \qquad (h_B)_{it} = 0 \qquad (h_B)_{ij} = \delta_{ij}\Lx\frac{r^2}{\hat{r}^2}-1\Rx + \frac{1}{\hat{r}^2}\Lx \Omega^2 - \frac{r^2}{\hat{r}^2} \Rx x_ix_j. \]
We'll begin with the necessary estimates near the axis ($\hat{r} \leq 1$).

There are three terms we must control:
\begin{gather}
	\label{hb1} \Omega^2-1,\\
	\label{hb2} \frac{r^2}{\hat{r}^2}-1,\\
	\label{hb3} \frac{1}{\hat{r}^2}\Lx \Omega^2 - \frac{r^2}{\hat{r}^2} \Rx.\\
\end{gather}

We start with \eqref{hb1}. Recall that $\Omega^2 = \frac{-4\lambda\nu}{1-\mu}$. By \eqref{mapriori} we can write this as
\[ \Omega^2 = -4\lambda\nu \sum_{n = 0}^{\infty} \mu^{n}, \]
and thus
\[ \Omega^2 - 1 = (-4\lambda\nu - 1) - 4\lambda\nu\sum_{n = 1}^{\infty} \mu^n. \]
Then by \eqref{mapriori} $\abs{\mu^n} \lesssim r^{3n-1}(1+\abs{u})^{-6n}$, so this latter term immediately satisfies the required decay for $r \leq 1$, since $\lambda\nu$ is bounded, and $t \leq 2(\abs{u}+1)$ in this region. It thus remains to control $-4\lambda\nu - 1$. Observe that by our construction of the gauge \ref{gauge2} we have that
\[ \lim_{t \rightarrow \infty} \lambda(t,0) = -\lim_{t\rightarrow \infty} \nu(t,0) = \frac{1}{2}. \]
Thus $\lim_{t\rightarrow \infty}\lambda\nu(t,0) = -\frac{1}{4}$. So we can write
\[ (-4\lambda\nu - 1)(t,0) = 4\int_{t}^{\infty} \partial_t(\lambda\nu) dt. \]
Observe that since we are near $\Gamma$ we are WLOG in the region $\mathcal{Q}$ upon projection. Thus by the results of \cref{Qproof} we have that $\partial_t(\lambda\nu) \lesssim (1+\abs{u})^{-3} \lesssim t^{-3}$. We thus obtain
\[ \abs{-4\lambda\nu - 1}(t,0) \lesssim \frac{1}{(1+v)}, \]
since $t \sim v$ in the region $\hat{r} \leq 1$. Moreover, by the results of \cref{axisregsec} and \cref{Qproof} we have that
\[ \abs{\partial_i\lambda\nu} \lesssim (1+\abs{u})^{-3}, \]
so we conclude that for all $\abs{x} \leq 1$,
\[ \abs{-4\lambda\nu - 1}(t,x) \lesssim \frac{1}{(1+v)^2} \]
as well. Thus we have the necessary control of \eqref{hb1}.

Next we consider the term \eqref{hb2}. Observe that we can write (reducing to 2-dimensions by spherical symmetry):
\[ r(t,x) = \int_0^{\abs{x}} (\lambda - \nu)(t,\hat{r}) d\hat{r}. \]
Let $\lambda_0(t,x) = \lambda(t,0)$, and $\nu_0(t,x) = \nu(t,0)$. Then we can rewrite the above as:
\[ r(t,x) = (\lambda_0 - \nu_0)\hat{r} + \int_0^{\abs{x}} (\lambda - \nu)(t,\hat{r}) - (\lambda_0 - \nu_0)(t,\hat{r})d\hat{r}, \]
and so we have:
\begin{dmath*} \frac{r^2}{\hat{r}^2}(t,x) = (\lambda_0 - \nu_0)^2 + 2(\lambda_0 - \nu_0)\frac{1}{\hat{r}}\int_0^{\abs{x}} (\lambda - \nu)(t,\hat{r}) - (\lambda_0 - \nu_0)(t,\hat{r})d\hat{r} + \Lx \frac{1}{r}\int_0^{\abs{x}} (\lambda - \nu)(t,\hat{r}) - (\lambda_0 - \nu_0)(t,\hat{r})d\hat{r} \Rx^2. \end{dmath*}
Thus similar to the above we must control the terms $(\lambda_0 - \nu_0)^2 - 1$ and $\frac{1}{\hat{r}}\int_0^{\abs{x}} (\lambda - \nu)(t,\hat{r}) - (\lambda_0 - \nu_0)(t,\hat{r})d\hat{r}$.

We begin with the former. By the above, we have that $\lim_{t\rightarrow \infty} \lambda_0 = lim_{t\rightarrow \infty} -\nu_0 = \frac{1}{2}$. Thus, we have
\[ \abs{(\lambda_0 -\nu_0)^2 - 1}(t,x) = \int_{t}^{\infty} \partial_t(\lambda - \nu)(t',0) dt'. \]
By \cref{Qproof} as above $\abs{\partial_t(\lambda - \nu)(t,0)} \lesssim (1 + \abs{u})^{-3}$, and thus
\[ \abs{(\lambda_0 -\nu_0)^2 - 1}(t,x) \lesssim \frac{1}{(1+v)^2} \]
as required. We can thus move to our other term.

In this case we have
\[ \abs{\frac{1}{\hat{r}}\int_0^{\abs{x}} (\lambda - \nu)(t,\hat{r}) - (\lambda_0 - \nu_0)(t,\hat{r})d\hat{r}} = \abs{\frac{1}{\hat{r}}\int_0^{\abs{x}}\int_0^{\hat{r}} \partial_{\hat{r}}(\lambda - \nu)(t,\hat{r}')d\hat{r}'d\hat{r}} \]
by \cref{main1}. By arguments of \cref{axisregsec} we in fact have that $\partial_{\hat{r}}(\lambda - \nu)\mid_{\Gamma} = 0$, thus we can write
\[ \partial_{\hat{r}}(\lambda-\nu)(t,x) = \int_0^{\abs{x}} \partial_{\hat{r}}^2(\lambda - \nu)(t,\hat{r}) d\hat{r}. \]
Substituting this in above, and using the bounds of \cref{main1} we have
\[ \abs{\frac{1}{\hat{r}}\int_0^{\abs{x}} (\lambda - \nu)(t,\hat{r}) - (\lambda_0 - \nu_0)(t,\hat{r})d\hat{r}} \lesssim \hat{r}^2(1+v)^{-3} \]
satisfying the required bound for $\abs{x} \leq 1$. This bounds \eqref{hb2}.

Finally, we move to \eqref{hb3}. We start with just $\Omega^2 - \frac{r^2}{\hat{r}^2}$. Now we can write:
\begin{dmath*} \Lx\Omega^2 - \frac{r^2}{\hat{r}^2}\Rx = -4\lambda_0\nu_0 - 4\int_0^{\abs{x}} \partial_{\hat{r}}(\lambda\nu)(t,\hat{r}) d\hat{r} - 4\lambda\nu\sum_{n = 1}^{\infty} \mu^n - (\lambda_0 - \nu_0)^2 - 2(\lambda_0 - \nu_0)\frac{1}{\hat{r}}\int_0^{\abs{x}} (\lambda - \nu)(t,\hat{r}) - (\lambda_0 - \nu_0)(t,\hat{r})d\hat{r} - \Lx \frac{1}{r}\int_0^{\abs{x}} (\lambda - \nu)(t,\hat{r}) - (\lambda_0 - \nu_0)(t,\hat{r})d\hat{r} \Rx^2. \end{dmath*}
Note that since $\lambda = -\nu$ on $\Gamma$ we have that $-4\lambda_0\nu_0 = 4\lambda_0^2 = (2\lambda_0)^2 = (\lambda_0-\nu_0)^2$ so these constant terms vanish. By the above, it remains only to estimate
\[ \int_0^{\abs{x}}\partial_{\hat{r}}(\lambda\nu)(t,\hat{r})d\hat{r}. \]
By the arguments presented in \cref{axisregsec} it follows that
\[ \partial_{\hat{r}}(\lambda\nu)(t,\hat{r}) = \int_0^{\hat{r}} \partial_{\hat{r}}^2 (\lambda\nu)(t,\hat{r}')d\hat{r}', \]
so we have
\[ \abs{\int_0^{\abs{x}}\partial_{\hat{r}}(\lambda\nu)(t,\hat{r})d\hat{r}} \lesssim \hat{r}^2(1+v)^{-3}. \]

Putting this together, we see that
\[ \abs{\frac{1}{\hat{r}^2}\Lx \Omega^2 - \frac{r^2}{\hat{r}^2} \Rx} \lesssim (1+v)^{-2},  \]
and thus satisfies the required bound.

This establishes the condition \ref{disp2} close to the axis, for $\abs{I} = 0$. In fact, with the strength of the estimates above, this also establishes \ref{disp5} under the same extra conditions. It remains to address \ref{disp3} and \ref{disp4}.

In this case, since $u \sim v$ in the finite $\hat{r}$ region we can address each term at the same time. As above, there are three distinct terms to deal with:
\begin{gather}
\Omega^2 - 1,\\
\frac{r^2}{\hat{r}^2} - 1,\\
\label{hb32} \frac{1}{\hat{r}^2}\Lx \Omega^2 - \frac{r^2}{\hat{r}^2} \Rx x_ix_j.
\end{gather}
It suffices to control $\partial_{\alpha}(h_B)_{\mu\nu}$ (as the angular terms in $\bar\partial$ can only act on the $x_ix_j$ which does not affect the decay at all).
In the first two cases, the result follows directly from the analysis of \cref{XboundCor}. In the case \eqref{hb32} this instead follows from \cref{Xboundavgcor}, and the above work to show \ref{disp2}.

With this completed, it remains to establish the required estimates in the region $\hat{r} > 1$. In this region we need not worry about potential singularities in our metric components, since everything is uniformally bounded and smooth away from the axis. Once more we have three types of terms that we must bound:
\begin{gather}
\Omega^2 - 1,\\
\frac{r^2}{\hat{r}^2} - 1,\\
\Lx \Omega^2 - \frac{r^2}{\hat{r}^2} \Rx x_ix_j = \Lx \Omega^2 - \frac{r^2}{\hat{r}^2} \Rx S_i(\theta,\phi)S_j(\theta,\phi),
\end{gather}
where $S_j(\theta,\phi) = \frac{x_i}{\hat{r}}$ is the completely angular part of $x_i$ written in $(t,\hat{r},\theta,\phi)$ coordinates. Note that
\[ \partial_iS_j = \frac{\delta_{ij}}{\hat{r}} - \frac{S_iS_j}{\hat{r}} \sim \frac{1}{\hat{r}}. \]

We first establish \ref{disp2} for each of these terms then move on to each of \ref{disp3}--\ref{disp5}

We'll begin with $\Omega^2 - 1$. Here we have
\[ \Omega^2(u,v) - 1 = \frac{-4\lambda\nu}{1-\mu} - 1. \]
By \cref{lambdalimit}, and \cref{main1} we can write
\[ \lambda(u,v) = \frac{1}{2} + E_{\lambda}(u,v) \qquad \nu(u,v) = -\frac{1}{2} + E_{\nu}(u,v), \]
where $\abs{E_\lambda} \lesssim \frac{1}{1+v} + \frac{\log(1+\abs{u})}{1+v}$, and $\abs{E_\nu} \lesssim \frac{1}{1+v}$. Thus we can write:
\[ \Omega^2(u,v) = \Lx 1 + E_{\lambda} + E_{\nu} + E_{\lambda}E_{\nu} \Rx \sum_{n = 0}^{\infty} \mu^n. \]
By \cref{main1}, terms with $n \geq 1$ immediately satisfy the bounds of \ref{disp2} since the prefactor is uniformly bounded. The remaining term is
\[ 1 + E_{\lambda} + E_{\nu} + E_{\lambda}E_{\nu} - 1 = E_{\lambda} + E_{\nu} + E_{\lambda}E_{\nu}, \]
but then as above, each of these also satisfy the bounds of \ref{disp2}, since $\abs{u} \lesssim \abs{v}$ in $\mathbb{I}$.

Next we move on to $\frac{r^2}{\hat{r}^2} - 1$, so we begin by showing that this, in fact, vanishes for large $\hat{r}$. Note that we have
\[ r(t,\hat{r}) = \int_0^{\hat{r}} \lambda - \nu d\hat{r}'. \]
Then as above we can write
\[ \lambda - \nu = 1 + E_{\lambda} - E_{\nu}, \]
so we have
\[ r - \hat{r} = \int_0^{\hat{r}} E_\lambda - E_\nu d\hat{r}'. \]
By \cref{main1} and the arguments of \cref{lambdalimit}, we have that
\[ \abs{E_{\lambda}} \lesssim \max\set*{(1+\abs{u})^{-1},(1+v)^{-1}} \qquad \abs{E_{\nu}} \lesssim (1+v)^{-1}, \]
and moreover, in $\mathbb{I}$ we have $\hat{r} \leq v, \abs{u} \leq \hat{r}$, so this is bounded by
\[ \abs{r - \hat{r}} \lesssim \log(2+\hat{r}), \]
so we can write
\[ r = \hat{r} + E_{r} \]
where $\abs{E_r} \lesssim \log(2+\hat{r})$. Thus we conclude:
\[ \abs{\frac{r^2}{\hat{r}^2} - 1} \lesssim \frac{\log(2+\hat{r})}{\hat{r}}. \]
Above, we already showed that this term is bounded by $\frac{\hat{r}^2}{(1+v)^3}$, so, applying this new bound in the region $\mathbb{I} \setminus \mathcal{Q}$ we have the required control of this term.

Finally, we consider $\Lx \Omega^2 - \frac{r^2}{\hat{r}^2} \Rx S_iS_j$, but this is immediately controlled by combining our work above for $\Omega^2 - 1$ and $\frac{r^2}{\hat{r}^2} - 1$, and the boundedness of the $S_i$'s.

Now we address \ref{disp3}. Thus we must establish the bound
\[ \abs{\partial_t h_{\mu\nu}} + \abs{\sum \partial_ih_{\mu\nu}} \lesssim \frac{1}{(1+v)(1+\abs{u})^{\gamma_0}} \]
for some $\gamma_0 > 0$. In fact we obtain this for any $\gamma_0 < 1$. In the case of $\partial_t$ all of the required bounds hold immediately by \cref{main1}, and \cref{sumlimitcor}. For $\partial_i$, if we allow this to act on a $\Omega^2$ then this again immediately satisfies our bound by \cref{main1}. If instead we allow this to act on some $S_iS_j$ term, then this simply multiplies $\Lx \Omega^2 - \frac{r^2}{\hat{r}^2} \Rx$
by a term proportional to $\frac{1}{\hat{r}}$. By our above work to prove \ref{disp2} this gives us the required bound as well.

Finally, we must deal with the case $\partial_i\frac{r^2}{\hat{r}^2}$. In this case we can expand:
\[ \partial_i\frac{r^2}{\hat{r}^2} = \frac{2rS_i(\lambda - \nu)}{\hat{r}^2} - \frac{2S_ir^2}{\hat{r}^3} = 2\frac{S_i}{\hat{r}}\Lx \frac{\hat{r}r(\lambda - \nu) - r^2}{\hat{r}^2} \Rx. \]
It thus suffices to establish an estimate for $\hat{r}(\lambda - \nu) - r$. Recalling what we've done above we can write this as
\[ \hat{r}(\lambda - \nu) - r = \hat{r} + (E_{\lambda} - E_{\nu})\hat{r} - \hat{r} + \int_0^{\hat{r}} E_{\lambda} - E_{\nu}d\hat{r}', \]
so we are left with:
\[ (E_{\lambda} - E_{\nu})\hat{r} + \int_0^{\hat{r}} E_{\lambda} - E_{\nu}d\hat{r}'. \]
Thus the overall term can be written as
\[ 2\frac{S_i}{\hat{r}}\frac{r}{\hat{r}}\Lx (E_{\lambda} - E_{\nu}) + \frac{1}{\hat{r}}\int_0^{\hat{r}} E_{\lambda} - E_{\nu}d\hat{r}' \Rx. \]
By the results of \cref{oproof} in the region $\mathcal{Q}$ $E_{\lambda}$ and $E_{\nu}$ are controlled by $\frac{1}{t^2}$, so our term is immediately bounded by
\[ \frac{1}{t^2\hat{r}}. \]
On the other hand, using the unversal bounds of \cref{main1} as we did above, we obtain the bound
\[ \frac{\log(2+\hat{r})}{\hat{r}^2}. \]
Together, these give us the required bound in the region $\hat{r} > 1$, so we conclude that \ref{disp3} holds.

Next we consider \ref{disp4}. Here we must bound $\abs{\partial_v h} + \abs{\slashed{\nabla} h}$, we will treat these terms separately, beginning with $\partial_v h = (\partial_t + \partial_r)h$. We have by \cref{main1} that $\partial_v \Omega^2$ immediately satisfies the required estimates. In the case of $\frac{r^2}{\hat{r}^2}$ we have
\[ \partial_v\frac{r^2}{\hat{r^2}} = \frac{2r\hat{r}\lambda - r^2}{\hat{r}^3}. \]
Now similar to the above we must bound
\[ 2\hat{r}\lambda - r = \hat{r} + 2\hat{r}E_{\lambda} - \hat{r} + \int_{0}^{\hat{r}} (E_{\lambda} - E_{\nu})d\hat{r}',  \]
and thus the resulting term is
\[ \frac{r}{\hat{r}}\Lx \frac{2}{\hat{r}}E_{\lambda} - \frac{1}{\hat{r}^2}\int_{0}^{\hat{r}} (E_{\lambda} - E_{\nu})d\hat{r}' \Rx. \]
As above, this is bounded by $\frac{1}{t^2\hat{r}}$, and $\frac{\log(2+\hat{r})}{\hat{r}^2}$ which gives us the overall required bound.

Finally, $\partial_v$ acts as 0 on the $S_i$ as these are totally angular.

Thus we are left to deal with $\slashed{\nabla} h$. $\slashed{\nabla}$ acts non-trivially only on non-spherically symmetric terms, so this can only affect the terms $x_ix_j$ in our metric. On these terms the operator acts as:
\[ \slashed{\nabla} x_i = S_j - S_k \]
for $j,k \neq i$. In particular, this serves to multiply by an additional factor of $\frac{1}{\hat{r}}$. Thus we can apply our work above for \ref{disp2}, and immediately conclude \ref{disp4}.

Finally, we must deal with \ref{disp5}. All terms without a $\Gamma$ are identically 0 in this case thanks to simple algebraic cancellations.

Now we move to $\phi$ and \ref{disp6}. Again we are intersted only in the case where $\abs{I} = 0$. We begin near the axis. For both $\abs{\partial \phi}$ and $\abs{\bar\partial\phi}$, the required bounds of \ref{disp6} hold immediately by \cref{XboundCor}, since $u \sim v$. Away from the axis, we can directly apply the results of \cref{main1}. In particular, $\abs{\partial \phi} \lesssim \abs{\partial_u \phi}$, and $\abs{\bar{\partial \phi}} = \abs{\partial_v \phi}$, since the $\phi$ is spherically symmetric, so the angular portion of $\bar\partial$ vanishes. Thus the desired estimates follow directly from \cref{main1}.

%%Recheck carefully later
Finally, we must deal with \ref{disp8}. We must begin by computing $\Box_{\textbf{g}}$ in the first place. Using our computed metric components above, we see through a bit of algebra that
\[ \sqrt{-\det \textbf{g}} = \Omega^2\frac{r^2}{\hat{r}^2} \]
We can also compute the components of the inverse metric $(g^{-1})^{\alpha\beta}$:
\begin{gather*}
(\textbf{g}^{-1})^{00} = -\Omega^{-2},\\
(\textbf{g}^{-1})^{0i} = 0,\\
(\textbf{g}^{-1})^{ij} = \delta_{ij}\frac{\hat{r}^2}{r^2} + \frac{1}{\hat{r}^2}\Lx \Omega^{-2} - \frac{\hat{r}^2}{r^2} \Rx x_ix_j.
\end{gather*}
Thus we have
\[ \Box_{\textbf{g}} = \Omega^{-2}\frac{\hat{r}^2}{r^2}\partial_{\alpha}\Lx (g^{-1})^{\alpha\beta} \Omega^{2}\frac{r^2}{\hat{r}^2} \partial_{\beta} \Rx, \]
and there are four terms we must address:
\begin{gather}
\label{wavet0} \Box_{\textbf{g}} t = -\Omega^{-2}\frac{\hat{r}^2}{r^2}\partial_t \frac{r^2}{\hat{r}^2},\\
\label{wavexi} \Box_{\textbf{g}} x^j = \Omega^{-2}\frac{\hat{r}^2}{r^2}\partial_i\Lx\delta_{ij}\Omega^2 + \frac{1}{\hat{r}^2}\Lx \frac{r^2}{\hat{r}^2} - \Omega^2 \Rx x_ix_j\Rx,
\end{gather}
for $j = 1,2,3$. As above, we separate the near axis and large $\hat{r}$ estimates.

We begin with \eqref{wavet0}, near the axis ($\hat{r} < t/2$). We can rewrite this as:
\[ \Box_{\textbf{g}} t = -\Omega^{-2}\frac{r(\lambda + \nu)}{r^2} = -\Omega^{-2}\frac{\lambda+\nu}{r}. \]
$\Omega^{-2}$ is uniformly bounded, so we we must control $\frac{\lambda+\nu}{r}$. By \ref{gauge2} we have that $\lambda + \nu\mid_{\Gamma} = 0$, we can write by our averaging operator:
\[ \frac{\lambda + \nu}{r}(t,\hat{r}) = \frac{\hat{r}}{r}\frac{1}{\hat{r}}\int_0^{\hat{r}} \partial_{\hat{r}}(\lambda + \nu)(t,\hat{r}')d\hat{r}'. \]
We show in \cref{axisregsec} that $\frac{\hat{r}}{r}$ is bounded, so it suffices to control the averaged term. Then by \cref{main1} we have
\[ \abs{\partial_{\hat{r}}(\lambda+\nu)} \lesssim (1 + \abs{u})^{-2}. \]
Thus, integrating, $\Box_{\textbf{g}}t$ satisfies the required bound.

Away from the axis ($\hat{r} \geq t/2$), we must control $\partial_t\frac{r^2}{\hat{r}^2}$, since once again the pre-factor is only uniformly bounded. This is:
\[ \frac{2r(\lambda + \nu)}{\hat{r}^2}. \]
By \cref{sumlimitcor} we have that $\abs{\lambda + \nu} \lesssim \frac{\log(2+\abs{u})}{(1+v)}$, so we immediately obtain an overall bound of
\[ \abs{\Box_{\textbf{g}}t} \lesssim \frac{\log(2+\abs{u})}{(1+v)\hat{r}}. \]
This is good enough outside of $\mathcal{Q}$, and moreover, using the better bounds which hold in $\mathcal{Q}$ by \cref{main1} in this region, we in fact have
\[ \abs{\Box_{\textbf{g}}t} \lesssim \frac{\log(2+\abs{u})}{(1+v)^2\hat{r}}, \]
which gives us the required bound everywhere.

Finally, we turn to \eqref{wavexi}, beginning near the axis. As before $\Omega^{-2}\frac{\hat{r}^2}{r^2}$ is uniformly bounded, so we are concerned only with the differential terms. In each case we are thus left with
\[ \partial_i \Omega^2 + \sum_{j = 1,2,3}x_ix_j\partial_i\Lx \frac{1}{\hat{r}^2}\Lx \frac{r^2}{\hat{r}^2} - \Omega^2 \Rx  \Rx + \frac{1}{\hat{r}^2}\Lx \frac{r^2}{\hat{r}^2} - \Omega^2 \Rx x_j. \]
We have already obtained sufficient control of $\frac{1}{\hat{r}^2}\Lx \frac{r^2}{\hat{r}^2} - \Omega^2 \Rx x_j$ above, so we need only be concerned with $\partial_i \Omega^2$ and $\partial_i\Lx \frac{1}{\hat{r}^2}\Lx \frac{r^2}{\hat{r}^2} - \Omega^2 \Rx \Rx x_ix_j$. Each of these quantities are adequately controlled by \cref{main1} and \cref{axisregsec}, so again we can move to the case away from the axis.

Observe that in fact
\[ \partial_i(\delta_{ij}\Omega^2 - \Omega^2S_iS_j) = -\Omega^2\partial_i(S_iS_j). \]
Thus we are left to control the terms:
\[ \Lx\partial_i \frac{r^2}{\hat{r}^2}\Rx S_iS_j = 2\frac{r\hat{r}(\lambda-\nu) - r^2}{\hat{r}^3}S_j, \]
and
\[ \Lx\frac{r^2}{\hat{r}^2} - \Omega^2\Rx\partial_i(S_iS_j). \]
But these two terms are already adequately bounded by our work to conclude \ref{disp2} and \ref{disp3} above, so there is nothing left to do.
\end{proof}

\begin{proposition}
The bounds \ref{disp2}--\ref{disp8} hold for $\abs{I} \leq k$.
\end{proposition}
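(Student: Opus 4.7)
The plan is to induct on $\abs{I}$ with the base case $\abs{I} = 0$ handled by the previous proposition. The key geometric observation is that the Minkowski commuting vector fields decompose cleanly with respect to spherical symmetry. Rotations $x^i\partial_j - x^j\partial_i$ annihilate spherically symmetric scalars and act only algebraically on the angular tensor factors $x^i$, $S_i = x^i/\hat r$, and $\delta_{ij}$ that appear in the metric components. Translations, boosts $t\partial_i + x^i\partial_t$, and the scaling $S$, when applied to a spherically symmetric $f(\hat u,\hat v)$, reduce to
\[ \partial_t f = \tfrac{1}{2}(\partial_{\hat u} + \partial_{\hat v})f, \quad Sf = \hat u\,\partial_{\hat u}f + \hat v\,\partial_{\hat v}f, \quad (t\partial_i + x^i\partial_t)f = \tfrac{x^i}{\hat r}(\hat v\,\partial_{\hat v} - \hat u\,\partial_{\hat u})f. \]
In particular, boosts and scaling produce weighted operators $\hat u\,\partial_{\hat u}$ and $\hat v\,\partial_{\hat v}$, and these preserve the bounds of \cref{main1} (translated to \ref{gauge2} via \cref{gaugeexc,mainimprove}) because each factor of $\hat u$ (resp.\ $\hat v$) is exactly cancelled by the power of $(1+\abs{\hat u})^{-1}$ (resp.\ $(1+\hat v)^{-1}$) gained from the corresponding derivative.

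Each metric component $(h_B)_{\mu\nu}$ is a finite sum of terms of the schematic form $F(\hat u,\hat v)\,T(x)$, where $F$ is one of the spherically symmetric scalars $\Omega^2 - 1$, $r^2/\hat r^2 - 1$, or $\hat r^{-2}(\Omega^2 - r^2/\hat r^2)$, and $T$ is a polynomial in $x^i/\hat r$; similarly $\tilde\phi$ is purely scalar, and the formulas for $\Box_{\textbf g} x^\mu$ derived in the $\abs{I}=0$ proposition are again of this form. Expanding $\Gamma^I$ by the Leibniz rule, I obtain a finite sum of products of (i) weighted scalar derivatives of $F$, bounded directly by \cref{main1,sumlimitcor} with the same decay as in the $\abs{I}=0$ case, and (ii) derivatives of $T$, which remain in the algebra generated by $x^i/\hat r$ with at most compensating factors of $\hat r$ from boosts and scaling, always absorbed by the $\hat r^{-k}$ weights built into $F$. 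The null structure required for \ref{disp3}--\ref{disp5} is preserved because $\partial_{\hat v}$ always produces $\bar\partial$-type (good) decay while $\partial_{\hat u}$ produces $\partial$-type decay, and no Minkowski vector field mixes these two families after the decomposition above.

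Near the axis ($\hat r \leq 1$), the potential $1/\hat r$ singularities are handled exactly as in the $\abs{I}=0$ case: rotations annihilate the dangerous spherically symmetric singular parts, each boost or scaling applied to a singular term contributes a cancelling factor of $\hat r$, and the residual $\hat r^{-1}$-weighted integrals are controlled by iterated versions of the averaging operator $I_{\hat r}^s$ of \cref{avglemma}, packaged via the axis-regularity machinery of \cref{axisregsec}. The main obstacle I anticipate is the combinatorial bookkeeping of the many Leibniz terms, especially when several boosts or scalings are composed and their $x^i/\hat r$ prefactors interact with later rotations; the most delicate case is \ref{disp8}, where the structure of $\Box_{\textbf g} x^\mu$ as an average of $\lambda + \nu$ (which vanishes on $\Gamma$ by \ref{gauge2}) must survive commutation with $\Gamma^I$, reducing to higher-order averaging estimates for $\partial_{\hat r}^k(\lambda+\nu)$ at the axis, all of which are supplied by \cref{main1,axisregsec}.
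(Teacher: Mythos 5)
Your approach is essentially the paper's: decompose the Minkowski commuting vector fields as algebraic angular operators plus $\hat{u}\partial_{\hat u}$, $\hat{v}\partial_{\hat v}$ (away from the axis) or $\hat{r}^2\hat{X}$, $t\partial_t$ (near it), expand $\Gamma^I(h_B)_{\mu\nu}$, $\Gamma^I\phi$, $\Gamma^I\Box_{\textbf g}x^\mu$ by Leibniz, count derivatives against weights using \cref{main1,mainimprove,sumlimitcor}, and handle the $\hat{r}^{-1}$ singularities through \cref{axisregkey,axisregextra} (and \cref{XboundCor,Xboundavgcor}). The one caveat is that the ``combinatorial bookkeeping'' you flag as a potential obstacle is not incidental --- it is the bulk of the paper's actual proof: the paper writes the fully general composed operator $\partial^A(x^i\partial_j-x^j\partial_i)^B(t\partial_i+x^i\partial_t)^C S^l$, tracks exactly how many $\hat{X}$'s, $\partial_t$'s, powers of $t$, $\hat{r}^2$, and $x^i$ result, verifies the net weight-minus-derivative count is at most $\abs{A}+A_t$ (plus a bounded adjustment for the non-symmetric $x_ix_j$ factors, absorbed via the extra two differential orders from \cref{axisregextra}), and then separately redoes the count for \ref{disp8} using the $\lambda+\nu$ cancellation on $\Gamma$. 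Your sketch identifies all the right tools but leaves this verification --- in particular the delicate interaction between composed boosts/scalings, lost spatial derivatives on $x_ix_j$, and the $\hat{X}\mapsto\partial_{\hat r}^2$ conversion --- undone, so as written it is a correct blueprint rather than a complete proof.
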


\begin{proof}
The idea for each term, both near and far from the axis, is to be able to count the total number of derivatives acting, as well as the total powers of the accompanying weights, and then apply the results of \cref{main1} in order to obtain the required decay. To this end we will write the most general operator which may act on a given term, and then examine this count of derivatives and weights.

We begin with the components of the metric $h_B$ near the axis.

Here we have a general term:
\[ \partial^{A}(x^i\partial_j - x^j \partial_i)^B(t\partial_i + x^i\partial_t)^C (t \partial_t + \hat{r}\partial_{\hat{r}})^l (h_B)_{\mu\nu}, \]
where $A,B,C$ are multi-indices specifying the particular $\Gamma$ applied. We will work right to left to write the general term we must bound in a nicer way. To do this we determine the form of the general term of the above operator acting on a spherically symmetric function $f$. Once we have done this we will go back to modify our expression to account for the non-spherically symmetric parts of $h_B$.

So we begin with $S = t\partial_t + \hat{r}\partial_{\hat{r}}$. Observe that we can rewrite this in terms of the operator $\hat{X} := \frac{1}{\hat{r}}\partial_{\hat{r}}$:
\[ S = t\partial_t + \hat{r}^2\hat{X}. \]
Observe that $\hat{X}t = 0$, and $\hat{X}\hat{r}^2 = 2$, so $t\partial_t$ and $\hat{r}^2\hat{X}$ commute, and we can write:
\[ S^l g_{\mu\nu} = \sum_{n = 1}^l \binom{l}{n}(\hat{r}^2\hat{X})^n(t\partial_t)^{l-n}f. \]
We can write a generic term in this sum (up to a multiplicative constant) as:
\[ \hat{r}^{2(n-n_1)}t^{l-n-n_2}\hat{X}^{n-n_1}\partial_t^{l-n-n_2}f, \]
for $n_1 < n,n_2 < l-n$.

The next term that can act on our sum above is:
\[ (t\partial_i + x^i\partial_t)^C. \]
The multi-index $C$ need not be ordered since these all commute with each other (up to differential terms with no weight, thus not contributing negatively to our final power counting). Now note that $\partial_t \hat{r}^2 = 0, \partial_it = 0,\partial_i\hat{r}^2 = 2x^i$, and, since $\hat{X}^l\partial_t^sf$ is spherically symmetric for any $l,s$ we have that
\[ \partial_i\hat{X}^l\partial_t^sf = x^i\hat{X}^{l+1}\partial_t^sf. \]
Finally, $\partial_t$ and $\partial_i$ commute for all $i$, so we need not be concerned with ordering these either. The generic operator this contributes is then:
\[ t^{\abs{m} - p}\partial_t^{\abs{C} - \abs{m} - p}\prod_{i = 1,2,3} (x^i)^{C_i - m_i - q_i}\partial_i^{m_i-q_i}, \]
For $p < \min\set*{\abs{m},\abs{C}}$ the number of time derivatives acting on factors of $t$, and $q_i < \min\set*{m_i,C_i}$ similar. %Which we can write as:
%\[ t^{\abs{C}-\abs{p} - \abs{q^1} - b}\Lx \prod (x^i)^{p_i - q_i^2}\partial_i^{C_i-p_i-q_i^2} \Rx\partial_t^{\abs{p}-\abs{q^1}-b} \]
Finally, letting this act on our general term above, we have
\[ t^{l+\abs{m}-n-n_2 - p - a}\hat{r}^{2(n-n_1 - \abs{b})}\Lx\prod (x^i)^{C_i - 2q_i}\Rx \hat{X}^{\abs{m} + n - n_1 - \abs{q} - \abs{b}}\partial_t^{l + \abs{C} - \abs{m} - n - n_2 - p - a}f. \]
Now we have the term $(x^i\partial_j - x^j\partial_i)^B$. Observe that this operator is completely angular, and thus acts as 0 on $t,\hat{r}^2$ and derivatives of $f$, and only has the effect of exchanging a copy of $x^i$ for one of $x^j$ (possibly with a change of sign). We can ignore the sign here, since this is just a global multiplicative factor, so letting the multi-index $B = (B_{12} + B_{21},B_{23} + B_{32},B_{31} + B_{13})$ we have the general term:
\[ t^{l+\abs{m}-n-n_2 - p - a}\hat{r}^{2(n-n_1 - \abs{b})}\Lx\prod (x^i)^{C_i - 2q_i+\sum_{j\neq i}(B_{ji} - B_{ij})}\Rx \hat{X}^{\abs{m} + n - n_1 - \abs{q} - \abs{b}}\partial_t^{l + \abs{C} - \abs{m} - n - n_2 - p - a}f, \]
for $j \neq i$, and $B_{ij} \leq C_i + B_{ji} - 2q_i$.

Finally, we have our derivatives $\partial^{A}$ for a multi-index $A = (A_t,A_1,A_2,A_3)$. The resulting term letting $\partial^A$ act on the above expression is:
\begin{align*} &t^{l+\abs{m}-n-n_2 - p - a-N_t}\hat{r}^{2(n-n_1 - \abs{b}-\abs{K})}\Lx\prod (x^i)^{A_i + C_i - 2q_i - 2N_i + \sum_{j\neq i}(B_{ji} - B_{ij})}\Rx\\&\cdot\hat{X}^{\abs{m} + n - n_1 - \abs{q} - \abs{b} + \sum_{i=1,2,3} (A_i - N_i - K_i)}\partial_t^{l + \abs{C} - \abs{m} - n - n_2 - p - a + A_t - N_t}f.
\end{align*}
This gives the general term we will consider for our spherically symmetric functions near the axis.

In the case of the background-subtracted metric $h_B$ the only non-spherically symmetric terms we must consider are of the form
\[ \frac{1}{\hat{r}^2}\Lx \Omega^2 - \frac{r^2}{\hat{r}^2} \Rx x_ix_j \]
composed of a spherically symmetric term
\[ \frac{1}{\hat{r}^2}\Lx \Omega^2 - \frac{r^2}{\hat{r}^2} \Rx, \]
and some non-symmetric terms $x_ix_j$. $t$ derivatives act trivially on the $x_i$, if a $\hat{r}^2\hat{X}$ operator acts on the non-spherically symmetric part we have
\[ \hat{r}^2\hat{X}(x_ix_j) = 2x_ix_j, \]
and finally we have
\[ \partial_i x_j = \delta_{ij}. \]
In particular an indefinite number of $\hat{r}^2\hat{X}$ terms can be absorbed by the non-symmetric terms, and only at most two spatial directional derivatives (without weighting) can act on non-symmetric terms before they vanish, as with any spatial coordinate terms the angular derivatives can exchange an $x_i$ for and $x_j$. Thus our generic operator above may be modified by reducing the powers of $\hat{r}^2$ and $\hat{X}$ in proportion to one another (i.e. reducing $l$ and associated quantities) -- recalling the order in which these terms appear -- dropping exchanging pairs (i.e. decreasing the total order of $\abs{B}$) and dropping at most two isolated spatial directional derivatives (reducing the $A_i$ or the $m_i$ contributing to differential terms).

Thus the terms (acting on spherically symmetric parts) we must consider are:
\begin{align*} &t^{l+\abs{m}-n-n_2 - p - a-N_t + 2}\hat{r}^{2(n-n_1 - \abs{b}-\abs{K})}\Lx\prod (x^i)^{A_i + C_i - 2q_i - 2N_i + \sum_{j\neq i}(B_{ji} - B_{ij})}\Rx\\&\cdot\hat{X}^{\abs{m} + n - n_1 - \abs{q} - \abs{b} + \sum_{i=1,2,3} (A_i - N_i - K_i)}\partial_t^{l + \abs{C} - \abs{m} - n - n_2 - p - a + A_t - N_t}f,
\end{align*}
as the number of $\hat{r}^2$ terms and $x^i$ terms remain relative to the number of $\hat{X}$ factors, but up to two $t\partial_i$ terms may lose their derivatives, leaving (at worst) an extra two factors of $t$. Then \cref{axisregkey} allows us to write these terms as acting as:
\begin{align*} &t^{l+\abs{m}-n-n_2 - p - a-N_t + s}\hat{r}^{2(n-n_1 - \abs{b}-\abs{K})}\Lx\prod (x^i)^{A_i + C_i - 2q_i - 2N_i + \sum_{j\neq i}(B_{ji} - B_{ij})}\Rx\\&\cdot\partial_{\hat{r}}^{2(\abs{m} + n - n_1 - \abs{q} - \abs{b} + \sum_{i=1,2,3} (A_i - N_i - K_i))}\partial_t^{l + \abs{C} - \abs{m} - n - n_2 - p - a + A_t - N_t}f,
\end{align*}
where $s \leq 2, s \leq \abs{m}$. From here the bounds (and corresponding order-on-order increases in decay) of \cref{main1} reduce this to a question of counting derivatives and the corresponding powers of weights. In this generic term, we find that we have
\[ \abs{m} + n + 2\abs{A} - 2\abs{N} - 2\abs{K} + l + \abs{C} + A_t - 2n_1 - n_2 - 2\abs{q} - 2\abs{b} - p - a - N_t \]
derivatives acting on a given term (we do not distinguish between $\hat{r}$ and $t$ derivatives since we will work with bounds in terms of $u$ and $v$, and each of these operators mix the two),
\[ l + \abs{m} - n - n_2 - p - a - N_t + 2 \]
powers of $t$ and
\[ 2n - 2n_1 - 2\abs{b} - 2\abs{K} + \abs{A} + \abs{C} - 2\abs{q} - 2\abs{N}  \]
powers of $\hat{r}$. Since $\hat{r}$ is at worst comparable to $t$ in the near axis region, we can combine these weights to get a total weight of
\[ n + l + \abs{m} + \abs{A} + \abs{C} - 2n_1 - n_2 - p - a - 2\abs{b} - 2\abs{K} - 2\abs{q} - 2\abs{N} - N_t + s, \]
and so the net difference between the total number of derivatives and total power of weights is $-2,-\abs{m} \leq \abs{A} + A_t - s$. Observe also that this $s$ only enters on our terms with some non-spherically symmetric part, and in this case, \cref{axisregextra} also gives us two extra differential orders, so our difference is actually just $\abs{A} + A_t$, as we might expect. The required bounds now follow immediately by the work done in the previous proposition, and the $u,v$ bounds given by \cref{main1}.

Away from the axis ($\hat{r} > t/2$) we must take a slightly different approach in order to obtain our proper decay. In particular we observe that we can write our differential operators (when acting on spherically symmetric terms) in terms of $u$ and $v$ as follows:
\[ t\partial_t + \hat{r}\partial_{\hat{r}} = \frac{(v+u)(\partial_v + \partial_u) + (v-u)(\partial_v - \partial_u)}{2} = v\partial_v + u \partial_u, \]
\[ t\partial_i + x^i\partial_t = \frac{S^i}{2}((v+u)(\partial_v - \partial_u) + (v-u)(\partial_v + \partial_u)) = S^i(v\partial_v - u\partial_u), \]
and we also have that $(x^i\partial_j - x^j \partial_i)S^i = -S^j$. Since the only non-spherically symmetric terms we will work with are products of the $x^i$ or $S^i$ these relations, along with the definition of the $\partial_i,\partial_t$ allows us to deal with all of our differential terms. In particular we see that any auxiliary powers of $v$ come along with an additional $v$ derivative, and likewise for $u$. Thus the bounds of \cref{main1}, and \cref{mainimprove} yield the required decay for \ref{disp2}, \ref{disp3}, and \ref{disp4}.

The case is similar for $\phi$, since this is spherically symmetric and satisfies similar (also sufficient) bounds by \cref{main1}.

Finally we must check \ref{disp8}. This is much the same as what we have done above for \ref{disp2} but requires a bit more care.

We begin in the near axis region, $\hat{r} < t/2$, with our $\Box_{\textbf{g}} t$ term. This is completely spherically symmetric, so we are again in the case of our generic near axis operator above acting on
\[ \Box_{\textbf{g}} t = \Omega^{-2} \frac{\hat{r}^2}{r^2} \partial_t \frac{r^2}{\hat{r}^2}. \]
In the proof of \cref{XboundCor} below we establish the bounds
\[ \abs{\hat{X}^l\partial_t^s\frac{\hat{r}}{r}}(u,v), \abs{\hat{X}^{l}\partial_t^s\frac{r}{\hat{r}}}(u,v) \lesssim \sup_{S_{u+v}(v-u)}\abs{\partial_{\hat{r}}^{2l}\partial_t^{s}(\lambda-\nu)}, \]
and control these derivatives of $\Omega^{-2}$ in terms of bounds for derivatives of $\lambda\nu$ and $\mu$. Moreover, in the previous proposition we have established behavior for non-differentiated terms, so this is simply a matter of applying our general form for the operator acting on spherically symmetric terms found above and counting worst-case order of decay. Doing this we find that we have exactly one extra derivative in comparison to our counting above, so in particular our decay improves by a power of $1+\abs{u}$, which is comparable to $1+v$ in the small $\hat{r}$ region we consider, and thus our bounds will hold in this case.

Away from the axis again things are less subtle, and we can apply our simpler general operator found above, differentiating na\"ively throughout and again count our decay. In this case we act on the expression $\Omega^{-2}\frac{\lambda+\nu}{r} = \frac{(1-\mu)(\lambda+\nu)}{4r\lambda\nu}$. As previously, we have already controlled all these terms in their undifferentiated state in the previous proposition, so, by our power counting above we need only check that a $\partial_u$ (resp. $\partial_v$) derivative acting on each term results in an improvement in decay of one power of $u$ (resp. $v$). Comparing our bounds of the previous proposition with those of \cref{main1} and the improvements of \cref{mainimprove} we find that this is the case, and thus the required estimates hold here as well.

Next we must address our spatial coordinate terms, and again we must deal with some non-spherically symmetric pieces. Recall that we have
\[ \Box_{\textbf{g}} x^j = \Omega^{-2}\frac{\hat{r}^2}{r^2}\partial_i \Lx \delta_{ij}\Omega^2 + \frac{1}{\hat{r}^2}\Lx \frac{r^2}{\hat{r}^2} - \Omega^{2} \Rx x^ix^j \Rx. \]
We can split this into three terms which we will deal with indiviually:
\begin{gather}
\label{term1} \Omega^{-2} \frac{\hat{r}^2}{r^2}\partial_i \Omega^{2},\\
\label{term2} \Omega^{-2}\frac{\hat{r}^2}{r^2}x^ix^j \partial_i \Lx \frac{1}{\hat{r}^2}\Lx \frac{r^2}{\hat{r}^2} - \Omega^{2} \Rx \Rx,\\
\label{term3} \Omega^{-2}\frac{\hat{r}^2}{r^2}x^j \Lx \frac{1}{\hat{r}^2}\Lx \frac{r^2}{\hat{r}^2} - \Omega^{2} \Rx \Rx.
\end{gather}

We begin with the near axis case. Here, \eqref{term1} is immediately controlled sufficiently by the power counting above and \cref{main1,mainimprove}, since we gain an extra order of derivative immediately ($\partial_i$ commutes with our other operators up to more strongly decaying terms), and this provides the extra order of decay required for \ref{disp8}. Similarly, writing $\partial_i = x^i\hat{X}$ we see that \eqref{term2} gains the required decay by our power counting (losing up to three spatial derivatives now), and \cref{Xboundavgcor}, which gives us an extra two derivatives acting, and thus an extra two powers of decay, for a total of one additional power as required. \eqref{term3} follows in the same manner (we lose our extra derivative, and one potential lost differential order from \eqref{term2}).

Finally, away from the axis we see that the required bound on \eqref{term1} follows directly from our power counting and \cref{main1}. Here it is easier to combine \eqref{term2} and \eqref{term3} and write them instead as
\[ \Omega^{-2}\frac{\hat{r}^2}{r^2}\partial_i\Lx S^iS^j \Lx \frac{r^2}{\hat{r}^2} - \Omega^{2} \Rx \Rx. \]
Then using $\hat{r} \sim v$ in this region, and the estimates already considered above the required bound is direct from power counting and \cref{main1},\cref{mainimprove}.

With this we establish all the conditions for \ref{dispsoln}, and so we obtain stability for our class of solutions.
\end{proof}

\section{Regularity Near the Axis}
\label{axisregsec}
In this section we prove several key results used above in the proof of \cref{main2}. In particular, we show that the lift of our reduced spherically symmetric solution to \eqref{SSESF} given above gives rise to a smooth solution to \eqref{ESF} with good decay in $(3+1)$ dimensions. Note that the only issue is due to the singularity of the $(u,v)$ coordinates along the axis of symmetry, thus away from the axis there is already nothing left to do. However, in order to guarantee control of spatial ($x^i$) derivatives near the axis, we must control the differential operator $\tilde{X} := \frac{1}{\hat{r}}\partial_{\hat{r}}$ across the axis (observe that away from the axis this is immediately controlled).

In order to do this we note that it suffices to establish estimates in $(1+1)$ dimensions for the operator $X := \frac{1}{\tilde{r}}\partial_{\tilde{r}}$ near the axis of symmetry. Thus we reduce again to the $(1+1)$ dimensional setting for the remainder of this section.

\subsection{Preparations}
Before beginning we note a few essential facts:
\begin{remark}
Corresponding to $(r,\phi,m)$ a solution to \eqref{SSESF} in $(1+1)$ there is a spherically symmetric solution $(\mathcal{M},\phi,\textbf{g})$ to \eqref{ESF} in $(3+1)$ dimensions which reduces to $(r,\phi,m)$. Moreover if the data is sufficiently smooth ($C^{\infty}$ certainly suffices), then by persistence of regularity and Sobolev embeddding the solution may be taken to be at least $C^k$, so $\phi$ and the components of the metric must be $C^k$ smooth in the smooth structure $\R^{3+1}$. In particular, reducing to $(1+1)$ dimensions, this immediately implies that $X^l\phi$ is well defined on $\R^{(1+1)}_{+}$, and bounded on the axis for $l \leq k$, and thus $\partial_{\hat{r}}^{2l+1}\phi\mid_{\Gamma} = 0$ for all $l \leq \lfloor \frac{k}{2} \rfloor -1$. This final fact will be essential to what follows.
\end{remark}

We will also make use of the following elementary proposition:
\begin{proposition}\label{evenprop1}
Let $\R^{(1+1)}_+ = \set*{(x,t) \in \R^2; x \geq 0}$, and let $f:\R^{(1+1)}_+ \rightarrow \R$ be $C^k$ up to the boundary. Suppose the extension $\bar{f}$ to all of $\R^{(1+1)}$ given by
\[ \bar{f}(x,t) = \begin{cases} f(x,t) & x \geq 0\\ f(-x,t) & x \leq 0 \end{cases} \]
is $C^k$ as well. Then $\lim_{x \rightarrow 0^+}\partial^{2l+1}_{x}f(x,t) = 0$ for all $l \leq \lfloor \frac{k}{2} \rfloor - 1$.
\end{proposition}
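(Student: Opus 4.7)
The plan is to exploit the fact that, by construction, $\bar f$ is an even function of $x$: $\bar f(-x,t) = \bar f(x,t)$ for all $(x,t) \in \R^{(1+1)}$. The proposition is then essentially the classical statement that odd-order derivatives of an even, sufficiently smooth function vanish on the axis of symmetry. Because $\bar f \in C^k$, all partial derivatives $\partial_x^j \bar f$ for $j \leq k$ exist and are continuous on the whole plane, and in particular the identity above can be differentiated up to $k$ times in $x$.

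The core computation is as follows. For $2l+1 \leq k$, differentiate the identity $\bar f(x,t) = \bar f(-x,t)$ a total of $2l+1$ times with respect to $x$. Writing $\partial_1$ for the derivative with respect to the first slot and applying the chain rule to the right-hand side gives
\[ \partial_x^{2l+1}\bar f(x,t) = (-1)^{2l+1}(\partial_1^{2l+1}\bar f)(-x,t) = -(\partial_x^{2l+1}\bar f)(-x,t). \]
Evaluating at $x = 0$ yields $\partial_x^{2l+1}\bar f(0,t) = -\partial_x^{2l+1}\bar f(0,t)$, and hence $\partial_x^{2l+1}\bar f(0,t) = 0$. Since $\bar f = f$ on $\{x > 0\}$, continuity of $\partial_x^{2l+1}\bar f$ at the axis then gives
\[ \lim_{x \to 0^+}\partial_x^{2l+1} f(x,t) = \partial_x^{2l+1}\bar f(0,t) = 0. \]

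The only regularity constraint used is $2l+1 \leq k$, which is certainly implied by the hypothesis $l \leq \lfloor k/2 \rfloor - 1$ of the proposition, and there is no serious obstacle: the only subtle point is justifying the differentiation of the evenness relation, which is immediate from $\bar f \in C^k(\R^{(1+1)})$. The statement is then essentially a one-line consequence of the chain rule together with evaluation at the fixed point $x = 0$ of the reflection.
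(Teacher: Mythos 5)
Your proof is correct. The paper states this proposition without proof, treating it as an elementary fact; your argument — differentiating the evenness identity $\bar f(x,t) = \bar f(-x,t)$ an odd number of times and evaluating at $x=0$ — is the standard one-line justification, and you correctly observe that the hypothesis $l \leq \lfloor k/2 \rfloor - 1$ ensures $2l+1 \leq k$ so the differentiation is legitimate.
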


% \begin{proposition}\label{evenprop2}
% Let $f:\R^{(1+1)}_+ \rightarrow \R$ be $C^k$ up to the boundary. Suppose that $\abs{\Lx\frac{\partial_x}{x}\Rx^l f}\mid_{\Gamma} < \infty$ for all $l \leq k$. Then $\partial^{2s+1}_{x}f\mid_{\Gamma} = 0$ for all $s \leq \lfloor \frac{k}{2} \rfloor - 1$, and thus $\partial^{2s+1}_{\hat{r}}\partial^q_tf\mid_{\Gamma} = 0$ for all $s \geq 0$ such that $2s+1+q \leq k$ as well.
% \end{proposition}

\subsection{Estimates For $\phi,\mu,\lambda - \nu$, and $\lambda\nu$}
In order to establish the necessary control of $\phi,\mu, \lambda-\nu$ and $\lambda\nu$ we will make use of the following lemma:
\begin{lemma}\label{axisregkey}
Let $f:\R^2_+ \rightarrow \R$ a $C^{2k}$ function such that $\partial_x^{2l-1}f\mid_{\Gamma} \equiv 0$ for all $l \leq k$. Then for $2l+s \leq 2k$:
\[ \abs{X^l\partial_t^sf}(t,x) \lesssim \sup_{S_{t}(x)}\abs{\partial_x^{2l}\partial_t^sf}. \]
\end{lemma}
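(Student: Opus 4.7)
The plan is to prove the lemma by induction on $l$, with the key mechanism being the integral representation that shows $Xf$ inherits the axis--vanishing conditions of $f$ (with one fewer order) and is bounded pointwise by two more $\partial_x$--derivatives of $f$. Once that single step is established, the induction is essentially automatic and the $\partial_t^s$ piece follows for free by commutation.

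The central computation is as follows. Since $\partial_x f(t,0)=0$ is the $l=1$ instance of the vanishing hypothesis, Taylor's theorem gives $\partial_x f(t,x)=\int_0^x \partial_x^2 f(t,y)\,dy$, and after the substitution $y=sx$,
\[ Xf(t,x)\;=\;\frac{1}{x}\partial_x f(t,x)\;=\;\int_0^1 \partial_x^2 f(t,sx)\,ds. \]
Differentiating under the integral, for every $n\leq 2k-2$,
\[ \partial_x^n (Xf)(t,x)\;=\;\int_0^1 s^n\,\partial_x^{n+2}f(t,sx)\,ds, \]
from which I can read off three facts at once: (i) $Xf$ is $C^{2k-2}$; (ii) if $n=2j-1\leq 2k-3$ is odd, then $n+2=2(j+1)-1\leq 2k-1$ is odd, so $\partial_x^{n+2}f(t,0)=0$ by hypothesis, giving $\partial_x^{2j-1}(Xf)(t,0)=0$ for all $j\leq k-1$; and (iii) the pointwise bound
\[ \bigl|\partial_x^n(Xf)(t,x)\bigr|\;\leq\;\tfrac{1}{n+1}\sup_{S_t(x)}\bigl|\partial_x^{n+2}f\bigr|. \]
Thus $Xf$ satisfies exactly the hypothesis of the lemma with $k$ replaced by $k-1$, and at the same time is controlled by two more $\partial_x$'s of $f$.

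With this the induction closes easily. The base case $l=0$ is trivial. For the step, write $X^l f=X^{l-1}(Xf)$, apply the inductive hypothesis (at level $k-1$) to $g:=Xf$, and then use the sup-bound with $n=2l-2$:
\[ \bigl|X^l f(t,x)\bigr|\;\lesssim\;\sup_{S_t(x)}\bigl|\partial_x^{2l-2}(Xf)\bigr|\;\lesssim\;\sup_{S_t(x)}\bigl|\partial_x^{2l}f\bigr|. \]
The $\partial_t^s$ factor requires only the observations that $\partial_t$ commutes with both $X$ and $\partial_x$, and that $\partial_t^s f$ satisfies the hypothesis of the lemma ($\partial_x^{2j-1}\partial_t^s f(t,0)=\partial_t^s\partial_x^{2j-1}f(t,0)=0$) with sufficient regularity provided $2l+s\leq 2k$. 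Applying the established bound to $\partial_t^s f$ then yields the stated inequality.

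There is no real obstacle beyond correctly identifying the inductive quantity: the lemma is essentially a clean unfolding of the integral formula $Xf(t,x)=\int_0^1\partial_x^2 f(t,sx)\,ds$, which simultaneously encodes the regularity loss, the vanishing propagation, and the pointwise bound. The only thing worth being careful about is the regularity bookkeeping, to confirm that the condition $2l+s\leq 2k$ is exactly what is needed to run both the $l$--induction and the $s$--many $\partial_t$--derivatives through the argument.
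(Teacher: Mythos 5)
Your proof is correct and is essentially the same as the paper's. The integral identity $Xf(t,x)=\int_0^1\partial_x^2 f(t,sx)\,ds$ obtained by Taylor expansion and the substitution $y=sx$ is exactly the averaging operator representation $Xf=I_x^1[\partial_x^2 f]$ used in the paper, and your differentiation-under-the-integral step reproduces the paper's Lemma~\ref{avglemma}; the remaining bookkeeping (vanishing of odd derivatives propagates from $f$ to $Xf$, induction on $l$, commutation of $\partial_t$) matches the paper's argument with only superficial differences in how the induction is unwound.
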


\begin{proof}
Observe that since odd order derivatives of $f$ vanish on the axis, we can write
\[ X^l f(t,x) = \frac{1}{x^{2l-1}}\int_{\set*{(t',x');t' = t, x' \in [0,x]}}(X^{l-1}\partial^2_{x}f) (x')^{2l-2} dx'. \]
Since we have
\[ Xf = \frac{1}{x}\int \partial_x^2f dx' = I_x^1[\partial_xf], \]
and so applying applying our differentiation formula for averaging operators, and rewriting the integrand (multiplying and dividing by $x'$ to obtain extra $X$ operators and maintain the averaging operator form) we obtain the above expression.

Thus, integrating using the supremum bound for our $f$ term we have
\[ \abs{X^lf}(t,x) \lesssim \sup_{S_t(x)}\abs{X^{l-1}\partial^2_xf}, \]
so it suffices to bound $X^{l-1}\partial^2_xf$ by $\partial^{2l}_xf$. But $\partial^2_xf$ satisfies the same assumptions as $f$ with $k' = k-1$. Proceeding inductively, we conclude by the same reasoning as above that
\[ \abs{X^{l-s}\partial^{2s}_xf}(t,x) \leq \sup_{S_t(x)}\abs{X^{l-s-1}\partial^{2s+2}_xf}.  \]
Thus it follows that
\[ \abs{X^lf}(t,x) \lesssim \sup_{S_t(x)}\abs{\partial^{2l}_xf} \]
for all $l \leq k$.

Now observe that, since odd order derivatives of $f$ vanish on the axis, $\partial_x,\partial_t$ commute, and the axis is an integral curve of $\partial_t$, we in fact have that
\[ \partial_x^{2l-1}\partial_t^sf\mid_{\Gamma} \equiv 0 \]
for any $2l-1+s\leq k$. But then $\partial_t^sf$ satisfies the assumptions of our lemma, so by the above argument we in fact have
\[ \abs{X^l\partial_t^sf}(t,x) \lesssim \sup_{S_{t}(x)}\abs{\partial_x^{2l}\partial_t^sf}. \]
\end{proof}

\begin{corollary}\label{XboundCor}
Let $(r,\phi,m)$ a locally scattering solution to \eqref{SSESF} in $\mathcal{Q}$ with data asymptotically flat of order $\omega' \geq 2$ in $C^{2k}$ towards $\mathcal{I}^+$. Then the following bounds hold for all $l \leq k$:
\begin{gather}
\label{phiXbound} \abs{X^l\partial_t^s\phi} \lesssim \abs{\partial_{\hat{r}}^{2l}\partial_t^s \phi},\\
\label{muXbound} \abs{X^l\partial_t^s\mu} \lesssim \abs{\partial_{\hat{r}}^{2l}\partial_t^s \mu},\\
\label{lambdanuXbound} \abs{X^l\partial_t^s\lambda\nu} \lesssim \abs{\partial_{\hat{r}}^{2l}\partial_t^s \lambda\nu},\\
\label{lambdanusubXbound} \abs{X^l\partial_t^s(\lambda - \nu)} \lesssim \abs{\partial_{\hat{r}}^{2l}\partial_t^s (\lambda - \nu)}.
\end{gather}
\end{corollary}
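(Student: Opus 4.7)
The plan is to reduce each of the four inequalities to a direct application of \cref{axisregkey}. That lemma takes as hypothesis the vanishing of odd $\hat{r}$-derivatives on $\Gamma$ up to order $2k-1$, so it suffices to verify, for each $f \in \{\phi,\,\mu,\,\lambda\nu,\,\lambda-\nu\}$, that $f$ is $C^{2k}$ and extends as an even function of $\hat{r}$ across the axis; \cref{evenprop1} will then supply the required vanishing.

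Each parity will be inherited from the spherical symmetry of the $(3+1)$-dimensional lift. By the opening remark of this section, the solution lifts to a $C^{2k}$ spherically symmetric spacetime $(\mathcal{M}, \textbf{g}, \phi)$, and any smooth spherically symmetric function on $\mathcal{M}$ is locally near the axis a smooth function of $(t, \hat{r}^2)$, hence an even function of $\hat{r}$. For $\phi$ itself this is immediate. For $\lambda - \nu$, I would use the identity $\lambda - \nu = \partial_{\hat{r}} r$ together with the fact that $r^2$ is proportional to the area form on the symmetry sphere, hence smooth and spherically symmetric on $\mathcal{M}$; writing $r^2 = \hat{r}^2\,G(t,\hat{r}^2)$ with $G$ smooth and positive near $\Gamma$ exhibits $r$ as an odd function of $\hat{r}$, so that $\partial_{\hat{r}} r$ is even. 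For $\lambda\nu$, I would use $4\lambda\nu = (\partial_t r)^2 - (\partial_{\hat{r}} r)^2$, which expresses $\lambda\nu$ as a combination of squares of (respectively) an odd and an even function of $\hat{r}$, hence even. For $\mu$, rather than working with $\mu = 2m/r$ directly (where the $r^{-1}$ obscures the situation near $\Gamma$), I would use the equivalent geometric expression $\mu = 1 + 4\Omega^{-2}\lambda\nu$; since $\Omega^2$ is proportional to $-\textbf{g}_{tt}$ and is a smooth spherically symmetric metric coefficient on $\mathcal{M}$, it extends evenly in $\hat{r}$, and combining with the previous item gives the parity of $\mu$.

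Once the parity is established in each of the four cases, \cref{evenprop1} supplies the hypothesis of \cref{axisregkey}, which then applies verbatim to deliver \eqref{phiXbound}--\eqref{lambdanusubXbound}.

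The principal obstacle is the parity verification for the geometric quantities: this is not a computation internal to the $(1+1)$-dimensional reduction but instead requires invoking the smoothness of the full lifted spacetime and translating statements about smooth spherically symmetric functions on $\mathcal{M}$ into evenness in $\hat{r}$ on the reduced slice. A minor additional subtlety is the choice to handle $\mu$ via the geometric identity $\mu = 1 + 4\Omega^{-2}\lambda\nu$ rather than the ratio $2m/r$, which sidesteps the apparent singularity at the axis.
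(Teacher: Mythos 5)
Your proposal is correct and follows the same overall strategy as the paper (verify vanishing of odd $\hat{r}$-derivatives on $\Gamma$, then apply \cref{axisregkey} directly), but the route you take for $\mu$ is genuinely different and notably cleaner. The paper does not use the identity $\mu = 1 + 4\Omega^{-2}\lambda\nu$; instead it writes $\mu = \frac{\tilde r}{r}\cdot \frac{1}{\tilde r}\int\partial_{\tilde r}m$, establishes evenness of the factors $\tilde r/r$ and $\partial_{\tilde r}m$ separately via averaging operators and the constraint equations, and then runs an induction on the order of the derivative to conclude that odd $\hat r$-derivatives of $\mu$ vanish on the axis. Your observation that $\Omega^2$ is, up to a constant, the smooth spherically symmetric Cartesian metric component $\mathbf{g}_{tt}$ --- hence automatically even in $\hat r$ --- sidesteps that entire induction, and is the kind of simplification the paper could have used; note also that your sign in $\mu = 1 + 4\Omega^{-2}\lambda\nu$ is the correct one (the display in the paper's definition of the Hawking mass has a sign typo, since $\lambda\nu < 0$ while $1-\mu>0$). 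For $\lambda-\nu$ and $\lambda\nu$ the content of your argument and the paper's is the same --- both express these as $\partial_{\hat r}r$ and a quadratic in $(\partial_t r, \partial_{\hat r}r)$ respectively, with evenness inherited from oddness of $r$ --- although the paper realizes this via an explicit odd reflection $\bar r(u,v)=-r(v,u)$ rather than the Whitney-style ``function of $(t,\hat r^2)$'' statement. Two cosmetic slips that do not affect the parity argument: the identity should read $\lambda\nu = (\partial_t r)^2 - (\partial_{\hat r}r)^2$ (no factor of $4$, with $t=u+v$, $\hat r=v-u$), and invoking the full Whitney representation $f=F(t,\hat r^2)$ is stronger than needed; spherical symmetry alone gives $f(t,\hat r)=f(t,-\hat r)$, which is all that \cref{evenprop1} and \cref{axisregkey} require.
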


\begin{proof}
Observe that \eqref{phiXbound} holds immediately by \cref{axisregkey} and \cref{main1}, since $\phi$ lifts to a smooth function on $\R^{(3+1)}$, and thus has $X^l\phi$ bounded for all $l \leq 2k$, thus immediately verifying the hypotheses of \cref{axisregkey}.

We turn next to \eqref{lambdanuXbound}, and \eqref{lambdanusubXbound}. Observe that the function $\bar{r}(u,v) := \begin{cases}r(u,v) & v \geq u\\ -r(v,u) & v \leq u \end{cases}$ is a smooth extension of $r$ to all of $\R^{(1+1)}$ (one checks directly that the derivatives in $u$ and $v$ match up across the axis so long as they are well defined in $\R^{(1+1)}_{+}$).
Moreover this function is odd in the $\hat{r}$ coordinate (this is exactly the condition in $u,v$ translated to these other coordinates), and thus it follows immediately that $\partial_{\tilde{r}}\bar{r} = \frac{1}{2}(\bar{\lambda} - \bar{\nu})$ is even. Thus $\lambda - \nu$ immediately satisfies the hypotheses of \cref{axisregkey}, and thus by \cref{main1} verifies \eqref{lambdanusubXbound}.

In the case of $\lambda\nu$, observe that $\bar{\lambda}\bar{\nu}$ is a smooth function on $\R^{(1+1)}$, and we have
\[ \bar{\lambda}(u,v) = -\bar{\nu}(v,u) \]
by construction of $\bar{r}$. Thus we have
\[ \bar{\lambda}\bar{\nu}(u,v) = (-\bar{\nu})(-\bar\lambda)(v,u) = \bar{\lambda}\bar{\nu}(v,u). \]
So $\bar{\lambda}\bar{\nu}$ is an even function, $C^{2k}$ extension of $\lambda\nu$. As above it follows by \cref{axisregkey} and \cref{main1} that $\lambda\nu$ satisfies \eqref{lambdanuXbound}.

We are left to deal with \eqref{muXbound}, which is rather more involved. Observe first that we can write:
\[ \mu(u,v) = \frac{\tilde{r}}{r}\frac{1}{\tilde{r}}\int_{S_{u+v}(v-u)}\partial_{\tilde{r}}md\tilde{r}'. \]
We will deal with the terms $\frac{\tilde{r}}{r}$ and $\frac{1}{\tilde{r}}\int_{S_{u+v}(v-u)}\partial_{\tilde{r}}md\tilde{r}'$ separately, beginning with the latter.

Employing our averaging operators we have
\[ \Lx\partial^l_{\tilde{r}}\frac{1}{\tilde{r}}\int_{S_{u+v}(v-u)}\partial_{\tilde{r}}md\tilde{r}'\Rx(u,v) = \frac{1}{\tilde{r}^{l+1}(u,v)}\int_{S_{u+v}(v-u)}\partial_{\tilde{r}}^{l+1}m(\tilde{r}')^{l}d\tilde{r}'. \]
We also have
\[ \partial_{\tilde{r}}m = \frac{(1-\mu)r^2}{4}\Lx \frac{1}{\lambda}(\partial_v\phi)^2 - \frac{1}{\nu}(\partial_u\phi)^2 \Rx. \]
Recall that $\partial_{\tilde{r}}^{2l-1}\phi$ vanishes on the axis. It follows that the extension $\bar{\phi}(u,v):= \begin{cases} \phi(u,v)& v\geq u\\ \phi(v,u) & u \geq v \end{cases}$ is a smooth, even extension of $\phi$ to all of $\R^{(1+1)}$. Moreover, we have that
\[ \partial_{u}\bar\phi(u,v) = \partial_v\bar{\phi}(v,u). \]
In particular the function
\[ \frac{1}{\bar\lambda}(\partial_v\bar\phi)^2 - \frac{1}{\bar\nu}(\partial_u\bar\phi)^2 \]
is a smooth extension of $\frac{1}{\lambda}(\partial_v\phi)^2 - \frac{1}{\nu}(\partial_u\phi)^2$, which is even, and thus has vanishing odd order $\partial_{\tilde{r}}$ derivatives on the axis. Since $r$ admits an odd extension, $r^2$ is a smooth even extension and thus also has vanishing odd order derivatives. Thus we conclude that, so long as $\partial_{\tilde{r}}^{l-1}(1-\mu) = \partial_{\tilde{r}}\mu$ vanishes along $\Gamma$, so does
\[ \frac{1}{\tilde{r}^{l+1}(u,v)}\int_{S_{u+v}(v-u)}\partial_{\tilde{r}}^{l+1}m(\tilde{r}')^{l}d\tilde{r}'. \]
In particular, so long as $\frac{\tilde{r}}{r}$ is well behaved we can conclude by induction that $\mu$ has vanishing odd order $\tilde{r}$ derivatives on the axis.

Now we must deal with $\frac{\tilde{r}}{r}$, as above we wish to show that the odd order $\tilde{r}$ derivatives vanish on the axis. Observe that since $\frac{\tilde{r}}{r}$ is bounded away from 0 (since $\lambda,\nu$ are bounded away from 0), it suffices to work with $\frac{r}{\tilde{r}}$. This is advantageous, as we can write
\[ \frac{r}{\tilde{r}}(u,v) = \frac{1}{\tilde{r}(u,v)}\int_{S_{u+v}(v-u)} \frac{1}{2}\Lx\lambda - \nu\Rx d\tilde{r}'. \]
Thus we have
\[ \abs{\partial_{\tilde{r}}^{l}\frac{r}{\tilde{r}}(u,v)} \lesssim \sup_{S_{u+v}(v-u)}\abs{\partial_{\tilde{r}}^{l}(\lambda-\nu)}. \]
So by our analysis of $\lambda-\nu$ above we conclude that every odd order derivative of $\frac{\tilde{r}}{r}$ vanishes on the axis. Thus the same holds for $\frac{r}{\tilde{r}}$. Moreover, we conclude by \cref{main1} that
\[ \abs{\partial_{\tilde{r}}^{l}\partial_t^s\frac{\tilde{r}}{r}}(u,v) \lesssim \sup_{S_{u+v}(v-u)}\abs{\partial_{\tilde{r}}^{l}\partial_t^{s}(\lambda-\nu)}, \]
counting powers in the bounds \eqref{main12}, \eqref{main14}.

Combining this with the above, we conclude inductively that $\partial_{\tilde{r}}^{2l-1}\mu$ vanishes on $\Gamma$ (since an odd number of derivatives must always act on one of the terms in our expression for $\mu$). Thus \eqref{muXbound} holds as well.
\end{proof}

Finally we would like to make use of the above bounds for the extended cases $\frac{\mu}{\tilde{r}^2}, \frac{(\lambda - \nu)-(\lambda_0 - \nu_0)}{\tilde{r}^2}$, and $\frac{\lambda\nu - \lambda_0\nu_0}{\tilde{r}^2}$ (where $\lambda_0(t,\tilde{r}) = \lambda(t,0)$, $\nu_0$ likewise). To this end we have the following lemma:
\begin{lemma}\label{axisregextra}
Suppose $f$ is as in \cref{axisregkey}, and moreover satisfies $f(t,0) = 0$. Then $\tilde{f} := \frac{f}{\tilde{r}^2}$ also satisfies the assumptions of \cref{axisregkey} and the bound:
\[ \abs{X^l\partial_t^s \tilde{f}} \lesssim \sup_{S_{t}(x)}\abs{\partial_x^{2l+2}\partial_t^sf}. \]
\end{lemma}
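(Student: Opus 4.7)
The plan is to express $\tilde f$ via the radial averaging operators of \cref{secprelim} and then invoke \cref{axisregkey} on $\tilde f$ itself. The point is that the hypothesis $f(t,0)=0$, combined with the vanishing of $\partial_x f$ on the axis inherited from the assumptions of \cref{axisregkey} (case $l=1$), gives enough vanishing at the axis to absorb the $\tilde r^{-2}$ factor after two integrations.

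First I would note that, since $f(t,0)=0$ and $\partial_x f(t,0)=0$, two applications of the fundamental theorem of calculus together with Fubini yield
\[ f(t,x)=\int_0^x(x-x_2)\,\partial_x^2 f(t,x_2)\,dx_2, \]
so that, in the notation of \eqref{ravg} applied in the variable $x$ (playing the role of $\tilde r$),
\[ \tilde f=I^1_x[\partial_x^2 f]-I^2_x[\partial_x^2 f]. \]
Applying \cref{avglemma} iteratively, and commuting with $\partial_t$ (which passes through $I^s_x$ freely, these operators acting only in $x$), I get
\[ \partial_x^{2l}\partial_t^s\tilde f = I^{2l+1}_x[\partial_x^{2l+2}\partial_t^s f]-I^{2l+2}_x[\partial_x^{2l+2}\partial_t^s f]. \]

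Next I would verify that $\tilde f$ satisfies the hypotheses of \cref{axisregkey} (with $k$ replaced by $k-1$). Using the identity $\lim_{x\to 0^+} I^s_x[g](t,x)=g(t,0)/s$ for continuous $g$, which follows from L'H\^opital on the defining integral, the display above gives, for $l\le k-1$,
\[ \lim_{x\to 0^+}\partial_x^{2l-1}\tilde f(t,x)=\Lx\frac{1}{2l}-\frac{1}{2l+1}\Rx\partial_x^{2l+1}f(t,0)=0, \]
where the last equality uses the vanishing of odd $x$-derivatives of $f$ on $\Gamma$.

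Finally, applying \cref{axisregkey} to $\tilde f$ yields $\abs{X^l\partial_t^s\tilde f}\lesssim\sup_{S_t(x)}\abs{\partial_x^{2l}\partial_t^s\tilde f}$, and the elementary bound $\abs{I^s_x[g](t,x)}\le s^{-1}\sup_{[0,x]}\abs{g(t,\cdot)}$ applied to the representation of $\partial_x^{2l}\partial_t^s\tilde f$ gives the claimed estimate. The only delicate point is regularity bookkeeping: \cref{avglemma} requires $\partial_x^{2l+2}\partial_t^s f$ to be continuous, which is exactly the range $2l+s\le 2k-2$ that results from applying \cref{axisregkey} to $\tilde f$ at order $k-1$. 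I expect this regularity bookkeeping, rather than any nontrivial estimate, to be the only real care needed.
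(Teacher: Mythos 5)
Your proof is correct, and it takes a cleaner route than the paper's. Both arguments ultimately rest on the same integral representation: by Fubini, the double integral $\frac{1}{R^{n+2}}\int_0^R\int_0^r (r')^n\partial_{\tilde r}^{n+2}f\,dr'\,dr$ appearing in the paper is exactly $I_x^{n+1}[\partial_x^{n+2}f]-I_x^{n+2}[\partial_x^{n+2}f]$ in your notation. The difference is how one arrives there. The paper expands $\partial_{\tilde r}^n(f/\tilde r^2)$ by the Leibniz rule, observes that each summand vanishes on the axis, and then performs two rounds of differentiate--integrate in which the sum telescopes down to the top-order term; this is explicit but somewhat lengthy. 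You instead write $\tilde f = I_x^1[\partial_x^2 f] - I_x^2[\partial_x^2 f]$ from the outset (a two-line consequence of $f$ and $\partial_x f$ vanishing at the axis) and then iterate \cref{avglemma}, which the paper establishes but does not actually deploy in this proof. This yields the same representation for $\partial_x^{2l}\partial_t^s\tilde f$ with essentially no computation, and the elementary bound $\abs{I_x^s[g]}\le s^{-1}\sup\abs{g}$ together with \cref{axisregkey} finishes the estimate. Your verification that $\tilde f$ inherits the vanishing of odd $x$-derivatives on $\Gamma$ (via $\lim_{x\to 0^+}I_x^s[g]=g(t,0)/s$ and the cancellation $\frac{1}{2l}-\frac{1}{2l+1}\neq 0$ but $\partial_x^{2l+1}f(t,0)=0$ for $l\le k-1$) is sound, and the regularity bookkeeping you flag --- that $\tilde f\in C^{2(k-1)}$ is exactly what both the iterated \cref{avglemma} and the invocation of \cref{axisregkey} require --- is handled at the same level of rigor as the paper's ``similar to the above we also conclude that $f$ is differentiable at order $n+1$.'' The trade-off is that your version depends on \cref{avglemma} while the paper's telescoping argument is self-contained; but since the averaging machinery is already set up, using it here is the more economical choice.
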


\begin{proof}
We check that the even extension of $\tilde{f}$ (which we also denote by $\tilde{f}$) is in fact differentiable across the axis by inductively controlling its derivatives in terms of those of $f$. First, we can write
\[ \tilde{f}(t,R) = \frac{1}{R^2}\int_0^R \partial_{\tilde{r}}f(t,r') dr' = \frac{1}{R^2}\int_0^R\int_0^{r'} \partial_{\tilde{r}}^2f(t,r'')dr''dr' \]
since $f$ and $\partial_{\tilde{r}}f$ each vanish on the axis. We thus conclude that
\[ \abs{\tilde{f}(t,R)} \leq \sup_{r < R}\abs{\partial_{\tilde{r}}^2f}, \]
and moreover (via the same argument with the value on the axis subtracted) if $f$ is $C^k$ for $k \geq 2$ (resp. $>2$) then $\tilde{f}$ is differentiable (resp. continuously differentiable) across the axis.

Suppose now that we have that $\abs{\partial_{\tilde{r}}^{n-1}\tilde{f}(t,R)} \leq \sup_{r < R}\abs{\partial_{\tilde{r}}^{n+1}f}$ for all $0 \leq n-1 < k-2$. We show that such a bound holds for at order $n$ as well. So we can write:
\[ \partial_{\tilde{r}}^{n}\tilde{f} = \partial_{\tilde{r}}^{n}\frac{f}{\tilde{r}^2} = \sum_{m = 0}^{n} (-1)^m\binom{n}{m}(m+1)! \frac{\partial_{\tilde{r}}^{n-m}}{\tilde{r}^{m+2}} = \frac{1}{\tilde{r}^{n+2}}\sum_{m = 0}^{n} (-1)^m\binom{n}{m}(m+1)! r^{n-m}\partial_{\tilde{r}}^{n-m}. \]
Observe that every term in this sum vanishes along the axis by assumptions on $f$, so we can differentiate each term and integrate to obtain
\begin{multline}
\frac{1}{R^{n+2}}\int_0^R \Bigg( (-1)^n(n+1)! \partial_{\tilde{r}}f(t,r)\\
 + \left.\sum_{m = 0}^{n-1} (-1)^m\binom{n}{m}(m+1)! \Lx (n-m)r^{n-m-1}\partial_{r}^{n-m}f(t,r) + r^{n-n}\partial_{\tilde{r}}^{n-m+1}f(t,r) \Rx\right) dr.
\end{multline}

Combining terms with equal powers of $r$ and differential order on $f$, we obtain
\[ \frac{1}{R^{n+2}} \int_0^R \sum_{m = 0}^n (-1)^m \frac{n!}{(n-m)!}r^{n-m}\partial_{\tilde{r}}^{n-m+1}f(t,r) dr. \]
Once again every term in this sum vanishes on the axis so we can differentiate and integrate once more to obtain
\begin{dmath}
\frac{1}{R^{n+2}}\int_0^R \int_0^r (-1)^n n! \partial_{\tilde{r}}^n f(t,r') + \sum_{m = 0}^{n-1} (-1)^m \frac{n!}{(n-m)!}\Lx (n-m)(r')^{n-m-1}\partial_{\tilde{r}}^{n-m+1}f(t,r') + (r')^{n-m}\partial_{\tilde{r}}^{n-m+2}f(t,r') \Rx dr'dr.
\end{dmath}
Every term in this expression cancels except for the highest differential order and we obtain:
\begin{equation}
\frac{1}{R^{n+2}} \int_0^R\int_0^r (r')^n\partial_{\tilde{r}}^{n+2}f(t,r')dr'dr.
\end{equation}
From this expression we immediately have the bound
\begin{equation}
\abs{\partial_{\tilde{r}}^{n}\tilde{f}(t,R)} \lesssim \sup_{r < R} \abs{\partial_{\tilde{r}}^{n+2}f(t,r)},
\end{equation}
and similar to the above we also conclude that $f$ is differentiable at order $n+1$ as well. Moreover, since odd order derivatives of $f$ vanish along the axis, the same is true of those of $\tilde{f}$, and thus we are in the situation of \cref{axisregkey}, and we obtain the required bounds.
\end{proof}

\begin{corollary}\label{Xboundavgcor}
\begin{gather}
\abs{X^l\partial_t^s\frac{\mu}{\tilde{r}^2}} \lesssim \abs{\partial_{\hat{r}}^{2l + 2}\partial_t^s \mu},\\
\abs{X^l\partial_t^s\frac{(\lambda-\nu) - (\lambda_0 - \nu_0)}{\tilde{r}^2}} \lesssim \abs{\partial_{\hat{r}}^{2l + 2}\partial_t^s (\lambda - \nu)},\\
\abs{X^l\partial_t^s\frac{\lambda\nu - \lambda_0\nu_0}{\tilde{r}^2}} \lesssim \abs{\partial_{\hat{r}}^{2l + 2}\partial_t^s \lambda\nu},\\
\abs{X^l\partial_t^s\Lx \frac{1}{\tilde{r}^2}\Lx \Omega^2 - \frac{r^2}{\tilde{r}^2} \Rx \Rx} \lesssim \abs{\partial_{\hat{r}}^{2l + 2}\partial_t^s \Lx \Omega^2 - \frac{r^2}{\tilde{r}^2} \Rx}.
\end{gather}
\end{corollary}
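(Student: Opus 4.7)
The plan is to apply \cref{axisregextra} to each of the four functions individually. For each such $f$, I must check two things: (i) that $f$ satisfies the hypotheses of \cref{axisregkey}, namely that $f$ admits a $C^{2k}$ extension to $\R^{(1+1)}$ whose odd-order $\partial_{\tilde r}$-derivatives vanish on $\Gamma$, and (ii) that $f(t,0)=0$. Once both hold, the bound is immediate from the conclusion of \cref{axisregextra}.

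For the first three functions the verifications are quick. When $f=\mu$, item (i) is precisely \cref{muXbound} of \cref{XboundCor}, and (ii) follows from $\mu = 2m/r$ together with $\abs{m}\lesssim r^3 u^{-6}$ in \cref{mapriori}, which forces $\mu\to 0$ as $\tilde r\to 0$. For $f=(\lambda-\nu)-(\lambda_0-\nu_0)$ and $f=\lambda\nu-\lambda_0\nu_0$, (ii) holds by construction, while (i) reduces to \cref{lambdanusubXbound} and \cref{lambdanuXbound} respectively, since $\lambda_0$ and $\nu_0$ depend only on $t$ and therefore subtracting them does not alter any $\partial_{\tilde r}$-derivative of positive order. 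The operator $\partial_{\hat r}^{2l+2}$ appearing on the right-hand side of the corollary satisfies $2l+2\ge 2$, so the subtracted and unsubtracted versions agree after differentiation, and the stated bound is exactly what \cref{axisregextra} produces.

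The fourth function $f = \Omega^2 - r^2/\tilde r^2$ is the delicate one. For (ii) I would invoke the gauge \ref{gauge2}, under which $\lambda_0+\nu_0=0$ on $\Gamma$; this is the same algebraic cancellation used in the proof of \ref{disp2} above, giving $\Omega^2\big|_\Gamma = -4\lambda_0\nu_0 = (\lambda_0-\nu_0)^2 = (r/\tilde r)^2\big|_\Gamma$. For (i), write $\Omega^2 = -4\lambda\nu/(1-\mu)$: the extensions of $\lambda\nu$ and $\mu$ supplied by \cref{XboundCor}, together with the uniform bound $(1-\mu)^{-1}\lesssim 1$ from \cref{unifapriori}, guarantee that $\Omega^2$ inherits a smooth even extension with vanishing odd $\partial_{\tilde r}$-derivatives on $\Gamma$. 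For $r^2/\tilde r^2$, represent $r/\tilde r = I_{\tilde r}^1[\tfrac{1}{2}(\lambda-\nu)]$ in the notation of \cref{ravg}; applying \cref{avglemma} converts $\partial_{\tilde r}$-derivatives into averages of higher $\partial_{\tilde r}$-derivatives of $\lambda-\nu$, whose extension properties (from \cref{lambdanusubXbound}) then transfer to $r/\tilde r$, and squaring preserves them.

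With (i) and (ii) checked in each case, \cref{axisregextra} yields the four stated bounds. The main obstacle will be verifying (i) for $\Omega^2 - r^2/\tilde r^2$, since it requires combining the extension results of \cref{XboundCor} in two distinct ways---compositionally through division by $1-\mu$, and integrally through the average representation of $r/\tilde r$---though each step is routine once the framework of \cref{axisregkey,axisregextra} is in place.
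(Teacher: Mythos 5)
Your proposal is correct and supplies exactly the verification that the paper leaves implicit: \cref{Xboundavgcor} is stated without a proof block, and the intended argument is plainly to apply \cref{axisregextra} to each of the four functions after checking (i) that it satisfies the hypotheses of \cref{axisregkey} (which, for $\mu$, $\lambda-\nu$, and $\lambda\nu$, is established \emph{inside} the proof of \cref{XboundCor} by constructing even extensions with vanishing odd $\partial_{\tilde r}$-derivatives on $\Gamma$) and (ii) that it vanishes on $\Gamma$. Your handling of (ii) is right in each case: $\mu\to 0$ from $m\lesssim r^3$; the subtractions of $\lambda_0-\nu_0$ and $\lambda_0\nu_0$ make vanishing automatic and drop out of $\partial_{\hat r}^{2l+2}$ since they are $t$-only; and for $\Omega^2-r^2/\tilde r^2$ the axis boundary condition $\lambda_0+\nu_0=0$ forces the cancellation $-4\lambda_0\nu_0=(\lambda_0-\nu_0)^2$ used in the paper's proof of \ref{disp2}. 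The observation that $\Omega^2 = -4\lambda\nu/(1-\mu)$ inherits an even extension because $\lambda\nu$ and $\mu$ both do and $1-\mu$ is bounded away from zero, together with the averaging representation of $r/\tilde r$ from the proof of \cref{XboundCor}, completes (i) for the fourth function. One small citation nuance: for the $\mu$ case the needed hypothesis (vanishing odd derivatives on $\Gamma$) is established in the \emph{proof} of \cref{XboundCor}, not in the statement \eqref{muXbound} itself; the same is true for the other quantities. This is a labeling matter only and does not affect the logic.
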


\section{Acknowledgements}
Many thanks to Jonathan Luk for guidance, many helpful discussions, and providing numerous references. I would also like to thank the SURIM program for helping to support this work as part of my undergraduate thesis.

\bibliography{./nonlin_stab.bib}{}
\bibliographystyle{amsplain}
\end{document}